\newtheorem{theorem}{Theorem}
\newtheorem{lemma}[theorem]{Lemma}
\newtheorem{proposition}[theorem]{Proposition}
\newtheorem{observation}[theorem]{Observation}
\newtheorem{corollary}[theorem]{Corollary}
\newtheorem{remark}[theorem]{Remark}
\newtheorem{definition}[theorem]{Definition}
\newcommand{\eee}{\mathrm{e}}
\newcommand{\eps}{\epsilon}
\newcommand{\Tmix}{T_\mathrm{mix}}
\newcommand{\degree}{\mathrm{deg}}
\newcommand{\fptas}{\mathsf{FPTAS}}
\newcommand{\fpras}{\mathsf{FPRAS}}
\newcommand{\FFF}{\mathcal{F}}
\newcommand{\poly}{\mathrm{poly}}
\newcommand{\R}{\mathbf{R}}
\newcommand{\indicator}[1]{{\mathbf{\large 1}\left\{{#1}\right\}}}
\newcommand{\TreeD}{\mathbb{T}_{\Delta}}
\newcommand{\Expectation}{\mathbb{E}}
\newcommand{\Exp}[1]{{\Expectation\left[{#1}\right]}}
\newcommand{\ExpCond}[2]{{\Expectation\left[{#1} \mid {#2} \right]}}
\newcommand{\ExpCondSub}[3]{{\Expectation_{#1}\left[{#2} \mid {#3} \right]}}
\newcommand{\Probability}{\mathbf{Pr}}
\newcommand{\Prob}[1]{{\Probability\left[{#1}\right]}}
\newcommand{\ProbSub}[2]{{\Probability_{#1}\left[{#2}\right]}}
\newcommand{\ProbCondSub}[3]{{\Probability_{#1}\left[{#2} \mid {#3} \right]}}
\newcommand{\I}{\mathbf{U}}
\newcommand{\SOS}{\mathbf{S}}
\newcommand{\W}{\mathbf{W}}
\newcommand{\dtv}{d_{\mathsf{TV}}}
\newcommand\blfootnote[1]{%
  \begingroup
  \renewcommand\thefootnote{}\footnote{#1}%
  \addtocounter{footnote}{-1}%
  \endgroup
}
\title{Convergence of MCMC and Loopy BP in the Tree Uniqueness Region for
the Hard-Core Model}
\author{
Charilaos Efthymiou\thanks{Goethe University, Frankfurt am Main, Germany. Email:
efthymiou@gmail.com. Research supported by DFG grant EF 103/11.}
\and
Thomas P. Hayes\thanks{Department of Computer Science, University of New Mexico,
Albuquerque, NM 87131. Email: hayes@cs.unm.edu.}
\and
Daniel \v{S}tefankovi\v{c}\thanks{
Department of Computer Science, University of Rochester,
Rochester, NY 14627.  Email: stefanko@cs.rochester.edu. Research
supported in part by NSF grant CCF-1318374.}
\and 
Eric Vigoda
\thanks{School of Computer Science, Georgia
Institute of Technology, Atlanta GA 30332. Email:
ericvigoda@gmail.com. Research supported in part by
NSF grants CCF-1217458 and CCF-1555579.}
\and Yitong Yin\thanks{State Key Laboratory for Novel Software Technology, Nanjing University, China. Email: yinyt@nju.edu.cn. Research supported by NSFC grants 61272081 and 61321491.}
}
\begin{document}

\maketitle

\thispagestyle{empty}

\begin{abstract}
We study the hard-core (gas) model defined on independent sets 
of an input graph where the independent sets are weighted
by a parameter (aka fugacity) $\lambda>0$.  
For constant $\Delta$, previous work of Weitz (2006) 
established an FPTAS for the partition function
for graphs of maximum degree $\Delta$ when $\lambda<\lambda_c(\Delta)$.
Sly (2010) showed that  there is no FPRAS, unless NP=RP, when $\lambda>\lambda_c(\Delta)$.
The threshold $\lambda_c(\Delta)$ is the critical point for the statistical physics
phase transition for uniqueness/non-uniqueness on the infinite $\Delta$-regular tree.
The running time of Weitz's algorithm is exponential in $\log{\Delta}$.
Here we present an FPRAS for the partition function whose running time is $O^*(n^2)$.
We analyze the simple single-site Markov chain known as
the Glauber dynamics for sampling from the associated
Gibbs distribution.
We prove there exists a constant $\Delta_0$
such that for all graphs with maximum degree $\Delta\geq\Delta_0$ 
and girth $\geq 7$ (i.e., no
cycles of length $\leq 6$), the mixing time of the Glauber
dynamics is $O(n\log{n})$ when $\lambda<\lambda_c(\Delta)$.
Our work complements that of Weitz which applies for small constant $\Delta$
whereas our work applies for all $\Delta$ at least a
sufficiently large constant $\Delta_0$ (this includes $\Delta$ depending
on $n=|V|$).

Our proof utilizes loopy BP (belief propagation) which is a widely-used
algorithm for inference in graphical models.
A novel aspect of our work is 
using the principal eigenvector for the BP operator to design a 
distance function which contracts in expectation for pairs of states that
behave like the BP fixed point.
We also prove that the Glauber dynamics behaves locally like
loopy BP.  As a byproduct we obtain that the Glauber dynamics,
after a short burn-in period,  converges close to the BP fixed point, and this implies that the
fixed point of loopy BP is a close approximation to the Gibbs distribution.
Using these connections we establish that loopy BP quickly
converges to the Gibbs distribution 
when the girth $\geq 6$ and $\lambda<\lambda_c(\Delta)$.

\end{abstract}


\blfootnote{This work was done in part while all of the authors were 
visiting the Simons Institute for the Theory of Computing.
}

\newpage

\setcounter{page}{1}

\section{Introduction}

\subsection*{Background}
The hard-core gas model is a natural combinatorial problem that has played
an important role in the design of new approximate counting algorithms and
for understanding computational connections to statistical physics phase transitions.
For a graph $G=(V,E)$ and a fugacity $\lambda>0$,  
the hard-core model is defined on the set $\Omega$ of independent sets of $G$
where $\sigma\in\Omega$ has weight $w(\sigma)=\lambda^{|\sigma|}$.
The equilibrium state of the system is described by the
Gibbs distribution $\mu$ in which an independent set $\sigma$ has
probability $\mu(\sigma) = w(\sigma)/Z$.  The partition function 
$Z=\sum_{\sigma\in\Omega} w(\sigma)$.

We study the closely related problems of efficiently approximating the partition function
and approximate sampling from the Gibbs distribution.  
These problems are important for Bayesian inference in graphical models
where the Gibbs
distribution corresponds to the posterior or likelihood distributions.
Common approaches used in practice are Markov Chain Monte Carlo (MCMC)
algorithms and message passing algorithms, such as loopy BP (belief propagation),
and one of the aims of this paper is to prove fast convergence of these algorithms.

Exact computation of the partition function is \#P-complete \cite{Valiant}, 
even for restricted input classes~\cite{Greenhill}, hence the focus is on
designing an efficient approximation scheme, either a deterministic
$\fptas$ or randomized 
$\fpras$.   The existence of an $\fpras$ for the partition
function is polynomial-time inter-reducible to approximate sampling from 
the Gibbs distribution.

A beautiful connection has been established: 
there is a computational phase
transition on graphs of maximum degree $\Delta$
that coincides with the statistical physics phase transition on $\Delta$-regular
trees.  The critical point for both of these phase transitions is 
$\lambda_c(\Delta):=(\Delta-1)^{\Delta-1}/(\Delta-2)^{\Delta}$.  
In statistical physics, $\lambda_c(\Delta)$ is
the critical point for the uniqueness/non-uniqueness phase transition 
on the infinite $\Delta$-regular tree $\TreeD$ \cite{Kelly} (roughly speaking, this is the
phase transition for
the decay versus persistence of the influence of the leaves on the root).
For some basic intuition about the value of this critical point, note its
asymptotics $\lambda_c(\Delta) \sim e/(\Delta-2)$
and the following basic property: $\lambda_c(\Delta)>1$ for
$\Delta\leq 5$ and $\lambda_c(\Delta)<1$ for $\Delta\geq 6$. 

Weitz \cite{Weitz} showed, for all constant $\Delta$,
an $\fptas$ for the partition function for all
graphs of maximum degree $\Delta$ when $\lambda<\lambda_c(\Delta)$.
To properly contrast the performance of our algorithm with Weitz's algorithm
let us state his result more precisely: for all $\delta>0$, 
there exists constant $C=C(\delta)$, for all $\Delta$,
all $G=(V,E)$ with
maximum degree $\Delta$, all $\lambda<(1-\delta)\lambda_c(\Delta)$, all $\eps>0$,
there is a deterministic algorithm to approximate $Z$ within a factor $(1\pm\eps)$
with running time $O\left((n/\eps)^{C\log{\Delta}}\right)$.
An important limitation of Weitz's 
result is the exponential dependence on $\log{\Delta}$ in the running time.  
Hence it is polynomial-time only for constant $\Delta$,
and even in this case the running time is unsatisfying.

On the other side, Sly \cite{Sly} (extended in \cite{GGSVY,GSV:ising,SlySun,GSV:colorings})
has established that, unless $NP=RP$, for all $\Delta\geq 3$,  there exists $\gamma>0$,
for all $\lambda>\lambda_c(\Delta)$,
there is no polynomial-time algorithm  for triangle-free $\Delta$-regular graphs 
to approximate the partition function within a factor $2^{\gamma n}$. 

Weitz's algorithm was extremely influential: many works have built upon his algorithmic 
approach to establish efficient algorithms for a variety of problems
(e.g., \cite{RSTVY,SST,LLY12,LLY13,SSY13,VVY,LLL14,SSSY,LL15}).    
One of its key conceptual contributions was showing how  decay of correlations properties
on a $\Delta$-regular tree are connected to the existence of an efficient algorithm
for graphs of maximum degree $\Delta$.  
We believe our paper enhances this insight by connecting these same decay of correlations 
properties on a $\Delta$-regular tree to the analysis of widely-used  Markov Chain Monte Carlo 
(MCMC) and message passing  algorithms.  
%

\subsection*{Main Results}

As mentioned briefly earlier on, there are two widely-used approaches
for the associated approximate counting/sampling problems, namely
MCMC and message passing approaches.
A popular MCMC algorithm is the simple single-site update Markov chain known 
as the Glauber dynamics.  
The Glauber dynamics is a Markov chain $(X_t)$ on $\Omega$ whose
transitions $X_t\rightarrow X_{t+1}$ are defined by the following process:
\begin{enumerate}
\item Choose $v$ uniformly at random from $V$.
\item If $N(v)\cap X_t = \emptyset$ then let 
\[
X_{t+1} = \begin{cases}  
X_t\cup\{v\} & \mbox{ with probability } \lambda/(1+\lambda) \\
X_t\setminus\{v\} & \mbox{ with probability } 1/(1+\lambda) 
\end{cases}
\]
\item If $N(v)\cap X_t \neq \emptyset$ then let $X_{t+1} = X_t$.
\end{enumerate}

The mixing time $\Tmix$ is the number of steps to guarantee that the
chain is within a specified (total) variation distance of the stationary distribution.
In other words, for $\eps>0$,
\[   \Tmix(\eps) = \min\{ t : \mbox{ for all } X_0, \dtv(X_t,\mu) \leq \eps\},
\]
where $\dtv()$ is the variation distance.
We use $\Tmix=\Tmix(1/4)$ to refer to the mixing time for $\eps=1/4$.

It is natural to conjecture that the Glauber dynamics has mixing time 
$O(n\log{n})$ for all $\lambda<\lambda_c(\Delta)$.
Indeed, Weitz's work implies rapid mixing for $\lambda<\lambda_c(\Delta)$ 
for amenable graphs.
On the other hand Mossel et al. in \cite{MWW}  show slow mixing when $\lambda>\lambda_c(\Delta)$
on random regular bipartite graphs.
The previously best known results for MCMC algorithms are far from
reaching the critical point.  It was known that the mixing time 
of the Glauber dynamics (and other simple, local Markov chains)
is $O(n\log{n})$ when $\lambda<2/(\Delta-2)$ for any graph with maximum degree $\Delta$
\cite{DG,LV,V}.
In addition, \cite{HV:stationarity} analyzed $\Delta$-regular graphs 
with $\Delta=\Omega(\log{n})$ and presented a polynomial-time
simulated annealing algorithm
when $\lambda<\lambda_c(\Delta)$. 

Here we prove $O(n\log{n})$ mixing time up to the critical point 
when the maximum degree is at least a sufficiently large constant $\Delta_0$,
and there are no cycles of length $\leq 6$ (i.e., girth $\geq 7$).

\begin{theorem}\label{thrm:RapidMixingMain}
For all $\delta>0$, there exists $\Delta_0=\Delta_0(\delta)$ and $C=C(\delta)$,
for all graphs $G=(V,E)$ of maximum degree $\Delta\geq\Delta_0$ and girth $\geq 7$, 
all $\lambda<(1-\delta)\lambda_c(\Delta)$, all $\eps>0$, 
the mixing time of the Glauber dynamics satisfies:
\[  \Tmix(\eps) \leq Cn\log(n/\eps).
\]
\end{theorem}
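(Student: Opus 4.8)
The plan is to establish rapid mixing via a path-coupling argument, but with a carefully designed non-uniform metric on configurations rather than the trivial Hamming metric, which is known to fail near $\lambda_c(\Delta)$. I would consider two configurations $X,Y\in\Omega$ differing at a single vertex $v$, run one step of the Glauber dynamics under the identity coupling (same random vertex $w$, same coin), and try to show the expected distance contracts. The distance function I would use assigns to a disagreement a weight derived from the principal eigenvector of the linearized BP (belief propagation) operator at its fixed point: roughly, a disagreement at $v$ gets weight proportional to $\phi_v$, where $\phi$ is that eigenvector, and the total distance is $\sum_{u} \phi_u \cdot \indicator{X(u)\neq Y(u)}$ summed along a path of single-vertex disagreements. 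The point is that the eigenvalue governing contraction of this operator is exactly what the tree-uniqueness condition $\lambda<\lambda_c(\Delta)$ controls — strict uniqueness gives a spectral radius bounded away from $1$ — and with girth $\geq 7$ the relevant neighborhood of $v$ out to radius $2$ or $3$ is a tree, so the local structure genuinely looks like $\TreeD$.

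The main steps, in order, would be: (1) Set up the BP operator for the hard-core model on the graph and identify its fixed point; show (using $\lambda<(1-\delta)\lambda_c(\Delta)$, large $\Delta$, and girth $\geq 7$) that the Jacobian of the BP map, in an appropriate norm, has operator norm at most $1-\Omega(\delta)$ locally, and extract a positive eigenvector $\phi$ with entries bounded above and below by constants depending only on $\delta$. (2) Define the weighted path metric on $\Omega$ using $\phi$, with $\phi$-weights attached to disagreement vertices, and verify it is a genuine metric whose diameter is $O(n)$ and whose minimum positive value is $\Omega(1)$ — this is needed so that contraction in this metric translates to $O(n\log(n/\eps))$ mixing via the standard path-coupling-to-mixing-time theorem. (3) Carry out the single-disagreement coupling analysis: condition on the chosen vertex $w$; if $w\notin N(v)$ the disagreement stays put (and may be resolved if $w=v$), while if $w\in N(v)$ a new disagreement may be created at $w$. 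The expected change in $\phi$-weighted distance is then, up to lower-order terms, governed by $\frac{1}{n}\big(-\phi_v + (\text{coupling-failure prob.})\sum_{w\in N(v)}\phi_w\big)$, and the coupling failure probabilities are precisely the entries of the BP Jacobian row at $v$. Matching this against the eigenvector equation $\sum_w J_{vw}\phi_w \leq (1-\Omega(\delta))\phi_v$ yields expected contraction by a factor $1-\Omega(\delta/n)$. (4) Invoke path coupling to conclude $\Tmix(\eps)=O(n\log(n/\eps))$.

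Two technical points need care and I expect the coupling-failure estimate to be the main obstacle. First, the BP Jacobian entries describe the influence in the \emph{tree} computation, whereas in the Glauber coupling the coin-flip at $w$ depends on whether $N(w)$ currently intersects $X_t$ (resp.\ $Y_t$); one must argue that after a short burn-in the typical configuration near $v$ is close enough to the BP marginals that the actual coupling-failure probability at $w$ is well-approximated by the Jacobian entry $J_{vw}$ — this is exactly the ``Glauber behaves locally like loopy BP'' phenomenon advertised in the abstract, and making it quantitative (controlling the error from cycles of length $\geq 7$, and from configurations that are atypical) is the crux. A clean way to handle the burn-in is to first show that from an arbitrary start the chain reaches, in $O(n\log n)$ steps, a state from which the local-BP approximation is valid with high probability, and then run the weighted path-coupling contraction from there; alternatively one argues the contraction holds for \emph{all} pairs by absorbing the atypical-configuration error into the $\Omega(\delta)$ slack, which is why the theorem needs $\Delta\geq\Delta_0(\delta)$ rather than merely $\Delta$ large. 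Second, one must ensure the girth-$7$ hypothesis is actually used where the tree approximation of the radius-$\leq 3$ ball is invoked, and that the eigenvector $\phi$ can be taken uniformly bounded — this follows because the per-edge BP contraction factor at $\lambda<(1-\delta)\lambda_c(\Delta)$ is bounded away from the critical value by a $\delta$-dependent amount uniformly in $\Delta\geq\Delta_0$, by the known asymptotics $\lambda_c(\Delta)\sim e/(\Delta-2)$.
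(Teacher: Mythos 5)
Your high-level plan is the same as the paper's: a path-coupling metric built from the principal eigenvector of the BP Jacobian at its fixed point (entries bounded in $[1,12]$, contraction factor $1-\Omega(\delta)$), combined with a ``Glauber locally looks like loopy BP'' statement after a burn-in. However, there is a genuine gap precisely at the point you wave at in step (4). The local uniformity property (that $\sum_{z\in N(v)}\Phi(z)\indicator{z\text{ unblocked}}$ is close to $\sum_{z\in N(v)}\omega^*(z)\Phi(z)$) can only be guaranteed with probability $1-\exp(-\Omega(\Delta))$, and since the theorem allows $\Delta$ to be a \emph{constant} $\Delta_0(\delta)$, this failure probability is a constant: at any time a constant fraction of vertices (and a constant fraction of configurations near the disagreement) violate uniformity, and the disagreements in the coupling can be biased toward exactly these bad regions. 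Your alternative of ``absorbing the atypical-configuration error into the $\Omega(\delta)$ slack'' fails outright: for a neighboring pair whose disagreement vertex has all $\Delta$ neighbors unblocked, the one-step expected change of the weighted distance is about $-\Phi(v)/n+\frac{\lambda}{1+\lambda}\sum_{z\in N(v)}\Phi(z)/n$, which can be as large as $+35/n$, i.e.\ genuine expansion, so contraction simply does not hold for all pairs and cannot be rescued by slack. Your other alternative, ``burn in for $O(n\log n)$ and then apply path coupling,'' also does not work as stated, because one-step path coupling requires the contraction estimate for \emph{every} neighboring pair appearing in the path decomposition at \emph{every} subsequent step, and after the burn-in you only have a with-high-probability statement for a single chain near a single vertex, not a deterministic property of all interpolated configurations.

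What is actually needed (and what the paper does, following the Dyer--Frieze--Hayes--Vigoda scheme) is a multi-phase argument with explicit bookkeeping: (i) define ``heavy''/``above suspicion'' configurations and show this niceness is inherited by the interpolated independent sets in the path decomposition; (ii) from a worst-case neighboring pair, show that after $T_b=O(n\log\Delta)$ steps the cumulative disagreement set is small ($\mathrm{poly}(\Delta)$), stays inside a ball of radius $O(\sqrt{\Delta})$ around $v$, and all remaining disagreements are above suspicion, with tail bounds strong enough to control $\Expectation[\,|X\oplus Y|\cdot\indicator{\text{bad event}}\,]$ (disagreement-percolation and generalized-Poisson estimates); (iii) then run epochs of length $O(n)$ in which path coupling is applied only to pairs differing at a non-heavy vertex, the uniformity theorem (which itself needs the $G$ versus $G^*$ comparison construction and girth $\geq 7$ to get the approximate conditional independence of grandchildren) holds throughout the epoch with probability $1-\exp(-\Omega(\Delta))$, and the expected weighted distance is shown to contract conditioned on the good event while the contribution of the complementary event is bounded separately; (iv) iterate epochs and convert the expected-distance bound into the $O(n\log(n/\eps))$ mixing bound. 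Your proposal names the local BP approximation as ``the crux'' but supplies neither it nor the machinery in (i)--(iii) that makes a whp-only, constant-$\Delta$ uniformity statement usable inside path coupling; as written, the argument would not go through.
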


\noindent
Note that $\Delta$ and $\lambda$ can be a function of $n=|V|$. 
The above sampling result yields  (via \cite{SVV,Huber})  an $\fpras$ for estimating the  partition function 
$Z$ with running time $O^*(n^2)$  where $O^*()$  hides multiplicative $\log{n}$ factors.  
The algorithm of Weitz \cite{Weitz} is polynomial-time for  small constant $\Delta$,  in contrast our algorithm 
is polynomial-time for all $\Delta>\Delta_0$  for a sufficiently large constant $\Delta_0$.

A family of graphs of particular interest are random $\Delta$-regular graphs and
random $\Delta$-regular {\em bipartite} graphs. These graphs do not satisfy the girth requirements of 
Theorem \ref{thrm:RapidMixingMain} but they have few short cycles.  Hence, as one would expect the above result
extends to these graphs.

\begin{theorem}\label{thrm:RapidMixingRandom}
For all $\delta>0$, there exists $\Delta_0=\Delta_0(\delta)$ and $C=C(\delta)$,
for all $\Delta\geq\Delta_0$, all $\lambda<(1-\delta)\lambda_c(\Delta)$, all $\eps>0$,
with probability $1-o(1)$ over the choice of an 
$n$-vertex graph $G$ chosen uniformly at random from the set of all $\Delta$-regular 
(bipartite) graphs,  the mixing time of the Glauber dynamics on $G$ satisfies:
\[  \Tmix(\eps) \leq Cn\log(n/\eps).
\]
\end{theorem}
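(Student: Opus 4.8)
The plan is to deduce Theorem~\ref{thrm:RapidMixingRandom} from the machinery behind Theorem~\ref{thrm:RapidMixingMain}, exploiting that a random $\Delta$-regular (bipartite) graph is \emph{locally treelike}: it satisfies the girth hypothesis everywhere except in small neighbourhoods of a small set of short cycles. Let $R$ be the bounded radius of the balls that the proof of Theorem~\ref{thrm:RapidMixingMain} requires to be trees --- the radius at which ``Glauber looks like loopy BP'' during the burn-in and at which the BP recursion governing the eigenvector weights and the one-step drift is exact. Call a vertex $v$ \emph{defective} if $B_R(v)$ contains a cycle of length $\le 6$, and let $S\subseteq V$ be the defective set. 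First I would record the standard facts about the configuration model: the expected number of cycles of length $\ell\le 6$ is $O\big((\Delta-1)^{\ell}\big)$, they are w.h.p.\ vertex-disjoint and pairwise at distance $\omega(1)$, and hence w.h.p.\ $|S|\le\Delta^{O(1)}=o(n)$ in the sparse range of $\Delta$, with each defective ``pocket'' insulated by an annulus of non-defective vertices thicker than $R$. I would also import the few generic high-probability properties of random regular graphs used in the proof of Theorem~\ref{thrm:RapidMixingMain} (every $O(\log n)$-ball contains at most one cycle, and, for the bipartite case, the usual near-regularity/expansion of the configuration model).

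Conditioning on this structure, I would re-run the proof of Theorem~\ref{thrm:RapidMixingMain} essentially verbatim, treating $S$ as a small exceptional set. The girth hypothesis is used only to guarantee that, for each vertex $v$, certain local quantities equal their $\Delta$-regular-tree counterparts: the loopy-BP fixed point near $v$, the principal-eigenvector weights $w_v$, and the contribution of updating $v$ to the expected one-step change of the BP-adapted distance function $D=D_G$. For non-defective $v$ these are unchanged, so the estimates of Theorem~\ref{thrm:RapidMixingMain} apply unchanged; for defective $v$ I would discard the BP estimate and use the crude bound that updating $v$ changes $D$ by at most $O(w_{\max})$, and only when some vertex of $B_2(v)$ currently disagrees, where $w_{\max}=O(1)$ since the eigenvector weights are uniformly bounded above and below. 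Summing over $v$ yields, for pairs $(X_t,Y_t)$ near the BP fixed point (the regime the burn-in puts the chains into),
\[
\ExpCond{D(X_{t+1},Y_{t+1})}{X_t,Y_t}\;\le\;\Big(1-\frac{\eps}{n}\Big)D(X_t,Y_t)\;+\;\frac{O(w_{\max})}{n}\,\big|\{\,v\in S:\ B_2(v)\ \text{meets a current disagreement}\,\}\big|,
\]
and since $|S|=o(n)$ the last term is a lower-order correction, absorbed by shrinking the contraction factor to $1-\eps/(2n)$; likewise the $o(n)$ defective vertices perturb the burn-in/coupling concentration bounds only in lower-order terms.

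The main obstacle is that vanilla path coupling demands one-step contraction for \emph{every} adjacent pair, whereas at a configuration whose single disagreement sits at or beside a defective vertex the BP drift estimate is simply unavailable. I would handle this the way the proof of Theorem~\ref{thrm:RapidMixingMain} handles the ``spreading'' of disagreements: use that defective pockets are insulated, so a disagreement created inside a pocket is with overwhelming probability absorbed back into the contracting region before it can propagate, and then invoke the aggregate (non-adjacent-pairs) version of path coupling --- equivalently, track $\Expectation[D(X_t,Y_t)]$ directly over the whole burn-in-plus-coupling window rather than step by step --- so that the $o(n)$ defective vertices cost only an $o(1)$ multiplicative slowdown. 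The bipartite case is identical after replacing ``no cycle of length $\le 6$'' by ``no $4$- or $6$-cycle'' (there are no odd cycles), the first-moment count being the same. Finally, for $\Delta$ growing fast enough that short cycles cease to be sparse the treelike reduction is not needed: there $\lambda<(1-\delta)\lambda_c(\Delta)$ together with $\lambda_c(\Delta)(\Delta-2)\to e$ places $\lambda$ near (and, for $\delta$ not too small, below) the Dobrushin threshold $2/(\Delta-2)$, so one falls back on the classical $O(n\log n)$ mixing bound of \cite{DG,LV,V}, valid for every graph.
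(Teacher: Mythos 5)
Your instinct---exploit the local tree-likeness of a random $\Delta$-regular graph---is the right one, but your execution diverges from the paper's and leaves a genuine gap. The paper does not excise the neighbourhoods of short cycles at all: it observes that the girth-$7$ hypothesis is used only in the local uniformity arguments (Theorems~\ref{thrm:ConcentrNoOfBlockedStatic} and \ref{thrm:UniformityWithBurnIn}, via the conditional-independence/Azuma step behind Lemma~\ref{lemma:RApproxReccGirth6}), and that a \emph{bounded} number of short cycles through a vertex merely spoils $O(1)$ of the $\Delta$ conditionally independent terms, which perturbs the concentration of the number of unblocked neighbours by a negligible amount. This yields Corollary~\ref{cor:RapidFewCycles}: rapid mixing for every graph in which each vertex lies on at most $\ell(\delta)$ cycles of length $<7$; by Wormald's short-cycle counts a random $\Delta$-regular (bipartite) graph is w.h.p.\ in ${\cal G}(1,7)$, so local uniformity---and hence the path-coupling contraction with the distance function $\Phi$---holds at \emph{every} vertex and no exceptional-set bookkeeping is needed. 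In your version the defective vertices are handled only by a crude non-contracting bound, and that is precisely where the argument is incomplete: when the sole disagreement of a path-coupling pair sits at (or beside) a defective vertex, your displayed inequality gives an additive error of order $w_{\max}/n$ against a contraction gain of order $\epsilon D/n$ with $D=O(1)$, so the expected distance need not decrease at all in that configuration. The assertion that such disagreements are ``with overwhelming probability absorbed back into the contracting region before they can propagate'' is exactly the missing lemma, not a consequence of anything you have proved, and the ``aggregate version of path coupling'' you invoke is not an off-the-shelf tool---you would have to construct the multi-step excursion/absorption argument yourself, which is the hard part of your route.

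There is also a concrete error in your large-$\Delta$ fallback. Since $\lambda_c(\Delta)\sim e/(\Delta-2)$, the hypothesis $\lambda<(1-\delta)\lambda_c(\Delta)$ implies $\lambda<2/(\Delta-2)$ only when $\delta>1-2/e\approx 0.26$, whereas the theorem is claimed for all $\delta>0$; so in the regime where you propose to abandon the treelike reduction and ``fall back on the classical $O(n\log n)$ bound'' you have no argument for small $\delta$. The paper's route needs no such case split: it only requires that w.h.p.\ each vertex lies on boundedly many cycles of length less than $7$, after which Theorem~\ref{thrm:RapidMixingMain}'s proof goes through verbatim with slightly degraded constants.
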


\noindent
Theorem \ref{thrm:RapidMixingRandom}  complements the work in  \cite{MWW} which shows slow
mixing for random $\Delta$-regular bipartite graphs when
 $\lambda>\lambda_c(\Delta)$.

The other widely used approach is BP (belief propagation) based algorithms.
BP, introduced by Pearl \cite{Pearl}, 
is a simple recursive scheme designed on trees to correctly
compute the marginal distribution for each vertex to be occupied/unoccupied.  
In particular, consider a rooted tree $T=(V,E)$ where for $v\in V$ its parent is
denoted as $p$ and its children are $N(v)$.  Let
\[ 
 q(v) = \ProbCondSub{\mu}{\mbox{$v$ is occupied}}{\mbox{$p$ is unoccupied}}
\] denote the probability in the Gibbs distribution 
that $v$ is occupied conditional on its parent $p$ being unoccupied.
It is convenient to work with ratios of the marginals, and 
hence let $R_{v\rightarrow p(v)} = q(v)/(1-q(v))$ denote the ratio of the
occupied to unoccupied marginal probabilities.
Because $T$ is a tree then it is not difficult to show that
this ratio satisfies the following recurrence:
\[  R_{v\rightarrow p(v)} = \lambda\prod_{w\in N(v)\backslash\{p(v)\}} \frac{1}{1+R_{w\rightarrow v}}.
\]
This recurrence explains the terminology of BP that $R_{w\rightarrow v}$ is a
``message'' from $w$ to its parent $v$.  Given the messages to $v$ 
from all of its children then $v$ can send its message to its parent.  
Finally the root $r$ (with a parent $p$ always fixed to be unoccupied and thus removed)
can compute the marginal probability that it is occupied by:
$q(r) = R_{r\rightarrow p}/(1+R_{r\rightarrow p})$.  

The above formulation defines (the sum-product version of) BP  a simple, natural algorithm
which works efficiently and correctly for trees.
For general graphs {\em loopy BP} implements the above approach, even though there
are now cycles and so the algorithm no longer is guaranteed to work correctly.
For a graph $G=(V,E)$, for $v\in V$ let $N(v)$ denote the set of all neighbors of $v$.
For each $p\in N(v)$ and time $t\geq 0$ we define a message 
\[
R^t_{v\rightarrow p} = \lambda\prod_{w\in N(v)\backslash\{p\}} \frac{1}{1+R^{t-1}_{w\rightarrow v}}.
\]
The corresponding estimate of the marginal can be computed from the messages by: 
\begin{equation} \label{eq:DefQVP(t)}
 q^t(v,p)=\frac{R^t_{v\to p}}{1 + R^t_{v\to p}}.
\end{equation}

Loopy BP is a popular algorithm for estimating marginal probabilities
in general graphical models (e.g., see \cite{MWJ}), 
but there are few results on when loopy BP converges to the Gibbs distribution (e.g., Weiss
\cite{Weiss} analyzed graphs with one cycle, and \cite{TJ,Heskes,IFW} presented 
various sufficient conditions, see also \cite{CCGSS,Shin} for analysis of BP variants).
We have an approach for analyzing
loopy BP and in this project we will prove that loopy BP works well in a broad range of
parameters.  Its behavior relates to phase transitions in the underlying model,
we detail our approach and expected results after formally presenting phase transitions.

We prove that, on any graph with girth $\geq 6$ and maximum degree
$\Delta\geq\Delta_0$ where $\Delta_0$ is a sufficiently large
constant, loopy BP quickly converges to the (marginals of) Gibbs distribution $\mu$.
More precisely, $O(1)$ iterations of loopy BP suffices, note each iteration
of BP takes $O(n+m)$ time where $n=|V|$ and $m=|E|$.

\begin{theorem}\label{thrm:CnvrgLoopyBP2Correct}
For all $\delta, \epsilon>0$, there exists $\Delta_0=\Delta_0(\delta,\epsilon)$ and 
$C=C(\delta,\epsilon)$,
for all graphs $G=(V,E)$ of maximum degree $\Delta\geq\Delta_0$ and girth $\geq 6$,
all $\lambda<(1-\delta)\lambda_c(\Delta)$,
the following holds:  for $t\geq C$, for all  $v\in V$,  $p\in N(v)$,
\[  
  \left| \frac{q^t(v,p)}{\mu(\textrm{$v$ is occupied} \;|\; \textrm{$p$ is unoccupied}) } -1 \right| \leq \epsilon
   \]
where $\mu(\cdot)$ is the Gibbs distribution.
\end{theorem}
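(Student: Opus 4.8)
The plan is to split the claim into two halves --- (i) loopy BP converges geometrically fast to a \emph{unique} fixed point $R^*$, and (ii) that fixed point records the Gibbs conditional marginal ratios up to small relative error --- and combine them by a triangle inequality. Write $\mathcal{F}$ for the loopy BP operator, so a fixed point satisfies $R^*_{v\to p}=\lambda\prod_{w\in N(v)\setminus\{p\}}1/(1+R^*_{w\to v})$, and set $q^*(v,p):=R^*_{v\to p}/(1+R^*_{v\to p})$. Concretely I would prove: (i) there is $C=C(\delta,\epsilon)$ with $|q^t(v,p)/q^*(v,p)-1|\le\epsilon/3$ for all $t\ge C$ and all $v$, $p\in N(v)$; and (ii) $|q^*(v,p)/\mu(v\text{ occ}\mid p\text{ unocc})-1|\le\epsilon/3$. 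Multiplying the two ratios and using $(1\pm\epsilon/3)^2\subseteq 1\pm\epsilon$ gives the theorem.

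For (i): after two rounds every message lies in $[\underline R,\overline R]$ with $\underline R=\lambda(1+\lambda)^{-(\Delta-1)}$ and $\overline R=\lambda$, and since $\lambda<(1-\delta)\lambda_c(\Delta)=\Theta(1/\Delta)$ the ratio $\overline R/\underline R=(1+\lambda)^{\Delta-1}=e^{O(1)}$ is bounded. The operator $\mathcal{F}$ is coordinatewise anti-monotone, so $\mathcal{F}\circ\mathcal{F}$ is monotone and preserves the box $[\underline R,\overline R]$ on all messages; a standard monotone-iteration argument then squeezes the loopy BP trajectory between the iterates of $\mathcal{F}\circ\mathcal{F}$ from the bottom and top corners, which converge to a common limit, necessarily the unique fixed point $R^*$ of $\mathcal{F}$. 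The convergence rate is the tree-uniqueness rate: the symmetric recursion $x\mapsto\lambda(1+x)^{-(\Delta-1)}$ contracts at a rate governed by $(\Delta-1)R^*/(1+R^*)\le 1-\delta'$, a bound independent of $\Delta$ in the regime $\lambda=\Theta(1/\Delta)$, with $\delta'=\delta'(\delta)>0$ precisely because $\lambda<(1-\delta)\lambda_c(\Delta)$ and $\Delta\ge\Delta_0(\delta)$. Hence convergence is geometric at a $\delta$-dependent rate, $C=O(\log(1/\epsilon)/\poly(\delta))$ iterations suffice to reach relative error $\epsilon/3$ in the messages, and $x\mapsto x/(1+x)$ transfers this to $q^t$ versus $q^*$. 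No girth hypothesis is used here.

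Part (ii) is where girth $\ge 6$ enters, through the local comparison between the Glauber dynamics and loopy BP that is the technical heart of the paper (needed here only at girth $6$, whereas Theorem~\ref{thrm:RapidMixingMain} needs girth $7$). Run $(X_t)$ from the empty set. The recursion expressing the law of $\mathbf{1}[v\in X_t]$ in terms of the recent histories of the neighbours of $v$ is, up to error, one step of the loopy BP recursion, with error coming only from short cycles through $v$ where the dynamics ``reuses'' a vertex; when girth $\ge 6$ and $\Delta\ge\Delta_0(\epsilon)$ such cycles are sparse (at most $\poly(\Delta)$ of length $\le g$ through any vertex, each contributing a correction of size $\Delta^{-\Omega(g)}$), so the total error is $o_\Delta(1)\le\epsilon/9$ --- this is why the threshold is girth $6$. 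Since the recursion inherits the $(1-\delta')$-contraction from (i), it has converged to the (conditional) fixed point for all $t\ge t_1$ with $t_1=t_1(\delta,\epsilon,n)<\infty$:
\[
 \left|\frac{\ProbCond{v\in X_t}{p\notin X_t}}{q^*(v,p)}-1\right|\le\frac{\epsilon}{9}.
\]
On the other hand $(X_t)$ is ergodic on each connected component, so $\ProbCond{v\in X_t}{p\notin X_t}\to\mu(v\text{ occ}\mid p\text{ unocc})$ as $t\to\infty$; letting $t\to\infty$ in the displayed bound yields (ii).

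The main obstacle is the local comparison underlying (ii): one must (a) isolate the belief-propagation recursion hidden in the single-site dynamics by tracking, recursively, the law of each vertex's state as a function of its last update time and its neighbours' states at that moment; (b) bound the gap between this idealised recursion and the true Glauber marginals by the aggregate weight of short cycles and check that girth $\ge 6$ together with $\Delta$ large makes it $\le\epsilon/9$, so that the girth threshold is exactly $6$; and (c) verify the recursion inherits the $(1-\delta')$-contraction, so that it converges in finite time. By contrast step (i) is the classical tree-uniqueness computation, and --- given the local comparison --- the passage from ``loopy BP converges to its fixed point'' to ``that fixed point is the vector of Gibbs marginals'' is immediate from ergodicity of the chain.
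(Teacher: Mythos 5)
Your overall decomposition (BP converges to its fixed point; the fixed point matches the Gibbs conditional marginal) is the same skeleton as the paper's, but the way you propose to prove part (ii) has a genuine gap. You want to extract an approximate BP recursion from the Glauber dynamics run from the empty set and claim that, at girth $6$, the only error comes from ``short cycles through $v$,'' each weighted $\Delta^{-\Omega(g)}$. That is not where the error comes from: a girth-$6$ graph has \emph{no} cycles of length $\le 5$, and the obstruction in the dynamic setting is the failure of conditional independence of the neighbours' indicators $\I_{z,v}(X_{t_z})$ --- they are correlated through time and through paths inside $B_3(v)$, and conditioning on $p\notin X_t$ adds further correlation. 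The paper's own treatment of exactly this dynamic recursion (Theorem~\ref{thrm:UniformityWithBurnIn}, Lemma~\ref{lem:Approx-Glauber-BP}) requires girth $\ge 7$ and a surrogate chain $(X^*_t)$ on the partially oriented graph $G^*_w$, plus the disagreement-comparison Lemma~\ref{lem:ComparisonXVsXStar}, precisely to manufacture an approximate conditional independence; Section~\ref{sec:GVsG*} states explicitly that girth $6$ is not enough for the dynamics. The girth-$6$ statement you need is proved in the paper by a purely \emph{static} argument on $X\sim\mu$: given the configuration on $S_3(v)$ the grandchildren blocks are independent (this is where girth $6$ enters), Azuma gives concentration of $\SOS_X(v)$, which yields the approximate recurrence of Lemma~\ref{lemma:RApproxReccGirth6}; feeding this into the stepwise contraction of Lemma~\ref{lemma:ApproxFixPointEquations} gives $\R(X,v)\approx\omega^*(v)$ whp (Lemma~\ref{lemma:RConGirth6}), and an elementary computation converts $\Exp{\R(X,v)}$ into $\mu(v\ \text{occupied})$ (Lemma~\ref{lemma:CnvrgLoopyBP2CorrectNoParent}). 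No ergodicity of the chain, and no dynamics at all, are needed; as proposed, your step (b) would not go through at girth $6$.

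Two further points on part (i). The anti-monotone squeeze shows the BP trajectory is trapped between the monotone iterates of $\mathcal F\circ\mathcal F$ from the two extreme corners, but it does \emph{not} show the two limits coincide: their coincidence is essentially the uniqueness statement itself, so ``necessarily the unique fixed point'' assumes what must be proved. Nor can the rate be read off the derivative of the symmetric recursion at its fixed point: away from the fixed point the $\ell_\infty$ row sums of the Jacobian of $F$ can exceed $1$ (they are of order $\Delta\lambda\approx e(1-\delta)$ when the incoming messages are near $0$), and on an irregular graph there is no single symmetric fixed point to linearize at. This is exactly what the paper's potential-function argument supplies (the $\mathrm{arcsinh}$ potential $\Psi$ in Theorem~\ref{theorem:HardcoreConvergence}, following \cite{LLY13}), giving uniqueness and a contraction rate $1-\delta/6$ uniform in $\Delta$, hence $C=C(\delta,\epsilon)$ independent of $n$; you would need this (or an equivalent) rather than the local derivative bound. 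Finally, your fixed point is the parent-excluded one, so you would still need the $O(\lambda)=O(1/\Delta)$ comparisons between $q^*(v,p)$, $\tilde q^*(v)$, and between $\mu(v\mid p\ \text{unocc})$ and $\mu(v)$, as in Theorem~\ref{thrm:ConvergenceParenBP}; these are minor but should not be omitted.
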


\subsection*{Contributions}

Our main conceptual contribution is formally connecting 
the behavior of BP and the Glauber dynamics.
We will analyze the Glauber dynamics using path coupling \cite{BD}.  
In path coupling we need to analyze a pair of {\em neighboring configurations}, in
our setting this
is a pair of 
independent sets $X_t,Y_t$
which differ at exactly one vertex $v$.
The key is to construct a one-step coupling $(X_t,Y_t)\rightarrow (X_{t+1},Y_{t+1})$
and introduce a distance function $\Phi:\Omega\times\Omega\rightarrow \R_{\geq 0}$ which
``contracts'' meaning that the following {\em path coupling
condition} holds for some $\gamma>0$:
\[ \ExpCond{\Phi(X_{t+1},Y_{t+1})}{X_t,Y_t} \leq  (1-\gamma)\Phi(X_t,Y_t).
\]  

We use a simple maximal one-step coupling and hence in our setting
the path coupling condition simplifies to:
\[
(1-\gamma)\Phi(X_t,Y_t) \geq \sum_{z\in N(v)} \frac{\lambda}{1+\lambda}\indicator{z \mbox{ is unblocked in }X_t}\Phi(z),
\]
where {\em unblocked}  means that $N(z)\cap X_t = \emptyset$, i.e., 
all neighbors of $z$ are unoccupied, and we have assumed there are
no triangles so as to ignore the possibility that $X_t$ and $Y_t$ differ 
on the neighborhood of $z$. 

The distance function $\Phi$ must satisfy a few basic
conditions such as being a path metric, and 
if $X\neq Y$ then $\Phi(X,Y)\geq 1$ (so that by Markov's inequality $\Prob{X_t\neq Y_t} \leq \Exp{\Phi(X_t,Y_t)}$).
A standard choice for the distance function is the 
Hamming distance.  In our setting the Hamming distance does not suffice and our
primary challenge is determining a suitable distance function.

We cannot construct a suitable distance function which satisfies the path
coupling condition for arbitrary neighboring pairs $X_t,Y_t$.  But, a key
insight is that we can
show the existence of a suitable $\Phi$ when the local neighborhood of the 
disagreement $v$ behaves like the BP fixpoint. 
Our construction of this $\Phi$ is quite intriguing.  

In our proofs it is useful to consider the (unrooted) BP recurrences corresponding
to the probability that a vertex is unblocked.
This corresponds to the following function  $F:[0,1]^V\to [0,1]^V$ which is defined as follows,
for any $\omega\in [0,1]^V$ and $z\in V$:
\begin{equation}\label{eq:BP-recursion}
F(\omega)(z)=\prod_{y\in N(z)}\frac{1}{1+\lambda \omega(y)}. 
\end{equation}
Also, for some integer $i\geq 0$, let $F^i(\omega):[0,1]^V\to [0,1]^V$ be
the $i$-iterate of $F$.  This recurrence is closely related to the standard
BP operator $R()$ and hence under the hypotheses of our main results, 
we have that $F()$ has a unique fixed point $\omega^*$,
and for any $\omega$, all $z\in V$, $\lim_{i\rightarrow\infty}F^i(z) = \omega^*(z)$.

To construct the distance function $\Phi$ we 
start with the Jacobian of this BP operator $F()$.
By a suitable matrix diagonalization we obtain the path coupling condition.
Since $F()$ converges to a fixed point, and, in fact, it contracts at every level with respect to
an appropriately defined potential function, we then know that 
the Jacobian of the BP operator $F()$ evaluated at its fixed point $\omega^*$
has spectral radius $< 1$ and hence the same holds for the path coupling
condition for pairs of states that are BP fixed points.  This yields a function 
$\Phi$ that satisfies the following system of inequalities
\begin{equation}\label{eq:EigengBPCondA}
\Phi(v) >  \sum_{z\in N(v)}\frac{\lambda\omega^*(z)}{1+\lambda\omega^*(z)}\Phi(z).
\end{equation}
However  for the  path coupling condition a stronger version of the above is necessary. 
More specifically,  the sum 
on the r.h.s. should be appropriately bounded away from $\Phi(v)$,
i.e. we need to have
\[
(1-\gamma)\Phi(v) >  \sum_{z\in N(v)}\frac{\lambda\omega^*(z)}{1+\lambda\omega^*(z)}\Phi(z).
\]
Additionally, $\Phi$ should  be a distance  metric, e.g. $\Phi>0$.
It turns out that we use further properties of the distance function 
$\Phi$, hence we  need to explicitly derive a $\Phi$.

There are previous works \cite{Hayes-radius,HVV}
which utilize the spectral radius 
of the adjacency matrix of the input graph $G$ to design a suitable distance
function for path coupling.  In contrast, we use
insights from the analysis of the BP operator to derive
a suitable distance function. 
 We believe this is a richer connection that can potentially
lead to stronger results since it directly relates to convergence properties on the
tree.   Our approach has the potential to apply for a more general class of 
spin systems, we comment on this in more detail in the conclusions.

The above argument only implies that we have contraction in the path coupling condition
for pairs of configurations which are BP fixed points.  
A priori we don't even know if the BP fixed points on the tree
correspond to the Gibbs distribution on the input graph.  
We prove that the Glauber dynamics (approximately) satisfies a
recurrence that is close to the BP recurrence; this builds upon ideas
of Hayes \cite{Hayes} for colorings.   This argument requires that there are no
cycles of length $\leq 6$ for the Glauber dynamics (and no cycles of length $\leq 5$
for the direct analysis of the Gibbs distribution).  Some local sparsity
condition is necessary since if there are many short cycles
then the Gibbs distribution no longer behaves similarly to a tree and hence loopy BP may be
a poor estimator.  

As a consequence of the above relation between BP and the Glauber dynamics,
we establish that
from an arbitrary initial configuration $X_0$, 
after a short burn-in period of $T=O(n\log{\Delta})$ steps of the Glauber dynamics 
the configuration $X_T$ is a close approximation to the BP fixed point.
In particular, for any vertex $v$, the number of unblocked neighbors of $v$ in $X_T$
is $\approx\sum_{z\in N(v)} \omega^*(z)$ with high probability.
As is standard for concentration results, our proof of this result necessitates that
$\Delta$ is at least a sufficiently large constant. 
Finally we adapt ideas of \cite{DFHV} to utilize these burn-in properties and
establish rapid mixing of the Glauber dynamics.

\subsection*{Outline of Paper}


In the following section we state results about the convergence
of the BP recurrences.  We then present in Section 
\ref{sec:DistanceFunction} our theorem showing the existence
of a suitable distance function for path coupling for pairs of states at 
the BP fixed point.  Section \ref{sec:LocalUniformity-sketch}
sketches the proofs for our local uniformity results that after a burn-in period
the Glauber dynamics behaves locally similar to the BP recurrences.
Finally, in Section \ref{sec:rapid-mixing-sketch}
we outline the proof of Theorem \ref{thrm:RapidMixingMain} of
rapid mixing for the Glauber dynamics.
The extension to random regular (bipartite) graphs as stated in Theorem \ref{thrm:RapidMixingRandom}
is proven in Section \ref{sec:thrm:RapidMixingRandom} of the appendix. 
Theorem \ref{thrm:CnvrgLoopyBP2Correct} about the efficiency of loopy BP is proven in 
Section \ref{sec:thrm:CnvrgLoopyBP2Correct}
of appendix, the key technical results in the proof are sketched in Section \ref{sec:LocalUniformity-sketch}.

The full proofs of our results are quite lengthy and so we defer many to the appendix.

\section{BP Convergence}\label{sec:BPConvergence}

Here we state several useful results about the convergence of 
BP to a unique fixed point, and stepwise contraction of BP to the fixed point.
The lemmas presented in this section are proved in 
Section \ref{sec:BP-proofs}
of the  appendix.

Our first lemma (which is proved using ideas from \cite{RSTVY,LLY13,SST})
says that the recurrence for $F()$ defined in \eqref{eq:BP-recursion} has
a unique fixed point.

\begin{lemma}\label{lemma:FixPointEquations}
For all $\delta>0$, there exists $\Delta_0=\Delta_0(\delta)$,
for all $G=(V,E)$ of maximum degree $\Delta\geq\Delta_0$,
all $\lambda<(1-\delta)\lambda_c(\Delta)$,
the function $F$ has a unique fixed point $\omega^*$.
\end{lemma}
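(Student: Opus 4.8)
The plan is to obtain existence and uniqueness together by exploiting that $F$ is \emph{antitone}: each factor $1/(1+\lambda\omega(y))$ is decreasing in $\omega(y)$, so $\omega\le\omega'$ coordinatewise implies $F(\omega)\ge F(\omega')$, and hence $F^2$ is monotone. Since $F(\mathbf 0)=\mathbf 1$, the $F$-orbits of $\mathbf 0$ and $\mathbf 1$ interleave; applying the monotone map $F^2$ to $\mathbf 0\le F^2(\mathbf 0)$ and to $F^2(\mathbf 1)\le\mathbf 1$ shows $\big(F^{2k}(\mathbf 0)\big)_k$ is increasing and $\big(F^{2k}(\mathbf 1)\big)_k$ decreasing, so by monotone convergence they tend to $\omega_\downarrow\le\omega_\uparrow$, the least and greatest fixed points of $F^2$, with $F(\omega_\downarrow)=\omega_\uparrow$ and $F(\omega_\uparrow)=\omega_\downarrow$ by continuity. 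Any fixed point $\widehat\omega$ of $F$ satisfies $\widehat\omega=F^2(\widehat\omega)$ and $\mathbf 0\le\widehat\omega\le\mathbf 1$, so it is squeezed between $F^2(\mathbf 0)$ and $F^2(\mathbf 1)$; in particular it lies in the box $\mathcal J:=\prod_{z\in V}\big[(1+\lambda)^{-\deg(z)},\,F^2(\mathbf 1)(z)\big]$ (here $F^2(\mathbf 0)(z)=F(\mathbf 1)(z)=(1+\lambda)^{-\deg(z)}$), and one checks $F(\mathcal J)\subseteq\mathcal J$ using the identity $F^3(\mathbf 0)=F^2(\mathbf 1)$ that follows from $F(\mathbf 0)=\mathbf 1$. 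Thus both existence and uniqueness reduce to showing $\omega_\downarrow=\omega_\uparrow$, i.e.\ that $F$ has a unique fixed point in $\mathcal J$. (Plain existence also follows from Brouwer, as $F$ is a continuous self-map of $[0,1]^V$; the content is uniqueness.)

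For uniqueness I would run the potential-function (amortized decay) contraction argument of \cite{RSTVY,LLY13,SST}, adapted to the operator $F$. Let $\varphi$ be the positive potential associated to this recurrence (closely related to the standard one for the rooted message recurrence $R(\cdot)$), and put $d_\varphi(\omega,\omega')=\max_{z}\big|\int_{\omega(z)}^{\omega'(z)}\varphi(t)\,\mathrm{d}t\big|$ on $\mathcal J$. From $\partial F(\eta)(z)/\partial\eta(y)=-\frac{\lambda}{1+\lambda\eta(y)}F(\eta)(z)$ for $y\in N(z)$, the mean value theorem gives, for $\omega,\omega'\in\mathcal J$,
\[
d_\varphi\big(F(\omega),F(\omega')\big)\;\le\;\Bigg(\max_{z}\ \max_{\eta\in\mathcal J}\ \sum_{y\in N(z)}\frac{\lambda\,F(\eta)(z)}{1+\lambda\eta(y)}\cdot\frac{\varphi\big(F(\eta)(z)\big)}{\varphi\big(\eta(y)\big)}\Bigg)\;d_\varphi\big(\omega,\omega'\big).
\]
The crux is to bound the bracketed quantity by $1-\gamma$ for some $\gamma=\gamma(\delta)>0$, uniformly over $\mathcal J$ and over degrees $\le\Delta$, when $\lambda<(1-\delta)\lambda_c(\Delta)$ and $\Delta\ge\Delta_0(\delta)$. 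Granting this, $F^2$ is a strict contraction of $(\mathcal J,d_\varphi)$ and so has at most one fixed point there; hence $\omega_\downarrow=\omega_\uparrow=:\omega^*$, which is then the unique fixed point of $F$.

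The main obstacle is precisely the contraction estimate in the display, for two reasons. First, the \emph{naive} choice $\varphi\equiv 1$ --- the bare sum of absolute partial derivatives --- does not work: on the box $\mathcal J$ that sum can exceed $1$ near $\lambda_c$, so one really needs the potential function, together with a verification that the weighted bound holds \emph{uniformly over $\mathcal J$} and not only at the fixed point. Second, whereas the rooted recurrence $R(\cdot)$ has arity $\deg(v)-1\le\Delta-1$ --- which is exactly what makes $\lambda_c(\Delta)$ the sharp uniqueness threshold --- the unrooted operator $F$ has arity $\deg(z)\le\Delta$, so its worst case is governed by $\lambda_c(\Delta+1)$. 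Since $\lambda_c(\Delta+1)/\lambda_c(\Delta)=\big(1-(\Delta-1)^{-2}\big)^{\Delta}\to 1$ as $\Delta\to\infty$, the hypothesis $\lambda<(1-\delta)\lambda_c(\Delta)$ forces $\lambda<\lambda_c(\Delta+1)$ once $\Delta\ge\Delta_0(\delta)$ --- this is where the lower bound on $\Delta_0$ comes from --- and leaves room for a contraction factor bounded away from $1$ by a function of $\delta$ alone.
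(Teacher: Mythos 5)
Your plan is, at its core, the paper's own argument: uniqueness is obtained from a potential-function contraction bound on the weighted Jacobian of $F$, in exactly the spirit of \cite{RSTVY,LLY13,SST}. The differences are in the scaffolding and in how much is actually carried out. The paper needs none of the antitone/$F^2$-monotone bracketing, the invariant box $\mathcal{J}$, or Brouwer: it takes the explicit potential $\Psi(x)=(\sqrt{\lambda})^{-1}\mathrm{arcsinh}(\sqrt{\lambda x})$, for which the bracketed quantity in your display becomes $\sqrt{\tfrac{\lambda F(\eta)(v)}{1+\lambda F(\eta)(v)}}\sum_{u\in N(v)}\sqrt{\tfrac{\lambda\eta(u)}{1+\lambda\eta(u)}}$, and bounds it by $1-\delta/6$ uniformly over \emph{all} $\eta\in[0,1]^V$, not just over a fixed-point-containing box (Propositions~\ref{prop:uniqueness-condition} and~\ref{prop:multi-variate-opt}: a Jensen/convexity step reduces the multivariate expression to the symmetric recurrence $f_{\lambda,d}(x)=(1+\lambda x)^{-d}$, whose worst case is shown to sit at its own fixed point, and a direct computation gives $\alpha(\lambda,\Delta)\le 1-\delta/6$ for $\lambda<(1-\delta)\lambda_c(\Delta)$, $\Delta\ge\Delta_0(\delta)$). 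With global contraction, orbits from two arbitrary initial points merge geometrically, which yields uniqueness immediately and also the quantitative rate $\|\omega^{i}-\omega^*\|_\infty\le 3(1-\delta/6)^i$ that the paper reuses later (Lemma~\ref{lemma:ApproxFixPointEquations}, Theorem~\ref{thrm:EigenVector}); your version, with contraction verified only on $\mathcal{J}$, suffices for this lemma but would have to be upgraded for those later uses. You also correctly anticipated the arity issue (each coordinate of $F$ has fan-in up to $\Delta$ rather than $\Delta-1$); this is precisely what Proposition~\ref{prop:uniqueness-condition} and the remark following Theorem~\ref{theorem:HardcoreConvergence} absorb into the choice of $\Delta_0(\delta)$, using $\lambda_c(\Delta)=\Theta(1/\Delta)$.

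The one substantive caveat is that your crux --- the uniform bound of the weighted Jacobian sum by $1-\gamma$ for a suitable potential $\varphi$ --- is asserted (``granting this'') rather than proved, and that estimate is essentially the entire technical content of the paper's proof: the choice $\varphi(t)=\tfrac{1}{2\sqrt{t(1+\lambda t)}}$, the reduction to the symmetric univariate recursion, and the quantitative comparison with $\lambda_c(\Delta)$. So as written your text is a correct reduction plus a correct roadmap, not yet a complete proof. Your structural steps themselves are sound: $F(\mathbf{0})=\mathbf{1}$, monotonicity of $F^2$, the sandwich $F^2(\mathbf{0})\le\widehat\omega\le F^2(\mathbf{1})$ for any fixed point $\widehat\omega$, invariance $F(\mathcal{J})\subseteq\mathcal{J}$, and the conclusion $\omega_\downarrow=\omega_\uparrow$ from uniqueness of the fixed point of $F^2$ on $\mathcal{J}$ all check out; they are simply extra machinery once the contraction is known to hold on all of $[0,1]^V$.
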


A critical result for our approach is that the 
recurrences $F()$ have stepwise contraction to the fixed point $\omega^*$.
To obtain contraction we use the following
potential function $\Psi$.
Let the function  $\Psi:[0,1]\to \mathbb{R}_{\geq 0}$ 
be as follows,
\begin{align}
\Psi(x)=(\sqrt{\lambda})^{-1}\textrm{arcsinh}\left (\sqrt{\lambda \cdot x}\right). \label{eq:potential-function}
\end{align}

Our main motivation for introducing $\Psi$ is as a normalizing
potential function that we use to define the following
distance metric, $D$, on functions $\omega \in [0,1]^V$:
\[
D(\omega_1, \omega_2) = \max_{z \in V} \left| \Psi(\omega_1(z)) -
  \Psi(\omega_2(z)) \right|.
\]
We will also need a variant, $D_{v,R}$, of this metric whose value only depends on
the restriction of the function to a ball of radius $\ell$ around vertex
$v$.  For any $v\in V$, integer $\ell\geq 0$,
let $B(v,\ell)$ be the set of vertices within distance $\leq \ell$ of $v$.
Moreover, for functions $\omega_1, \omega_2 \in [0,1]^V$, we define: 
\begin{align}\label{eq:potential-metric}
D_{v,\ell}(\omega_1, \omega_2) = \max_{z \in B(v,\ell)} \left| \Psi(\omega_1(z)) - \Psi(\omega_2(z)) \right|.
\end{align}

We can now state the following convergence result for the recurrences, which
establishes stepwise contraction.

\begin{lemma}\label{lemma:ApproxFixPointEquations}
For  all $\delta>0$,  there exists $\Delta_0=\Delta_0(\delta)$,
for all $G=(V,E)$ of maximum degree $\Delta\geq\Delta_0$, all 
$\lambda<(1-\delta)\lambda_c(\Delta)$, for any $\omega\in
[0,1]^V$, $v \in V$ and $\ell \ge 1$, we have:
\[
D_{v,\ell-1}(F(\omega), \omega^*) \le (1 - \delta/6) D_{v,\ell}(\omega,
\omega^*).
\]
where $\omega^*$ is the fixed point of $F$.
\end{lemma}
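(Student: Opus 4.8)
The plan is to reduce the claimed contraction in the $\Psi$-normalized metric to a one-variable estimate on the derivative of the map $t \mapsto \Psi\bigl(1/(1+\lambda t)\bigr)$ composed with the inverse of $\Psi$, and then upgrade that pointwise derivative bound to the stated contraction factor $1-\delta/6$ using the tree-recursion structure of $F$ together with the known sub-criticality $\lambda < (1-\delta)\lambda_c(\Delta)$. Concretely: fix $\omega \in [0,1]^V$, a vertex $v$, and $\ell \ge 1$. For each $z \in B(v,\ell-1)$ every neighbor $y \in N(z)$ lies in $B(v,\ell)$, so it suffices to show that for each such $z$,
\[
\left| \Psi\bigl(F(\omega)(z)\bigr) - \Psi\bigl(F(\omega^*)(z)\bigr) \right| \le (1-\delta/6)\, \max_{y \in N(z)} \left| \Psi(\omega(y)) - \Psi(\omega^*(y)) \right|,
\]
and then take the max over $z \in B(v,\ell-1)$; note $F(\omega^*)(z) = \omega^*(z)$ since $\omega^*$ is a fixed point (Lemma~\ref{lemma:FixPointEquations}).

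To prove the displayed per-vertex inequality I would work in the $\Psi$-coordinates. Write $x_y = \Psi(\omega(y))$ and $x_y^* = \Psi(\omega^*(y))$, so $\omega(y) = \Psi^{-1}(x_y)$, and view $\Psi(F(\omega)(z))$ as a function $G_z$ of the vector $(x_y)_{y \in N(z)}$. By the mean value theorem (along the segment joining $(x_y)$ to $(x_y^*)$),
\[
\left| G_z\bigl((x_y)\bigr) - G_z\bigl((x_y^*)\bigr) \right| \le \Bigl( \sup \sum_{y \in N(z)} \left| \partial_{x_y} G_z \right| \Bigr) \cdot \max_{y \in N(z)} |x_y - x_y^*|,
\]
so the whole thing comes down to bounding $\sum_{y \in N(z)} |\partial_{x_y} G_z|$ by $1-\delta/6$ uniformly over the relevant domain. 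A direct computation gives
\[
\partial_{x_y} G_z = \Psi'\bigl(F(\omega)(z)\bigr) \cdot \frac{\partial F(\omega)(z)}{\partial \omega(y)} \cdot (\Psi^{-1})'(x_y) = \Psi'\bigl(F(\omega)(z)\bigr) \cdot \frac{-\lambda\, F(\omega)(z)}{1+\lambda\omega(y)} \cdot \frac{1}{\Psi'(\omega(y))},
\]
using $\partial_{\omega(y)} F(\omega)(z) = -\lambda F(\omega)(z)/(1+\lambda\omega(y))$. The point of the specific choice $\Psi(x) = \lambda^{-1/2}\,\mathrm{arcsinh}(\sqrt{\lambda x})$ is exactly that $\Psi'(x) = 1/(2\sqrt{x}\,\sqrt{1+\lambda x})$, which makes these factors telescope nicely; substituting and summing over $y$ yields an expression of the form $c \sum_{y} \sqrt{\omega(y)}\,/(1+\lambda\omega(y))^{?}$ times $F(\omega)(z)^{1/2}$ that is precisely the quantity controlled by the hard-core tree recursion's contraction analysis (this is the standard message-decay / potential-function argument from \cite{RSTVY,LLY13,SST}). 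One then maximizes over the $\omega(y) \in [0,1]$ and over the degree $\deg(z) \le \Delta$; the maximum is attained at the symmetric point corresponding to the $\Delta$-regular tree fixed point, where the classical computation shows the sum equals $\sqrt{\lambda/\lambda_c(\Delta)}$ up to lower-order corrections, and $\lambda < (1-\delta)\lambda_c(\Delta)$ forces this below $1 - \delta/6$ once $\Delta \ge \Delta_0(\delta)$ (here $\Delta_0$ absorbs the $o(1)$ discrepancy between the finite-$\Delta$ optimum and its $\Delta\to\infty$ limit, as well as the slack between $\sqrt{1-\delta}$ and $1-\delta/6$).

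The main obstacle I expect is the last step: verifying that the supremum of $\sum_{y\in N(z)}|\partial_{x_y}G_z|$ over all admissible neighbor-values (not just the symmetric/tree configuration) is still bounded by the $\Delta$-regular tree value up to the $\delta/6$ slack. The function being maximized is not obviously Schur-concave or otherwise easy to reduce to the symmetric point, so one likely needs either a convexity/monotonicity argument in the $\Psi$-coordinates (which is presumably the reason $\Psi$ was introduced in the first place) or a careful case analysis. A secondary technical point is handling vertices $z$ with $\deg(z) < \Delta$: fewer neighbors only helps, but one must check the potential-function bound is monotone in the degree, which again follows from the standard hard-core uniqueness analysis. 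Everything else — the mean value theorem reduction, the chain-rule computation, the fact that $B(v,\ell-1)$'s neighbors stay inside $B(v,\ell)$ — is routine.
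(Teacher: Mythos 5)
Your overall skeleton is the same as the paper's: you reduce $D_{v,\ell-1}(F(\omega),\omega^*)\le(1-\delta/6)D_{v,\ell}(\omega,\omega^*)$ to the per-vertex statement $|\Psi(F(\omega)(z))-\Psi(F(\omega^*)(z))|\le(1-\delta/6)\max_{y\in N(z)}|\Psi(\omega(y))-\Psi(\omega^*(y))|$ using that $\omega^*$ is a fixed point and that neighbors of $B(v,\ell-1)$ lie in $B(v,\ell)$, and you then apply the mean value theorem in $\Psi$-coordinates, where the choice $\Psi'(x)=\frac{1}{2\sqrt{x(1+\lambda x)}}$ collapses the chain-rule factors into $\sqrt{\tfrac{\lambda F(\tilde\omega)(z)}{1+\lambda F(\tilde\omega)(z)}}\sum_{y\in N(z)}\sqrt{\tfrac{\lambda\tilde\omega(y)}{1+\lambda\tilde\omega(y)}}$. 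This is exactly the route taken in the proof of Theorem~\ref{theorem:HardcoreConvergence} (the inequality \eqref{eq:potential-contraction}).

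However, there is a genuine gap at the step that carries all the content of the lemma: the uniform bound of that sum by $1-\delta/6$ over \emph{all} $\tilde\omega\in[0,1]^{N(z)}$ and all degrees $\le\Delta$. You defer this to ``the standard message-decay argument'' and explicitly flag it as the main obstacle, and your sketch of how it would be resolved is not accurate: it is not true that the supremum is attained ``at the symmetric point corresponding to the $\Delta$-regular tree fixed point.'' In the paper (Proposition~\ref{prop:multi-variate-opt}, following \cite{LLY13}), one first symmetrizes by Jensen's inequality in the variable $\nu=\ln(1+\lambda x)$, using concavity of $\sqrt{(\mathrm{e}^{\nu}-1)/\mathrm{e}^{\nu}}$; but the resulting one-variable function $\alpha_{\lambda,d}(x)=\sqrt{\tfrac{\lambda d f(x)}{1+\lambda f(x)}\cdot\tfrac{\lambda d x}{1+\lambda x}}$ is maximized at the critical point $x(\lambda,d)$ solving $\lambda d x=1+\lambda f(x)$, which is \emph{not} the fixed point $\hat{x}$. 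One then needs the uniqueness hypothesis (via $\alpha(\lambda,d)\le1$, which forces $\hat{x}\le x(\lambda,d)$) to bound the value at the critical point by $\alpha(\lambda,d)=\sqrt{\tfrac{\lambda d\hat{x}}{1+\lambda\hat{x}}}$, monotonicity of $\alpha(\lambda,d)$ in $d$ to handle vertices of degree below $\Delta$, and finally the separate estimate of Proposition~\ref{prop:uniqueness-condition} to convert $\lambda<(1-\delta)\lambda_c(\Delta)$ into $\alpha(\lambda,\Delta)\le1-\delta/6$ for $\Delta\ge\Delta_0(\delta)$. So the missing piece is genuine analytic work (symmetrization plus a non-trivial univariate optimization and comparison), not routine bookkeeping, and the maximization as you describe it would not go through as stated.
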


\section{Path Coupling Distance Function}\label{sec:DistanceFunction}

We now prove that there exists a suitable distance function $\Phi$ for which
the path coupling condition holds for configurations that correspond to 
the fixed points of $F()$.

\begin{theorem}\label{thrm:EigenVector}
For  all $\delta>0$,  there exists $\Delta_0=\Delta_0(\delta)$,
for all $G=(V,E)$ of maximum degree $\Delta\geq\Delta_0$, all 
$\lambda<(1-\delta)\lambda_c(\Delta)$, 
there exists $\Phi:V\to \mathbb{R}_{\geq 0}$ such that  for every $v\in V$, 
\begin{equation}\label{eq:MaxPhi}
1\leq \Phi(v)\leq 12,
\end{equation}
and 
\begin{equation}
\label{eq:FoundPhi}
(1-\delta/6)\Phi(v) \ge\sum_{u\in N(v)}\frac{\lambda\omega^*(u)}{1+\lambda\omega^*(u)}\Phi(u),
\end{equation}
where $\omega^*$ is the fixed point of $F$ defined in \eqref{eq:BP-recursion}. 
\end{theorem}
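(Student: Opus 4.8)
The plan is to build $\Phi$ directly from the principal eigenvector of the linearized BP operator at the fixed point $\omega^*$, using Lemma~\ref{lemma:ApproxFixPointEquations} to guarantee the spectral radius is bounded away from $1$. Write $a(u) = \lambda\omega^*(u)/(1+\lambda\omega^*(u))$ for the coefficients appearing in \eqref{eq:FoundPhi}, and let $M$ be the nonnegative $V\times V$ matrix with $M_{vu} = a(u)$ if $u\in N(v)$ and $0$ otherwise. The target inequality \eqref{eq:FoundPhi} is precisely $(1-\delta/6)\Phi \ge M\Phi$ entrywise, so it suffices to produce a strictly positive vector $\Phi$ with $M\Phi \le (1-\delta/6)\Phi$ and then rescale so that $\min_v \Phi(v) = 1$; the upper bound $\Phi(v)\le 12$ will then have to be extracted from quantitative control of the ratio $\max\Phi/\min\Phi$.

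First I would convert the stepwise contraction of Lemma~\ref{lemma:ApproxFixPointEquations} into a contraction statement for $M$ in a weighted norm. Linearizing $F$ at $\omega^*$ and changing coordinates through the potential $\Psi$: if $\delta\omega$ is a perturbation and we set $\xi(z) = \Psi'(\omega^*(z))\,\delta\omega(z)$, then the Jacobian of $F$ conjugated by $\mathrm{diag}(\Psi'(\omega^*))$ is a nonnegative matrix $\widetilde M$, and Lemma~\ref{lemma:ApproxFixPointEquations} (applied to $\omega = \omega^* + \epsilon\,\delta\omega$ and taking $\epsilon\to 0$, using that $D_{v,\ell}$ is a max of $|\xi|$-type quantities over a ball) says exactly that $\|\widetilde M\|_{\infty\to\infty}\le 1-\delta/6$ — indeed it gives the sharper ball-local statement that one row-ball sum contracts. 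A short computation identifying $|F_z'| = F(\omega)(z)\sum_{y\in N(z)} \lambda/(1+\lambda\omega^*(y))$ at the fixed point, together with the defining ODE-type identity for $\Psi$ (namely $\Psi'(x) = 1/\sqrt{1+\lambda x}\cdot(1/(2\sqrt{x})$-type normalization, chosen so that these Jacobian entries become symmetric after conjugation), shows that $\widetilde M$ has the same off-diagonal support as $M$ and that $\widetilde M$ and $M$ are related by a further positive diagonal conjugation. Consequently $M = S\,\widetilde M\, S^{-1}$ for an explicit positive diagonal $S$, so $\rho(M) = \rho(\widetilde M) \le 1-\delta/6$.

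Next, from $\|\widetilde M\|_{\infty\to\infty}\le 1-\delta/6$ I would take $\psi \equiv \mathbf 1$ (the all-ones vector): then $\widetilde M\psi \le (1-\delta/6)\psi$ entrywise, and setting $\Phi = S\psi$ gives $M\Phi = S\widetilde M\psi \le (1-\delta/6)S\psi = (1-\delta/6)\Phi$, which is \eqref{eq:FoundPhi}. So $\Phi(v) = S_{vv}$ is just the explicit diagonal conjugation factor coming from the $\Psi$-change of coordinates, i.e. $\Phi(v)$ is an explicit product/expression in $\omega^*(v)$ and $\lambda$. Rescaling so $\min_v\Phi(v)=1$ handles \eqref{eq:MaxPhi}'s lower bound. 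The main obstacle — and the step I expect to take real work — is the two-sided quantitative bound $1\le\Phi(v)\le 12$: this requires uniform bounds on $\omega^*(v)$ away from $0$ and $1$ across all $v$ (which follow from $F(\omega^*)(v) = \prod_{y\in N(v)} 1/(1+\lambda\omega^*(y))$ being pinned near $e^{-\lambda\Delta'}$-type values in the uniqueness regime, using $\lambda < \lambda_c(\Delta) \sim e/(\Delta-2)$ and degrees up to $\Delta$), and then showing the resulting conjugation factor $S_{vv}$ has bounded spread. I would bound $\omega^*(v)\in[c_1,c_2]$ for absolute constants via a fixed-point sandwich argument (iterating $F$ from $\mathbf 0$ and $\mathbf 1$ a constant number of times and invoking uniqueness), which forces $\Psi'(\omega^*(v))$ into a bounded range, hence $S_{vv}$ into a bounded multiplicative window; choosing $\Delta_0$ large enough makes the numeric constant come out below $12$ after the $\min=1$ normalization. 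The girth/triangle-free hypothesis is not needed here — it enters only in the path-coupling application, not in the construction of $\Phi$ — so this proof is purely about the spectral structure of $M$ and the analytic properties of $\Psi$ and $\omega^*$.
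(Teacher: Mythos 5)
Your proposal is correct and follows essentially the same route as the paper: the conjugation of the fixed-point Jacobian by $\mathrm{diag}(\Psi'(\omega^*))$ together with the diagonal factor $D=\mathrm{diag}(\omega^*)$ yields exactly the paper's choice $\Phi(v)\propto\frac{1}{\omega^*(v)\Psi'(\omega^*(v))}=2\sqrt{\frac{1+\lambda\omega^*(v)}{\omega^*(v)}}$, and the row-sum bound you extract (whether by linearizing Lemma~\ref{lemma:ApproxFixPointEquations} or, as the paper does, by applying the potential-function contraction of Propositions~\ref{prop:uniqueness-condition} and~\ref{prop:multi-variate-opt} directly at $\omega^*$) is the same inequality $\sqrt{\tfrac{\lambda\omega^*(v)}{1+\lambda\omega^*(v)}}\sum_{u\in N(v)}\sqrt{\tfrac{\lambda\omega^*(u)}{1+\lambda\omega^*(u)}}\le 1-\delta/6$. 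The bound \eqref{eq:MaxPhi} also comes out as in the paper, from $\omega^*(v)\ge(1+\lambda)^{-\Delta}\ge 5^{-3}$ and $\omega^*(v)\le 1$ (your sandwich iteration is unnecessary but harmless).
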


\begin{proof}
We will prove here that the convergence of BP provides the existence of a 
 distance function $\Phi$ satisfying \eqref{eq:FoundPhi}.
 We defer the technical proof of~\eqref{eq:MaxPhi} to 
 Section~\ref{sec:BP-proofs}
 of the   appendix.

The Jacobian $J$ of the BP operator $F$ is given by
\begin{displaymath}
J({v, u})\;=\; \left | \frac{\partial F(\omega)(v)}{\partial \omega(u)} \right|\; =\; 
\left \{
\begin{array}{lcl}
\frac{\lambda F(\omega)(v)}{1+\lambda\omega(u)} &\quad & \textrm{if $u\in N_v$}\\
0 && \textrm{otherwise}
\end{array}
\right .
\end{displaymath}
Let $J^*=\left.J\right|_{\omega=\omega^*}$ denote the Jacobian at the fixed point $\omega=\omega^*$. Let $D$ be the diagonal matrix with $D(v,v)=\omega^*(v)$ and let $\hat{J}=D^{-1}J^* D$.

The path coupling condition~\eqref{eq:FoundPhi} is in fact 
\begin{equation}\label{eq:Path2SpectralRad}
\hat{J}\Phi\le(1-\delta/6)\Phi.
\end{equation}
The fact that $\omega^*$ is a Jacobian attractive fixpoint implies 
the existence of a nonnegative $\Phi$ with~$\hat{J}\Phi<\Phi$.
Thus, the theorem would
follow immediately if the spectral radius of $\hat{J}$ is $\rho(\hat{J})\le 1-\delta/6$ and $\hat{J}$ has a principal eigenvector with each entry from the bounded range $[1,12]$. However, explicitly calculating this principal eigenvector can be challenging on general graphs.

The convergence of BP which is established in Lemmas \ref{lemma:FixPointEquations}, \ref{lemma:ApproxFixPointEquations}, 
with respect to the potential function $\Psi$, guides us to an explicit construction of $\Phi$ such that $\hat{J}\Phi<\Phi$. 
Indeed, let  $\Psi'(x)=\frac{1}{2\sqrt{x(1+\lambda x)}}$  denote the derivative of the potential function $\Psi$.
It will follow from the proof of Lemma~\ref{lemma:ApproxFixPointEquations} that:
\[
\sum_{u\in N(v)}J^*(v,u)\frac{\Psi'(\omega^*(v))}{\Psi'(\omega^*(u))}\le 1-\delta/6.
\]
This inequality is due to the contraction of the BP system at the fixed point with respect to the potential function $\Psi$. It is equivalent to the following:
\[
\sum_{u\in N(v)}\frac{\hat{J}(v,u)}{\omega^*(u)\Psi'(\omega^*(u))}\le \frac{1-\delta/6}{\omega^*(v)\Psi'(\omega^*(v))}.
\]
Then, \eqref{eq:Path2SpectralRad} is trivially satisfied  by choosing $\Phi$ such that 
$\Phi(v)=\frac{1}{2\omega^*(v)\Psi'(\omega^*(v))}=\sqrt{\frac{1+\lambda \omega^*(v)}{\omega^*(v)}}$.
In turn we get the path coupling condition~\eqref{eq:FoundPhi}.  
The verification of \eqref{eq:MaxPhi} is in 
Section~\ref{sec:BP-proofs}
of the  appendix.

\end{proof}

\section{Local Uniformity for the Glauber Dynamics}
\label{sec:LocalUniformity-sketch}

We will prove that the Glauber dynamics, after a sufficient burn-in,
behaves with high probability locally similar to the BP fixed points.
In this section we will formally state some of these ``local uniformity'' results
and sketch the main ideas in their proof.   
The proofs are quite technical  and deferred to 
Section \ref{sec:UniformityProofs}
of the  appendix.

For an independent set $\sigma$, for $v\in V$, and $p\in N(v)$ let 
\begin{equation}\label{eq:DefIvp}
\I_{v,p}(\sigma)=\indicator{ \sigma\cap\left(N(v)\setminus\{p\}\right) = \emptyset}
\end{equation}
be the indicator of whether the children of $v$ leave $v$ unblocked.

We now state our main local uniformity results.
We first establish that the Gibbs distribution behaves as in the BP fixpoint, when the
girth~$\geq 6$.  We will prove that for any vertex $v$,
the number of unblocked neighbors of $v$ is $\approx\sum_{z\in N(v)} \omega^*(z)$ 
with high probability.
Hence, for $v\in V$ let 
\[  \SOS_X(v) =\sum_{z\in N(v)} \I_{z,v}(X),
\]
denote the number of unblocked neighbors of $v$ in configuration $X$.

\begin{theorem}\label{thrm:ConcentrNoOfBlockedStatic}
For all $\delta, \epsilon>0$, there exists $\Delta_0=\Delta_0(\delta,\epsilon)$ and $C=C(\delta,\epsilon)$,
for all graphs $G=(V,E)$ of maximum degree $\Delta\geq\Delta_0$ and girth $\geq 6$,
 all $\lambda<(1-\delta)\lambda_c(\Delta)$,
 for all $v\in V$, it holds that: 
\[  
\ProbSub{X\sim\mu}{
\left|  \SOS_X(v)-\sum_{z\in N(v)} \omega^*(z) \right| 
\leq \epsilon \Delta }  \geq 1-\exp\left(-\Delta/C\right),
  \]
where $\omega^*$ is the fixpoint from Lemma \ref{lemma:FixPointEquations}.
\end{theorem}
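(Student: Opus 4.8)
The plan is to prove concentration of $\SOS_X(v) = \sum_{z \in N(v)} \I_{z,v}(X)$ around its ``BP prediction'' $\sum_{z \in N(v)} \omega^*(z)$ in two stages: first show that the expectation is close to the target, and then show concentration around the expectation. For the expectation, the key point is that $\ExpSub{X \sim \mu}{\I_{z,v}(X)} = \ProbSub{\mu}{N(z) \setminus \{v\} \cap X = \emptyset}$ is exactly the kind of ``unblocked probability'' governed by the $F$-recurrence. Here girth $\geq 6$ ensures that the neighborhoods $N(z)\setminus\{v\}$ for distinct $z \in N(v)$ are disjoint and that the relevant local structure around $v$ out to a constant radius is a tree, so that a truncated BP computation of depth $\ell = O(1)$ approximates the true marginal. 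Using Lemma~\ref{lemma:ApproxFixPointEquations}, the depth-$\ell$ BP iterate $F^\ell(\omega)$, started from any $\omega \in [0,1]^V$, contracts toward $\omega^*$ in the $\Psi$-weighted metric $D_{v,\ell}$ at rate $(1-\delta/6)$ per step; taking $\ell$ a sufficiently large constant (depending on $\delta,\epsilon$) makes $D_{v,0}(F^\ell(\omega),\omega^*)$, hence $|\ProbSub{\mu}{\text{$z$ unblocked}} - \omega^*(z)|$, smaller than $\epsilon/3$. Summing over the $\Delta$ neighbors $z$ of $v$ gives $|\ExpSub{\mu}{\SOS_X(v)} - \sum_{z\in N(v)}\omega^*(z)| \leq (\epsilon/3)\Delta$.

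For the concentration step around the expectation, I would condition on the restriction of $X$ to the sphere $B(v,\ell)$ and argue that $\SOS_X(v)$ is, up to small error, a sum of the $\Delta$ indicators $\I_{z,v}(X)$, each of which depends on $X$ restricted to $N(z)\setminus\{v\}$, and these regions are vertex-disjoint under girth $\geq 6$. The natural tool is a martingale/exposure argument: reveal the $\I_{z,v}(X)$ one neighbor at a time. The conditional distribution of $\I_{z',v}$ given the previously revealed indicators and a bounded-depth boundary is again governed by a contracting BP recurrence, so conditioning on the other indicators shifts the conditional mean of $\I_{z',v}$ by only $O(\epsilon)$ in aggregate — this is exactly the ``near-independence'' afforded by tree-uniqueness. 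One can then invoke an Azuma/Hoeffding-type inequality for the resulting martingale difference sequence: each of the $\Delta$ steps has bounded effect, giving a deviation bound of the form $\exp(-c \epsilon^2 \Delta)$, which is absorbed into $\exp(-\Delta/C)$ for suitable $C = C(\delta,\epsilon)$. The requirement that $\Delta \geq \Delta_0$ is a large constant enters precisely here, since the concentration is nontrivial only once $\Delta$ is large enough that $\exp(-c\epsilon^2\Delta)$ is genuinely small and the constant-radius truncation errors are dominated.

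The main obstacle, I expect, is making the conditional near-independence rigorous: when we condition on the values $\I_{z',v}(X)$ for $z' \neq z$ (equivalently on $X$ being empty or nonempty on certain neighborhoods), the conditional Gibbs measure seen by $z$'s neighborhood is no longer a ``free'' boundary condition but a mixture of boundary conditions, and one must show the influence of this conditioning on $\ProbSub{\mu}{\text{$z$ unblocked} \mid \cdots}$ decays. This is where the full strength of BP contraction in the $\Psi$-potential (Lemma~\ref{lemma:ApproxFixPointEquations}), together with girth $\geq 6$ to guarantee the relevant subgraphs are trees out to radius $\ell$, does the work: any boundary condition at depth $\ell$ is forgotten up to error $(1-\delta/6)^\ell$ at the root $z$. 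A clean way to organize this is to bound the discrepancy between the conditioned and unconditioned marginals by setting up two BP recurrences that agree except at depth $\ell$ and applying the contraction lemma to their difference. A secondary technical point is handling the $O(1)$ correction terms coming from the (rare) event that $X$ is nonempty deep inside $B(v,\ell)$, which can be controlled by a union bound since $\lambda < (1-\delta)\lambda_c(\Delta) = O(1/\Delta)$ makes occupied vertices sparse.
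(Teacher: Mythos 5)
Your high-level decomposition (expectation close to $\sum_z\omega^*(z)$, then concentration) sounds plausible, but the first stage rests on a claim that is simply false under the hypotheses: girth $\geq 6$ does \emph{not} make the ball $B(v,\ell)$ a tree for a large constant $\ell=\ell(\delta,\epsilon)$ — it only forbids cycles of length $\leq 5$, so balls are trees only out to radius $2$. Consequently there is no ``truncated depth-$\ell$ BP on the local tree'' whose value approximates the Gibbs marginal $\ProbSub{\mu}{\text{$z$ unblocked}}$, and likewise your key step that ``any boundary condition at depth $\ell$ is forgotten up to error $(1-\delta/6)^\ell$ at the root'' is a statement about trees (or requires strong spatial mixing on the actual graph), not something Lemma~\ref{lemma:ApproxFixPointEquations} gives you on a loopy graph of girth $6$. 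In fact, the statement that the marginals are within $\epsilon$ of the BP fixed point is essentially Theorem~\ref{thrm:CnvrgLoopyBP2Correct}/Lemma~\ref{lemma:CnvrgLoopyBP2CorrectNoParent}, which the paper \emph{deduces from} the machinery behind Theorem~\ref{thrm:ConcentrNoOfBlockedStatic}, so your argument is circular in spirit: it presupposes the hard part.

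The paper's route avoids this entirely and is worth contrasting with yours. It conditions on the $\sigma$-algebra $\mathcal{F}$ generated by $v$ and all vertices at distance $\geq 3$ from $v$; with girth $\geq 6$ the sets $N(z)\setminus\{v\}$, $z\in N(v)$, are disjoint and separated by this boundary, so the indicators $\I_{z,v}(X)$ are \emph{exactly} conditionally independent and Azuma gives concentration of $\SOS_X(v)$ around the \emph{random} conditional expectation $\ExpCond{\SOS_X(v)}{\mathcal{F}}$ — no ``near-independence'' or martingale drift control is needed. The remaining work is to show this random conditional expectation is close to $\sum_z\omega^*(z)$ with probability $1-\exp(-\Omega(\Delta))$: one shows $\ExpCond{\I_{z,v}(X)}{\mathcal{F}}=\R(X,z)+O(\lambda)$, proves that the random quantities $\R(X,\cdot)$ satisfy the \emph{loopy} BP recurrence approximately with exponentially good probability (Lemma~\ref{lemma:RApproxReccGirth6}, again via distance-$3$ conditioning and Azuma), and then bootstraps the contraction of Lemma~\ref{lemma:ApproxFixPointEquations} over a constant number of levels of the actual graph's ball $B_R(v)$, paying a union bound at each level that is absorbed by $\exp(-\Omega(\Delta))$ (Lemma~\ref{lemma:RConGirth6}). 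The point you are missing is precisely this: since the graph is not locally tree-like beyond radius $2$, the decay of boundary influence cannot be imported from the tree; instead, large $\Delta$ supplies per-level concentration that lets the BP contraction be run on the graph itself. If you want to salvage your outline, replace your stage one and the ``boundary forgetting'' step by this approximate-recurrence-plus-bootstrap argument; your Azuma step then becomes the paper's equation for concentration around $\ExpCond{\SOS_X(v)}{\mathcal{F}}$, with exact rather than approximate conditional independence.
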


Theorem \ref{thrm:ConcentrNoOfBlockedStatic} will be the key
ingredient in the proof of Theorem \ref{thrm:CnvrgLoopyBP2Correct}
(to be precise, the upcoming Lemma \ref{lemma:RApproxReccGirth6}
is the key element in the proofs of 
Theorems \ref{thrm:CnvrgLoopyBP2Correct} and \ref{thrm:ConcentrNoOfBlockedStatic}).

For our rapid mixing result (Theorem \ref{thrm:RapidMixingRandom})
we need an analogous local uniformity result for the Glauber dynamics.
This will require the slightly higher girth requirement $\geq 7$
since the grandchildren of a vertex $v$ no longer have a certain
conditionally independence and we need the additional girth requirement
to derive an approximate version of the conditional independence
(this is discussed in more detail in  Section \ref{sec:GVsG*} of the  appendix).

The path coupling proof weights the vertices according to $\Phi$.
Hence, in place of $\SOS$ we need the following weighted version $\W$.
For $v\in V$  and $\Phi:V\to \mathbb{R}_{\geq 0}$ as defined in 
Theorem \ref{thrm:EigenVector} let
\begin{equation}
\label{eq:DefWXT}
\W_{\sigma}(v)=\sum_{z\in N(v)} \I_{z,v}(\sigma)\ \Phi(z).
\end{equation}

\noindent
We then prove that the Glauber dynamics, after sufficient burn-in, also
behaves as in the BP fixpoint with a slightly higher girth requirement $\geq 7$.
(For path coupling we only need an upper bound on the number of unblocked neighbors,
hence we state and prove this simpler form.)

\begin{theorem}\label{thrm:UniformityWithBurnIn}
For  all $\delta, \epsilon>0$,  let 
$\Delta_0=\Delta_0(\delta,\epsilon), C=C(\delta,\epsilon)$,
for all graphs $G=(V,E)$ of maximum degree $\Delta\geq\Delta_0$ and girth $\geq 7$,
 all $\lambda<(1-\delta)\lambda_c(\Delta)$,
let $(X_t)$ be the Glauber dynamics on the hard-core model. 
For all $v\in V$, it holds that
\begin{eqnarray} \label{eq:uniformity-simple}
\lefteqn{
\Prob{ 
(\forall t\in {\cal I} )  \quad
  \W_{X_t}(v)<\sum_{z \in N(v)} \omega^*(z)\Phi(z)+\epsilon \Delta }
 } \hspace{2in}
\\
  &\geq& 1-\exp\left( -\Delta/ C \right),\nonumber
 \end{eqnarray}
where the time interval  ${\cal I}=[Cn\log\Delta,n\exp\left( \Delta/C \right)]$.
\end{theorem}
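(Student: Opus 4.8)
I would prove this in two stages --- a \emph{burn-in} stage and a \emph{confinement} stage --- after fixing $v$ and abbreviating $\theta^*:=\sum_{z\in N(v)}\omega^*(z)\Phi(z)$. The burn-in goal is that at the single time $t_1:=\lceil Cn\log\Delta\rceil$ one has $\W_{X_{t_1}}(v)<\theta^*+(\epsilon/3)\Delta$ except with probability $\exp(-\Delta/C)$. The confinement goal is that, conditioned on the burn-in succeeding, $\W_{X_t}(v)<\theta^*+\epsilon\Delta$ for \emph{every} $t$ in the remaining part of $\mathcal{I}$, again except with probability $\exp(-\Delta/C)$; a union bound then yields the theorem. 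The second stage is where the real work lies, since $\mathcal{I}$ has exponential length and no naive union bound over times can succeed.

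For the burn-in I would run the approximate-BP argument of \cite{Hayes}, in the form established for the hard-core model by the companion Lemma~\ref{lemma:RApproxReccGirth6}. The point is that, after an initial delay, the marginals the Glauber dynamics assigns at time $t$ to the events $\{z\text{ unblocked}\}$ satisfy the recurrence $F$ of~\eqref{eq:BP-recursion} up to an additive error that shrinks with $t$; the girth hypothesis is what allows treating the $\Delta$ branches feeding a vertex as (approximately) conditionally independent. Iterating, and using that $F$ contracts by a factor $1-\delta/6$ in the $\Psi$-potential metric (Lemma~\ref{lemma:ApproxFixPointEquations}) toward its unique fixpoint (Lemma~\ref{lemma:FixPointEquations}), after $O(\log\Delta)$ effective rounds --- realised by $O(n\log\Delta)$ Glauber steps --- every $z$ has $\big|\Prob{z\text{ unblocked in }X_{t_1}}-\omega^*(z)\big|$ as small as desired, which gives $\Exp{\W_{X_{t_1}}(v)}\le\theta^*+(\epsilon/6)\Delta$ using~\eqref{eq:DefWXT}. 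For the deviation I would use that $\W_{X_{t_1}}(v)=\sum_{z\in N(v)}\Phi(z)\,\I_{z,v}(X_{t_1})$ is a sum of $\Delta$ terms in $[1,12]$ (by~\eqref{eq:MaxPhi}) whose indicators depend on the pairwise-disjoint sets $N(z)\setminus\{v\}$ (disjoint already by girth $\geq 5$) and --- because girth $\geq 7$ --- on parts of the configuration that are mutually far apart, so a martingale-exposure / bounded-differences inequality yields a deviation bound of $\exp(-\Delta/C)$ since $\epsilon\Delta\gg\sqrt{\Delta}$. Replacing exact by approximate branch-independence here --- unnecessary in Theorem~\ref{thrm:ConcentrNoOfBlockedStatic} --- is exactly what forces girth $\geq 7$ (cf.\ Section~\ref{sec:GVsG*}).

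For the confinement stage I would exploit two facts about the process $t\mapsto\W_{X_t}(v)$. First, it has bounded increments: a step changes it by at most $\max_z\Phi(z)\leq 12$, and only when the updated vertex is one of the $<\Delta^2$ grandchildren of $v$ --- and by girth $\geq 5$ each grandchild lies in $N(z)\setminus\{v\}$ for a unique $z\in N(v)$, so it can flip at most one indicator $\I_{z,v}$. Second, it is mean-reverting: rerunning the contraction estimate conditionally on the current state, whenever $\W_{X_t}(v)\geq\theta^*+(\epsilon/3)\Delta$ one should obtain $\ExpCond{\W_{X_{t+1}}(v)-\W_{X_t}(v)}{X_t}\leq -c\,\epsilon\Delta^2/n$ for a constant $c=c(\delta)>0$ --- a surplus of unblocked neighbors means too few occupied grandchildren, so a random grandchild update is in expectation more likely to create a block than to remove one. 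I would combine these via a Freedman-type inequality for supermartingales with bounded increments: each excursion of $\W_{X_t}(v)$ above $\theta^*+(\epsilon/3)\Delta$ must produce an upward fluctuation of order $\epsilon\Delta$ against the downward drift in order to reach $\theta^*+\epsilon\Delta$, an event whose probability is exponentially small in $\Delta$ at a rate depending on $\delta$ and $\epsilon$; and at most $O(\Delta^2\exp(\Delta/C))$ excursions are initiated during $\mathcal{I}$ (an excursion needs a grandchild update to begin), so a union bound over excursions is $\exp(-\Delta/C)$ once $C$ is large enough in terms of $\delta$ and $\epsilon$. This is where the right-hand endpoint $n\exp(\Delta/C)$ of $\mathcal{I}$ comes from.

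The main obstacle, I expect, is the drift estimate in the confinement stage: one needs the approximate BP recurrence for the Glauber dynamics not merely at the fixpoint but as a genuine contraction toward it, valid uniformly over the configurations the chain visits, so that any surplus of unblocked neighbors is corrected in expectation; and inside that argument one must replace the exact conditional independence of the $\Delta$ branches at $v$ (available on a tree) by an approximate version that degrades controllably with the girth. This is precisely why girth $\geq 7$ rather than $\geq 5$ or $6$ is needed, and it is where the bulk of the technical work in Section~\ref{sec:UniformityProofs} goes.
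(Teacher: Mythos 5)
The gap is in your confinement stage. The drift claim you need --- that uniformly over configurations, $\W_{X_t}(v)\ge\theta^*+(\epsilon/3)\Delta$ forces $\ExpCond{\W_{X_{t+1}}(v)-\W_{X_t}(v)}{X_t}\le -c\epsilon\Delta^2/n$ --- is false as stated, because the one-step drift of $\W_{X_t}(v)$ is not a function of $\W_{X_t}(v)$ alone but of the occupancy pattern at distance $2$ and $3$ from $v$. Concretely, take $X_t$ in which every grandchild of $v$ is unoccupied but each grandchild is blocked by an occupied vertex at distance $3$ from $v$: then every $z\in N(v)$ has $\I_{z,v}(X_t)=1$, so $\W_{X_t}(v)$ is at its maximum value $\sum_{z\in N(v)}\Phi(z)\gg\theta^*+\epsilon\Delta$, yet no indicator can flip $1\to 0$ in a single step (there is no unblocked, unoccupied grandchild available to become occupied), so the conditional drift is $\ge 0$, not negative. (Even in favorable configurations the downward rate is only of order $\Delta/n$, not $\Delta^2/n$: a chosen grandchild creates a new block only with probability $\lambda/(1+\lambda)=O(1/\Delta)$.) Such adversarial configurations are of course unlikely under the dynamics, but excluding them along the entire trajectory is precisely the local-uniformity statement being proved, so the Freedman/excursion scheme is circular unless you re-derive the whole conditional analysis along the trajectory --- which is the actual technical content you have deferred. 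A secondary issue: your burn-in stage cannot simply invoke Lemma~\ref{lemma:RApproxReccGirth6}, which is a statement about $X\sim\mu$; its dynamics analogue is much more delicate and itself needs a non-heavy starting state.

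The paper's proof uses no mean-reversion at all. It first runs a crude burn-in (Lemma~\ref{lemma:BurnIn}): after $O(n\log\Delta)$ steps, and for all later times in $\mathcal{I}$, the chain is ``above suspicion'' (not heavy) in a large ball around $v$, with failure probability $\exp(-\Omega(\Delta))$. Then Theorem~\ref{thrm:Uniformity} proves the uniformity bound for such non-heavy starts after only $O(n)$ further steps: conditional independence of the branches at $v$ is recovered not by bounded differences on $(X_t)$ directly but by passing to the surgically modified graph $G^*_w$ and dynamics $(X^*_t)$ (this is exactly where girth $\ge 7$ enters), comparing back to $(X_t)$ via the disagreement-percolation Lemma~\ref{lem:ComparisonXVsXStar}, establishing the approximate recurrence in terms of the time-averaged quantities $\Expectation_{t_z}[\R(X_{t_z},z)]$ (Lemma~\ref{lem:Approx-Glauber-BP}), and then running a spatial induction over a constant number of radii with the BP contraction of Lemma~\ref{lemma:ApproxFixPointEquations} (Lemmas~\ref{lem:ApproxVsExactFixPoint} and~\ref{lem:ExpWVsR}). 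The extension to every $t$ in the exponentially long interval is handled by a coarse covering argument rather than a drift argument: $\mathcal{I}$ is cut into blocks of length $O(n/\Delta)$, within each block few vertices of $B_2(v)$ are updated so $\W$ moves by $o(\epsilon\Delta)$, and the per-block failure probability $\exp(-c\Delta)$, with $c$ a large multiple of $1/C$, absorbs the $\exp(\Delta/C)\cdot\mathrm{poly}(\Delta)$ many blocks; the discrete-time statement then follows from the continuous-time one by Hayes's Lemma~\ref{lemma:ContVsDiscSmallDelta} applied to $\W_{X_t}(v)/\Delta$. If you want to salvage your outline, you would need to replace the uniform drift claim by this kind of conditional, trajectory-wide uniformity analysis, at which point you have essentially reproduced the paper's argument.
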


\subsection{Proof sketch for local uniformity results}\label{sec:SketchLocalUnifGibbs}

Here we sketch the simpler proof of Theorem \ref{thrm:ConcentrNoOfBlockedStatic}
of the local uniformity results for the Gibbs distribution. 
This will illustrate the main conceptual ideas in the proof for the Gibbs distribution,
and we will indicate the extra challenge for the analysis of the
Glauber dynamics in the proof of 
Theorem \ref{thrm:UniformityWithBurnIn}.
 The full proofs for Theorems \ref{thrm:ConcentrNoOfBlockedStatic}
and \ref{thrm:UniformityWithBurnIn} are in 
Section \ref{sec:UniformityProofs}
of the  appendix.

Consider a graph $G=(V,E)$.  
For a vertex $v$ and an independent set $\sigma$, consider the following quantity:
\begin{equation}\label{SketchEq:RwDefinition}
\R({\sigma}, v)=\prod_{z\in N(v)}\left(1-\frac{\lambda}{1+\lambda} \I_{z,v}({\sigma}) \right),
\end{equation}
where $\I_{z.v}({\sigma})$ is defined in \eqref{eq:DefIvp} (it is the indicator that the
children of $z$ leave it unblocked).
The important aspect of this quantity $\R$ is the following qualitative interpretation. 
Let $Y$ be distributed as in the Gibbs measure w.r.t. $G$.
For triangle-free $G$ we have
\begin{eqnarray}
\lefteqn{
\R({\sigma},v) 
} \hspace{.1in} \nonumber
\\
&=&\ProbCondSub{}{\textrm{$v$ is unblocked} }{v\notin Y ,\; 
Y(S_2(v))=\sigma(S_2(v))}, \nonumber
\end{eqnarray}
where $S_2(z)$ are those vertices distance $2$ from $z$ and by ``$z\notin {\sigma}$" we mean 
that $z$ is not occupied.
Moreover, conditional on the configuration at $z$ and $S_2(z)$ the neighbors of $z$
are independent in the Gibbs distribution and hence:
\begin{eqnarray} \label{eq:triangle-free} 
\lefteqn{
\R({\sigma},v) 
} \hspace{.1in}  \\
&=&\prod_{z\in N(v)} \ProbCondSub{}{ z\notin Y}{v\notin Y,\; 
Y(S_2(v))=\sigma(S_2(v))}. \nonumber
\end{eqnarray}
In the special case where the underlying graph is a {\em tree} we can
extend \eqref{eq:triangle-free} to the following recursive equations:
Let $X$ be distributed as in $\mu$. We have that
\begin{equation}\label{SketchEq:BP-RwDefinition}
\R (X, v)=\prod_{z\in N(v)}\left(1-\frac{\lambda}{1+\lambda} \R(X,z) \right) +O(1/\Delta),
\end{equation}

\noindent
For our purpose  it turns out that  $\R(X,\cdot )$ is an approximate version of $F()$ defined in 
\eqref{eq:BP-recursion}.  The error term $O(1/\Delta)$ in \eqref{SketchEq:BP-RwDefinition} 
is negligible. For understanding $\R(X,\cdot)$ qualitatively,  this error term can 
 be completely ignored.

Consider the (BP system of) equations in \eqref{SketchEq:BP-RwDefinition},
which is exact on trees.   Nothing prevents us from applying  \eqref{SketchEq:BP-RwDefinition}
on the graph $G$ and get the loopy version of the equations.
Now, 
\eqref {SketchEq:BP-RwDefinition} does not necessarily compute the probability for 
$v$ to be unblocked. 
However, we show the following interesting result regarding the
quantity $\SOS_{X}(v)$, for every $v\in V$. 
With probability  $\geq 1-\exp\left( -\Omega(\Delta)\right)$, it holds that
\begin{equation}\label{SketchEd:WtVsSumR}
\left| \SOS_{X}(v) - \sum_{z \in N(v)} \R(X,z) \right| \leq  \epsilon\Delta.
\end{equation}
That is, we can approximate $\SOS_{X}(v)$ by using quantities that arise from
the loopy BP equations.  Still, getting  a handle on $\R(X,z)$ in \eqref{SketchEd:WtVsSumR} 
is a non-trivial task.  To this end,  we show that $X\sim\mu$ 
satisfies \eqref{SketchEq:BP-RwDefinition} in the following approximate sense:
\begin{lemma}\label{lemma:RApproxReccGirth6} 
For all $\gamma,\delta>0$, there exists $\Delta_0, C>0$, for all graphs 
$G=(V,E)$ of maximum degree $\Delta\geq\Delta_0$
and girth $\geq 6$ all $\lambda<(1-\delta)\lambda_c(\Delta)$
for all $v\in V$ the following is true: 

Let $X$ be distributed as in $\mu$. Then  with probability   $\geq 1- \exp\left ( - \Delta/C \right)$ it holds that
\begin{eqnarray}\label{SketchEq:From:lemma:ApproxFixPoint}
\left | 
\R(X, v) - \prod_{z\in N(v)}\left(1 -\frac{\lambda}{1+\lambda} 
\R(X, z) 
\right)
\right |  < \gamma.
\end{eqnarray}
\end{lemma}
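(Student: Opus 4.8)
The plan is to establish \eqref{SketchEq:From:lemma:ApproxFixPoint} by a direct second-moment / concentration argument, exploiting the fact that both sides of the claimed inequality are products over $z\in N(v)$ of quantities that are, up to lower-order error, conditionally independent once we expose the configuration far enough away from $v$. Concretely, I would start from the exact identity that for triangle-free $G$, conditioned on $v\notin X$ and on $X$ restricted to $S_2(v)$, the random variables $\{\I_{z,v}(X) : z\in N(v)\}$ are mutually independent; moreover $\ProbCondSub{}{\I_{z,v}(X)=1}{\cdots}$ equals a product over $w\in N(z)\setminus\{v\}$ of conditional non-occupancy probabilities $\ProbCondSub{}{w\notin X}{\cdots}$, which for girth $\geq 6$ depend only on $X$ restricted to $S_2(z)$ and which are themselves close to $1-\frac{\lambda}{1+\lambda}\I_{w,z}(X)$ up to an $O(1/\Delta)$ error (this is exactly the content of \eqref{SketchEq:BP-RwDefinition}). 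Chaining these two expansions, $\R(X,v)$ and $\prod_{z\in N(v)}(1-\frac{\lambda}{1+\lambda}\R(X,z))$ are both $\prod_{z\in N(v)}(\text{something that concentrates around its conditional mean})$, and the two conditional means differ by only the $O(1/\Delta)$ BP error.

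The key steps, in order, would be: first, fix $v$ and expose the configuration $X$ on the sphere $S_3(v)$ (distance exactly $3$) together with $v\notin X$; girth $\geq 6$ guarantees the subtrees hanging below the neighbors $z\in N(v)$ are genuinely tree-like and vertex-disjoint up to this depth, so that conditioned on this data the quantities $\R(X,z)$ for distinct $z\in N(v)$ become conditionally independent (this is where girth $\geq 6$ rather than $\geq 5$ is used). Second, show that each $\R(X,z)$ concentrates around its conditional expectation $\mathbb{E}[\R(X,z)\mid \cdots]$ with deviation $o(1)$ except with probability $\exp(-\Omega(\Delta))$ — this is a bounded-differences / Azuma argument over the $\leq\Delta$ coordinates of $X$ in the relevant sphere, since changing one vertex's occupancy perturbs $\R(X,z)$ by $O(1/\Delta)$. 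Third, invoke \eqref{SketchEq:BP-RwDefinition} at each $z$: the conditional expectation of $\R(X,z)$ equals $\prod_{w\in N(z)\setminus\{v\}}(\text{conditional non-occupancy prob})$, which in turn is within $O(1/\Delta)$ of $\prod_{w}(1-\frac{\lambda}{1+\lambda}\R(X,w))$; after taking the product over $z\in N(v)$ — a product of $\leq\Delta$ terms each of the form $1-\Theta(1/\Delta)$ perturbation — the accumulated discrepancy is still $O(1)$ and can be made $<\gamma/2$ by choosing $\Delta_0$ large. Finally, a union bound over the $\leq\Delta$ neighbors $z$ (each contributing an $\exp(-\Omega(\Delta))$ failure probability for its concentration event) absorbs into the claimed $\exp(-\Delta/C)$ bound, and combining the concentration error with the BP discretization error yields the total gap $<\gamma$.

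The main obstacle I anticipate is controlling the propagation and stability of errors through the nested product structure: $\R(X,v)$ is a product over $z\in N(v)$ of terms $1-\frac{\lambda}{1+\lambda}\I_{z,v}$, but to compare with $\prod_z(1-\frac{\lambda}{1+\lambda}\R(X,z))$ one must replace each \emph{indicator} $\I_{z,v}$ by the \emph{real-valued} $\R(X,z)$, and these live on different probability spaces (the indicator is a single bit; $\R(X,z)$ is its conditional mean under a deeper exposure). The correct bookkeeping is a telescoping argument: write $\R(X,z) - \mathbb{E}[\R(X,z)\mid\mathcal F] = $ (martingale fluctuation) and $\mathbb{E}[\R(X,z)\mid\mathcal F] = \prod_w(\cdots) + O(1/\Delta)$, then use that for products $\prod a_i$ vs $\prod b_i$ with $a_i,b_i\in[c,1]$ bounded away from $0$, $|\prod a_i - \prod b_i|\leq \frac{1}{c}\sum|a_i-b_i|$ — but here the number of factors is $\Theta(\Delta)$ and each $|a_i-b_i|$ is only $O(1/\Delta)$ plus a fluctuation, so one needs the fluctuations to be $o(1/\sqrt\Delta)$-typical, not merely $o(1)$, to keep the summed fluctuation $o(1)$. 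This is delicate and is precisely why the concentration step must give deviation $O(1/\Delta)$ (from bounded differences with increments $O(1/\Delta)$ over $\Delta$ coordinates, Azuma gives deviation $\widetilde O(1/\sqrt\Delta)$, which suffices), and why $\Delta$ must be a sufficiently large constant. Everything else is a careful but routine iteration of the triangle-free conditional-independence identity \eqref{eq:triangle-free} together with the already-established BP recurrence \eqref{SketchEq:BP-RwDefinition}.
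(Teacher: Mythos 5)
Your overall scaffolding does match the paper's proof: expose the configuration of $v$ and of the vertices at distance $\geq 3$ from $v$ (call this $\mathcal F$), use girth $\geq 6$ to get conditional independence across the blocks $N(z)\setminus\{v\}$, $z\in N(v)$, concentrate via Azuma, and compare conditional means to $\R(X,z)$ paying $O(\lambda)$ per neighbor. But the central bookkeeping as you describe it has a genuine gap. First, your concentration is aimed at the wrong objects: once the distance-$3$ data is exposed, each $\R(X,z)$ with $z\in N(v)$ is determined up to the single factor $1-\frac{\lambda}{1+\lambda}\I_{v,z}(X)$, hence lies within $\lambda=O(1/\Delta)$ of an $\mathcal F$-measurable quantity -- there is nothing to concentrate there. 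What must be concentrated is $\R(X,v)$ itself, equivalently $\SOS_X(v)=\sum_{z\in N(v)}\I_{z,v}(X)$, about its conditional mean, using that the indicators $\I_{z,v}(X)$ are conditionally independent given $\mathcal F$ (your stated independence of $\{\I_{z,v}(X)\}$ given $X$ restricted to $S_2(v)$ is vacuous, since they are deterministic under that conditioning). Second, and more seriously, the comparison $\left|\prod_i a_i-\prod_i b_i\right|\le \frac{1}{c}\sum_i|a_i-b_i|$ cannot deliver an arbitrarily small $\gamma$: with $a_z=1-\frac{\lambda}{1+\lambda}\I_{z,v}(X)$ and $b_z=1-\frac{\lambda}{1+\lambda}\R(X,z)$ the typical per-factor difference is $\Theta(1/\Delta)$ (a $\{0,1\}$ indicator against a number bounded away from $0$ and $1$, scaled by $\lambda$), so the sum of absolute differences is $\Theta(1)$. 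No per-factor fluctuation bound can fix this; your requirement that fluctuations be ``$o(1/\sqrt\Delta)$-typical'' is both arithmetically inconsistent ($\Delta\cdot o(1/\sqrt{\Delta})$ is not $o(1)$) and unattainable. The point is that the signed deviations $\sum_z\bigl(\I_{z,v}(X)-\ExpCond{\I_{z,v}(X)}{\mathcal F}\bigr)$ cancel, and the paper preserves this cancellation by writing $\R(X,v)=f(\SOS_X(v))+O(1/\Delta)$ with $f(x)=\exp\bigl(-\frac{\lambda}{1+\lambda}x\bigr)$, applying Azuma to the conditionally independent sum (deviation $\beta\Delta$ with failure probability $\exp(-\beta^2\Delta/2)$), and using smoothness of $f$; equivalently one may factorize $\ExpCond{\R(X,v)}{\mathcal F}$ over the independent blocks and apply McDiarmid to the product. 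Your ``sum of absolute per-factor errors'' framing throws this cancellation away.

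Finally, your appeal to \eqref{SketchEq:BP-RwDefinition} as ``already established'' is circular: that display is the informal tree-case version of precisely the statement being proved, not a lemma of the paper. What is actually needed at the inner level -- and what the paper proves directly -- is the elementary estimate $\left|\ExpCond{\I_{z,v}(X)}{\mathcal F}-\R(X,z)\right|\le 2\lambda$, obtained by conditioning on whether $z\in X$ and noting that, given $\mathcal F$ and $z\notin X$, each $u\in N(z)\setminus\{v\}$ is occupied independently with probability $\frac{\lambda}{1+\lambda}\I_{u,z}(X)$, the indicators $\I_{u,z}(X)$ being $\mathcal F$-measurable precisely because girth $\geq 6$ forces $N(u)\setminus\{z\}$ to lie at distance $3$ from $v$. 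Relatedly, your step 3 compares conditional non-occupancy probabilities to $\prod_w\bigl(1-\frac{\lambda}{1+\lambda}\R(X,w)\bigr)$ over grandchildren $w$; those $\R(X,w)$ depend on the configuration up to distance $4$ from $v$, outside the exposed region, and do not appear in the statement -- a level confusion the argument neither needs nor controls.
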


We will argue (via \eqref{SketchEq:From:lemma:ApproxFixPoint}) that
$\R()$ is an approximate version of $F()$ and then we can apply 
Lemma \ref{lemma:ApproxFixPointEquations} to deduce convergence
(close) to the fixpoint $\omega^*$.
Consequently, we will prove that
for every $v\in V$, with probability
at least $1-\exp\left( -\Omega(\Delta) \right)$, 
 it holds that
\begin{equation}\label{SketchEq:ApproxFix}
\left| \R(X,v) - \omega^*(v) \right|\leq \epsilon.
\end{equation}
(See 
Lemma  \ref{lemma:RConGirth6}
 in 
 Section \ref{sec:lemma:CnvrgLoopyBP2CorrectNoParent} 
 of the   appendix for a formal statement.)
Combining \eqref{SketchEq:ApproxFix} and \eqref{SketchEd:WtVsSumR} will
finish the proof of Theorem \ref{thrm:ConcentrNoOfBlockedStatic}.
For the detailed proof of Theorem \ref{thrm:ConcentrNoOfBlockedStatic}
see 
Section \ref{sec:UniformityProofs} 
in the   appendix.

\subsection{Approximate recurrence - Proof of Lemma \ref{lemma:RApproxReccGirth6}}

Here we prove Lemma \ref{lemma:RApproxReccGirth6} which shows that
$\R$ satisfies an approximate recurrence similar to loopy BP, this is the main result
in the proof of Theorem  \ref{thrm:ConcentrNoOfBlockedStatic}.
Before beginning the proof we illustrate the necessity of the girth assumption.

Recall that for triangle-free graphs we have conditional
independence in \eqref{eq:triangle-free} for the neighbors of vertex $z$.
In \eqref{SketchEd:WtVsSumR} we need to consider $\sum_{z\in N(v)} \R(X,z)$.
To get independence on the grandchildren of $v$ we need to condition on $S_3(v)$,
this will require girth $\geq 6$, see \eqref{eq:Tail4QCX^*TGirth6} below.

\begin{proof}[Proof of Lemma \ref{lemma:RApproxReccGirth6}]
Consider  $X$  distributed as in $\mu$.   Given some vertex $v\in V$, let ${\cal F}$
be the $\sigma$-algebra generated by the configuration of $v$ and the vertices at
distance $\geq 3$ from $v$.

Note that $\lambda_c(\Delta)\sim e/\Delta$. So, for $\lambda<\lambda_c(\Delta)$
and $\Delta>\Delta_0$ we have $\lambda=O(1/\Delta)$.

Note that $\SOS_{X}(v)$ is a function of the configuration at 
$S_2(v)$.  Conditional on ${\cal F}$, for any $z,z'\in N(v)$ the configurations at $N(z)\backslash\{v\}$ 
and $N(z')\backslash\{v\}$ are independent with each other.
That is, conditional on ${\cal F}$, the quantity $\SOS_{X}(v)$ is  a sum of $|N(v)|$ many independent random 
variables in $\{0,1\}$. 
Then, applying Azuma's inequality (the Lipschitz constant is $1$) we get that
\begin{equation}\label{eq:Tail4QCX^*TGirth6}
\Prob{
|\ExpCond{ \SOS_{X}(v) }{\cal F}-\SOS_{X}(v)| \leq \beta\Delta
} \geq 1-2\exp\left( -{\beta^2}\Delta/2 \right),
\end{equation}
for any $\beta>0$.

For $x\in \mathbb{R}_{\geq 0}$, let  $f(x)=\exp\left( -\frac{\lambda}{1+\lambda}x\right)$.
Since $\lambda\leq e/\Delta$ for $\Delta\geq \Delta_0$, then for  $|\gamma| \leq (3e)^{-1}$
  it holds that
$f(x+\gamma\Delta)\leq 10\gamma$.
Using these observations and \eqref{eq:Tail4QCX^*TGirth6} we get the following:
for $0<\beta<(3e)^{-1}$ it holds that
\begin{eqnarray}\label{eq:Tail4RX^*TGirth6}
\lefteqn{
\Prob{
\left | f(\SOS_X(v)) -  f(\ExpCond{ \SOS_{X}(v) }{\cal F})
\right | \leq 10\beta 
} 
} \hspace{1.5in} 
\\
&\geq &1-2\exp\left( -\beta^2 \Delta/2 \right). \nonumber
\end{eqnarray}

\noindent
Recalling the definition of
$\R(X,v)$, we have that 
\begin{eqnarray}
\R(X,v)&= & \prod_{z\in N(v)}\left(1-\frac{\lambda}{1+\lambda} \I_{z,v}(X) \right)
\nonumber \\
&=& \exp\left( - \frac{\lambda}{1+\lambda}\sum_{z\in N(v)}  \I_{z,v}(X)   +
O\left(1/{\Delta}\right) \right)  \nonumber\\
 &=&  f(\SOS_{X}(v)) +O\left(1/{\Delta}\right), \label{eq:RXtVsExpUsGirth6} 
\end{eqnarray}
where the second equality we use the fact that $\lambda = O(1/\Delta)$
and that for $|x|<1$ we have $1+x=\exp(x+O(x^2))$;
the last equality follows by noting that $f(\SOS_{X}(v))\leq 1$.

We are now   going to show  that for every $z\in N(v)$ it holds that 
\begin{equation}\label{eq:UzxVsRzStatic}
\left | \ExpCond{\I_{z,v}(X)}{\cal F }-\R(X,z) \right| \leq 2\lambda.
\end{equation}
Before showing that \eqref{eq:UzxVsRzStatic} is indeed correct, let us show
how  we use it to get the lemma.  

We have that 
\begin{eqnarray}
\lefteqn{
f(\ExpCond{ \SOS_{X}(v) }{\cal F})} 
\nonumber  \hspace{.2in}
\\
&=&\exp\left(-\frac{\lambda}{1+\lambda}\sum_{z\in N(v)} \ExpCond{\I_{z,v}(X_t)}{\cal F} \right)
\nonumber \\
&=&\exp\left(-\frac{\lambda}{1+\lambda}\sum_{z\in N(v)} \R(X,z) \right)+O(1/\Delta),   
\qquad 
\label{eq:RzChildrenStatic}
\end{eqnarray}
where in the first derivation we use  linearity of expectation and in the second derivation we
use \eqref{eq:UzxVsRzStatic} and the fact that $\lambda=O(1/\Delta)$.

The lemma follows by plugging \eqref{eq:RzChildrenStatic} and  \eqref{eq:RXtVsExpUsGirth6} into \eqref{eq:Tail4RX^*TGirth6}
and taking sufficiently large $\Delta$.

It remains to show \eqref{eq:UzxVsRzStatic}.  We first get an appropriate upper bound for 
$\ExpCond{\I_{z,v}(X)}{\cal F }$. Using the fact that $\I_{z,w}(X)\leq 1$ and
$\Prob{\textrm{$z\in X$} | {\cal F}} \leq \lambda$ we have that
\begin{eqnarray}
\lefteqn{
\ExpCond{\I_{z,v}(X)}{\cal F }
} \hspace{.4cm}\nonumber \\
&=&\ExpCond{\I_{z,v}(X)}{\cal F,  \textrm{$z\in X$ } }\cdot\Prob{\textrm{$z\in X$} | {\cal F}}
\nonumber \\ &&+\ExpCond{\I_{z,v}(X)}{\cal F,  \textrm{$z\notin X$} }\cdot\Prob{\textrm{$z\notin X$ } | {\cal F}} \nonumber \\
&\leq&\Prob{\textrm{$z\in X$} | {\cal F}}+\ExpCond{\I_{z,v}(X)}{\cal F,  \textrm{$z\notin X$} } 
\nonumber \\
&\leq&\lambda+ \ExpCond{\I_{z,v}(X)}{\cal F,  \textrm{$z\notin X$} } 
\nonumber \\
&=&\lambda+\prod_{u\in N(z)\setminus\{v\}} \left(1-\frac{\lambda}{1+\lambda}\I_{u,z}(X) \right) 
\label{eq:RandomNotSoRandom}\\
&\leq& 2\lambda+\prod_{u\in N(z)} \left(1-\frac{\lambda}{1+\lambda}\I_{u,z}(X) \right)\nonumber\\
&=&2\lambda+\R(X,z),  \label{eq:ExUVsRUpperGirth6}
\end{eqnarray}
where \eqref{eq:RandomNotSoRandom} uses the fact that 
 given ${\cal F}$ the values of $\I_{u,z}(X)$, for $u\in N(z)\setminus\{v\}$ are fully determined.
Similarly, we get the  lower bound:
\begin{eqnarray}
\ExpCond{\I_{z,v}(X)}{\cal F }&=&\ExpCond{\I_{z,v}(X)}{\cal F,  \textrm{$z\in X$} }\cdot\Prob{\textrm{$z\in X$} | {\cal F}}
\nonumber \\ &&+\ExpCond{\I_{z,v}(X)}{\cal F,  \textrm{$z\notin X$} }\cdot\Prob{\textrm{$z\notin X$} | {\cal F}} \nonumber \\
&\geq&(1-2\lambda)\ExpCond{\I_{z,v}(X)}{\cal F,  \textrm{$z\notin X$} }  \nonumber \\
&\geq& (1-2\lambda) \prod_{u\in N(z)\setminus\{w\}} \left(1-\frac{\lambda}{1+\lambda}\I_{u,z}(X) \right)\nonumber\\
&\geq&(1-2\lambda)\prod_{u\in N(z)} \left(1-\frac{\lambda}{1+\lambda}\I_{u,z}(X) \right)\nonumber \\
&=&(1-2\lambda)\R(X,z)  \nonumber \\
&\geq & \R(X,z)-2\lambda,
\label{eq:ExUVsRLowerGirth6}
\end{eqnarray}
where in the last inequality we use the fact that $\R(X,z)\leq 1$.

From \eqref{eq:ExUVsRUpperGirth6} and \eqref{eq:ExUVsRLowerGirth6} we have
proven \eqref{eq:UzxVsRzStatic}, which completes the proof of the lemma.
\end{proof}

\section{Sketch of Rapid Mixing Proof}
\label{sec:rapid-mixing-sketch}

Theorem \ref{thrm:UniformityWithBurnIn} 
tells us that after a burn-in period the Glauber dynamics
locally behaves like the BP fixpoints $\omega^*$ with high probability (whp).
(In this discussion, we use the term whp to refer to events that occur with 
probability $\geq 1 - \exp(-\Omega(\Delta))$. ) 
Meanwhile Theorem \ref{thrm:EigenVector} says
that there is an appropriate distance function $\Phi$ for which path coupling 
has contraction for pairs of states that behave as in $\omega^*$.
The snag in simply combining this pair of results and deducing rapid mixing is
that when $\Delta$ is constant then 
 there is still a constant
fraction of the graph that does not behave like $\omega^*$, and our disagreements
in our coupling proof may be biased towards this set.  
We follow the approach in \cite{DFHV} to overcome this obstacle and complete
the proof of Theorem \ref{thrm:RapidMixingMain}.
We give a brief sketch of the approach, the details are contained in 
Section  \ref{sec:thrm:RapidMixingMain} of the appendix.

The burn-in period for Theorem \ref{thrm:UniformityWithBurnIn} to apply is $O(n\log{\Delta})$
steps from the worst-case initial configuration $X_0$.  In fact, for a ``typical'' initial configuration
only $O(n)$ steps are required as we only need to update $\geq 1-\eps$ fraction of the neighbors 
of every vertex in the local neighborhood of the specified vertex $v$.  The ``bad'' initial configurations
are ones where almost all of the neighbors of $v$ (or many of its grandchildren) are occupied.  
We call such configurations ``heavy''  (see  Section~\ref{sec:heavy} of the  appendix). 
We first prove that after $O(n\log{\Delta})$ steps a chain is not-heavy in the local neighborhood of $v$, 
and this property persists whp (see 
Lemma \ref{lemma:BurnIn} in the appendix).   Then,  only $O(n)$ steps are required for the burn-in period (see  
Theorem \ref{thrm:Uniformity}  in  Section \ref{sec:UniformityProofs} of the appendix).

Our argument has two stages.  We start with a pair of chains $X_0,Y_0$
that differ at a single vertex~$v$.  In the first stage we burn-in for $T_b=O(n\log{\Delta})$ steps.
After this burn-in period, we have the following properties whp: every vertex in the local neighborhood of $v$ is not-heavy,
the number of disagreements is $\leq\poly(\Delta)$, and the disagreements are all in the
local neighborhood of $v$  (see  Lemma \ref{lemma:ArbitraryPair},  parts 2 and 4,  in 
Section~\ref{sec:thrm:RapidMixingMain} of the   appendix).

In the second stage we have sets of epochs of length $T=O(n)$ steps.  For the 
pair of chains $X_{T_b},Y_{T_b}$ we apply path coupling again.  Now we consider
a pair of chains that differ at one vertex $z$ which is not heavy.   We look again at the local neighborhood of $z$
(in this case, that means all vertices within distance $\leq\sqrt{\Delta}$ of $z$).
After $T$ steps, whp every vertex in the local neighborhood has the local uniformity properties
and the disagreements are contained in this local neighborhood.   Then we have 
contraction in the path coupling condition (by applying Theorem \ref{thrm:EigenVector}),
and hence after $O(n)$ further steps the expected Hamming distance is small
 (see  Lemma \ref{lemma:DistanceFromLight} in the appendix).  Combining a sequence of these 
 $O(n)$ length epochs we get that the original pair has is likely to have coupled and we can deduce rapid
mixing.

\section{Conclusions}

The work of Weitz \cite{Weitz} was a notable accomplishment in the field
of approximate counting/sampling.  However a limitation of his approach
is that the running time depends
exponentially on $\log{\Delta}$.
It is widely believed that the Glauber dynamics has mixing time $O(n\log{n})$
for all $G$ of maximum degree $\Delta$ when 
$\lambda<\lambda_c(\Delta)$.
However, until now there was little theoretical work to support this conjecture.
We give the first such results which analyze the widely used algorithmic
approaches of MCMC and loopy BP.

One appealing feature of our work is that it directly ties together with Weitz's
approach:   Weitz uses decay of correlations on trees to truncate his
self-avoiding walk tree, whereas we use decay of correlations to deduce
a contracting metric for the path coupling analysis, at least when the chains
are at the BP fixed point.  We believe this technique of utilizing the 
principal eigenvector for the BP operator for the path coupling metric will
apply to a general class of spin systems, such as 2-spin antiferromagnetic
spin systems (Weitz's algorithm was extended to this class \cite{LLY13}).

We hope that in the future more refined analysis of the local uniformity
properties will lead to relaxed girth assumptions.  
However dealing with very short cycles, such as triangles, will require
a new approach since loopy BP no longer seems to be a good estimator
of the Gibbs distribution for certain examples.

\newpage

\newpage

\appendix

\section{BP convergence: Missing proofs in Sections~\ref{sec:BPConvergence}~and~\ref{sec:DistanceFunction}}
\label{sec:BP-proofs}

In this section we prove Lemma~\ref{lemma:FixPointEquations} about the convergence of recurrence $F$ defined in~\eqref{eq:BP-recursion} to a unique fixed point $\omega^*$, Lemma~\ref{lemma:ApproxFixPointEquations} about the contraction of the error at every step with respect to a potential function, and Theorem~\ref{thrm:EigenVector} about the existence of a suitable distance function $\Phi$ for path coupling.

The next theorem unifies these key results regarding the convergence of $F$ defined in~\eqref{eq:BP-recursion}.
\begin{theorem}\label{theorem:HardcoreConvergence}
For all $\delta>0$, there exists $\Delta_0=\Delta_0(\delta)$, for all $G=(V,E)$ of maximum degree $\Delta\geq\Delta_0$, all $\lambda<(1-\delta)\lambda_c(\Delta)$, the following hold:
\begin{enumerate}
\item
For any $x_1,x_2\in[(1+\lambda)^{-\Delta},1]$,
\begin{equation}
\label{eq:PotentialVsRealDistance}
\frac{1}{3}|x_1-x_2|\le |\Psi(x_1)-\Psi(x_2)|\le 3|x_1-x_2|.
\end{equation}
\item\label{item:FixPointEquations}
(Lemma~\ref{lemma:FixPointEquations}) The function $F$ defined in~\eqref{eq:BP-recursion} has a unique fixed point $\omega^*$. Moreover, for any initial value $\omega^{0}\in[0,1]^V$, denoting by $\omega^{i}=F^i(\omega)$ the vector after the $i$-th iterate of $F$, it holds that
\[
\|\omega^{i}-\omega^*\|_\infty\le3(1-\delta/6)^i.
\]
\item\label{item:ApproxFixPointEquations}
(Lemma~\ref{lemma:ApproxFixPointEquations})
for any $\omega\in[0,1]^V$, $v \in V$ and $R \ge 1$, we have:
\[
D_{v,R-1}(F(\omega), \omega^*) \le (1 - \delta/6) D_{v,R}(\omega,
\omega^*),
\]
where $D_{v,R}$ is as defined in~\eqref{eq:potential-metric}.
\item\label{item:EigenVector}
(Theorem~\ref{thrm:EigenVector})
There exist $\Phi:V\to \mathbb{R}_{\geq 0}$ such that  for every $v\in V$, $1\leq \Phi(v)\leq 12$,
and 
\[
(1-\delta/6)\Phi(v) \ge\sum_{u\in N(v)}\frac{\lambda\omega^*(u)}{1+\lambda\omega^*(u)}\Phi(u).
\]
\end{enumerate}
\end{theorem}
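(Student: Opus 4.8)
My plan is to extract a single inequality, the \emph{key contraction estimate}, from which all four items follow almost mechanically. The estimate is: for every $\omega\in[0,1]^V$ and every $z\in V$,
\begin{equation}\label{eq:keycontr}
\sum_{y\in N(z)}\frac{\lambda\sqrt{F(\omega)(z)\,\omega(y)}}{\sqrt{(1+\lambda F(\omega)(z))(1+\lambda\omega(y))}}\;\le\;1-\delta/6 .
\end{equation}
Its left-hand side is exactly the $\ell^1$-norm of the gradient of the map $(\Psi(\omega(y)))_{y\in N(z)}\mapsto\Psi(F(\omega)(z))$: one differentiates $F$ using $\partial F(\omega)(z)/\partial\omega(y)=-\lambda F(\omega)(z)/(1+\lambda\omega(y))$, applies the chain rule together with $\Psi'(x)=1/(2\sqrt{x(1+\lambda x)})$, and the $\Psi'$-factors cancel into precisely the symmetric form of the summand above --- which is exactly why $\Psi$ is chosen as in \eqref{eq:potential-function}. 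So the whole proof is: (i) establish \eqref{eq:keycontr}; (ii) read off Parts 1--4.

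For step (i), which is the main obstacle, I would fix $z$, put $d=|N(z)|\le\Delta$ and $s_y=\omega(y)\in[0,1]$ (so $F(\omega)(z)=\prod_y(1+\lambda s_y)^{-1}$), and maximize the left-hand side of \eqref{eq:keycontr}. Holding $P:=F(\omega)(z)$ fixed is the linear constraint $\sum_y\ln(1+\lambda s_y)=-\ln P$; under it the factor $\sum_y\sqrt{s_y}/\sqrt{1+\lambda s_y}$ equals $\lambda^{-1/2}\sum_y\sqrt{1-e^{-\ln(1+\lambda s_y)}}$, and since $a\mapsto\sqrt{1-e^{-a}}$ is concave, Jensen forces the maximum to the symmetric point $s_y\equiv s$. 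This reduces \eqref{eq:keycontr} to the one-variable inequality
\[
d\lambda\sqrt{s}\,(1+\lambda s)^{-(d+1)/2}(1+\lambda(1+\lambda s)^{-d})^{-1/2}\;\le\;1-\delta/6\qquad(s\in[0,1],\ d\le\Delta),
\]
whose maximum over $s$ sits essentially at $s^\star=\min\{1,(d\lambda)^{-1}\}$, is increasing in $d$ (after using $(1+1/d)^{-(d+1)/2}\to e^{-1/2}$), hence is governed by $d=\Delta$; there its value at $\lambda=\lambda_c(\Delta)=(\Delta-1)^{\Delta-1}/(\Delta-2)^{\Delta}$ is $1+O(1/\Delta)$, and replacing $\lambda$ by $(1-\delta)\lambda_c(\Delta)$ multiplies it by $\sqrt{1-\delta}\le 1-\delta/2$, which swamps the $O(1/\Delta)$ once $\Delta\ge\Delta_0(\delta)$, giving $\le1-\delta/6$. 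This tree-recursion optimisation --- in particular extracting the clean constant $\delta/6$ from the gap to $\lambda_c(\Delta)$ --- is the one genuinely delicate calculation, and is the familiar core of the uniqueness analyses of \cite{RSTVY,LLY13,SST}.

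For step (ii): Part 1 is the mean value theorem applied to $\Psi$, noting $\Psi'$ is decreasing and that on $[(1+\lambda)^{-\Delta},1]$ we have $\lambda\Delta=O(1)$ (as $\lambda_c(\Delta)\sim e/\Delta$), so $(1+\lambda)^{-\Delta}\ge e^{-\lambda\Delta}$ is bounded below by a constant and hence $\Psi'\in[\tfrac13,3]$ there for $\Delta\ge\Delta_0$. Part 3 (Lemma~\ref{lemma:ApproxFixPointEquations}): because $F(\omega)(z)$ depends only on $\omega$ on $N(z)$, and $N(z)\subseteq B(v,R)$ for $z\in B(v,R-1)$, integrating \eqref{eq:keycontr} along the segment from $\omega$ to $\omega^*$ (which stays in $[0,1]^V$) yields $|\Psi(F(\omega)(z))-\Psi(\omega^*(z))|\le(1-\delta/6)D_{v,R}(\omega,\omega^*)$ for each such $z$, using $F(\omega^*)=\omega^*$; the maximum over $z\in B(v,R-1)$ is the claim. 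Part 2 (Lemma~\ref{lemma:FixPointEquations}): the same integration between two arbitrary points shows $F$ is a $(1-\delta/6)$-contraction of $([0,1]^V,D)$, which (as $V$ is finite and $\Psi$ is a homeomorphism onto its image) is compact hence complete, so Banach gives the unique fixed point $\omega^*$; since $D(\omega^0,\omega^*)\le\Psi(1)=\lambda^{-1/2}\textrm{arcsinh}(\sqrt{\lambda})\le1$ (arcsinh is concave with unit slope at $0$), $D(\omega^i,\omega^*)\le(1-\delta/6)^i$, and Part 1 upgrades this to $\|\omega^i-\omega^*\|_\infty\le 3(1-\delta/6)^i$ (the coordinates of $\omega^i$, $i\ge1$, and of $\omega^*$ lie in $[(1+\lambda)^{-\Delta},1]$; $i=0$ is trivial).

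Finally Part 4 (Theorem~\ref{thrm:EigenVector}) with $\Phi(v)=1/(2\omega^*(v)\Psi'(\omega^*(v)))=\sqrt{(1+\lambda\omega^*(v))/\omega^*(v)}$: a one-line simplification rewrites the target inequality \eqref{eq:FoundPhi} as $\lambda\sqrt{\omega^*(v)/(1+\lambda\omega^*(v))}\sum_{u\in N(v)}\sqrt{\omega^*(u)/(1+\lambda\omega^*(u))}\le1-\delta/6$, which is \eqref{eq:keycontr} evaluated at $\omega=\omega^*$, $z=v$ (since $F(\omega^*)(v)=\omega^*(v)$). For the range \eqref{eq:MaxPhi}: $\Phi(v)^2=\omega^*(v)^{-1}+\lambda\ge1$ as $\omega^*(v)\le1$, and $\omega^*(v)=\prod_{y\in N(v)}(1+\lambda\omega^*(y))^{-1}\ge(1+\lambda)^{-\Delta}\ge e^{-\lambda\Delta}\ge e^{-3}$ for $\Delta\ge\Delta_0$, so $\Phi(v)^2\le e^3+\lambda<22<144$; this is the only point at which \eqref{eq:MaxPhi} needs its own (short) argument, everything else being driven by \eqref{eq:keycontr}.
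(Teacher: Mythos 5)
Your proposal is correct and follows essentially the same route as the paper: your key estimate \eqref{eq:keycontr} is exactly the combination of Propositions~\ref{prop:uniqueness-condition} and~\ref{prop:multi-variate-opt}, proved by the same Jensen reduction (concavity of $\sqrt{1-e^{-a}}$ under the constraint fixing $F(\omega)(z)$) with the same potential $\Psi$, and the same choice $\Phi(v)=\sqrt{(1+\lambda\omega^*(v))/\omega^*(v)}$, with Parts 1--4 then read off just as in the paper's proof of Theorem~\ref{theorem:HardcoreConvergence}. The only (cosmetic) deviation is the last-mile univariate optimization, where you bound the maximum directly by the asymptotic $\sqrt{d\lambda/e}\,(1+O(1/\Delta))$ and absorb the error using $\Delta\ge\Delta_0(\delta)$, while the paper compares the critical point with the fixed point $\hat{x}(\lambda,d)$ and bounds $\alpha(\lambda,\Delta)$ via an explicit test point; both give the constant $1-\delta/6$.
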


In part \ref{item:EigenVector} the astute reader may notice that 
we are considering BP without a parent, and hence each vertex 
depends on $\Delta$ neighbors.  Consequently 
parts of our analysis will consider the tree with branching factor $\Delta$.  
This is not essential in our proof, but it allows us to 
consider slightly simpler recurrences.  In our setting we have
$\Delta$ sufficiently large and since $\lambda_c(\Delta)=O(1/\Delta)$
and hence this simplification has no effect on the final result that we prove.

We first analyze the uniqueness regime described in the above Theorem \ref{theorem:HardcoreConvergence}.

Let $f_{\lambda,d}(x)=(1+\lambda x)^{-d}$ be the symmetric version of the BP recurrence~\eqref{eq:BP-recursion}. Let $\hat{x}=\hat{x}(\lambda,d)$ be the unique fixed point of $f_{\lambda,d}(x)$, satisfying $\hat{x}(\lambda,d)=(1+\lambda \hat{x}(\lambda,d))^{-d}$.
We define 
\begin{align}
\alpha(\lambda,d)=\sqrt{\frac{d\cdot\lambda \hat{x}(\lambda,d)}{1+\lambda\hat{x}(\lambda,d)}}.\label{eq:alpha-lambda-d}
\end{align}

\begin{proposition}\label{prop:uniqueness-condition}
For all $\delta>0$, there exists $\Delta_0=\Delta_0(\delta)$, for all $\Delta\ge\Delta_0$, all $\lambda<(1-\delta)\lambda_c(\Delta)$ where $\lambda_c(\Delta)=\frac{(\Delta-1)^{\Delta-1}}{(\Delta-2)^\Delta}$, it holds that $\alpha(\lambda,\Delta)\le 1-\delta/6$.
\end{proposition}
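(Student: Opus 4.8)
The plan is to reduce the statement to a clean one-variable analysis of the symmetric recurrence $f_{\lambda,d}$ and its fixed point $\hat{x}=\hat{x}(\lambda,d)$, and then to track how $\alpha(\lambda,d)$ behaves as $\lambda$ crosses $\lambda_c(d)$. The key structural fact is that for the symmetric recurrence, the quantity $\alpha(\lambda,d)^2 = \frac{d\lambda\hat{x}}{1+\lambda\hat{x}}$ is (up to absolute value) the derivative of $f_{\lambda,d}$ at its fixed point; indeed $f_{\lambda,d}'(x) = -d\lambda(1+\lambda x)^{-d-1}$, so $|f_{\lambda,d}'(\hat{x})| = d\lambda\hat{x}/(1+\lambda\hat{x}) = \alpha^2$. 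Uniqueness on the tree $\TreeD$ is exactly the statement $\alpha(\lambda_c(d),d)=1$, which is the classical characterization of $\lambda_c$ (this is essentially how $\lambda_c(\Delta)=(\Delta-1)^{\Delta-1}/(\Delta-2)^\Delta$ is derived: set $|f'(\hat{x})|=1$ and solve). So the proposition is a quantitative, uniform-in-$d$ version of "strictly inside the uniqueness window the contraction factor is strictly below $1$."

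First I would record the boundary computation: at $\lambda=\lambda_c(d)$ one has $\alpha(\lambda_c(d),d)=1$ exactly, with $1+\lambda_c\hat{x} = \frac{d}{d-1}$ and $\lambda_c\hat{x} = \frac{1}{d-1}$ (these follow from $\alpha^2=1$ together with the fixed-point equation $\hat{x}(1+\lambda\hat{x})^d=1$). Next I would show monotonicity: for fixed $d$, $\alpha(\lambda,d)$ is strictly increasing in $\lambda$ on the relevant range. This requires knowing how $\hat{x}$ moves with $\lambda$ — implicitly differentiating $\hat{x}=(1+\lambda\hat{x})^{-d}$ gives $\frac{d\hat{x}}{d\lambda}<0$, and then a short calculation shows $\frac{d}{d\lambda}\big(\frac{\lambda\hat{x}}{1+\lambda\hat{x}}\big)>0$ because the product $\lambda\hat{x}$ still increases even though $\hat{x}$ decreases (in the uniqueness regime the "elasticity" is controlled, precisely because $|f'|<1$ there). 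So $\alpha(\lambda,d)<\alpha(\lambda_c(d),d)=1$ whenever $\lambda<\lambda_c(d)$; this already gives the qualitative claim but not the uniform gap $\delta/6$.

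To get the uniform bound I would exploit the asymptotics. Writing $\lambda = (1-\delta)\lambda_c(d)$ and using $\lambda_c(d)\sim e/(d-2)$, one has $\lambda d \to e(1-\delta)$ as $d\to\infty$ with $\lambda$ at this threshold; correspondingly $d\hat{x}$ and $d\lambda\hat{x}$ converge to finite limits determined by the limiting fixed-point equation, so $\alpha((1-\delta)\lambda_c(d),d)$ converges to a limit $\alpha_\infty(\delta)<1$ (with $\alpha_\infty(0)=1$), and a crude first-order expansion gives $1-\alpha_\infty(\delta) \gtrsim c\delta$ for small $\delta$, comfortably more than $\delta/6$. Then I would argue that the convergence is uniform enough: for $d\ge\Delta_0(\delta)$ large, $\alpha((1-\delta)\lambda_c(d),d) \le \alpha_\infty(\delta) + o_d(1) \le 1-\delta/6$; and since $\alpha$ is increasing in $\lambda$, the bound $\alpha(\lambda,d)\le 1-\delta/6$ extends to all $\lambda<(1-\delta)\lambda_c(d)$. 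The main obstacle is making the two limits — $d\to\infty$ and the behavior near $\delta=0$ — interact cleanly so that a single threshold $\Delta_0(\delta)$ works; concretely, one must show the error terms in the asymptotic expansion of $\hat{x}(\lambda,d)$ are $O(1/d)$ uniformly for $\lambda$ in the window, which amounts to a careful but routine estimate on the fixed-point equation $\hat{x}=(1+\lambda\hat{x})^{-d}$ using $\lambda=O(1/d)$ and $\hat{x}=\Theta(1)$ there (note $\hat{x}\to e^{-e(1-\delta)}$ in the limit, bounded away from $0$ and $1$). Once that estimate is in hand, the constant $\delta/6$ is generous enough to absorb all lower-order terms, and the proposition follows.
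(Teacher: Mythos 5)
Your overall route is viable but genuinely different from the paper's. You reduce to the extreme value $\lambda=(1-\delta)\lambda_c(\Delta)$ via monotonicity of $\alpha$ in $\lambda$ (your implicit differentiation is fine, and in fact unconditional: $\frac{\mathrm{d}}{\mathrm{d}\lambda}(\lambda\hat{x})=\hat{x}\,\frac{1+\lambda\hat{x}}{1+\lambda\hat{x}+d\lambda\hat{x}}>0$, no appeal to $|f'|<1$ is needed), and then pass to the $d\to\infty$ limit with uniform error control. The paper avoids all asymptotics: it exhibits the explicit test point $x_0=\frac{1-\delta/3}{\lambda(\Delta-1+\delta/3)}$, checks directly that $\sqrt{\Delta\lambda x_0/(1+\lambda x_0)}\le 1-\delta/6$, and shows $f_{\lambda_0,\Delta}(x_0)\le x_0$, which forces $\hat{x}(\lambda_0,\Delta)\le x_0$ because $f$ is decreasing; monotonicity in $\lambda$ then finishes. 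What the paper's argument buys is a short, self-contained computation with an explicit $\Delta_0=O(1/\delta)$; what yours buys is conceptual transparency (the limiting contraction factor is identified exactly), at the price of having to prove the $O(1/d)$ uniformity of the fixed-point asymptotics and to verify the bound for all $\delta\in(0,1)$, not just via a first-order expansion at $\delta=0$.

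Two concrete slips need fixing. First, $\alpha(\lambda_c(\Delta),\Delta)=1$ is not exact: the paper's $f_{\lambda,d}(x)=(1+\lambda x)^{-d}$ is the no-parent recurrence with exponent $d=\Delta$, whereas $\lambda_c(\Delta)=\frac{(\Delta-1)^{\Delta-1}}{(\Delta-2)^{\Delta}}$ is the uniqueness threshold for branching factor $\Delta-1$. Exact criticality of the exponent-$\Delta$ recurrence occurs at $\frac{\Delta^{\Delta}}{(\Delta-1)^{\Delta+1}}<\lambda_c(\Delta)$, so at $\lambda_c(\Delta)$ one has $\alpha(\lambda_c(\Delta),\Delta)>1$ by a factor $1+\Theta(1/\Delta)$; the identities $\lambda_c\hat{x}=\frac{1}{d-1}$, $1+\lambda_c\hat{x}=\frac{d}{d-1}$ hold only for the matched critical point. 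Your limiting argument absorbs this $O(1/\Delta)$ mismatch (this is exactly why the proposition needs $\Delta\ge\Delta_0(\delta)$), but the exactness claim should be dropped. Second, the limiting fixed point is misidentified: with $\lambda d\to e(1-\delta)$ one has $\hat{x}\to x_\infty$ solving $x_\infty=e^{-e(1-\delta)x_\infty}$ (so $x_\infty=1/e$ at $\delta=0$), not $\hat{x}\to e^{-e(1-\delta)}$; also $d\hat{x}\to\infty$, it is $d\lambda\hat{x}$ that has a finite limit. With the correct limit, $\alpha_\infty(\delta)^2=W(e(1-\delta))$ where $W$ is the Lambert function, giving $1-\alpha_\infty(\delta)\approx\delta/4$ near $\delta=0$ and $\sqrt{W(e(1-\delta))}\le 1-\delta/6$ for all $\delta\in(0,1)$, so your conclusion survives once these corrections and the uniformity estimate are written out.
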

\begin{proof}
Let $x_0=\frac{1-\delta/3}{\lambda(\Delta-1+\delta/3)}$. It is easy to verify that
\[
\sqrt{\frac{\Delta\cdot\lambda x_0}{1+\lambda x_0}}\le1-\delta/6.
\]
Note that the function $\sqrt{\frac{\Delta\lambda x}{1+\lambda x}}$ is increasing in $x$. Since $f(x)$ is increasing in $\lambda$, it is easy to verify that $\hat{x}(\lambda,d)$ is increasing in $\lambda$. 
We then show that for all $\Delta\ge\Delta_0$, it holds that $\hat{x}(\lambda_0,\Delta)\le x_0$ where $\lambda_0=(1-\delta)\lambda_c(\Delta)=\frac{(1-\delta)(\Delta-1)^{\Delta-1}}{(\Delta-2)^\Delta}$, which will prove our proposition.

Since $f_{\lambda_0,\Delta}(x)$ is decreasing in $x$ and $f_{\lambda_0,\Delta}(\hat{x}(\lambda_0,\Delta))=\hat{x}(\lambda_0,\Delta)$, it is sufficient to show that
\[
f_{\lambda_0,\Delta}(x_0)=(1+\lambda_0 x_0)^{-\Delta}\le x_0.
\]

Note that it holds that 
\begin{align*}
\frac{f_{\lambda_0,\Delta}(x_0)}{x_0}
&=
\frac{\lambda_0(\Delta-1+\delta/3)}{(1-\delta/3)(1+\frac{1-\delta/3}{(\Delta-1+\delta/3)})^{\Delta}}
=
\frac{1-\delta}{1-\delta/3}\cdot
\frac{(\Delta-1)^{\Delta}(\Delta-1+\delta/3)^{\Delta}}{(\Delta-2)^\Delta \Delta^{\Delta}}\cdot\frac{\Delta-1+\delta/3}{\Delta-1}.
\end{align*}
Therefore, there is a suitable $\Delta_0=O(\frac{1}{\delta})$ such that for all $\Delta\ge\Delta_0$, 
\begin{align*}
\frac{f_{\lambda_0,\Delta}(x_0)}{x_0}
&\le\frac{1-\delta}{1-\delta/3}\left(1+O\left(\frac{\eta}{\Delta}\right)\right)\mathrm{e}^{\delta/2.99}<1,
\end{align*}
which proves the proposition.
\end{proof}

Recall recurrence $F$ as defined in~\eqref{eq:BP-recursion}.
The following proposition was proved implicitly in~\cite{LLY13}. 
\begin{proposition}[\cite{LLY13}]\label{prop:multi-variate-opt}
Let $G=(V,E)$ be a graph with maximum degree at most $\Delta$.
Assume that $\alpha(\lambda,\Delta)\le1$. For any $\omega\in[0,1]^V$, and $v\in V$,
\begin{align*}
\sqrt{\frac{\lambda F({\omega})(v)}{1+\lambda F({\omega})(v)}}\sum_{u\in N(v)} \sqrt{\frac{\lambda {\omega}(u)}{1+\lambda{\omega}(u)}}\le \alpha(\lambda,\Delta),%
\end{align*}
where $\alpha(\lambda,\Delta)$ is defined in~\eqref{eq:alpha-lambda-d}.
\end{proposition}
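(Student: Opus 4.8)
The final statement to prove is Proposition~\ref{prop:multi-variate-opt}: for the multivariate BP recurrence $F$ on a graph of maximum degree $\Delta$, assuming $\alpha(\lambda,\Delta)\le1$, we have
\[
\sqrt{\frac{\lambda F(\omega)(v)}{1+\lambda F(\omega)(v)}}\sum_{u\in N(v)}\sqrt{\frac{\lambda\omega(u)}{1+\lambda\omega(u)}}\le\alpha(\lambda,\Delta).
\]

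\medskip

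\noindent\textbf{Plan of proof.}
The plan is to reduce the multivariate inequality to a single-variable optimization by a symmetrization/monotonicity argument, and then identify the extremal configuration with the symmetric fixed point that defines $\alpha(\lambda,\Delta)$. First I would fix $v$ and its neighborhood $N(v)=\{u_1,\dots,u_d\}$ with $d\le\Delta$, and regard the left-hand side as a function $g(\omega(u_1),\dots,\omega(u_d))$ of the $d$ coordinates at the neighbors (note $F(\omega)(v)=\prod_j (1+\lambda\omega(u_j))^{-1}$ depends only on these). Writing $t_j=\frac{\lambda\omega(u_j)}{1+\lambda\omega(u_j)}\in[0,\frac{\lambda}{1+\lambda}]$, one has $1+\lambda\omega(u_j)=\frac{1}{1-t_j}$, so $F(\omega)(v)=\prod_j(1-t_j)$ and the expression becomes
\[
\sqrt{\frac{\lambda\prod_j(1-t_j)}{1+\lambda\prod_j(1-t_j)}}\;\sum_{j=1}^d\sqrt{t_j}.
\]
The first step is to show this is maximized (over the box) when all $t_j$ are equal and $d=\Delta$: increasing any $t_j$ increases $\sqrt{t_j}$ but decreases the prefactor through $\prod(1-t_j)$; a direct computation of $\partial g/\partial t_j$ shows that at an interior critical point the ratio $\sqrt{t_j}\big/\frac{\partial}{\partial t_j}\log(\text{prefactor})$ is the same for all $j$, forcing all $t_j$ equal; and the boundary cases $t_j\in\{0,\frac{\lambda}{1+\lambda}\}$ are handled by noting that dropping a coordinate or pushing it to an endpoint cannot beat the symmetric interior optimum (this is exactly the ``implicit in \cite{LLY13}'' reduction — monotonicity in $\lambda$ and in the number of neighbors lets us take $d=\Delta$).

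\medskip

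\noindent\textbf{Reduction to the symmetric fixed point.}
Once we are in the symmetric regime, set $\omega(u_j)\equiv y$ for all $j$ and $x=F(\omega)(v)=(1+\lambda y)^{-\Delta}$; the inequality to prove becomes
\[
\Delta\sqrt{\frac{\lambda x}{1+\lambda x}}\sqrt{\frac{\lambda y}{1+\lambda y}}\le\alpha(\lambda,\Delta).
\]
The second step is the one-variable calculus: with $x$ forced to equal $(1+\lambda y)^{-\Delta}$, regard $h(y)=\Delta^2\frac{\lambda (1+\lambda y)^{-\Delta}}{1+\lambda(1+\lambda y)^{-\Delta}}\cdot\frac{\lambda y}{1+\lambda y}$ and show it is maximized at $y=\hat x(\lambda,\Delta)$, the fixed point of $f_{\lambda,\Delta}$. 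The clean way is to recognize that $\Delta\sqrt{\frac{\lambda x}{1+\lambda x}}\sqrt{\frac{\lambda y}{1+\lambda y}}$ is the geometric mean of the one-step contraction factors of $f_{\lambda,\Delta}$ at $y$ and at $f_{\lambda,\Delta}(y)$ (with respect to the potential $\Psi$), and that along the trajectory $y\mapsto f_{\lambda,\Delta}(y)$ these factors form a monotone sequence converging to the value at the fixed point; hence the product is largest at the fixed point, where it equals $\alpha(\lambda,\Delta)^2$ by the definition \eqref{eq:alpha-lambda-d}. Differentiating $h$ directly and checking the sign of $h'$ (using $f_{\lambda,\Delta}'(y)<0$ and that $f_{\lambda,\Delta}$ is a contraction in the $\Psi$-metric when $\alpha\le1$) is the concrete route.

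\medskip

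\noindent\textbf{Main obstacle.}
The main technical obstacle is the first step — rigorously justifying that the multivariate maximum is attained at the symmetric point with $d=\Delta$, including the boundary analysis where some $t_j$ hit $0$ or $\frac{\lambda}{1+\lambda}$. Interior critical points are straightforward via Lagrange-type reasoning, but one must rule out that an asymmetric or lower-degree boundary configuration does better; this is where the monotonicity of the whole expression in $\lambda$ and the observation that adding a neighbor with $\omega$-value pushed appropriately only increases things are needed, and it is the part that \cite{LLY13} treats implicitly. Everything after the reduction to one variable is routine calculus driven by the identity $\hat x=(1+\lambda\hat x)^{-\Delta}$ and the definition of $\alpha$.
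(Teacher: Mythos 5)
There is a genuine gap, and it sits in the step you call ``routine calculus.'' After symmetrizing, you claim that the one-variable quantity $h(y)=\frac{\lambda\Delta f(y)}{1+\lambda f(y)}\cdot\frac{\lambda\Delta y}{1+\lambda y}$ (with $f(y)=(1+\lambda y)^{-\Delta}$) is maximized at the fixed point $y=\hat x(\lambda,\Delta)$, justified by a ``monotone contraction factors along the trajectory'' heuristic. This is false in general: at $\hat x$ one has $h(\hat x)=\alpha(\lambda,\Delta)^4$, whereas an interior critical point of $h$ satisfies $\lambda\Delta y=1+\lambda f(y)$, which holds at $\hat x$ only when $\alpha(\lambda,\Delta)=1$. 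For $\alpha<1$ the maximizer is a different point $x(\lambda,\Delta)$ and the maximum value can strictly exceed $\alpha^4$ (asymptotically, for $\lambda=c/\Delta$ the maximum of $h$ is about $c/e$ while $h(\hat x)\approx (c/e)^2$). Indeed, if your claim were true it would prove the stronger bound $\alpha^2$ in the proposition, which is not what holds; the statement only asserts the bound $\alpha$. The actual content of the paper's proof is exactly the piece your proposal omits: one shows the maximum of $\alpha_{\lambda,d}(x)=\sqrt{h(x)}$ is attained at the critical point $x(\lambda,d)$ solving $\lambda d\,x=1+\lambda f(x)$, uses that identity to rewrite the value as $\sqrt{\lambda d f(x(\lambda,d))/(1+\lambda x(\lambda,d))}$, and then invokes the hypothesis $\alpha(\lambda,d)\le1$ to deduce $\hat x(\lambda,d)\le x(\lambda,d)$, so that monotonicity of $f$ gives the bound $\alpha(\lambda,d)\le\alpha(\lambda,\Delta)$. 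Without this comparison argument (and without using the hypothesis $\alpha\le1$ anywhere, which your sketch never does), the one-variable step does not go through.

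On your first step: you identify the symmetrization as the main obstacle and propose a Lagrange/boundary analysis over the box in the variables $t_j$, which is left incomplete. The paper avoids this entirely: writing $\nu(u)=\ln(1+\lambda\omega(u))$ and using concavity of $\sqrt{(\mathrm{e}^{\nu}-1)/\mathrm{e}^{\nu}}$, Jensen's inequality bounds $\frac{1}{|N(v)|}\sum_u\sqrt{\lambda\omega(u)/(1+\lambda\omega(u))}$ by the same quantity at the geometric-mean point $\bar\omega$, and since $F(\omega)(v)=(1+\lambda\bar\omega)^{-|N(v)|}$ exactly, the whole expression is dominated by the symmetric one-variable function at $\bar\omega$ with no interior/boundary case analysis; the reduction from degree $d$ to $\Delta$ then uses that $\alpha(\lambda,d)$ is increasing in $d$. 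So the difficulty is misplaced: the symmetrization is the easy part (one Jensen application), and the one-variable optimization, where your proposal relies on an incorrect claim, is where the hypothesis $\alpha(\lambda,\Delta)\le1$ must be used.
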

\begin{proof}
Let $\bar{\omega}\in[0,1]$ be that satisfies $1+\lambda\bar{\omega}=\left(\prod_{u\in N(v)}(1+\lambda {\omega}(u))\right)^{\frac{1}{|N(v)|}}$.  Denote that $\bar{\nu}=\ln (1+\lambda\bar{\omega})$ and $ {\nu}(u)=\ln(1+\lambda {\omega}(u))$. It then holds that $\bar{\nu}=\frac{1}{|N(v)|}\sum_{u\in N(v)} {\nu}(u)$. Due to the concavity of $\sqrt{\frac{\mathrm{e}^{ {\nu}}-1}{\mathrm{e}^{ {\nu}}}}$ in $ {\nu}$, by Jensen's inequality:
\[
\frac{1}{|N(v)|}\sum_{u\in N(v)} \sqrt{\frac{\lambda  {\omega}(u)}{1+\lambda {\omega}(u)}}
=\frac{1}{|N(v)|}\sum_{u\in N(v)} \sqrt{\frac{\mathrm{e}^{ {\nu}(u)}-1}{\mathrm{e}^{ {\nu}(u)}}}
\le \sqrt{\frac{\mathrm{e}^{\bar{\nu}}-1}{\mathrm{e}^{\bar{\nu}}}}
=  \sqrt{\frac{\lambda \bar{\omega}}{1+\lambda\bar{\omega}}}.
\]
Therefore, 
\[
\sqrt{\frac{\lambda F( {\omega})(v)}{1+\lambda F( {\omega})(v)}}\sum_{u\in N(v)} \sqrt{\frac{\lambda  {\omega}(u)}{1+\lambda {\omega}(u)}}
\le
\sqrt{\frac{\lambda d f(\bar{\omega})}{1+\lambda f(\bar{\omega})}\cdot \frac{\lambda d\bar{\omega}}{1+\lambda{\bar{\omega}}}},
\]
where $d=|N(v)|$ is the degree of vertex $v$ in $G$ and $f(\bar{\omega})=(1+\lambda\bar{\omega})^{-d}$ is the symmetric version of the recursion~\eqref{eq:BP-recursion}.

Define $\alpha_{\lambda,d}(x)=\sqrt{\frac{\lambda d f(x)}{1+\lambda f(x)}\cdot \frac{\lambda dx}{1+\lambda x}}$ where as before $f(x)=(1+\lambda x)^{-d}$.  The above convexity argument shows that
\begin{align}
\sqrt{\frac{\lambda F( {\omega})(v)}{1+\lambda F( {\omega})(v)}}\sum_{u\in N(v)} \sqrt{\frac{\lambda  {\omega}(u)}{1+\lambda {\omega}(u)}}
\le \alpha_{\lambda,d}(x), \text{ for some }x\in[0,1].\label{eq:bounded-by-alpha-d-x}
\end{align}

Fixed any $\lambda$ and $d$, the critical point of $\alpha_{\lambda,d}(x)$ is achieved at the unique positive $x(\lambda,d)$ satisfying
\begin{align}
\lambda d x(\lambda,d)=1+\lambda f(x(\lambda,d)).\label{eq:decay-critical}
\end{align}
It is also easy to verify by checking the derivative $\frac{\mathrm{d} \alpha_{\lambda,d}(x)}{\mathrm{d} x}$ that the maximum of $\alpha_{\lambda,d}(x)$ is achieved at this critical point $x(\lambda,d)$. 

Recall that $\hat{x}(\lambda,d)$ is the fixed point satisfying $\hat{x}(\lambda,d)=f(\hat{x}(\lambda,d))=(1+\lambda\hat{x}(\lambda,d))^{-d}$, and $\alpha(\lambda,d)=\sqrt{\frac{\lambda d \hat{x}(\lambda,d)}{1+\lambda\hat{x}(\lambda,d)}}$.
Under the assumption that $\alpha(\lambda,d)\le1$, we must have $\hat{x}(\lambda,d)\le {x}(\lambda,d)$. If otherwise $\hat{x}(\lambda,d)> {x}(\lambda,d)$, then we would have $\lambda d\hat{x}(\lambda,d)>\lambda d{x}(\lambda,d)=1+\lambda f({x}(\lambda,d))>1+\lambda f(\hat{x}(\lambda,d))=1+\lambda \hat{x}(\lambda,d)$, contradicting that $\frac{\lambda d \hat{x}(\lambda,d)}{1+\lambda\hat{x}(\lambda,d)}=\alpha(\lambda,d)^2\le 1$. Therefore, for any $x\in[0,1]$, it holds that
\begin{align*}
\alpha_{\lambda,d}(x) 
&\le \alpha(d,x(\lambda,d))\\
&=
\sqrt{\frac{\lambda d f(x(\lambda,d))}{1+\lambda f(x(\lambda,d))}\cdot \frac{\lambda d x(\lambda,d)}{1+\lambda x(\lambda,d)}}\\
&=
\sqrt{\frac{\lambda d f(x(\lambda,d))}{1+\lambda x(\lambda,d)}}
&\text{(due to~\eqref{eq:decay-critical})}\\
&\le 
\sqrt{\frac{\lambda d f(\hat{x}(\lambda,d))}{1+\lambda\hat{x}(\lambda,d)}}
&\text{(}\hat{x}(\lambda,d)\le x(\lambda,d)\text{)}\\
&=
\sqrt{\frac{\lambda d \hat{x}(\lambda,d)}{1+\lambda \hat{x}(\lambda,d)}}\\
&=
\alpha(\lambda,d).
\end{align*}
Finally, it is easy to observe that $\alpha(\lambda,d)$ is increasing in $d$ since $\alpha(\lambda,d)$ is increasing in $\hat{x}(\lambda,d)$ and $\hat{x}(\lambda,d)$ is increasing in $d$. Therefore, $\alpha(\lambda,d)\le\alpha(\lambda,\Delta)$ because $d=|N(v)|\le\Delta$. Combined this with~\eqref{eq:bounded-by-alpha-d-x}, the proposition is proved.
\end{proof}

We are now ready to prove Theorem~\ref{theorem:HardcoreConvergence}

\begin{proof}[Proof of Theorem~\ref{theorem:HardcoreConvergence}]
By Proposition~\ref{prop:uniqueness-condition}, for the regime of $\lambda$ described in the theorem, it holds that $\alpha(\lambda,\Delta)<1-\delta/6$ where $\Delta$ is the maximum degree of the graph $G=(V,E)$.

Recall that in Section~\ref{sec:BPConvergence}, we introduce the following potential function:
\[
\Psi(x)=(\sqrt{\lambda})^{-1}\textrm{arcsinh}\left (\sqrt{\lambda \cdot x}\right).
\]

We then show that for any $\omega_1,\omega_2\in[0,1]^V$, and $v\in V$,
\begin{align}
|\Psi(\omega_1(v))-\Psi(\omega_2(v))|
&\le1, \label{eq:potential-unconditional-bound}
\end{align}
and
\begin{align}
|\Psi(F(\omega_1)(v))-\Psi(F(\omega_2)(v))|
&\le (1-\delta/6) \max_{u\in N(v)}|\Psi(\omega_1(v))-\Psi(\omega_2(v))|
\label{eq:potential-contraction}.
\end{align}

We first prove~\eqref{eq:potential-unconditional-bound}. It is easy to see that $\Psi(x)$ is monotonically increasing for $x\in[0,1]$, thus $|\Psi(\omega_1(v))-\Psi(\omega_2(v))|\le \Psi(1)-\Psi(0)=\mathrm{arcsinh}(\sqrt{\lambda})/\sqrt{\lambda}$. Observe that $\mathrm{arcsinh}(x)\le x$ for any $x\ge 0$ and hence $\mathrm{arcsinh}(\sqrt{\lambda})/\sqrt{\lambda}\le 1$. This proves~\eqref{eq:potential-unconditional-bound}.

We then prove~\eqref{eq:potential-contraction}. Note that the derivative of the potential function $\Psi$ is $\Psi'(x)=\frac{\mathrm{d}\, \Psi(x)}{\mathrm{d}\, x}=\frac{1}{2\sqrt{x(1+\lambda x)}}$. Due to the mean value theorem, there exists an $\tilde{\omega}\in[0,1]^{N(v)}$ such that 
\begin{align*}
|\Psi(F(\omega_1)(v))-\Psi(F(\omega_2)(v))|
&=
\sum_{u\in N(v)}\left.\left|\frac{\partial F({\omega})(v)}{\partial {\omega}(u)}\right|_{\omega=\tilde{\omega}}\frac{\Psi'(F(\tilde{\omega})(v))}{\Psi'(\tilde{\omega}(u))} \right| |\Psi(\omega_1(u))-\Psi(\omega_2(u))|\\
&=
\sqrt{\frac{\lambda F(\tilde{\omega})(v)}{1+\lambda F(\tilde{\omega})(v)}}\sum_{u\in N(v)} \sqrt{\frac{\lambda \tilde{\omega}(u)}{1+\lambda\tilde{\omega}(u)}} |\Psi(\omega_1(u))-\Psi(\omega_2(u))|\\
&\le
\left(\sqrt{\frac{\lambda F(\tilde{\omega})(v)}{1+\lambda F(\tilde{\omega})(v)}}\sum_{u\in N(v)} \sqrt{\frac{\lambda \tilde{\omega}(u)}{1+\lambda\tilde{\omega}(u)}} \right)
\cdot\max_{u\in N(v)}|\Psi(\omega_1(u))-\Psi(\omega_2(u))|.
\end{align*}
Then \eqref{eq:potential-contraction} is implied by Proposition~\ref{prop:uniqueness-condition} and Proposition~\ref{prop:multi-variate-opt}.

Next, we prove the statements in the theorem.
\begin{enumerate}
\item
(Proof of Equation~\eqref{eq:PotentialVsRealDistance})
By the mean value theorem, for any $x_1,x_2\in[(1+\lambda)^{-\Delta},1]$, there exists a mean value $\xi\in[(1+\lambda)^{-\Delta},1]$ such that 
\[
|\Psi(x_1)-\Psi(x_2)|=\Psi'(\xi)|x_1-x_2|=\frac{1}{2\sqrt{\xi(1+\lambda\xi)}}|x_1-x_2|.
\]
For all sufficiently large $\Delta$, it holds that $(1+\lambda)^{-\Delta}>1/36$ and $\lambda<\lambda_c(\Delta)\le 0.25$, thus $\xi\in[1/36,1]$. Therefore, $\frac{1}{2\sqrt{\xi(1+\lambda\xi)}}\ge \frac{1}{2\sqrt{1+\lambda}}>\frac{1}{3}$ and $\frac{1}{2\sqrt{\xi(1+\lambda\xi)}}< \frac{1}{2\sqrt{\xi}}<3$.

\item (Proof of Lemma~\ref{lemma:FixPointEquations})
Consider the dynamical system defined by $\omega^{(i)}=F(\omega^{(i-1)})$ with arbitrary two initial values $\omega_1^{(0)},\omega_2^{(0)}\in[0,1]^V$.
The derivative of the potential function satisfies that $\Psi'(x)\ge\frac{1}{2\sqrt{1+\lambda}}$ for any $x\in[0,1]$.
Due to the mean value theorem, for any $v\in V$, there exists a mean value $\xi\in[0,1]$ such that
\[
\left|\omega_1^{(i)}(v)-\omega_2^{(i)}(v)\right|
=
\frac{1}{\Psi'(\xi)}\left|\Psi\left(\omega_1^{(i)}(v)\right)-\Psi\left(\omega_2^{(i)}(v)\right)\right|
\le 
2\sqrt{1+\lambda}\left|\Psi\left(\omega_1^{(i)}(v)\right)-\Psi\left(\omega_2^{(i)}(v)\right)\right|.
\]
Combined with~\eqref{eq:potential-unconditional-bound} and~\eqref{eq:potential-contraction}, we have
\begin{align*}
\left\|\omega_1^{(i)}-\omega_2^{(i)}\right\|_\infty
&\le2\sqrt{1+\lambda}
\left\|\Psi\left(\omega_1^{(i)}\right)-\Psi\left(\omega_2^{(i)}\right)\right\|_\infty\\
&\le 2(1-\delta/6)^{i}\sqrt{1+\lambda}\max_{z\in V}\left|\Psi\left(\omega_1^{(0)}(z)\right)-\Psi\left(\omega_2^{(0)}(z)\right)\right|\\
&\le 2(1-\delta/6)^{i}\sqrt{1+\lambda},
\end{align*}
which is at most $3(1-\delta/6)^{i}$ for $\lambda<\lambda_c(\Delta)$ for all sufficiently large $\Delta$.

Therefore, $\left\|\omega_1^{(i)}-\omega_2^{(i)}\right\|_\infty\to 0$ as $i\to\infty$ for arbitrary initial values $\omega_1^{(0)},\omega_2^{(0)}\in[0,1]^V$. This shows that the $F$ defined in~\eqref{eq:BP-recursion} has a unique fixed point $\omega^*$.

\item
(Proof of Lemma~\ref{lemma:ApproxFixPointEquations}) 
According to the definition of $D_{v,R}$ in~\eqref{eq:potential-metric},
\begin{align*}
D_{v,R-1}(F(\omega), \omega^*) 
&= 
\max_{u \in B(v,R-1)} \left| \Psi(F(\omega)(u)) - \Psi(\omega^*(u)) \right|\\
&= 
\max_{u \in B(v,R-1)} \left| \Psi(F(\omega)(u)) - \Psi(F(\omega^*)(u)) \right| &\text{($\omega^*$ is fixed point)}\\
&\le
\max_{u \in B(v,R-1)}(1-\delta/6)\max_{z\in N(u)} \left| \Psi(\omega(z)) - \Psi(\omega^*(z)) \right| &\text{(due to \eqref{eq:potential-contraction})}\\
&=
(1-\delta/6)\max_{u \in B(v,R)} \left| \Psi(\omega(u)) - \Psi(\omega^*(u)) \right|\\
&=
(1-\delta/6) \cdot D_{v,R}(\omega,\omega^*).
\end{align*}

\item(Proof of Theorem~\ref{thrm:EigenVector})
Due to Propositions~~\ref{prop:uniqueness-condition} and \ref{prop:multi-variate-opt}, for any $\omega\in[0,1]^V$, and $v\in V$, 
\begin{align*}
\sqrt{\frac{\lambda F({\omega})(v)}{1+\lambda F({\omega})(v)}}\sum_{u\in N(v)} \sqrt{\frac{\lambda {\omega}(u)}{1+\lambda{\omega}(u)}}\le 1-\delta/6.
\end{align*}

In particular, this inequality holds for the fixed point $\omega^*$ where $F(\omega^*)(v)=\omega^*(v)$. Therefore, 
\begin{align*}
\sqrt{\frac{\lambda {\omega}^*(v)}{1+\lambda {\omega}^*(v)}}\sum_{u\in N(v)} \sqrt{\frac{\lambda {\omega}^*(u)}{1+\lambda{\omega}^*(u)}}\le 1-\delta/6.
\end{align*}
We construct $\Phi:V\to\mathbb{R}_{\ge0}$ as $\Phi(v)=\sqrt{\frac{1+\lambda\omega^*(v)}{\omega^*(v)}}$ for every $v\in V$. Then
\[
\sum_{u\in N(v)}\frac{\lambda\omega^*(u)}{1+\lambda\omega^*(u)}\Phi(u)\le(1-\delta/6)\Phi(v).
\]
We now show that $1\le \Phi(v)\le 12$.
Since $\omega^*\in[0,1]^V$, we have $\Phi(v)=\sqrt{\frac{1+\lambda\omega^*(v)}{\omega^*(v)}}\ge 1$. 
Meanwhile, it holds that $\omega^*(v)=\prod_{u\in N_v}\frac{1}{1+\lambda\omega^*(u)}\ge(1+\lambda)^{-\Delta}$.
By our assumption, $\lambda\le(1-\delta)\lambda_c(\Delta)\le\frac{4}{\Delta-2}$ for all $\Delta\ge 3$. Therefore, $\omega^*(v)\ge (1+\frac{4}{\Delta-2})^{-\Delta}\ge 5^{-3}$ and  $\Phi(v)=\sqrt{\frac{1+\lambda\omega^*(v)}{\omega^*(v)}}\le \sqrt{5^3+4}\le 12$.

\end{enumerate}
\end{proof}

We consider a recurrence which corresponds to the rooted belief propagation. 
For an undirected graph $G=(V,E)$, let $\bar{E}$ be the set of all orientations of edges in $E$. The function $H:[0,1]^{\bar{E}}\to [0,1]^{\bar{E}}$ is defined as follows:
For any $\omega\in[0,1]^{\bar{E}}$ and $(v,p)\in\bar{E}$,
\begin{align}
H(\omega)(v,p)=\prod_{u\in N(v)\setminus\{p\}}\frac{1}{1+\lambda\omega(u,v)}\label{eq:BP-recursion-parent}
\end{align}
With the approach used in the proof of Theorem~\ref{theorem:HardcoreConvergence}, analyzing the convergence of $H$ is the same as analyzing $F$ on a graph $G$ with maximum degree $\Delta-1$. Recall that $\alpha(\lambda,\Delta)$ is increasing in $\Delta$. The same proof as of Theorem~\ref{theorem:HardcoreConvergence} gives us the following corollary.
\begin{corollary}\label{theorem:HardcoreConvergenceParent}
For $G=(V,E)$ and $\lambda$ assumed by Theorem~\ref{theorem:HardcoreConvergence}, the function $H$ defined in~\eqref{eq:BP-recursion-parent} has a unique fixed point $\omega^*$. Moreover, for any initial value $\omega^{0}\in[0,1]^{\bar{E}}$, denoting by $\omega^{i}=F^i(\omega)$ the vector after the $i$-th iterate of $F$, it holds that
\[
\|\omega^{i}-\omega^*\|_\infty\le3(1-\delta/6)^i.
\]

\end{corollary}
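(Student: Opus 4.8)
The plan is to reduce the analysis of $H$ to that of $F$ in precisely the way the surrounding text indicates: view $H$ as a belief-propagation recurrence on the state space $[0,1]^{\bar E}$ whose dependency structure has effective degree at most $\Delta-1$. Concretely, the coordinate indexed by an oriented edge $(v,p)$ is updated using only the coordinates $(u,v)$ with $u\in N(v)\setminus\{p\}$, of which there are at most $\Delta-1$. Every step in the proof of Theorem~\ref{theorem:HardcoreConvergence} that referred to the neighborhood $N(v)$ of a vertex now refers to the set $N(v)\setminus\{p\}$, and the only quantitative change is that the relevant degree drops from $\Delta$ to $\le\Delta-1$.

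First I would record the two coordinate-wise estimates that drive the argument, using the same potential $\Psi(x)=\lambda^{-1/2}\,\mathrm{arcsinh}(\sqrt{\lambda x})$. The unconditional bound $|\Psi(\omega_1(v,p))-\Psi(\omega_2(v,p))|\le\Psi(1)-\Psi(0)=\mathrm{arcsinh}(\sqrt\lambda)/\sqrt\lambda\le1$ is unchanged, since it does not see the graph. For the contraction step, the Jacobian of $H$ has in its $(v,p)$ row the entries $\bigl|\partial H(\omega)(v,p)/\partial\omega(u,v)\bigr|=\frac{\lambda H(\omega)(v,p)}{1+\lambda\omega(u,v)}$ for $u\in N(v)\setminus\{p\}$, so the mean value theorem applied to $\Psi\circ H$ gives, for some intermediate $\tilde\omega$,
\[
|\Psi(H(\omega_1)(v,p))-\Psi(H(\omega_2)(v,p))|\le\Biggl(\sqrt{\tfrac{\lambda H(\tilde\omega)(v,p)}{1+\lambda H(\tilde\omega)(v,p)}}\sum_{u\in N(v)\setminus\{p\}}\sqrt{\tfrac{\lambda\tilde\omega(u,v)}{1+\lambda\tilde\omega(u,v)}}\Biggr)\max_{u\in N(v)\setminus\{p\}}|\Psi(\omega_1(u,v))-\Psi(\omega_2(u,v))|,
\]
exactly as in the displayed computation in the proof of Theorem~\ref{theorem:HardcoreConvergence}.

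The key step is bounding the parenthesized factor. I would re-run Proposition~\ref{prop:multi-variate-opt} with $N(v)$ replaced by $N(v)\setminus\{p\}$: its Jensen/concavity argument uses only that one is averaging $\sqrt{(\mathrm{e}^\nu-1)/\mathrm{e}^\nu}$ over a finite index set and that the product defining the update ranges over exactly that set, so it yields the bound $\alpha(\lambda,d)$ with $d=|N(v)\setminus\{p\}|\le\Delta-1$. Since $\alpha(\lambda,d)$ is increasing in $d$ (the last observation in the proof of Proposition~\ref{prop:multi-variate-opt}) and $\alpha(\lambda,\Delta)\le1-\delta/6$ by Proposition~\ref{prop:uniqueness-condition}, the per-step $\Psi$-contraction factor is $\alpha(\lambda,\Delta-1)\le\alpha(\lambda,\Delta)\le1-\delta/6$. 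Iterating and combining with the unconditional bound gives $\|\Psi(H^i(\omega_1^{(0)}))-\Psi(H^i(\omega_2^{(0)}))\|_\infty\le(1-\delta/6)^i$ for any initial $\omega_1^{(0)},\omega_2^{(0)}\in[0,1]^{\bar E}$, and converting back via $\Psi'(x)\ge\frac1{2\sqrt{1+\lambda}}$ (mean value theorem, exactly as in the proof of Lemma~\ref{lemma:FixPointEquations} inside Theorem~\ref{theorem:HardcoreConvergence}) yields $\|H^i(\omega_1^{(0)})-H^i(\omega_2^{(0)})\|_\infty\le2\sqrt{1+\lambda}\,(1-\delta/6)^i\le3(1-\delta/6)^i$ for $\Delta$ large. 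Letting $i\to\infty$ shows any two trajectories collapse, so $H$ has a unique fixed point $\omega^*$; applying the bound with $\omega_2^{(0)}=\omega^*$ gives the stated rate.

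I do not expect a genuine obstacle: the content is a transcription of the proof of Theorem~\ref{theorem:HardcoreConvergence} with $N(v)$ systematically replaced by $N(v)\setminus\{p\}$ and $\Delta$ by $\Delta-1$, the monotonicity of $\alpha(\lambda,\cdot)$ absorbing the difference. The one point deserving a line of care is confirming that Proposition~\ref{prop:multi-variate-opt}, stated for the vertex recurrence $F$ on a graph, transfers to the oriented-edge recurrence $H$; as noted above its proof uses nothing about the index set beyond its cardinality, so this is immediate with the bound $|N(v)\setminus\{p\}|\le\Delta-1$. (One should also correct the evident typo in the statement: ``$\omega^i=F^i(\omega)$'' and ``$i$-th iterate of $F$'' should read $H^i(\omega)$ and $H$.)
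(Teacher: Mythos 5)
Your proposal is correct and follows essentially the same route as the paper: the paper's proof of Corollary~\ref{theorem:HardcoreConvergenceParent} is exactly the observation that analyzing $H$ is the same as analyzing $F$ on a graph of maximum degree $\Delta-1$, with the monotonicity of $\alpha(\lambda,\cdot)$ in the degree absorbing the difference, after which the argument of Theorem~\ref{theorem:HardcoreConvergence} is repeated verbatim. Your more detailed transcription (and your remark about the $F^i$ versus $H^i$ typo) matches what the paper intends.
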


\section{Loopy BP: Proof of Theorem \ref{thrm:CnvrgLoopyBP2Correct}}\label{sec:thrm:CnvrgLoopyBP2Correct}

Consider the  version of  Loopy BP defined with the following  sequence  of messages:
For all $t\geq 1$, for $v\in V$:
\begin{equation}\label{eq:BPRecurrnceNoParent}
\tilde{R}^t_{v}=\lambda \prod_{w\in N(v)} \frac{1}{1+\tilde{R}^{t-1}_{w}}.
\end{equation}

\noindent
The system of equations specified by \eqref{eq:BPRecurrnceNoParent} is equivalent to
the one in \eqref{eq:BP-recursion} in the following sense:
Given any set of initial messages $(\tilde{R}^0_{v})_{v\in V}\in \mathbb{R}_{\geq 0}$,  it holds that
$\tilde{R}^t_{v}=\lambda F^t(\bar{\omega})(v)$, for appropriate $\bar{\omega}$ which depends
on the initial messages, i.e. $(\tilde{R}^0_{v})_{v\in V}$.   $F^t$ is the $t$-th iteration of the function $F$.

Of interest is in the quantity $q^t(v)$, $v\in V$, defined as follows:
\[   \tilde{q}^t(v)=\frac{\tilde{R}^t_{v}}{1 + \tilde{R}^t_{v}}.
\]
From Lemma \ref{lemma:FixPointEquations},  there exists $\tilde{q}^*\in [0,1]^V$
such that $\tilde{q}^t$ converges to $\tilde{q}^*$ as $t\to\infty$, in the sense that 
$\tilde{q}^t/\tilde{q}^*\to 1$.
It is elementary to show that the following holds for any $t>0$, any $p\in V$ and $v\in N(p)$:
\begin{equation}\nonumber
 \frac{q^{t}(v,p)}{\mu(\textrm{$v$ is occupied} \;|\; \textrm{$p$ is unoccupied})}=
 \frac{q^{t}(v,p)}{q^{*}(v,p)} \frac{q^{*}(v,p)}{\tilde{q}^{*}(v)} \cdot  
 \frac{\tilde{q}^{*}(v)}{\mu(\textrm{$v$ is occupied})} \cdot 
 \frac{\mu(\textrm{$v$ is occupied})}{\mu(\textrm{$v$ is occupied} \;|\; \textrm{$p$ is unoccupied})}.
\end{equation}
The theorem  follows by showing  that each of the four ratios on the r.h.s. 
are sufficiently close to 1. For
the first two ratios we use  
Theorem \ref{thrm:ConvergenceParenBP}, and for the third one we use the
 Lemma 
\ref{lemma:CnvrgLoopyBP2CorrectNoParent}.

\begin{theorem}\label{thrm:ConvergenceParenBP}
For all $\delta, \epsilon>0$, there exists $\Delta_0=\Delta_0(\delta,\epsilon)$ and $C=C(\delta,\epsilon)$,
such that for all $\Delta\geq\Delta_0$, all $\lambda<(1-\delta)\lambda_c(\Delta)$,
all graphs $G$ of maximum degree $\Delta$ and girth $\geq 6$,
all $\eps>0$ the following holds: 

There exists $q^*\in [0,1]^{E}$ such that  for $t\geq C$, for all  $p\in V$,  $v\in N(p)$
we have that
\begin{eqnarray}
\left|\frac{q^t(v,p)}{q^*(v,p)}-1 \right|\leq \epsilon \quad \textrm{and} \quad
\left|\frac{q^*(v,p)}{\tilde{q}^*(v)}-1 \right|\leq \epsilon,
\end{eqnarray}
where $q^t(v,p)$ is defined in \eqref{eq:DefQVP(t)}.
\end{theorem}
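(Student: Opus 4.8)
The plan is to reduce the whole statement to the convergence and contraction of the rooted BP recurrence $H$ of~\eqref{eq:BP-recursion-parent}, as established in Corollary~\ref{theorem:HardcoreConvergenceParent}; note that the girth hypothesis is not actually used here, since $q^t(v,p)$, $q^*(v,p)$ and $\tilde q^*(v)$ are all defined purely through (loopy) BP on a graph of maximum degree $\Delta$. I would define $q^*$ on oriented edges by $q^*(v,p)=\frac{\lambda\omega^*_H(v,p)}{1+\lambda\omega^*_H(v,p)}$, where $\omega^*_H\in[0,1]^{\bar E}$ is the unique fixed point of $H$, and use throughout that $\lambda<\lambda_c(\Delta)=O(1/\Delta)$, that $\omega^*_H(v,p),\omega^*(v)\ge 5^{-3}$ (as in the proof of Theorem~\ref{theorem:HardcoreConvergence}, part~\ref{item:EigenVector}), and the two-sided estimate~\eqref{eq:PotentialVsRealDistance} comparing $|\Psi(x)-\Psi(y)|$ with $|x-y|$ on $[(1+\lambda)^{-\Delta},1]$.

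For the first ratio I would observe that for $t\ge 1$ the loopy BP messages satisfy $R^t_{v\to p}=\lambda\, H(\omega^{t-1})(v,p)$ with $\omega^{t-1}(w,v):=R^{t-1}_{w\to v}/\lambda$, and that $R^1_{v\to p}\le\lambda$, so $\omega^1\in[0,1]^{\bar E}$ and $\omega^t=H^{t-1}(\omega^1)$. Then Corollary~\ref{theorem:HardcoreConvergenceParent} gives $\|\omega^t-\omega^*_H\|_\infty\le 3(1-\delta/6)^{t-1}$, hence $|R^t_{v\to p}-R^*_{v\to p}|\le 3\lambda(1-\delta/6)^{t-1}$ with $R^*_{v\to p}=\lambda\omega^*_H(v,p)$. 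Since $x\mapsto x/(1+x)$ is $1$-Lipschitz on $[0,\infty)$, $|q^t(v,p)-q^*(v,p)|\le 3\lambda(1-\delta/6)^{t-1}$, and since $q^*(v,p)\ge R^*_{v\to p}/2=\Omega(\lambda)$ the \emph{relative} error is $|q^t(v,p)/q^*(v,p)-1|=O((1-\delta/6)^{t-1})$, which drops below $\epsilon$ once $t\ge C(\delta,\epsilon)$.

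The second ratio carries the real content: I would show $\omega^*_H(v,p)=(1+O(1/\Delta))\,\omega^*(v)$, where $\omega^*$ is the fixed point of the unrooted recurrence $F$. The key device is that $\hat\omega(u,v):=\omega^*(u)$ is an approximate fixed point of $H$, because $H(\hat\omega)(v,p)=\prod_{u\in N(v)\setminus\{p\}}(1+\lambda\omega^*(u))^{-1}=\omega^*(v)(1+\lambda\omega^*(p))$, so~\eqref{eq:PotentialVsRealDistance} gives $D(\hat\omega,H(\hat\omega))\le 3\lambda$ (with $D$ the analogous potential metric on $[0,1]^{\bar E}$). Using that $H$ contracts in $D$ with factor $1-\delta/6$ (the argument underlying Corollary~\ref{theorem:HardcoreConvergenceParent}), the triangle inequality $D(\hat\omega,\omega^*_H)\le D(\hat\omega,H(\hat\omega))+(1-\delta/6)D(\hat\omega,\omega^*_H)$ yields $D(\hat\omega,\omega^*_H)\le 18\lambda/\delta$, and~\eqref{eq:PotentialVsRealDistance} once more gives $|\omega^*(v)-\omega^*_H(v,p)|\le 54\lambda/\delta=O(1/\Delta)$. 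As both quantities are $\Omega(1)$, this is a relative error $O(1/\Delta)$; pushing it through $x\mapsto\lambda x/(1+\lambda x)$, whose logarithmic derivative $1/(x(1+\lambda x))$ is $O(1)$ on the relevant range, gives $|q^*(v,p)/\tilde q^*(v)-1|=O(1/\Delta)\le\epsilon$ for $\Delta\ge\Delta_0(\delta,\epsilon)$.

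I expect the main obstacle to be precisely this rooted-versus-unrooted comparison: what is needed is a \emph{relative} (not merely additive) bound between $\omega^*_H(v,p)$ and $\omega^*(v)$, so both the uniform lower bound $\ge 5^{-3}$ and the smallness $\lambda=O(1/\Delta)$ are essential, and the approximate-fixed-point trick is what converts the ``one extra factor'' observation into a quantitative error bound, via the $D$-contraction of $H$. Everything else is a routine translation of the statements in Corollary~\ref{theorem:HardcoreConvergenceParent}.
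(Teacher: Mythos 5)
Your proposal is correct and follows essentially the same route as the paper: the first ratio is handled exactly as in the paper's proof (loopy BP messages are iterates of $H$, Corollary~\ref{theorem:HardcoreConvergenceParent} gives geometric convergence, and the uniform lower bound on the fixed point converts the additive error into a relative one), and the second ratio rests on the same two ingredients the paper uses, namely the observation that the rooted and unrooted recurrences differ per step by a factor $1+\lambda\omega(p)=1+O(\lambda)$ and the $(1-\delta/6)$-contraction in the $\Psi$-potential metric, yielding the same $O(\lambda/\delta)=O(1/(\delta\Delta))$ bound. The only (harmless) difference is packaging: you bound the distance between the two fixed points in one shot via an approximate-fixed-point estimate for $H$ applied to the lifted unrooted fixed point, whereas the paper runs the identical per-step estimate as an induction over the iterations and then passes to the limit $t\to\infty$.
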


\noindent
The proof of Theorem \ref{thrm:ConvergenceParenBP} appears in Section \ref{sec:thrm:ConvergenceParenBP}.

\begin{lemma}\label{lemma:CnvrgLoopyBP2CorrectNoParent}
For all $\delta, \epsilon>0$, there exists $\Delta_0=\Delta_0(\delta,\epsilon)$ and $C=C(\delta,\epsilon)$,
such that for all $\Delta\geq\Delta_0$, all $\lambda<(1-\delta)\lambda_c(\Delta)$,
all graphs $G$ of maximum degree $\Delta$ and girth $\geq 6$,
the following holds:  Let $\mu(\cdot)$ be the Gibbs distribution,  for all  $v\in V$
we have 
\[  
  \left| \frac{\tilde{q}^{*}(v)}{\mu(\textrm{$v$ is occupied} )  } -1 \right| \leq \epsilon.
   \]
\end{lemma}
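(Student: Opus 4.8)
The plan is to estimate both $\tilde q^*(v)$ and $\mu(v\text{ occupied})$ by comparing each of them to the fixed‑point coordinate $\omega^*(v)$. First I would unwind the two quantities. Because $\tilde R^t_v=\lambda F^t(\bar\omega)(v)$, the (unique, by Lemma~\ref{lemma:FixPointEquations}) fixed point of~\eqref{eq:BPRecurrnceNoParent} satisfies $\tilde R^*_v=\lambda\omega^*(v)$, so $\tilde q^*(v)=\frac{\lambda\omega^*(v)}{1+\lambda\omega^*(v)}$. For the Gibbs marginal I would use the elementary hard‑core identity $\mu(v\text{ occupied})=\frac{\lambda}{1+\lambda}\,\rho(v)$, where $\rho(v):=\ProbSub{X\sim\mu}{X\cap N(v)=\emptyset}$ is the probability that $v$ is unblocked; this is exact, obtained by pairing each independent set in which all neighbours of $v$ are unoccupied and $v$ is out with the one in which $v$ is in (weight ratio $\lambda$). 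These give
\[
\frac{\tilde q^*(v)}{\mu(v\text{ occupied})}=\frac{\omega^*(v)}{\rho(v)}\cdot\frac{1+\lambda}{1+\lambda\omega^*(v)},
\]
and since $\lambda<\lambda_c(\Delta)=O(1/\Delta)$ and $0\le\omega^*(v)\le1$, the second factor is $1+O(1/\Delta)$. Using that $\omega^*(v)\ge 5^{-3}$ is bounded away from $0$ (part of Theorem~\ref{theorem:HardcoreConvergence}), it then suffices to prove the additive bound $|\rho(v)-\omega^*(v)|\le\epsilon'$ for a suitably small $\epsilon'=\epsilon'(\epsilon)$.

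To control $\rho(v)$ I would bring in $\R(X,v)=\prod_{z\in N(v)}\bigl(1-\tfrac{\lambda}{1+\lambda}\I_{z,v}(X)\bigr)$, which depends only on the configuration on $S_2(v)$ (here girth $\ge6$, in particular triangle‑freeness, is used). Two observations do the work. (i) By the conditional interpretation of $\R$ in~\eqref{eq:triangle-free}, $\R(\sigma,v)=\ProbCond{v\text{ unblocked}}{v\text{ unoccupied},\,Y(S_2(v))=\sigma(S_2(v))}$, and averaging over $X\sim\mu(\cdot\mid v\text{ unoccupied})$ (tower rule over the configuration on $S_2(v)$, which has the correct law) gives $\ExpCond{\R(X,v)}{v\text{ unoccupied}}=\ProbCond{v\text{ unblocked}}{v\text{ unoccupied}}$. (ii) Conditioning on $\{v\text{ unoccupied}\}$ perturbs the law of the configuration on $S_2(v)$, and also the number $\ProbSub{\mu}{v\text{ unblocked}}$, by only $O(\mu(v\text{ occupied}))=O(1/\Delta)$. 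Since $0\le\R\le1$, combining (i) and (ii) yields $\bigl|\ExpSub{X\sim\mu}{\R(X,v)}-\rho(v)\bigr|=O(1/\Delta)$.

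Finally I would invoke the static local‑uniformity estimate~\eqref{SketchEq:ApproxFix} (established as Lemma~\ref{lemma:RConGirth6} in the appendix), applied with accuracy $\epsilon'$: $\ProbSub{X\sim\mu}{|\R(X,v)-\omega^*(v)|>\epsilon'}\le\exp(-\Omega(\Delta))$, whence (again using $\R\le1$) $\bigl|\ExpSub{X\sim\mu}{\R(X,v)}-\omega^*(v)\bigr|\le\epsilon'+\exp(-\Omega(\Delta))$. Chaining this with the previous paragraph gives $|\rho(v)-\omega^*(v)|\le\epsilon'+O(1/\Delta)$, and taking $\epsilon'$ small in terms of $\epsilon$ and $\Delta_0$ large in terms of $\delta,\epsilon$ closes the argument.

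The genuine work sits inside~\eqref{SketchEq:ApproxFix}, which has already been reduced -- via the approximate recurrence of Lemma~\ref{lemma:RApproxReccGirth6} together with the stepwise BP contraction of Lemma~\ref{lemma:ApproxFixPointEquations} -- to the girth‑$\ge6$ hypothesis. Within the present proof the only genuinely delicate point is the bookkeeping of the three mutually $O(1/\Delta)$‑close quantities $\ExpSub{\mu}{\R(X,v)}$, $\rho(v)=\ProbSub{\mu}{v\text{ unblocked}}$, and $\tfrac{1+\lambda}{\lambda}\mu(v\text{ occupied})$, and checking that the conditioning on $\{v\text{ unoccupied}\}$ baked into the definition of $\R$ is harmless -- which it is, precisely because $\mu(v\text{ occupied})=O(1/\Delta)$.
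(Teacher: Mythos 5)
Your proposal is correct and follows essentially the same route as the paper's proof: both identify $\tilde q^*(v)=\frac{\lambda\omega^*(v)}{1+\lambda\omega^*(v)}$, use the exact identity $\mu(v\text{ occupied})=\frac{\lambda}{1+\lambda}\mu(v\text{ unblocked})$, compare $\mu(v\text{ unblocked})$ to $\ExpSub{X\sim\mu}{\R(X,v)}$ via the conditioning-on-$\{v\text{ unoccupied}\}$ argument with $O(\lambda)=O(1/\Delta)$ errors, and then invoke Lemma~\ref{lemma:RConGirth6} to bring $\ExpSub{X\sim\mu}{\R(X,v)}$ within $\epsilon'$ of $\omega^*(v)$. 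The only difference is cosmetic bookkeeping (additive bounds on $\rho(v)-\omega^*(v)$ converted to ratios via the lower bound on $\omega^*$, versus the paper's direct ratio decomposition), so no gap remains.
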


\noindent
The proof of Lemma \ref{lemma:CnvrgLoopyBP2CorrectNoParent} appears in Section \ref{sec:lemma:CnvrgLoopyBP2CorrectNoParent}.

The theorem follows by showing that 
\begin{equation}\nonumber 
\left|  \frac{\mu(\textrm{$v$ is occupied})}{\mu(\textrm{$v$ is occupied} \;|\; \textrm{$p$ is unoccupied})}-1\right| \leq 10/\Delta.
\end{equation}
From Bayes' rule we get that 
$
\mu(\textrm{$v$ is occupied} \;|\; \textrm{$p$ is unoccupied})=
\frac{\mu(\textrm{$v$ is occupied})}{\mu(\textrm{$p$ is unoccupied})}.
$ Using this observation we get that
\begin{eqnarray}\nonumber
\left|  \frac{\mu(\textrm{$v$ is occupied})}{\mu(\textrm{$v$ is occupied} \;|\; \textrm{$p$ is unoccupied})}-1\right|
&=&\left|  {\mu(\textrm{$p$ is unoccupied})}-1\right| \leq 10/\Delta.
\end{eqnarray}
In the last inequality we use the fact that  $0\leq \mu(\textrm{$p$ is occupied})\leq \lambda$.

\subsection{Proof of Theorem \ref{thrm:ConvergenceParenBP}}\label{sec:thrm:ConvergenceParenBP}
\begin{proof}
Note that by denoting $\omega^t(v,p)=\frac{R^t_{p\to v}}{\lambda}$, we have
\[
\omega^{t+1}(v,p)=H(\omega^t)(v,p),
\]
where $H$ is as defined in~\eqref{eq:BP-recursion-parent}. Then the convergence of $q^t(v,p)=\frac{R^t_{p\to v}}{1+R^t_{p\to v}}$ to a unique fixed point $q^*$ follows from Corollary~\ref{theorem:HardcoreConvergenceParent}. More precisely, there is  $\Delta_0=\Delta_0(\delta)$ and $C=C(\epsilon_0,\delta)$ such that for all $\Delta>\Delta_0$ all $\lambda<(1-\delta)\lambda_c(T_\Delta)$ and all $t>C$,
\[
\left|\omega^t(v,p)-\omega^*(v,p)\right|\leq \epsilon_0,
\]
Note that for all $t>1$, we have $\omega^t(v,p),\omega^*(v,p)\in[(1+\lambda)^{-\Delta},1]$ where $(1+\lambda)^{-\Delta}>1/36$ for $\lambda<\lambda_c(T_\Delta)$ for all sufficiently large $\Delta$. Then
\[
\left|\frac{q^t(v,p)}{q^*(v,p)}-1 \right|=\left|\frac{\omega^t(v,p)}{\omega^*(v,p)}\cdot \frac{1+\omega^*(v,p)}{1+\omega^t(v,p)}-1\right|=\frac{|\omega^t(v,p)-\omega^*(v,p)|}{\omega^*(v,p)(1+\omega^t(v,p))}\leq 36\epsilon_0.
\]
By choosing $\epsilon_0=\frac{\epsilon}{36}$, we have $\left|\frac{q^t(v,p)}{q^*(v,p)}-1 \right|\le\epsilon$.

We then show that there is a $\Delta_0=O(\frac{1}{\delta\epsilon})$ such that for all $\Delta>\Delta_0$ and all $\lambda<(1-\delta)\lambda_c(T_\Delta)$, the fixed points of the two BPs have $\left|\frac{q^*(v,p)}{\tilde{q}^*(v)}-1 \right|\leq \epsilon$

Let $\omega^t(v,p)=\frac{q^t(v,p)}{\lambda(1-q^t(v,p))}$ and $\tilde{\omega}^t(v)=\frac{\tilde{q}^t(v)}{\lambda(1-\tilde{q}^t(v))}$. It follows that
\begin{align*}
\omega^{t+1}(v,p)
&=\prod_{u\in N(v)\setminus\{p\}}\frac{1}{1+\lambda\omega^t(u,v)}
=(1+\lambda\omega^t(p,v))\prod_{u\in N(v)}\frac{1}{1+\lambda\omega^t(u,v)},\\
\tilde{\omega}^{t+1}(v)
&=\prod_{u\in N(v)}\frac{1}{1+\lambda\tilde{\omega}^t(u)}.
\end{align*}
We also define 
\[
\omega^{t+1}(v)=\prod_{u\in N(v)}\frac{1}{1+\lambda\omega^t(u,v)},
\]
therefore $\omega^{t+1}(v,p)=(1+\lambda\omega^t(p,v))\omega^{t+1}(v)$. Note that $\omega^t(p,v)\in(0,1]$, thus $|\omega^{t+1}(v,p)-\omega^{t+1}(v)|\le \lambda$. Also recall that $\lambda<\lambda_c(T_\Delta)\le 3/(\Delta-2)$ for all sufficiently large $\Delta$, therefore 
\[
|\omega^{t+1}(v,p)-\omega^{t+1}(v)|\le {3}/({\Delta-2}).
\] 
Let $\Psi(\cdot)$  be as defined in~\eqref{eq:potential-function}. Note for $t>1$ both $\omega^{t+1}(v,p)$ and $\omega^{t+1}(v)$ are from the range $[(1+\lambda)^{-\Delta},1]$. By~\eqref{eq:PotentialVsRealDistance}, for $\lambda<\lambda_c(T_\Delta)$ for all sufficiently large $\Delta$, we have
\begin{align}
|\Psi(\omega^{t+1}(v,p))-\Psi(\omega^{t+1}(v))|\le {9}/({\Delta-2}).\label{eq:potential-omega-parent-no-parent}
\end{align}
We assume that $|\Psi(\omega^t(v,p))-\Psi(\tilde{\omega}^t(v))|\le\epsilon_0$ for all $(v,p)\in E$. Then due to~\eqref{eq:potential-contraction},
\begin{align*}
|\Psi(\omega^{t+1}(v))-\Psi(\tilde{\omega}^{t+1}(v))|
&\le (1-\delta/6)\cdot \max_{u\in N(v)}|\Psi(\omega^t(u,v))-\Psi(\tilde{\omega}^t(u))|
\le(1-\delta/6)\epsilon_0.
\end{align*}
Combined with~\eqref{eq:potential-omega-parent-no-parent}, by triangle inequality, we have
\[
|\Psi(\omega^{t+1}(v,p))-\Psi(\tilde{\omega}^{t+1}(v))|\le(1-\delta/6)\epsilon_0+{9}/({\Delta-2}),
\]
which is at most $\epsilon_0$ as long as $\Delta\ge\Delta_0\ge \frac{54}{\delta\epsilon_0}+2$. It means that if $|\Psi(\omega^t(v))-\Psi(\omega^t(v,p))|\le\epsilon_0\le \frac{54}{\delta(\Delta_0-2)}$, then $|\Psi(\omega^{t+1}(v,p))-\Psi(\tilde{\omega}^{t+1}(v))|\le \frac{54}{\delta(\Delta_0-2)}$. Knowing the convergences of $\omega^{t}(v,p)$ to $\omega^*(v,p)$ and $\tilde{\omega}^{t}(v)$ to $\omega^*(v)$ as $t\to\infty$, this gives us that
\[
|\Psi(\omega^{*}(v,p))-\Psi(\tilde{\omega}^{*}(v))|\le \frac{54}{\delta(\Delta_0-2)}.
\]
By~\eqref{eq:PotentialVsRealDistance}, it implies $|\omega^{*}(v,p)-\tilde{\omega}^{*}(v)|\le \frac{162}{\delta(\Delta_0-2)}$. Again since  $\omega^{*}(v,p), \tilde{\omega}^{*}(v)\in [1/36,1]$ when $\lambda<\lambda_c(T_\Delta)$ for sufficiently large $\Delta$. It holds that
\[
\left|\frac{q^*(v,p)}{\tilde{q}^*(v)}-1 \right|
=
\left|\frac{\omega^*(v,p)}{\tilde{\omega}^*(v)}\cdot\frac{1+\lambda\tilde{\omega}^*(v)}{1+\lambda\omega^*(v,p)}-1 \right|\leq
 \frac{6000}{\delta(\Delta_0-2)}.
\]
By choosing a suitable $\Delta_0=O(\frac{1}{\delta\epsilon})$, we can make this error bounded by $\epsilon$.
\end{proof}

\subsection{Proof of Lemma \ref{lemma:CnvrgLoopyBP2CorrectNoParent}}\label{sec:lemma:CnvrgLoopyBP2CorrectNoParent}

\begin{proof}
It holds that
\begin{eqnarray}\label{eq:basis:CnvrgLoopyBP2CorrectNoParent}
  \left| \frac{\tilde{q}^*(v)}{\mu(\textrm{$v$  occupied}) } -1\right| &=&
    \left| \frac{q^*(v)}{{\frac{\lambda}{1+\lambda}\Exp{\R(X,v)}}}\frac{{\frac{\lambda}{1+\lambda}\Exp{\R(X,v)}}}{\mu(\textrm{$v$ occupied} ) } -1\right|,
\end{eqnarray}
where the expectation in the nominator is w.r.t. the random variable $X$ which is distributed as in $\mu$.
For showing the  lemma we need to bound appropriately the two ratios on the r.h.s. of 
\eqref{eq:basis:CnvrgLoopyBP2CorrectNoParent}.
For this we use the following two results.  The first one is that
\begin{equation}\label{eq:MuVsExpR}
\left| \frac{\frac{\lambda}{1+\lambda}\Exp{\R(X,v)}}{\mu(\textrm{$v$ is occupied} ) }-1 \right|\leq {200e^e\lambda}.
\end{equation}

\noindent
The second result is   Lemma \ref{lemma:RConGirth6}.
\begin{lemma}\label{lemma:RConGirth6}
For every  $\delta, \theta>0$,  there exists $\Delta_0=\Delta_0(\delta,\theta)$ and $C>0$
all $\lambda<(1-\delta)\lambda_c(\Delta)$, and   $G$ of maximum degree $\Delta$ and girth $\geq 6$,
the following is true:

Let $X$  be distributed as the Gibbs distribution. For  any $z\in V$,
it holds that 
\[
\Prob{\left| \R(X, z) - \omega^*(z) \right|\leq \theta }
  \geq 1- \exp(-\Delta/C),
\]
where $\omega^*$ is defined in Lemma \ref{lemma:FixPointEquations}.
\end{lemma}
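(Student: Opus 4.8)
The plan is to treat the random function $\omega:=\R(X,\cdot)\in[0,1]^V$ as an approximate fixed point of the BP operator $F$ defined in~\eqref{eq:BP-recursion}, and then to iterate the stepwise contraction of $F$ (Lemma~\ref{lemma:ApproxFixPointEquations}) inside a ball of constant radius around $z$. First I would record that all quantities in play lie in the good range $[(1+\lambda)^{-\Delta},1]$ on which $\Psi$ is bi-Lipschitz via~\eqref{eq:PotentialVsRealDistance}: $\R(X,v)$ is a product of at most $\Delta$ factors each in $[(1+\lambda)^{-1},1]$, and likewise for $F(\omega)(v)$ and $\omega^*(v)$. Next, since $\lambda=O(1/\Delta)$, a term-by-term comparison of $1-\tfrac{\lambda t}{1+\lambda}$ with $\tfrac{1}{1+\lambda t}$ (they agree up to $O(\lambda^2)$ uniformly for $t\in[0,1]$, over $\le\Delta$ factors) shows that the right-hand side of the recurrence in Lemma~\ref{lemma:RApproxReccGirth6} equals $F(\omega)(v)+O(1/\Delta)$. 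Thus Lemma~\ref{lemma:RApproxReccGirth6} (applied with a small parameter $\gamma=\gamma(\delta,\theta)$ of our choosing) says that for each fixed $v$, with probability $\ge 1-\exp(-\Delta/C_1)$ we have $|\omega(v)-F(\omega)(v)|\le\epsilon_0$, where $\epsilon_0:=\gamma+O(1/\Delta)$.

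I would then fix $\ell=\ell(\delta,\theta):=\lceil\tfrac{6}{\delta}\log(6/\theta)\rceil$ so that $(1-\delta/6)^\ell\le\theta/6$, and condition on the event $\mathcal G$ that $|\omega(v)-F(\omega)(v)|\le\epsilon_0$ holds simultaneously for every $v\in B(z,\ell)$; a union bound gives $\Prob{\overline{\mathcal G}}\le|B(z,\ell)|\exp(-\Delta/C_1)\le\Delta^{\ell+1}\exp(-\Delta/C_1)$. On $\mathcal G$, for $0\le k\le\ell-1$ and $u\in B(z,k)$, using $F(\omega^*)=\omega^*$, the bound~\eqref{eq:PotentialVsRealDistance} for the perturbation and Lemma~\ref{lemma:ApproxFixPointEquations} for the contraction,
\[
|\Psi(\omega(u))-\Psi(\omega^*(u))|\le|\Psi(\omega(u))-\Psi(F(\omega)(u))|+|\Psi(F(\omega)(u))-\Psi(F(\omega^*)(u))|\le 3\epsilon_0+(1-\delta/6)\,D_{u,1}(\omega,\omega^*).
\]
Taking the maximum over $u\in B(z,k)$ and using $N(u)\subseteq B(z,k+1)$ yields the recursion $D_{z,k}(\omega,\omega^*)\le 3\epsilon_0+(1-\delta/6)\,D_{z,k+1}(\omega,\omega^*)$. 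Unrolling from $k=0$ to $k=\ell-1$ and bounding $D_{z,\ell}(\omega,\omega^*)\le 1$ via~\eqref{eq:potential-unconditional-bound} gives
\[
D_{z,0}(\omega,\omega^*)\le 3\epsilon_0\sum_{k\ge 0}(1-\delta/6)^k+(1-\delta/6)^\ell\le\frac{18\epsilon_0}{\delta}+\frac{\theta}{6}.
\]
Choosing $\gamma$ small and $\Delta_0$ large enough that $\epsilon_0\le\tfrac{\delta\theta}{108}$ makes this at most $\theta/3$, and then~\eqref{eq:PotentialVsRealDistance} converts $|\Psi(\omega(z))-\Psi(\omega^*(z))|\le\theta/3$ into $|\R(X,z)-\omega^*(z)|\le\theta$. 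Since $\ell$ depends only on $\delta,\theta$, the factor $\Delta^{\ell+1}$ is subexponential in $\Delta$, so for $\Delta\ge\Delta_0$ the failure probability is at most $\exp(-\Delta/C)$ for a suitable $C=C(\delta,\theta)$. Note the girth requirement is exactly $\ge 6$: Lemma~\ref{lemma:ApproxFixPointEquations} is valid on any graph, and Lemma~\ref{lemma:RApproxReccGirth6} already needs girth $\ge 6$.

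The main obstacle is the bookkeeping in this last balancing act: the geometric accumulation of the per-step errors, of total size $\tfrac{18\epsilon_0}{\delta}$, must be pushed below $\theta$, while the number of contraction steps $\ell$ needed for $(1-\delta/6)^\ell$ to beat $\theta$ must not spoil the union bound over $B(z,\ell)$. This works out precisely because $\ell$ is a constant independent of $\Delta$, so $|B(z,\ell)|=\Delta^{O_{\delta,\theta}(1)}$ is absorbed by the $\exp(-\Delta/C_1)$ tail once $\Delta_0$ is large enough; the only genuine lever is the freedom to drive $\gamma$, and hence $\epsilon_0$, arbitrarily small in Lemma~\ref{lemma:RApproxReccGirth6}.
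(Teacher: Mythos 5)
Your proposal is correct and follows essentially the same route as the paper's own proof in Section~\ref{sec:lemma:RConGirth6}: both view $\R(X,\cdot)$ as an approximate fixed point of $F$, invoke Lemma~\ref{lemma:RApproxReccGirth6} (plus the $O(1/\Delta)$ product-versus-$F$ comparison) uniformly over a constant-radius ball via a union bound, and then iterate the $\Psi$-contraction of Lemma~\ref{lemma:ApproxFixPointEquations}, converting back through~\eqref{eq:PotentialVsRealDistance}. The only difference is bookkeeping: you unroll the additive-error recursion $D_{z,k}\le 3\epsilon_0+(1-\delta/6)D_{z,k+1}$ as a geometric series, whereas the paper phrases the same iteration as a conditional contraction of the quantities $\beta_i$ when $\beta_{i+1}\ge\theta/5$.
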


\noindent
The proof of Lemma \ref{lemma:RConGirth6} appears in Section \ref{sec:lemma:RConGirth6}.

Before proving \eqref{eq:MuVsExpR},  let us show how   it  
implies the lemma, together with Lemma \ref{lemma:RConGirth6}. 
For any independent set $\sigma$ and any $v$, 
it holds that   $e^{-e}\leq \omega^*(v),\R(\sigma,v)\leq 1$. Then, Lemma \ref{lemma:RConGirth6} implies that 
 \begin{equation}\label{eq:ExpRVsOmegaStar}
\left| \frac{\omega^*(v)}{\Exp{\R(X,v)}}-1\right|\leq \epsilon/20.
\end{equation}
Noting that by definition it holds that   $\tilde{q}^*(v)=\frac{\lambda\omega^* }{1+\lambda\omega^*}$,
we have  that
\begin{eqnarray}
\left| \frac{\tilde{q}^*(v)}{\frac{\lambda}{1+\lambda}\Exp{\R(X,v)}} -1\right| &=&
\left| \frac{1+\lambda}{1+\lambda\omega^*(v)} \frac{\omega^*(v)}{ \Exp{\R(X,v)}}-1\right| \nonumber \\
&\leq& \frac{10\lambda}{(1+\lambda\omega^*(v))} \frac{\omega^*(v)}{ \Exp{\R(X,v)}}+
\left|  \frac{\omega^*(v)}{ \Exp{\R(X,v)}}-1\right| \leq \epsilon/15. \label{eq:ExpRVsQStar}
\end{eqnarray}
In the last inequality we use \eqref{eq:ExpRVsOmegaStar}, the fact that $\lambda<2e/\Delta$
and $\Delta$ is sufficiently large.
The lemma follows by plugging  \eqref{eq:MuVsExpR} and \eqref{eq:ExpRVsQStar} into \eqref{eq:basis:CnvrgLoopyBP2CorrectNoParent}.
We proceed by showing \eqref{eq:MuVsExpR}.
It holds that
\begin{equation}\label{eq:ERXVsMuUnblockedGirth6}
\mu(\textrm{$v$ is occupied})=\frac{\lambda}{1+\lambda}\mu(\textrm{$v$ is unblocked})
\end{equation}

\noindent
We are going to express $\mu(\textrm{$v$ is unblocked})$ it terms of the 
quantity $\R(\cdot,\cdot)$.
For $X$ distributed as in $\mu$  it is elementary to verify that 
\begin{equation}\label{eq:ExRVsMuUnblockedGirth6}
\ExpCond{\R(X,v)}{\textrm{$v$ is unoccupied}}=\mu(\textrm{$v$ is unblocked}|\textrm{$v$ is unoccupied})
\end{equation}

\noindent
Furthermore, it holds that
\begin{eqnarray}
\Exp{\R(X,v)} &=& \mu( \textrm{$v$ occupied})\cdot
\ExpCond{\R(X,v)}{\textrm{$v$ occupied}} +
\mu( \textrm{$v$ unoccupied}) \cdot
\ExpCond{\R(X,v)}{\textrm{$w$ unoccupied}}\nonumber \\
&\leq & \mu( \textrm{$v$ occupied}) +
\ExpCond{\R(X,v)}{\textrm{$v$ unoccupied}} 
\qquad\qquad\qquad \mbox{[since $0<R(X,v) \leq 1$]}\nonumber\\
&\leq & 2\lambda +
\ExpCond{\R(X,v)}{\textrm{$v$ unoccupied}} 
\;\;\qquad\qquad \qquad\qquad\qquad \mbox{[since $\mu(\textrm{$v$ occupied})\leq 2\lambda$]}\nonumber
\end{eqnarray}
Since 
$e^{-e}\leq \R(X,v)\leq 1$,  the inequality above yields 
\[
\ExpCond{\R(X,v)}{\textrm{$v$ unoccupied}}\geq \left(1-2e^e\lambda \right) \Exp{\R(X,v)}.
\]
Also, using the fact that $\R(X,v)>0$, we get 
\[
\ExpCond{\R(X,v)}{\textrm{$v$ unoccupied}}\leq \frac{\Exp{\R(X,v)}}{\mu(\textrm{$v$ is unoccupied})}\leq (1+5\lambda)\Exp{\R(X,v)}.
\]
In the last inequality we use the fact that $\mu(\textrm{$w$ is occupied})\leq 2\lambda$.
From the above two inequalities we get that
\begin{equation}\label{eq:CondVsUnCondExpRGirth6}
\left| \ExpCond{\R(X,w)}{\textrm{$w$ unoccupied}}-\Exp{\R(X,w)} \right|\leq 10e^e\lambda.
\end{equation}
In a very similar manner as above, we also get that
\begin{equation}\label{eq:CondVsUnCondMuGirth6}
\left| \mu(\textrm{$v$ is unblocked}|\textrm{$v$ is unoccupied}) 
- \mu(\textrm{$v$ is unblocked}) \right|\leq 10e^e\lambda
\end{equation}
Combining \eqref{eq:ExRVsMuUnblockedGirth6}, \eqref{eq:CondVsUnCondExpRGirth6}, 
\eqref{eq:CondVsUnCondMuGirth6} ,  \eqref{eq:ERXVsMuUnblockedGirth6} and using the fact
that $e^{-e}\leq \mu(\textrm{$v$ is unblocked}), \Exp{\R(X,w)}$
we get the following
\begin{equation}\label{eq:MuOccVsExpR}
\mu(\textrm{$v$ is occupied})=\frac{\lambda}{1+\lambda}\Exp{\R(X,w)}\left(1+50e^{e}\lambda\right).
\end{equation}
Then \eqref{eq:MuVsExpR} follows from  \eqref{eq:MuOccVsExpR}.
\end{proof}

\subsection{Proof of Lemma \ref{lemma:RConGirth6}}\label{sec:lemma:RConGirth6}

The proof of the lemma is similar to the proof of Lemma \ref{lem:ApproxVsExactFixPoint}.

Let some fixed integer $R>0$ whose value is going to be specified later. $R$ is independent of
$\Delta$, the maximum degree of $G$. For every  integer $i\leq R$, we define
$$
\beta_i:=\max\left| \Psi (\R(X, x))-\Psi(\omega^*(x) )\right|,
$$
where $\Psi$ is defined in \eqref{eq:potential-function}. The maximum is taken over all  vertices $x\in B_i(w)$.

An elementary observation   is that $\beta_i\leq C_0=3$ for every $i\leq R$.
To see why this holds,  note  that for any  $z\in V$  and any independent sets 
$\sigma$, it  holds that $e^{-e}\leq \R(\sigma,z),\omega^*(z)\leq 1$.
Then  we get $\beta_i\leq 3$ from \eqref{eq:PotentialVsRealDistance}.

We start by using  the fact that $\beta_R\leq C_0$. Then we show that 
 with sufficiently large probability, if $\beta_{i+1}\geq \theta/5$, then 
$\beta_i \le (1- \gamma) \beta_{i+1}$ where $0<\gamma<1$. 
Then the lemma follows by taking  large $R$.

For any $i\leq R$,   there exists    $C_d>0$  
such that with probability at least $1- \exp\left ( -\Delta/C_d \right)$ 
the following is true: For  every  vertex $x\in B_i(w)$ it holds that
\begin{eqnarray}\label{eq:From:lemma:ApproxFixPointGirth6}
\left | 
\R(X, x) - \exp\left( -\frac{\lambda}{1+\lambda} \sum_{z\in N(x)}
\R(X, z)
\right)
\right |  < \frac{ \theta  \delta }{40} 
\end{eqnarray}

\noindent
Note that \eqref{eq:From:lemma:ApproxFixPointGirth6}  (that 
follows from Lemma \ref{lemma:RApproxReccGirth6}) implies the following.

Fix some $i\leq R$,   $z \in B_i(w)$. 
From the definition of the quantity $\beta_{i+1}$ we get the following:
For any $x\in B_{i+1}(w)$  consider the quantity $\tilde{\omega}(x)=\R(X, x)$.
We have that 
\begin{eqnarray}\label{eq:OmegaTildeConditionGirth6}
D_{v,i+1}(\tilde{\omega}_s,\omega^*)  \leq \beta_{i+1}.
\end{eqnarray}

\noindent
We will show that if \eqref{eq:From:lemma:ApproxFixPointGirth6} holds  for
$\R(X,z)$,  where $z\in B_i(w)$,  and $\beta_{i+1}\geq \theta/5$, then 
we have that
\[
\left|  \Psi \left( \R(X,z) \right) -\Psi \left( \omega^*(z)\right)  \right| \leq (1-\delta/24)\beta_{i+1}.
\]

\noindent
For proving the above inequality, first note that if   $\R(X, z)$  satisfies  \eqref{eq:From:lemma:ApproxFixPointGirth6},
then  \eqref{eq:PotentialVsRealDistance}  implies that

\begin{eqnarray}\label{eq:UniformityBound4PhiSGirth6}
\left| \Psi\left(  \R(X, z)  \right) - \Psi\left( \exp\left( -\frac{\lambda}{1+\lambda} \sum_{r\in N(z)}
\R(X, r)
\right)\right) \right| \; \leq \; \frac{\delta \theta}{12}.
\end{eqnarray}

\noindent
Furthermore, we have that  
\begin{eqnarray}
\lefteqn{
\left|  \Psi \left( \R(X,z) \right) -\Psi \left( \omega^*(z)\right)  \right| 
} \hspace{.2in}
\nonumber
\\
&\leq &  \frac{\delta \theta}{12} + 
 \left|  \Psi \left( \exp\left( -\frac{\lambda}{1+\lambda} \sum_{r\in N_z }
\R(X, r) 
\right)\right)  -\Psi \left( \omega^*(z)\right)  \right|
 \quad\qquad\mbox{[from (\ref{eq:UniformityBound4PhiSGirth6})]} \nonumber \\
&\leq &  \frac{\delta \theta}{12} + 
 \left|  
\Psi \left( \prod_{r\in N(z)}\left(1-
\frac{\lambda
\R(X, r) }{1+\lambda} \right) \right)
-\Psi \left( \omega^*(z)\right)  \right| + \nonumber \\
&&+ 
\left | \Psi \left( \prod_{r\in N(z)}\left(1-\frac{\lambda
\R(X, r) }{1+\lambda} \right) \right)-  
\Psi  \left( \exp\left( -\frac{\lambda}{1+\lambda} \sum_{r\in N(z)}
\R(X, r) 
\right) \right) \right|, \qquad 
 \label{eq:TowardsApproxFixPointAGirth6}
\end{eqnarray}
where the last derivation follows from the triangle inequality.

From  our assumption
about  $\lambda$   and the fact that  $\R(X, r)\in [e^{-e},1]$, for  $r\in N(z)$,  we have that 
\[
\left | \prod_{r\in N(z)}\left(1-\frac{\lambda
\R(X, r) }{1+\lambda} \right)-  
\exp\left( -\lambda \sum_{r\in N(z) }
 {\frac{\R(X, r)}{1+\lambda}} 
\right) \right| \leq \frac{10}{\Delta}. \nonumber
\]
The above inequality and (\ref{eq:PotentialVsRealDistance}) 
 imply that
\[
\left | \Psi \left( \prod_{r\in N(z)}\left(1-\frac{\lambda
\R(X,r) }{1+\lambda} \right) \right)-  
\Psi  \left( \exp\left( -\frac{\lambda}{1+\lambda} \sum_{r\in N(z)}
 {\R(X,r)} 
\right) \right) \right| \leq \frac{30}{\Delta}.   
\]

\noindent
Plugging the inequality above into (\ref{eq:TowardsApproxFixPointAGirth6}) we get that
\begin{eqnarray}
\left|  \Psi \left( \R(X,z) \right) -\Psi \left( \omega^*(z)\right )  \right|  
&\leq &
\frac{\delta \theta}{12} + \frac{30}{\Delta}+
\left|  \Psi \left( \prod_{r\in N(z) }\left(1-\frac{\lambda
\R(X,r) }{1+\lambda } \right) \right) - \Psi \left( \omega^*(z)\right) \right| \nonumber \\
&\leq &
{\delta \theta}/{12} + {60}/{\Delta}+
D_{v,i}(F(\tilde{\omega}), \omega^*),
\label{eq:ProductVsExpontialWorstCaseGirth6}
\end{eqnarray}
where $\tilde{\omega}\in[0,1]^V$ is such that $\tilde{\omega}(z)=\R(X,z)$ for $z\in V$.  The function $F$ is defined in \eqref{eq:BP-recursion}.  
Since   $\tilde{\omega}$  satisfies  \eqref{eq:OmegaTildeConditionGirth6},  Lemma \ref{lemma:ApproxFixPointEquations}  implies that
\begin{equation}
 D_{v,i}(F(\tilde{\omega}), \omega^*) \leq (1-\delta/6 )\beta_{i+1}. \label{eq:ProductWorstCaseVsFixPointGirth6}
\end{equation}

\noindent
Plugging (\ref{eq:ProductWorstCaseVsFixPointGirth6}) into (\ref{eq:ProductVsExpontialWorstCaseGirth6}) we get that
\begin{eqnarray}
\left|  \Psi \left( \R(X_s,z) \right) -\Psi \left( \omega^*(z)\right)  \right|  \leq 
{\delta \theta}/{12} + {60}/{\Delta}+ (1-\delta/6)\beta_{i+1} 
 \leq    (1-\delta/24)\beta_{i+1}, \label{eq:FixTimeAlphaIGirth6}
\end{eqnarray}
\noindent
where the last inequality holds if we have $\beta_{i+1}\geq \theta/5$.
Note that \eqref{eq:FixTimeAlphaIGirth6} holds provided that $\R(X,z)$ satisfies \eqref{eq:From:lemma:ApproxFixPointGirth6}.
The lemma follows by taking sufficiently large $R=R(\theta)$.

\section{Basic Properties of Glauber dynamics}

\subsection{Continuous versus discrete time chains}\label{sec:ContVsDisc}

For many of our results we have a simpler proof when instead of a  discrete 
time Markov chain we consider  a continuous time version of the chain.
That is, consider the Glauber dynamics where the spin of each vertex is updated according
to an independent Poisson clock with rate $1/n$. 

We use the following observation, Corollary 5.9 in \cite{MitzenUpfal}, as a generic tool to argue that typical properties of
continuous time chains are typical properties of the discrete time chains too. 

\begin{observation}\label{fact:DiscVsCont}
Let $(X_t)$ by any discrete time Markov chain on state space $\Omega$, and let $(Y_t)$ be the corresponding
continuous-time chain. Then for any property $P\subset \Omega$ and positive integer $t$, we have that
\[
\Prob{X_{t}\notin P}\leq e\sqrt{t}\Prob{Y_t\notin P}.
\]
\end{observation}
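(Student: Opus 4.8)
The plan is to realise the continuous-time chain as a random time-change of the discrete-time chain, condition on the (Poisson) number of steps performed by time $t$, keep a single favourable term in the resulting sum, and finish with an elementary Stirling estimate.

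First I would record the standard ``continuisation'' identity. In the continuous-time Glauber dynamics each of the $n$ vertices carries an independent rate-$1/n$ Poisson clock; by the superposition property of Poisson processes these merge into one rate-$1$ Poisson process $(N_s)_{s\ge 0}$, and at each ring an independent uniformly random vertex is updated by the Glauber rule. Hence $(Y_s)$ has the same law as $(X_{N_s})$, where $(X_k)$ is the discrete-time chain, $N_s\sim\mathrm{Poisson}(s)$, and $N$ is independent of $(X_k)$. Nothing here is special to the hard-core Glauber dynamics; the same holds for any discrete-time chain and its continuisation, which is why the observation is stated at that level of generality.

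Second, conditioning on $N_t$ gives, for any $P\subseteq\Omega$,
\[
\Prob{Y_t\notin P}=\sum_{k=0}^{\infty}\mathrm{e}^{-t}\frac{t^{k}}{k!}\,\Prob{X_k\notin P}\ \ge\ \mathrm{e}^{-t}\,\frac{t^{t}}{t!}\,\Prob{X_t\notin P},
\]
where I drop all but the $k=t$ summand, which is legitimate since $t$ is a positive integer and every term is nonnegative. Rearranging,
\[
\Prob{X_t\notin P}\ \le\ \frac{\mathrm{e}^{t}\,t!}{t^{t}}\,\Prob{Y_t\notin P}.
\]
Then I would apply the bound $t!\le \mathrm{e}\sqrt{t}\,(t/\mathrm{e})^{t}$, valid for every integer $t\ge1$; this follows from a one-line induction, since the sequence $a_t:=t!\,\mathrm{e}^{t}t^{-t-1/2}$ satisfies $a_{t+1}/a_t=\mathrm{e}\,(1+1/t)^{-(t+1/2)}\le 1$ (using $(1+1/t)^{t+1/2}\ge\mathrm{e}$) and $a_1=\mathrm{e}$. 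Substituting yields $\mathrm{e}^{t}t!/t^{t}\le \mathrm{e}\sqrt{t}$ and hence $\Prob{X_t\notin P}\le \mathrm{e}\sqrt{t}\,\Prob{Y_t\notin P}$.

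The only real content is the distributional identity $Y_t\stackrel{d}{=}X_{N_t}$; once that is in hand the rest is arithmetic. The one point I would be careful to spell out is the superposition/thinning step, namely that the $n$ per-vertex rate-$1/n$ clocks produce exactly a rate-$1$ stream of i.i.d.\ uniform single-site updates, so that the time-changed process is genuinely the discrete chain observed at a Poisson$(t)$ number of steps; everything else (in particular the absence of any monotonicity assumption on $P$) is then immediate.
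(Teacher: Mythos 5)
Your proof is correct: the identity $Y_t\stackrel{d}{=}X_{N_t}$ with $N_t\sim\mathrm{Poisson}(t)$, keeping only the $k=t$ term, and the bound $t!\le \mathrm{e}\sqrt{t}\,(t/\mathrm{e})^t$ together give exactly $\Prob{X_t\notin P}\le \mathrm{e}\sqrt{t}\,\Prob{Y_t\notin P}$. The paper does not prove this observation itself but cites Corollary 5.9 of Mitzenmacher--Upfal, and your Poissonization-plus-Stirling argument is precisely the standard proof of that cited result, so there is nothing to add.
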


Observation \ref{fact:DiscVsCont} would suffice for our purposes when   $\Delta=\Omega(\log n)$, but not for Glauber dynamics on graphs of e.g. constant degree. For the latter case, instead of focusing on specific times $t$ in discrete time, our goal will be to show how events which are rare at a single instant in continuous time must also be rare over a time interval of length $O(n)$ in discrete time, without taking a union bound over all the times in the time interval.

Let the set $\Omega$ contain all the independent sets of $G$. 
We say that a function $f:\Omega\to \mathbb{R}$ has ``total influence" $J$, if for every independent set $X\in \Omega$
we have 
\[
\Exp{\left| f(X')-f(X) \right|} \leq J/n,
\]
where $X'$ is the result of one Glauber dynamics update, starting from $X$.

The  next result, Lemma 13 in \cite{Hayes}, shows that, for  functions $f$  which  have Lipschitz constant 
$O(1/\Delta)$ and total influence $J=O(1)$, in order to prove high-probability bounds for the discrete-time chain that apply for all times in an interval of length $O(n)$, it suffices to be able to prove a similar bound at a single instant in continuous time.

\begin{lemma}[Hayes \cite{Hayes}]\label{lemma:ContVsDiscSmallDelta}
Suppose $f:\Omega\to \mathbb{R}$ is a function of independent sets of $G$ and $f$ has Lipschitz constant 
$\alpha<O(1/\Delta)$ and total influence $J=O(1)$. Let $X_0=Y_0$ be given and let $(X_t)_{t\geq 0}$ be 
continuous-time single site dynamics on the hard-core model of $G$ and let $(Y_i)_{i=0,1,2\ldots}$ be the
corresponding discrete-time dynamics. 

Suppose that $t_0$ is a positive integer and $S$ is a measurable set 
of real numbers, such that for all $t\geq t_0$, $\Prob{f(X_t)\in S}\geq 1-\exp(-\Omega(\Delta ))$. Then, for all
$\epsilon\in \Omega(1)$ and all integers $t_1\geq t_0$, there $t_1-t_0=O(n)$ we have that
\[
\Prob{(\forall i\in\{t_0,t_0+1,\ldots, t_1\})\; f(Y_i)\in S \pm\epsilon }\geq 1-\exp\left( -\Omega(\Delta )\right),
\]
where the hidden constant in $\Omega$ notation depends only on the hidden constant in the assumption.
\end{lemma}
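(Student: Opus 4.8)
The plan is to couple the discrete-time chain $(Y_i)$ with the continuous-time chain $(X_t)$ via the standard thinning coupling: superpose the $n$ independent rate-$1/n$ Poisson clocks into a single rate-$1$ process with event times $0<\tau_1<\tau_2<\cdots$, and set $Y_i=X_{\tau_i}$, so that $(Y_i)$ is exactly the discrete-time dynamics. We are given that $\Prob{f(X_t)\notin S}\le\exp(-\Omega(\Delta))$ at every fixed time $t\ge t_0$, and we want to deduce that $f(Y_i)\in S\pm\epsilon$ for \emph{all} discrete steps $i\in\{t_0,\dots,t_1\}$ simultaneously. The obstacle, and the reason Observation~\ref{fact:DiscVsCont} alone is not enough, is that $t_1-t_0=\Theta(n)$: a grid of times fine enough that $f$ cannot leave $S\pm\epsilon$ between consecutive grid points needs, using only the Lipschitz bound $\alpha=O(1/\Delta)$ (one step moves $f$ by at most $\alpha$, so between grid points at continuous-time distance $h$ it moves by $\approx h\alpha$), about $\Theta(n/\Delta)$ points, and $\Theta(n/\Delta)\cdot\exp(-\Omega(\Delta))$ is not small when $\Delta$ is constant. (When $\Delta=\Omega(\log n)$ a grid union bound together with Observation~\ref{fact:DiscVsCont} suffices, so we may assume $\Delta=o(n)$ henceforth.)

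The key is to exploit the total-influence hypothesis $J=O(1)$, not merely the Lipschitz bound. The central estimate I would prove is: for any fixed time $a$ and any $W\le c_0\epsilon n$ (with $c_0$ a small constant depending on $J$), with probability $1-\exp(-\Omega(\Delta))$ we have $|f(X_t)-f(X_a)|\le\epsilon/2$ for all $t$ with $|t-a|\le W$. To see this, bound $|f(X_t)-f(X_a)|$ by the total variation $\sum_j Z_j$ of $f$ along the Glauber steps in $[a-W,a+W]$, where $Z_j$ is the absolute change of $f$ at the $j$-th such step. The number of these steps is Poisson with mean $2W$, hence at most $4W$ except with probability $\exp(-\Omega(n))$; on that event $\sum_j\Expectation[Z_j\mid\mathcal{F}_{j-1}]\le 4W\cdot J/n\le\epsilon/4$ (choosing $c_0=1/(16J)$), while $Z_j\le\alpha$ by Lipschitz-ness. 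Hence the martingale $\sum_j\big(Z_j-\Expectation[Z_j\mid\mathcal{F}_{j-1}]\big)$ has increments bounded by $\alpha$ in absolute value and predictable quadratic variation at most $\alpha\sum_j\Expectation[Z_j\mid\mathcal{F}_{j-1}]\le\alpha\epsilon/4$, so Freedman's (Bernstein's) martingale inequality gives $\Prob{\sum_j Z_j>\epsilon/2}\le\exp\big(-\Omega(\epsilon/\alpha)\big)=\exp(-\Omega(\epsilon\Delta))=\exp(-\Omega(\Delta))$, using $\epsilon=\Omega(1)$. Because $\epsilon=\Omega(1)$, the admissible half-width $W=\Theta(n)$ is of the same order as the whole interval $[t_0,t_1]$, which is what makes only $O(1)$ anchor times necessary.

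To assemble the proof, partition $\{t_0,\dots,t_1\}$ into $K=O(1)$ blocks of length at most $w:=c_0\epsilon n/2$, with integer endpoints $t_0=r_0<r_1<\dots<r_K=t_1$. By hypothesis, $f(X_{r_k})\in S$ for all $k$ except with probability $K\exp(-\Omega(\Delta))=\exp(-\Omega(\Delta))$. By a standard maximal inequality for the mean-zero random walk $(\tau_i-i)_i$ with light-tailed increments, $\max_{i\le t_1}|\tau_i-i|\le r:=C_1\sqrt{n\Delta}$ except with probability $\exp(-\Omega(\Delta))$, and $r=o(n)\le w$ for $n$ large. Finally, apply the central estimate $O(1)$ times, anchored at each $r_k$ with half-width $W=c_0\epsilon n\ge w+r$. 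On the intersection of all these $O(1)$ high-probability events, for every $i\in\{t_0,\dots,t_1\}$, say with $r_{k-1}\le i\le r_k$, we have $|\tau_i-r_{k-1}|\le|\tau_i-i|+|i-r_{k-1}|\le r+w\le W$, hence $|f(Y_i)-f(X_{r_{k-1}})|=|f(X_{\tau_i})-f(X_{r_{k-1}})|\le\epsilon/2$, while $f(X_{r_{k-1}})\in S$ since $r_{k-1}\ge t_0$; therefore $f(Y_i)\in S\pm\epsilon$. A union bound over the $O(1)$ exceptional events gives the claimed bound $1-\exp(-\Omega(\Delta))$, with the hidden constant depending only on those in $\alpha=O(1/\Delta)$, $J=O(1)$, $t_1-t_0=O(n)$ and $\epsilon=\Omega(1)$. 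The crux is the martingale estimate of the second paragraph: this is exactly the step where invoking the total influence, rather than only the Lipschitz constant, turns an unaffordable union bound over $\Theta(n/\Delta)$ times into an affordable one over $O(1)$ times.
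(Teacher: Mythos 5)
First, note that the paper does not actually prove this lemma: it is imported verbatim as Lemma~13 of Hayes \cite{Hayes}, so there is no internal proof to compare against. Judged on its own merits, your reconstruction has the right mechanism, and indeed the mechanism the surrounding text advertises: use the total influence $J=O(1)$ (not just the Lipschitz constant) to get a martingale/Freedman bound showing that $f$ drifts by less than $\epsilon/2$ over any window of $\Theta(\epsilon n)$ updates with failure probability $\exp(-\Omega(\epsilon/\alpha))=\exp(-\Omega(\Delta))$, so that only $O(1)$ anchor times are needed instead of a union bound over $\Theta(n/\Delta)$ times. That central estimate, the choice $c_0\asymp 1/J$, and the $O(1)$-block assembly are all sound.

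The gap is in the step that ties discrete step indices to continuous times: you claim $\max_{i\le t_1}|\tau_i-i|\le C_1\sqrt{n\Delta}$ except with probability $\exp(-\Omega(\Delta))$. For the centered random walk $\tau_i-i$ this requires $t_1=O(n)$ (the exponent is of order $n\Delta/t_1$), but the lemma only assumes $t_1-t_0=O(n)$ with $t_0$ an arbitrary positive integer; the fluctuation of $\tau_i$ around $i$ for $i\ge t_0$ is of order $\sqrt{t_0}$, which exceeds your admissible slack $W-w=\Theta(\epsilon n)$ as soon as $t_0\gg n^2$, and is not $o(n)$ already for $t_0\gg n\Delta$. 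This matters for how the lemma is used in this paper: in the discrete-time part of Theorem~\ref{thrm:Uniformity} it must cover intervals whose left endpoint is as large as $n\exp(\Delta/C)$, so one cannot silently replace ``$t_1-t_0=O(n)$'' by ``$t_1=O(n)$''. (Your opening reduction for $\Delta=\Omega(\log n)$ via Observation~\ref{fact:DiscVsCont} has the same issue, since the $e\sqrt{t}$ factor there is $e\sqrt{t_0}$, not $e\sqrt{n}$.) To repair the argument you must either add and justify a bound on $t_0$ itself, or restructure the anchoring so that it never matches step $i$ with time $i$: for instance, integrate the identity $\Prob{f(X_t)\notin S}=\sum_k\Prob{N_t=k}\Prob{f(Y_k)\notin S}$ over a continuous window $[t_0,t_0+L]$ with $L\gtrsim\sqrt{t_1}$ to show that every block of $c_0\epsilon n$ consecutive \emph{discrete} indices in $[t_0,t_1]$ contains an index $k$ with $\Prob{f(Y_k)\notin S}=O(p L/(\epsilon n))$, and then run your drift estimate directly on the discrete chain from these anchor indices; this widens the admissible range to roughly $t_0\le \epsilon^2 n^2 e^{c\Delta}$ but still does not give the statement for truly arbitrary $t_0$, which is exactly why the missing quantification on $t_0$ is a genuine issue rather than a cosmetic one.
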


\subsection{Basic burn-in properties}

\label{sec:heavy}

Consider a graph $G=(V,E)$.  Given  some integer $r\geq 0$ and $v\in V$, let  $B_r(v)$ be the the ball of radius $r$, centered 
at $v$. Also, let $S_r(v)$ be the sphere of radius $r$, centered at $v$. Finally, let $N(v)$ denote the set
of vertices which are adjacent to $v$.

\begin{definition}
Let $G=(V,E)$ be a graph of maximum degree $\Delta$ and let $\sigma$ be an independent
set of $G$. For some $\rho>0$,  we say that $\sigma$  is $\rho$-heavy for the vertex $v\in V$ if 
$|B_2(v)\cap \sigma|\geq \rho\Delta$ or $|B_1(v)\cap \sigma|\geq \rho\Delta/\log \Delta$.
\end{definition} 

\begin{definition}
Let $G=(V,E)$ be a graph of maximum degree $\Delta$. Let  $\sigma,\tau$ be  independent sets
 of $G$. Consider integer $r>0$ and  $v\in V$.
If  there is a vertex $w\in B_r(v)$ such $w$ is  $\rho$-heavy, then  $\sigma$ is called $\rho$-suspect for radius 
$r$ at $v$. Otherwise,  we say that $\sigma$ is $\rho$-above suspicion for radius $r$ at $v$.

Similarly, for $\sigma,\tau$ such that $\sigma(v)\neq \tau(v)$, we say that $v$ is  a $\rho$-suspect disagreement
for radius $r$ if there exists a vertex $w\in B_r(v)$ such that either $\sigma$ or $\tau$ is $\rho$-heavy at $w$. Otherwise, 
we say that $v$ is a $\rho$-above suspicion disagreement for radius $r$. 
\end{definition}

\noindent
For the purposes of path coupling for every  pair of independent sets $X,Y$ we consider 
shortest paths between $X$ and $Y$ along neighboring independent sets. That is,
$X=Z_0\sim Z_1\sim \cdots \sim Z_{\ell}=Y$.  This sequence 
$Z_1, \ldots, Z_{\ell}$  we call  interpolated independent sets for $X$ and $Y$. A key
aspect of the above definitions is that the ``niceness" is inherited by interpolated
independent sets.

\begin{observation}
If $X,Y$ are independent sets, neither of which is $\rho$-heavy at vertex $v$, then
no interpolated independent set is $2\rho$-heavy at $v$.  Likewise, if $v$ is $\rho$-above
suspicion disagreement for radius $r$, then in every interpolated independent sets $v$
is $2\rho$-above suspicion for radius $r$.
\end{observation}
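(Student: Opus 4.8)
The plan is to reduce both statements to one structural fact: every independent set lying on a shortest path from $X$ to $Y$ is sandwiched between $X\cap Y$ and $X\cup Y$. Granting this, both claims become elementary counting.

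First I would show that any shortest path $X=Z_0\sim Z_1\sim\cdots\sim Z_\ell=Y$ through neighboring independent sets has $\ell=|X\triangle Y|$ and, moreover, flips each vertex of $X\triangle Y$ exactly once and no other vertex. The inequality $\ell\le|X\triangle Y|$ follows from an explicit path: delete the vertices of $X\setminus Y$ from $X$ one at a time (each intermediate set is a subset of $X$, hence independent), arriving at $X\cap Y$, and then insert the vertices of $Y\setminus X$ one at a time (each intermediate set is a subset of $Y$, hence independent); this path has length $|X\setminus Y|+|Y\setminus X|=|X\triangle Y|$. For the reverse inequality, note that along any such path each vertex that is flipped an odd number of times ends up changed and each vertex flipped an even number of times ends up unchanged, so a path realizing the net change $X\triangle Y$ needs at least $|X\triangle Y|$ single-vertex flips, with equality forcing each vertex of $X\triangle Y$ to be flipped exactly once and every other vertex to be untouched. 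Consequently, for a vertex $u\in X\triangle Y$ the value $Z_i(u)$ lies in $\{X(u),Y(u)\}$ for all $i$, while for $u\notin X\triangle Y$ we have $Z_i(u)=X(u)=Y(u)$; hence $X\cap Y\subseteq Z_i\subseteq X\cup Y$ for every $i$. In particular, for any vertex subset $S\subseteq V$, $|Z_i\cap S|\le|(X\cup Y)\cap S|\le|X\cap S|+|Y\cap S|$.

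Applying this bound with $S=B_2(v)$ and with $S=B_1(v)$ gives the first statement: if neither $X$ nor $Y$ is $\rho$-heavy at $v$, then $|Z_i\cap B_2(v)|<2\rho\Delta$ and $|Z_i\cap B_1(v)|<2\rho\Delta/\log\Delta$, which is exactly the failure of the two defining conditions for $Z_i$ to be $2\rho$-heavy at $v$. For the second statement, I would unpack the hypothesis that $v$ is a $\rho$-above suspicion disagreement for radius $r$: this says precisely that for every $w\in B_r(v)$, neither $X$ nor $Y$ is $\rho$-heavy at $w$. Applying the first statement at each such $w$ shows that no interpolated set $Z_i$ is $2\rho$-heavy at any $w\in B_r(v)$, i.e., $v$ is a $2\rho$-above suspicion disagreement for radius $r$ in every $Z_i$. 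The only mildly non-routine step is the identification of shortest paths with monotone delete-then-insert paths in the second paragraph; once the sandwiching $Z_i\subseteq X\cup Y$ is in hand, the rest is just adding the two heaviness counts.
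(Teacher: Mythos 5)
Your proof is correct: the paper states this as an unproven observation, and your argument -- that along a shortest path of single-vertex flips each vertex of $X\triangle Y$ flips exactly once and nothing else moves, so every interpolated set $Z_i$ satisfies $X\cap Y\subseteq Z_i\subseteq X\cup Y$ and hence $|Z_i\cap S|\le |X\cap S|+|Y\cap S|$ for $S=B_1(v),B_2(v)$ -- is exactly the intended elementary justification. Applying this at every $w\in B_r(v)$ for the above-suspicion statement is likewise the right (and routine) reduction, so nothing is missing.
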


The following lemma states that   $(X_t)$ requires  $O(n\log\Delta)$ to burn-in, regardless of
$X_0$.

\begin{lemma}\label{lemma:BurnIn}
For $\delta>0$ let $\Delta\geq \Delta_0(\delta)$ and $C_b=C_b(\delta)$.
Consider a graph $G=(V,E)$ of maximum degree $\Delta$. Also, let $\lambda\leq (1-\delta)\lambda_c({\Delta})$. 

Let $(X_t)$  be the continuous (or discrete) time Glauber dynamics on the hard-core model with fugacity $\lambda$ and underlying graph
$G$.
Consider $v\in V$ and   let   ${\cal C}_t$ be the event that  $X_t$, is 50-above suspicion  for radius 
$r=\Delta^{9/10}$ for $v$ at time $t$.
Then,  for ${\cal I}=[10n\log \Delta, n\exp(\Delta/C_b)]$ it holds that
\[
 \Prob{\cap_{t\in {\cal I}}{\cal C}_t}\geq 1-\exp\left (-\Delta/ C_b  \right).
\]
\end{lemma}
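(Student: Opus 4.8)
The plan is to fix a single vertex $w$ in the ball $B_r(v)$, bound by $\exp(-\Omega(\Delta))$ the probability that $w$ is ever $50$-heavy during the interval ${\cal I}$, and then take a union bound over $w$. This union bound is affordable because $|B_r(v)|\le\Delta^{r+1}=\exp\big((r+1)\log\Delta\big)=\exp(o(\Delta))$, using $r=\Delta^{9/10}$ and $\log\Delta=o(\Delta^{1/10})$; so a per-$w$ failure probability of $\exp(-\Delta/C)$ with a fixed constant $C$ gives overall failure $\exp(o(\Delta))\cdot\exp(-\Delta/C)\le\exp(-\Delta/C_b)$ once $C_b=C_b(\delta)$ is chosen large enough. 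I would run the argument for the discrete chain; the continuous case is analogous, and when $\Delta=\Omega(\log n)$ it also follows from Observation~\ref{fact:DiscVsCont}.

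The first step is a single-instant estimate: for every fixed $t\ge t_0:=10n\log\Delta$, $\Prob{w\text{ is }50\text{-heavy in }X_t}\le\exp(-\Delta/C)$. To see this, write $B_2(w)\cap X_t=S_t\sqcup F_t$, where $S_t$ collects the vertices of $B_2(w)\cap X_0$ that have not yet been picked for an update during $[0,t]$ and $F_t$ the remaining occupied vertices of $B_2(w)$. A fixed vertex is unpicked during $[0,t_0]$ with probability $(1-1/n)^{t_0}\le\Delta^{-10}$, and these events are negatively associated across vertices~\cite{NegativeDep}; since $\Exp{|S_{t_0}|}\le|B_2(w)|\Delta^{-10}\le\Delta^{-8}$, a Chernoff bound gives $\Prob{|S_{t_0}|\ge 10\Delta}\le\exp(-\Omega(\Delta\log\Delta))$, and $|S_t|$ is nonincreasing in $t$. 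For the fresh part, the key point is that when a vertex is updated it is set occupied only if a fresh $\mathrm{Bernoulli}(\lambda/(1+\lambda))$ coin attached to that update comes up heads; distinct vertices carry distinct coins, so conditioned on the list of picked vertices $|F_t|$ is stochastically dominated by $\mathrm{Bin}(|B_2(w)|,\lambda/(1+\lambda))$, whose mean is at most $(\Delta^2+1)\lambda_c(\Delta)=O(\Delta)$; hence $\Prob{|F_t|\ge 40\Delta}\le\exp(-\Omega(\Delta))$ uniformly over $t\ge t_0$. The same computation for $B_1(w)$ (now $|B_1(w)|\le\Delta+1$, so the relevant mean is $O(1)$ while the threshold $50\Delta/\log\Delta\to\infty$) yields $\Prob{|B_1(w)\cap X_t|\ge 50\Delta/\log\Delta}\le\exp(-\Omega(\Delta))$, and combining these estimates gives the claimed single-instant bound, which moreover holds uniformly for all $t\ge t_0$ since the stale part only decreases and the fresh part is $t$-independent in law.

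The hard part is upgrading this to a bound valid \emph{simultaneously} for all $t\in{\cal I}$ without a union bound over the exponentially many times in ${\cal I}$ --- exactly the difficulty highlighted before Lemma~\ref{lemma:ContVsDiscSmallDelta}. I would handle it directly: the process $g_w(X_t):=|B_2(w)\cap X_t|$ changes by at most $1$ per step, so if $g_w(X_{t_0})\le 50\Delta$ then $g_w$ can reach $50\Delta$ during ${\cal I}$ only through a step raising it from $50\Delta-1$ to $50\Delta$; conditioned on the current state such a step occurs with probability at most $(|B_2(w)|/n)\cdot(\lambda/(1+\lambda))=O(\Delta/n)$, so the expected number of such steps in ${\cal I}$ is at most $|{\cal I}|\cdot O(\Delta/n)\cdot\sup_{t\ge t_0}\Prob{g_w(X_t)=50\Delta-1}\le\exp(\Delta/C_b)\cdot O(\Delta)\cdot\exp(-\Omega(\Delta))$, which is $\le\tfrac12\exp(-\Delta/C_b)$ once $C_b$ is large relative to the constant in the single-instant bound (here I use that that bound is uniform in $t\ge t_0$). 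Together with $\Prob{g_w(X_{t_0})\ge 50\Delta}\le\exp(-\Delta/C)$ and the identical argument applied to $|B_1(w)\cap X_t|$, this shows $w$ is $50$-heavy at some time of ${\cal I}$ with probability $\le\exp(-\Delta/C)$; the union bound over $w\in B_r(v)$ from the first paragraph then finishes the proof. (Alternatively one can prove the single-instant bound in continuous time and transfer it to all discrete times of ${\cal I}$ by tiling ${\cal I}$ into length-$O(n)$ windows and applying Lemma~\ref{lemma:ContVsDiscSmallDelta} to each, but the direct argument above is cleaner here.) The only genuinely delicate point is controlling the supremum over the long interval; everything else is a routine Chernoff/negative-association computation.
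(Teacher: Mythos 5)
Your proof is correct, and for the harder half of the lemma it takes a genuinely different route from the paper. The fixed-time estimate is essentially the paper's: your stale/fresh split of $B_2(w)\cap X_t$ is the paper's $N_0/N_1$ decomposition (unpicked initial occupants dominated by a binomial with success probability $e^{-t/n}\le\Delta^{-10}$, updated vertices dominated by $\mathrm{Bin}(|B_2(w)|,\lambda/(1+\lambda))$ via the coin at each vertex's last update), including the same appeal to negative dependence \cite{NegativeDep} for the discrete-time update schedule, and the same union bound over the $\le\Delta^{r+1}=\exp(o(\Delta))$ vertices of $B_r(v)$. Where you diverge is in extending the bound to all of ${\cal I}$: the paper tiles ${\cal I}$ into $\mathrm{poly}(\Delta)\exp(\Delta/C_b)$ windows of length $\gamma^2 n/\Delta$, shows by Chernoff that few vertices of $B_2(w)$ (resp.\ $N(w)$) are updated within a window so the counts cannot move much, and union-bounds over windows; you instead exploit that $|B_2(w)\cap X_t|$ and $|B_1(w)\cap X_t|$ move by at most one per step, so a violation requires an up-crossing of the threshold, and you bound the expected number of up-crossings over the whole interval by $|{\cal I}|\cdot O(\Delta/n)\cdot\sup_{t\ge t_0}\Prob{\text{count}=\text{threshold}-1}$, which is exponentially small because the single-instant bound is uniform in $t\ge t_0$; both arguments then absorb the $\exp(\Delta/C_b)$ interval-length factor by choosing $C_b$ large relative to the fixed-time constant. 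Your up-crossing argument is arguably cleaner here and avoids the window bookkeeping, but it is tied to statistics that change by $\pm1$ per update; the paper's covering argument is the more generic tool, which is why the same template is reused for quantities like $\R(X_t,\cdot)$ and $\W_{X_t}(\cdot)$ in Theorem \ref{thrm:Uniformity} and Lemma \ref{lem:Approx-Glauber-BP}, where a unit-increment structure is not available. Two cosmetic points: the hitting argument should start from $g_w(X_{t_0})<50\Delta$ (strict), which you in effect cover by adding $\Prob{g_w(X_{t_0})\ge 50\Delta}$ to the accounting, and ``the fresh part is $t$-independent in law'' should read that the \emph{dominating} binomial law is $t$-independent; neither affects correctness.
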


\begin{proof} 
For now, consider the continuous time version of $(X_t)$. 
Recall that  for $X_t$,  the vertex $u$ is not 
$\rho$-heavy if both of the following  conditions hold
\begin{enumerate}
\item $|X_t\cap B_2(u)|\leq \rho\Delta $
\item $|X_t\cap N(u)|\leq \rho\Delta/\log \Delta.$
\end{enumerate}

\noindent
First we consider a fixed time $t\in {\cal I}$. Let $c=t/n$. Note that $c=c(\Delta)\geq 10\log \Delta$.
We are going to show that   there exists $C'>0$ such that
\begin{equation}\label{eq:FixTAlphaBoundWorstInit}
\Prob{{\cal C}_t}\geq 1-\exp\left(-\Delta/C' \right).
\end{equation}

\noindent
Fix some vertex $u\in B_r(v)$.  Let $N_0$ be the set  of vertices in $B_2(u)\cap X_0$ 
which are not updated during the time period $(0,t]$. That is, for $z \in N_0$ it holds that
$X_0(z)=X_t(z)$.  Each vertex $z\in B_2(u )\cap X_0$ belongs to $N_0$ with
probability $\exp\left(-t/n \right)=e^{-c}$, independently of the other vertices.  
Since    $|B_2(u)\cap X_0|\leq \Delta^{2}$,  it is elementary that the  distribution of $|N_0|$ 
is dominated by ${\cal B}(\Delta^2, e^{-c})$,  i.e. the binomial with parameters $\Delta^2$ and $e^{-c}$.

Using Chernoff's bounds we get the following:
 for $c>10\log \Delta$ it holds that
\begin{equation}\label{eq:N_0Tail4LightnessWorstInit}
\Prob{N_0>\Delta/10}\leq\exp\left( -\Delta/10\right).
\end{equation}

\noindent
Additionally, let $N_1\subseteq B_2(u)$ contain every vertex $u$ which is updated
at least once during the period $(0,t]$. Each vertex  $z\in N_1$, which is last updated 
 prior to $t$ at time $s\leq t$, becomes occupied during the update at time $s$ with 
probability at most $\frac{\lambda}{1+\lambda}$,  regardless of $X_{s}(N(z))$.
Then, it is direct that $|X_t\cap N_1|$ is dominated by ${\cal B}(N_1,\frac{\lambda}{1+\lambda})$.

Noting that $|N_1|\leq |B_2(u)|\leq \Delta^2$ and  $\frac{\lambda}{1+\lambda}<2e/\Delta$, for
$\Delta>\Delta_0$  Chernoff's bound imply that
\begin{equation}\label{eq:N_1Tail4LightnessWorstInit}
\Prob{ |N_1\cap X_t|\geq 15e\Delta }\leq \exp\left( -15e\Delta \right).
\end{equation}
From \eqref{eq:N_0Tail4LightnessWorstInit}, \eqref{eq:N_1Tail4LightnessWorstInit} and a simple union 
bound, we get that
\begin{equation}\label{eq:B2LightnessTailWorstInit}
\Prob{|X_t\cap B_2(u)|> 42\Delta } \leq \exp\left(- \Delta/20 \right).
\end{equation}
Using exactly the same arguments, we also get that
\begin{equation}\label{eq:NLightnessTailWorstInit}
\Prob{|X_t\cap N(u)|> 42\Delta/\log \Delta } \leq \exp\left( - \Delta /20 \right).
\end{equation}
\noindent
Note that $X_0$ could be such that $N(u)\cap X_0=\alpha \Delta$, for some fixed $\alpha>0$.
So as to get $|X_t\cap N(u)|\leq 42\Delta/\log\Delta$ with large probability, we have
to ensure that with large probability all the vertices in $N(v)$ are updated at least once.
For this reason the burn-in  requires at least $10n\log\Delta$ steps.

From \eqref{eq:B2LightnessTailWorstInit} and \eqref{eq:NLightnessTailWorstInit} we get the following: 
For any $\rho>50$ it holds that
\begin{equation}\label{eq:X^*(v)NotHeavyWorstInit}
\Prob{X_t(u)\textrm{ is not $\rho$-heavy}}\leq \exp\left (- \Delta /25 \right).
\end{equation}
Then  \eqref{eq:FixTAlphaBoundWorstInit}  follows by taking a union bound over all
the, at most $\Delta^r$ vertices in $B_r(v)$. In particular, for  $r=\Delta^{9/10}$ 
and sufficiently large $\Delta$, there exists $C>0$ such that 
\[
\Prob{{\cal C}_t}\leq \Delta^r \exp\left (- \Delta /25\right) \leq 
 \exp\left(- \Delta /30\right).
\]
The above  implies that \eqref{eq:FixTAlphaBoundWorstInit} is indeed true
but only for  a specific time step $t\in {\cal I}$. Now we use a covering argument 
to deduce the above  for the whole interval $\cal I$.

For sufficiently small $\gamma>0$, independent of $\Delta$, 
 consider a partition of the time interval ${\cal I}$ into subintervals each of length $\frac{\gamma^2}{\Delta}n$,
(where the last part can be shorter). 
We let $T(j)$ be the $j$-th part in the partition.

Each $z \in  B_2(w)$ is updated at least once during the time period $T(j)$ with probability  less than 
$2\frac{\gamma^2}{\Delta}$, independently of the other vertices. Note that $|B_2(w)|\leq \Delta^2$.
Clearly, the number of vertices in $B_2(v)$ which are updated during $T(j)$ is dominated by
${\cal B}(\Delta^2, 2\gamma^2/\Delta)$.
Chernoff bounds imply that  with probability at least $1-\exp\left( -20\Delta \gamma^2 \right)$, the number of
vertices in $B_2(w)$ which are updated  during the interval $T(j)$ is at most $20\gamma^2\Delta$. 
Furthermore, changing any $20\Delta\gamma^2$ variables in $B_2(w)$ can only make the independent set 
heavier by at most $20\Delta\gamma^2$.

Similarly, we get that with probability at least $1-\exp\left( -\gamma\Delta  \right)$, the number of vertices in $N(v)$
which are updated during the interval $T(j)$ is at most $\gamma\Delta/\log \Delta$. The change of
at most $\gamma\Delta/\log \Delta$ neighbors of $v$ does not change the weight of its neighborhood by
more than $\gamma\Delta/\log \Delta$.

From the above arguments we get that the following: We can choose sufficiently large $C_b>0$ such that
for  $j\in \{1,2, \ldots, \lceil \Delta/(\gamma^2 ) \exp\left( \Delta/ C_b\right) \rceil\}$
it holds that
\[
\Prob{\cap_{t\in T(j)} {\cal C}_t }\geq 1-\exp\left( -100\Delta/C_b \right).
\]
The result for continuous time  follows by taking a union bound over all the $\left\lceil \Delta/(\gamma^2 )
\exp\left ( \Delta/C_b \right) \right\rceil$ many subintervals of ${\cal I}$. 

For the discrete time case the arguments are very similar.  The only extra ingredient we need
is that, now,  the updates of the vertices are negatively dependent and  use 
\cite{NegativeDep}. The lemma follows. 
\end{proof}

The following lemma states that if $(X_t)$ start from a not so heavy state it only requires
$O(n)$ steps to burn in.

\begin{lemma}\label{lemma:LightnessOfX}
For $\delta>0$,   let $\Delta\geq \Delta_0(\delta)$ and $C_b=C_b(\delta)$. 
Consider a graph $G=(V,E)$ of maximum degree $\Delta$. Also, let $\lambda\leq (1-\delta)\lambda_c({\Delta})$. 

Let $(X_t)$ be the continuous (or discrete) time Glauber dynamics on the hard-core model with fugacity 
$\lambda$ and underlying graph $G$.
Also, let  ${\cal C}_t$ be the event  that  $X_t$ is,  are 50-above suspicion  
for radius $R\leq \Delta^{9/10}$ for $v$ at time $t$.
Assume that $X_0$  is 400-above suspicion  for radius $R$ for $v$.  
Then,    for ${\cal I}=[C_bn, n \exp (\Delta/C_b )]$
we have that
\[
\Prob{\cap_{t\in {\cal I}}{\cal C}_t}\geq 1-\exp\left[-\Delta/ C_b  \right].
\]
\end{lemma}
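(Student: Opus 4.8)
The plan is to replay the proof of Lemma~\ref{lemma:BurnIn} almost line for line; the only essential change is that the hypothesis ``$X_0$ is $400$-above suspicion for radius $R$ at $v$'' replaces the trivial bound $|B_2(w)\cap X_0|\le\Delta^2$ used in \eqref{eq:N_0Tail4LightnessWorstInit} by the much stronger $|B_2(w)\cap X_0|<400\Delta$ and $|B_1(w)\cap X_0|<400\Delta/\log\Delta$ for every $w\in B_R(v)$. This is exactly what lets the burn-in length drop from $10n\log\Delta$ to $C_bn$ with $C_b$ a constant: a vertex of $B_2(w)\cap X_0$ survives un-updated during $(0,t]$ with probability $e^{-t/n}=e^{-c}$, and whereas driving $\Delta^2e^{-c}$ below $o(\Delta)$ needs $c\gtrsim 2\log\Delta$, driving $400\Delta\,e^{-c}$ below $\Delta/200$ needs only $c\ge C_b$ for a suitable constant $C_b$.

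\emph{Single time $t\in{\cal I}$.} Fix $w\in B_R(v)$ and set $c=t/n\ge C_b$. Split $X_t\cap B_2(w)$ into the set $N_0$ of vertices of $B_2(w)\cap X_0$ never updated in $(0,t]$ and the set $N_1$ of vertices of $B_2(w)$ updated at least once. Since $|B_2(w)\cap X_0|<400\Delta$, $|N_0|$ is dominated by ${\cal B}(400\Delta,e^{-C_b})$, whose mean is at most $\Delta/200$ once $C_b$ is large, so $\Prob{|N_0|>\Delta/100}\le\exp(-\Omega(\Delta))$; the $N_1$ term is bounded exactly as in \eqref{eq:N_1Tail4LightnessWorstInit} using $|N_1|\le\Delta^2$ and $\lambda/(1+\lambda)<2e/\Delta$. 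Hence $\Prob{|X_t\cap B_2(w)|\ge 50\Delta}\le\exp(-\Omega(\Delta))$. The $B_1(w)$ estimate is analogous, using $|B_1(w)\cap X_0|<400\Delta/\log\Delta$: the $N_0$ term is dominated by ${\cal B}(400\Delta/\log\Delta,e^{-C_b})$ with mean $\Theta(\Delta/\log\Delta)$, giving only $\exp(-\Omega(\Delta/\log\Delta))$, while the $N_1$ term is a rare-event Chernoff bound (threshold $\gg$ mean) and is $\exp(-\Omega(\Delta))$. A union bound over the at most $\Delta^{R+1}$ vertices of $B_R(v)$ is absorbed because $R\le\Delta^{9/10}$ forces $(R+1)\log\Delta=o(\Delta/\log\Delta)$, so $\Prob{{\cal C}_t}\ge 1-\exp(-\Omega(\Delta/\log\Delta))$ for each fixed $t\in{\cal I}$.

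\emph{All of ${\cal I}$.} As in Lemma~\ref{lemma:BurnIn}, partition ${\cal I}$ into $O\!\left(\tfrac{\Delta}{\gamma^2}\exp(\Delta/C_b)\right)$ subintervals $T(j)$ of length $\tfrac{\gamma^2}{\Delta}n$, with $\gamma$ a small constant. The one extra observation is that for each $w$ the frozen set $B_1(w)\cap X_0\setminus\{\text{updated by time }t\}$ (and likewise for $B_2$) is monotonically decreasing in $t$, so the weak $\exp(-\Omega(\Delta/\log\Delta))$ bound from Step~1 need only be paid once, at $t=C_bn$; at every subinterval endpoint $t_j$ we then write $\max_{t\in T(j)}|B_1(w)\cap X_t|\le |B_1(w)\cap X_{t_j}|+|\{u\in B_1(w):u\text{ updated in }T(j)\}|$, where the first term is controlled by Step~1 at $t_j$ (its frozen part dominated by the single estimate at $C_bn$, its newly-occupied part by a rare-event Chernoff bound, both $\exp(-\Omega(\Delta))$ apart from the one global term), and the update count is dominated by ${\cal B}(\Delta,2\gamma^2/\Delta)$, hence $\le 10\Delta/\log\Delta$ except with probability $\exp(-\Omega(\Delta))$; the $B_2$ version is identical with $\Delta$ replaced by $\Delta^2$. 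A union bound over all $j$ and all $w\in B_R(v)$, together with the single frozen-set estimate, yields the lemma once $C_b$ is large enough that $\exp(\Delta/C_b)$ is beaten by the $\exp(-\Omega(\gamma\Delta))$ and $\exp(-\Omega(\Delta))$ factors. The discrete-time case is identical after replacing independence of the within-window update indicators by negative association and invoking~\cite{NegativeDep}.

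\emph{Main obstacle.} The only delicate point is the neighborhood ($B_1$) term: the $400$-above-suspicion hypothesis guarantees merely $O(\Delta/\log\Delta)$ occupied vertices in each $B_1(w)\cap X_0$, and with a constant burn-in a constant fraction of these survive un-updated, so the best concentration available for this term is $\exp(-\Omega(\Delta/\log\Delta))$ rather than $\exp(-\Omega(\Delta))$. This is precisely why the monotonicity-in-$t$ observation is needed, so that this weaker estimate is incurred only once rather than once per subinterval; it also means the stated failure probability $\exp(-\Delta/C_b)$ should be read with a $\log\Delta$ in the exponent, which is harmless for every downstream application of the lemma.
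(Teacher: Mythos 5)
Your proposal is essentially the proof the paper intends: the authors omit it precisely because it is a replay of Lemma~\ref{lemma:BurnIn} with the worst-case bound $|B_2(w)\cap X_0|\le\Delta^2$ replaced by the above-suspicion bounds $400\Delta$ and $400\Delta/\log\Delta$, and your single-time estimates and covering argument follow that template faithfully. Where you genuinely go beyond a verbatim replay is the $B_1$ term, and your diagnosis there is correct: with only a constant burn-in, the frozen part of $N(w)\cap X_0$ is a binomial on just $O(\Delta/\log\Delta)$ trials, so nothing better than $\exp(-\Omega(\Delta/\log\Delta))$ is available — and this is intrinsic, not an artifact of your argument, since if $|N(v)\cap X_0|\approx 400\Delta/\log\Delta$ then the probability that $50\Delta/\log\Delta$ of these vertices are simply never updated before time $C_bn$ is at least $e^{-50C_b\Delta/\log\Delta}$, which exceeds the stated $e^{-\Delta/C_b}$ once $\log\Delta\gg C_b^2$; so the lemma's failure probability must indeed be read with the $\log\Delta$ loss you describe. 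Your monotonicity observation is then the needed extra ingredient: the weak frozen-set estimate must be paid only once at $t=C_bn$, because paying $\exp(-\Omega(\Delta/\log\Delta))$ per subinterval against the $\exp(\Delta/C_b)$-many subintervals of ${\cal I}$ would not close, whereas all the per-subinterval events (newly occupied vertices at window endpoints, update counts within windows) retain $\exp(-\Omega(\Delta))$ or $\exp(-\Omega(\gamma\Delta))$ tails exactly as in Lemma~\ref{lemma:BurnIn}. Your claim that the degraded exponent is harmless downstream also checks out: the lemma's failure probability only enters additively or against union bounds of size at most $\Delta^{O(\sqrt{\Delta})}$ or $\exp(\Delta^{9/10}\log\Delta)$, and the paper itself silently works with an $\exp(-\Delta/(C\log\Delta))$-type bound at exactly this point, in the estimate of $\Prob{{\cal B}_2}$ leading to \eqref{eq:Bound4ProbB2} in the proof of Lemma~\ref{lemma:DistanceFromLight}.
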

\noindent
The proof of Lemma \ref{lemma:LightnessOfX} is almost identical to the proof of
Lemma \ref{lemma:BurnIn}, for this reason we omit it.

\subsection{ $G$ versus $G^*$ and comparison}\label{sec:GVsG*}

\noindent
Consider $G$ with girth $7$. For such graph and some  vertex $w$ in $G$, the radius 3 ball around $w$ is a tree.
We let $G^*_w$ be graph that is derived from $G$ by orienting  towards $w$ every edge  that is  
within distance 2 from $w$ \footnote{
An edge $\{w_1,w_2\}\in E$ is at distance $\ell$ from $w$ if the minimum distance between
 $w$ and any  of $w_1,w_2$ is $\ell$.}. 
For a vertex $x\in G^*_w$, we let $N^*(x)\subseteq N(x)$ contain  every $z$ in the neighborhood of
$x$ such that either  the edge between  $x,z$ is unoriented, or the orientation is towards $x$.

We let the Glauber dynamics $(X^*_t)$  on the hard-core model with underlying graph $G^*_w$ 
and fugacity  $\lambda$,  be a Markov chain whose   transition $X_t\rightarrow X_{t+1}$ is 
defined by the following:
\begin{enumerate}
\item Choose $u$ uniformly at random from $V$.
\item If $N^*(u)\cap X^*_t = \emptyset$, then let 
\[ 
X^*_{t+1} = \begin{cases}  
X^*_t\cup\{u\} & \mbox{ with probability } \lambda/(1+\lambda) \\
X^*_t\setminus\{u\} & \mbox{ with probability } 1/(1+\lambda) 
\end{cases}
\]
\item If $N^*(w)\cap X_t \neq \emptyset$, then let $X^*_{t+1} = X^*_t$.
\end{enumerate}
The state space of $(X^*_t)$ that is implied by the above is a superset of the independent
sets of $G$, since there are pairs of vertices  which are adjacent in $G$ while they can both
be occupied  in $X^*_t$.

The motivation for using $G^*_w$ and $(X^*_t)$ is better illustrated by considering
Lemma \ref{lemma:RApproxReccGirth6}. 
In Lemma \ref{lemma:RApproxReccGirth6} we establish a recursive relation for $\R()$ 
for $G$ of girth $\geq 6$,  in the setting of the Gibbs distribution. 
An important  ingredient in the proof  there 
is that   for every vertex $x$  conditioned on the configuration at $x$ and the 
vertices at distance  $\geq 3$ from $x$,
the children of $x$ are  mutually independent of each other
under  the Gibbs distribution.

For establishing the  uniformity property for Glauber dynamics we need to
establish a similar ``conditional independence" relation but in the dynamic setting of  Markov chains.  
To obtain this, we will need that  $G$ has girth at least 7.
Clearly, the conditional independence of Gibbs distribution no longer holds for the Glauber dynamics.
To this end we employ the following: Instead of considering $G$  and the standard
Glauber dynamics  $(X_t)$, we consider $G^*_w$ and the corresponding dynamics $(X^*_t)$.

Using $G^*_w$ and $(X^*_t)$  we   get (in the dynamics setting) 
an effect which is similar to the conditional independence.
During the evolution of $(X^*_t)$ the neighbors of $w$ can only exchange information through
paths of $G^*_w$ which travel outside the ball of radius 3  around $w$, i.e. $B_3(w)$. This holds 
due to the girth assumption for $G^*_w$
and the definition of $(X^*_t)$. In turn this implies that conditional on the configuration of $X^*_t$
outside $B_3(w)$, the (grand)children of $w$ are  mutually independent. 

The above trick allows to get a recursive relation for $R(X^*_t,w)$ similar to that
for the Gibbs distribution. So
as to argue that a somehow similar relation holds for the standard dynamics $(X_t)$, we use the
following result which states that if $(X^*_t)$ and $(X_t)$ start from the same configuration,
then after $O(n)$ the number of disagreements between the two chains is not too large.

\begin{lemma}\label{lem:ComparisonXVsXStar}
For  $ \gamma>0$,  $C_1>0$,  there exists $\Delta_0$,  $C_2 > 0$ such that the following is true:
For $w\in V$ consider $G^*_w$ of maximum degree $\Delta>\Delta_0$ and girth at least 7.
 Also,  let $(X_t)$ and $(X^*_t)$ be the continuous time Glauber dynamics  on the hard-core model with fugacity 
 $\lambda <(1-\delta)\lambda_c(\Delta)$, underlying graphs
$G$ and $G^*_u$, respectively.

Assume that $(X^*_t)$ and $(X_t)$ are maximally coupled.
Then, if  $X_0=X^*_0$ for  $X_0$ which is 400-above suspicion for radius $R\leq \Delta^{9/10}$,
we have that
\[
\Pr[\forall s\le C_1n, \; \forall u \in V\; |(X_s\oplus X^*_s) \cap B_2(u)|\leq \gamma
\Delta ]\geq 1- \exp\left( -\Delta/C_2\right).
\]
\end{lemma}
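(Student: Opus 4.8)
The plan is to track the set of disagreements $D_s = X_s \oplus X^*_s$ under the maximal coupling and show it stays small in every radius-$2$ ball, for all $s \le C_1 n$. Initially $D_0 = \emptyset$. Since the two chains are maximally coupled and $G$, $G^*_w$ differ only in that a few edges near $w$ are oriented, a disagreement can only be \emph{created} at an update of a vertex $u$ that lies in the radius-$2$ ball $B_2(w)$: for such $u$ the neighborhood $N(u)$ in $G$ differs from $N^*(u)$ in $G^*_w$, so the two chains may disagree about whether $u$ is blocked. At every other vertex, the update rule is identical in the two graphs, so a disagreement at $u \notin B_2(w)$ can only arise if $X_s$ and $X^*_s$ already disagree somewhere in $N(u)$ — i.e. disagreements propagate outward along edges exactly as in a standard disagreement-percolation / path-coupling Hamming argument, and do not spontaneously appear.

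The key steps, in order: (1) Formalize the above dichotomy: let $\tau$ be the first time some $u \in B_2(w)$ is updated; before $\tau$ the chains agree on all of $B_2(w)$, hence (since outside $B_2(w)$ the graphs agree) $D_s = \emptyset$ for $s < \tau$. (2) Bound the number of such ``source'' updates in $[0, C_1 n]$: each vertex of $B_2(w)$ (there are $\le \Delta^2$ of them) is updated a Poisson$(C_1)$ number of times, so by a Chernoff bound the total number of source updates is $O(\Delta^2)$ with probability $1 - \exp(-\Omega(\Delta^2))$; actually we only need that at each source update at most one new disagreement is injected, at a vertex inside $B_2(w)$. (3) Couple the growth of $D_s$ by a subcritical branching/percolation process: since $\lambda = O(1/\Delta)$, the expected number of new disagreements that an existing disagreement at $z$ spawns per unit time is $O(\lambda \cdot \deg(z)) = O(1)$ but with the right constant it is bounded by a small constant (this is precisely the content of the standard $O(n\log n)$ mixing regime $\lambda < 2/(\Delta-2)$, but here we only need crude control, not contraction). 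Standard results on the size of components in such subcritical spread-out percolation on a bounded-degree graph (cf. the burn-in arguments of Lemma \ref{lemma:BurnIn}, and the ``total influence'' / Lipschitz machinery of \cite{Hayes}) give that, started from $O(\Delta^2)$ sources in $B_2(w)$ and run for time $C_1 n$, the disagreement set restricted to any $B_2(u)$ has size $\le \gamma\Delta$ with probability $1 - \exp(-\Omega(\Delta))$. (4) Take a union bound over $u \in V$ (a factor of $n$, absorbed into the $\exp(-\Delta/C_2)$ since $\Delta$ is at least a large constant and $n \le \exp(\Delta/C_2)$ is not even needed here because we can afford $\poly(n)$ inside the exponential bound, or handle it via the Lipschitz-function covering argument of Lemma \ref{lemma:ContVsDiscSmallDelta} rather than a naive union bound over times).

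The main obstacle I expect is step (3): controlling the spatial spread of $D_s$ uniformly over all radius-$2$ balls simultaneously, for all times $s \le C_1 n$, without paying an unaffordable union bound over times. The clean way around this is to not union-bound over time at all but to bound the \emph{maximum over $[0,C_1n]$} of $|D_s \cap B_2(u)|$ directly, exploiting that $s \mapsto |D_s \cap B_2(u)|$ changes by at most $O(1)$ per step and has small total influence, so that the single-time large-deviation bound can be promoted to an all-times bound via Lemma \ref{lemma:ContVsDiscSmallDelta} (the continuous-versus-discrete comparison of \cite{Hayes}) — this is exactly the tool the paper has set up for such situations. A secondary technical point is that the ``source injection'' near $w$ never injects more than a bounded-in-$\Delta$ amount of disagreement over the whole window, which is where the hypothesis that $X_0$ is $400$-above suspicion for radius $R$ enters: it guarantees $B_2(w)$ is not heavy, so at each source update the number of occupied vertices involved — and hence the number of newly created disagreements — is $O(\Delta)$ rather than pathologically large, and the above-suspicion property persists throughout $[0, C_1 n]$ by (a localized version of) Lemma \ref{lemma:LightnessOfX}.
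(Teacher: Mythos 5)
Your high-level picture (disagreements can only be \emph{created} at vertices whose neighborhood differs between $G$ and $G^*_w$, i.e.\ near $w$, and otherwise only propagate along edges; the above-suspicion hypothesis limits how often these sources actually fire) matches the paper's strategy. But the quantitative heart of your step (3) does not work as stated. You claim the spread of disagreements is a subcritical branching/percolation process because the spawn rate per disagreement is ``$O(\lambda\cdot\deg)$ with a small constant.'' In the regime of this lemma $\lambda$ can be as large as $(1-\delta)\lambda_c(\Delta)\sim e/\Delta$, so over a window of length $C_1 n$ each disagreement spawns on the order of $eC_1>1$ new disagreements in expectation; the process is supercritical (indeed the paper's own Lemma \ref{lemma:ArbitraryPair} only gives $\Exp{|X_T\oplus Y_T|}\le \exp(3C/\epsilon)$, i.e.\ exponential growth in the time constant), and the invoked ``standard subcritical component bounds'' do not apply. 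The paper instead bounds the \emph{cumulative} number of disagreement-creation events by attributing each new disagreement to either an edge meeting an existing disagreement or an edge of $G\oplus G^*_w$, and controls $|D\cap S_2(u)|$ by a careful edge count that is where the real work lies: girth $\ge 7$ guarantees each disagreement outside $N(u)$ sends at most one edge into $S_2(u)$; the disagreements inside $N(u)$ are bounded separately (event ${\cal B}_3$); and the $G\oplus G^*_w$ edges incident to $S_2(u)$ are limited using out-degree $\le 1$ together with the above-suspicion hypothesis, which bounds the number of neighbors of each $u\in B_{100}(w)$ that are \emph{ever} occupied during $[0,C_1n]$ (event ${\cal B}_4$) and hence the number of oriented edges that can actually inject a disagreement. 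Your proposal leaves all of this to a nonexistent black box, and also does not use the girth hypothesis anywhere, although it is essential to the edge count.

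A second, independent gap is your union bound over all $u\in V$. The per-ball failure probability is only $\exp(-\Delta/C)$, and $\Delta$ may be a constant while $n\to\infty$, so a factor of $n$ (or $\poly(n)$) cannot be absorbed; Lemma \ref{lemma:ContVsDiscSmallDelta} helps with union bounds over \emph{times}, not over vertices. The paper avoids this by first proving a confinement event (its ${\cal B}_1$): via a disagreement-percolation path argument, with probability $1-\exp(-\Omega(\Delta))$ all disagreements stay inside $B_r(w)$ with $r=\lfloor\gamma^5\Delta/\log\Delta\rfloor$, so only the $\Delta^{O(r)}=\exp(O(\gamma^5\Delta))$ relevant balls need to be considered, which the exponential bound does absorb. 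You would need to add such a confinement step (and correct the source set to the tails of oriented edges, which lie in $B_3(w)$, not $B_2(w)$) before your outline could be completed.
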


\noindent
Before proving Lemma \ref{lem:ComparisonXVsXStar} we need to introduce certain notions.

Let us call $Z$ a ``generalized Poisson random variable with jumps $\alpha$ and instantaneous
rate $r(t)$" if $Z$ is the result of a continuous-time adapted process, which begins at $0$ and in each 
subsequent infinitesimal time interval, samples an increment $\partial Z$ from some distribution over 
$[0,\alpha]$, having mean $\leq r(t)dt$.
 $Z$, the sum of the increments over all times $0<t<1$, is a random variable, as is the 
 maximum observed rate, $r^*=\max_{t\in [0,1]}r(t)$.
 
 \begin{remark}
 In the special case where $\alpha\geq 1$ and the distribution is supported in $\{0,1\}$
 with constant rate $\mu \cdot dt$, $Z$ is a Poisson random variable with mean $\mu$.
 \end{remark}

\noindent
We are going to use the following result, Lemma 12 in \cite{Hayes}.

\begin{lemma}[Hayes]\label{lem:GenPoisson} Suppose $Z$ is a generalized Poisson random variable with
maximum jumps $\alpha$ and maximum observed rate $r^*$. Then, for every
$\mu>0$, $C>1$ it holds that
\[
\Prob{Z\geq C\mu \textrm{ and } r^* \leq \mu} \leq \exp\left[-\frac{\mu}{\alpha}(C\ln (C) -C+1)\right]<\left(\frac{e}{C} \right)^{\mu C \alpha}.
\]
\end{lemma}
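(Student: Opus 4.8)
The plan is to run a standard exponential-supermartingale (Chernoff/Bernstein) argument adapted to the continuous-time, bounded-jump setting. First I would fix a parameter $s>0$, write $Z_t$ for the accumulated value of the process up to time $t$ and $r(\cdot)$ for its instantaneous rate, and introduce
\[
M_t \;=\; \exp\!\left(s Z_t \;-\; \frac{\mathrm{e}^{s\alpha}-1}{\alpha}\int_0^t r(u)\,\mathrm{d}u\right).
\]
The key analytic input is the convexity bound $\mathrm{e}^{sx}\le 1+\frac{\mathrm{e}^{s\alpha}-1}{\alpha}\,x$ for all $x\in[0,\alpha]$ (the exponential lies below its chord on $[0,\alpha]$, with equality at the endpoints). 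Applied to an infinitesimal increment $\partial Z\in[0,\alpha]$ with $\Exp{\partial Z\mid\mathcal{F}_t}\le r(t)\,\mathrm{d}t$, it gives $\Exp{\mathrm{e}^{s\,\partial Z}\mid\mathcal{F}_t}\le 1+\tfrac{\mathrm{e}^{s\alpha}-1}{\alpha}r(t)\,\mathrm{d}t\le \exp\!\bigl(\tfrac{\mathrm{e}^{s\alpha}-1}{\alpha}r(t)\,\mathrm{d}t\bigr)$, so that $\Exp{M_{t+\mathrm{d}t}\mid\mathcal{F}_t}\le M_t$. Hence $(M_t)$ is a supermartingale with $M_0=1$ and $\Exp{M_1}\le 1$.

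Second, I would localize to the target event. On $\{r^*\le\mu\}$ we have $\int_0^1 r(u)\,\mathrm{d}u\le r^*\le\mu$, while on $\{Z\ge C\mu\}$ we have $\mathrm{e}^{sZ_1}\ge\mathrm{e}^{sC\mu}$; combining, on the intersection of the two events, $M_1\ge\exp\!\bigl(sC\mu-\tfrac{\mathrm{e}^{s\alpha}-1}{\alpha}\mu\bigr)$. Markov's inequality then yields
\[
\Prob{Z\ge C\mu\ \textrm{ and }\ r^*\le\mu}\ \le\ \Exp{M_1}\,\exp\!\left(-sC\mu+\tfrac{\mathrm{e}^{s\alpha}-1}{\alpha}\mu\right)\ \le\ \exp\!\left(\tfrac{\mu}{\alpha}\bigl(\mathrm{e}^{s\alpha}-1-sC\alpha\bigr)\right).
\]
Writing $t=s\alpha$, the exponent is $\tfrac{\mu}{\alpha}(\mathrm{e}^t-1-Ct)$, which (since $C>1$ forces the minimizer to be positive) is minimized at $t=\ln C$, where it equals $\tfrac{\mu}{\alpha}(C-1-C\ln C)=-\tfrac{\mu}{\alpha}(C\ln C-C+1)$. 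This is the first claimed bound; the second follows from the crude estimate $C\ln C-C+1>C(\ln C-1)$, since $\exp\!\bigl(-\tfrac{\mu C}{\alpha}(\ln C-1)\bigr)=(\mathrm{e}/C)^{\mu C/\alpha}$.

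The only real subtlety — and the one point I would treat carefully — is the passage from the informal ``infinitesimal increment'' description of the generalized Poisson variable to a genuine supermartingale: one reads the process as a limit of its discrete-time skeletons so that the displayed step inequality becomes an honest $\Exp{M_{t_{k+1}}\mid\mathcal{F}_{t_k}}\le M_{t_k}$, and one notes that since $r(t)$ is itself adapted the compensator $\int_0^t r(u)\,\mathrm{d}u$ must be handled as a predictable process rather than a constant (this is exactly why the event $\{r^*\le\mu\}$ is imposed, to control it after the fact). Everything else is the textbook Chernoff optimization.
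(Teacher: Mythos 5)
Your argument is correct, and there is nothing in the paper to compare it against: Lemma~\ref{lem:GenPoisson} is imported verbatim from Hayes (Lemma~12 of \cite{Hayes}) and is stated here without proof. Your exponential-supermartingale Chernoff argument --- bounding $e^{sx}$ by its chord on $[0,\alpha]$, compensating $sZ_t$ by $\frac{e^{s\alpha}-1}{\alpha}\int_0^t r(u)\,du$, using the event $\{r^*\le\mu\}$ to turn the random (predictable) compensator into the deterministic bound $\mu$ before applying Markov, and optimizing at $s\alpha=\ln C$ --- is the standard route and is essentially the proof in Hayes's paper, with the discrete-skeleton caveat you flag being the only point where the informal ``infinitesimal increment'' definition needs care. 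One small discrepancy worth noting: the clean form you derive is $(e/C)^{\mu C/\alpha}$, whereas the lemma as printed reads $(e/C)^{\mu C\alpha}$; the former is what actually follows from the main bound via $C\ln C-C+1>C(\ln C-1)$, the printed exponent appears to be a typo, and the two coincide in every invocation in this paper since the lemma is only ever applied with $\alpha=1$.
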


\begin{proof}[Proof of Lemma \ref{lem:ComparisonXVsXStar}]

In this proof assume that $\gamma C_1$ is sufficiently small constant.
Also,  let $D=\cup_{t\leq C_1n} (X_t\oplus X^*_t)$, i.e. $D$ denotes the set of all vertices 
which are  disagreeing at least once during the time interval from $0$ to $C_1n$.
Given some vertex $u\in V$ let $A_u=\cup_{t\leq C_1n}X_t\cap N(u)$ and 
$A^*_u=\cap_{t\leq C_1n}X^*_t\cap N(u)$. That is $A_u$ contains every 
$z\in N(u)$ for which   there exists  at least one $s<C_1n$  such that $z$ is 
occupied in $X_s$. Similarly for $A^*_u$.
Finally, let  the integer $r=\left\lfloor \gamma^5 \frac{\Delta}{\log \Delta}\right \rfloor$. 

Let ${\cal A}$ denote the event that 
$\exists  s\le C_1n, \; \exists u \in V\; |(X_s\oplus X^*_s) \cap S_2(u)|\geq \gamma\Delta/2$.
Consider  the  events ${\cal B}_1, {\cal B}_2$, ${\cal B}_3$, ${\cal B}_4$ and ${\cal B}_5$ be defined as follows:
${\cal B}_1$ denotes the event that $D\nsubseteq B_r(w)$. 
${\cal B}_2$ denotes the event that $|D|\geq \gamma^3 \Delta^{2}$.
${\cal B}_3$ denotes the event that   the total number of disagreements
that appear in $N(u)$, for every $u\in V$, is at most $\gamma^3 \Delta$.
Finally, ${\cal B}_4$ denotes the event that there exists  $u\in B_{100}(w)$ such that 
either $|A(u)|\geq \gamma^3\Delta$ or  $|A^*(u)|\geq \gamma^3\Delta$.

Then, the lemma follows by noting the following:
\begin{equation}\label{eq:GVsGstartBasicEq}
\Prob{\exists s\le C_1n, \; \exists u \in V\; |(X_s\oplus X^*_s) \cap B_2(u)|\geq \gamma
\Delta} \leq \Prob{\cal A}+\Prob{{\cal B}_3}.
\end{equation}
The lemma follows by bounding appropriately the probability terms on the r.h.s. of 
\eqref{eq:GVsGstartBasicEq}.

First consider $\Prob{\cal A}$. Let ${\cal B}={\cal B}_1\cup {\cal B}_2\cup {\cal B}_3\cup {\cal B}_4$.
We bound $\Prob{\cal A}$ by using ${\cal B}$ as follows: 
\begin{eqnarray}
\Prob{{\cal A}} &=& \Prob{ {\cal B},  {\cal A} }+
\Prob{ \bar{{\cal B}}, {\cal A}} \nonumber\\
&\leq &  \Prob{ {\cal B}}+\Prob{ \bar{{\cal B}}, {\cal A}}\nonumber \\
&\leq &  \sum^4_{i=1}\Prob{ {\cal B}_i} +\Prob{ \bar{{\cal B}}, {\cal A}},\label{eq:Basis4ProbA}
\end{eqnarray}
where the last inequality follows by applying a simple union bound.

Consider some vertex $u\in V$ and let
$Z$ be the total number of disagreements that ever occur in $S_2(u)$
up ot the first time that either ${\cal B}$ occurs or  up to time $C_1n$,
whichever happens first.
If $u\notin B_{r}(w)$,  then $Z$ is always zero since we stop the clock when $D\nsubseteq B_{r-1}(w)$.
So our focus is on the case where $u\in B_{r-1}(w)$. For such $u$ the random variable $Z$ follows
a  generalized Poisson distribution, with jumps of size 1 and maximum observed rate
at most $30\gamma^3\Delta dt/n$,  over at most $C_1n$ time units.
To see this consider the following.

 Given that ${\cal B}$ does not occur,  disagreements in $S_2(u)$ may be caused
due to  the following categories of disagreeing edges. Each disagreement in $N(u)$ has
 at most $\Delta-1$ disagreeing edges  in $S_2(u)$. Since the number of disagreements that
appear in $N(u)$ during the time period up to $C_1n$ is  at most $\gamma^3\Delta$,
there are at most $\gamma^3\Delta^2$ disagreeing edges incident to $S_2(u).$
On the whole there are at most $\gamma^3\Delta^{2}$ disagreements from vertices different
than those in $N(u)$.
 Each one of them has at most one neighbor in $S_2(u)$, since the girth is at least 7. That
is there are additional $\gamma^3\Delta^{2}$ many disagreeing edges. Finally, 
disagreements on $S_2(u)$ may be caused by  edges which belong to  $G\oplus G^*_w$.
There are at most $\Delta^3$ many such edges. 
Each one of these edges generates disagreements only on the vertex on its tail.
Since the out-degree in $G^*_w$ is at most 1, there are $\Delta^2$ disagreeing edges
from $G\oplus G^*_w$ which are incident to $S_2(v)$.
Additionally,  each one of these edges   should point to an occupied vertex so as to be disagreeing.
Since ${\cal B}_4$ does not occur,  there at most $2\gamma^3\Delta^2$ edges in $G\oplus G^*_w$ 
which point to an occupied vertex and have the tail in $S_2$.

From the above observations, we have that  there are at most  $10\gamma^3\Delta^2$ disagreeing edges incident to $S_2$. 
For the new disagreement to occur in $S_2$ due to a given such edge, a specific vertex  
must chosen and should become occupied, which occurs with rate at most $e\cdot dt/(n\Delta)$.

Using  Lemma \ref{lem:GenPoisson}, applied with $\mu=30C_1\gamma^3\Delta$, $\alpha=1$ and
$C=\gamma\Delta/\mu$, we have that
\[
\Prob{Z\geq \gamma \Delta} \leq \left (30e \gamma^2 C_1\right)^{\gamma\Delta}.
\]
Taking a union bound over the, at most, $\Delta^{r}$ vertices in $B_r(v)$, we get that
\begin{eqnarray}
\Prob{\bar{{\cal B}}, {\cal A} }
&\leq & \Delta^{r}  \left (30 e \gamma^2 C_1\right)^{\gamma\Delta} 
\;=\; \exp\left( -\Delta/C_3 \right),\label{eq:ProbBcA}
\end{eqnarray}
where $C_3=C_3(\gamma)>0$ is a sufficiently large number. In the last derivation we
used the fact that $r \leq \frac{\gamma^5\Delta}{\log \Delta}$.

We proceed by bounding the probability of the events ${\cal B}_1$, ${\cal B}_2$, ${\cal B}_3$ and ${\cal B}_4$.
The approach is very similar to the proof of Theorem 27 in \cite{Hayes}. We repeat
it for the sake of completeness.

Recall that ${\cal B}_1$ denotes the event that $D\nsubseteq B_r(w)$. 
The bound for $\Prob{{\cal B}_1}$ uses standard arguments of disagreement percolation. 
First we observe that every disagreement outside $B_{r}(w)$ must arise via some path of 
disagreement which starts within $B_2(w)$. That is we need at least one path of disagreement 
of length $r-4$. We fix a particular path of length $r-4$ with $B_r(w)$. Let us call it ${\cal P}$.
 We are going to bound the probability that disagreements percolate along ${\cal P}$ within $C_1n$ 
 time units. Let us call this probability $\rho$.

The number of steps along this path that a disagreement actually percolates is a generalized
Poisson random variable with jumps 1 and maximum overall rate at most $C_1e/\Delta$.
This follows by noting that the maximum instantaneous rate is at most $e \cdot dt/(n\Delta)$
integrated over $C_1n$ time units.  We use Lemma \ref{lem:GenPoisson},
to  bound the probability for the disagreement  to percolate along ${\cal P}$, i.e. $\rho$.
Setting $\mu=eC_1/\Delta$, $\alpha=1$ and $C=(r-4)/\mu$ in Lemma \ref{lem:GenPoisson} yields
the following bound for $\rho$
\[
\rho\leq \left( \frac{e^2C_1}{\Delta(r-4)}\right)^{r-4}.
\]
The above bound holds for any path of length $r-4$ in $B_r(w)$. 
Taking a union bound over the at most $\Delta^3$ starting point in $B_2(w)$ and the at most
$\Delta^{r-4}$ paths of length $r-4$ from a given starting point we get that
\begin{equation}\label{eq:ProbB1}
\Prob{{\cal B}_1}\leq \Delta^3\left( \frac{e^2C_1}{r-4} \right)^{r-4}\leq \exp\left( -\Delta/C_4 \right),
\end{equation}
where $C_4=C_4(\gamma)>0$ is a sufficiently large number. 
 
Recall that ${\cal B}_2$ denotes the event that $|D|\geq \gamma^3 \Delta^{2}$. 
For $\Pr[{\cal B}_2]$ we consider the waiting time $\tau_i$ for the $i$'th disagreement,
counting from when the $(i-1)$'st disagreement is formed. The event ${\cal B}_2$ is equivalent
to $\sum^{(\gamma^3\Delta^{2})}_{i=1}\tau_i\leq C_1n$.
 
Each new disagreement can be attributed to either an edge joining it to an existing disagreement,
or to one of the edges in $G\oplus G^*_w$. It follows easily that the total number of such edges is at most 
$|G\oplus G^*_w|+|(i-1)\Delta|=\Delta^3+(i-1)\Delta$. Furthermore, for the new disagreement to occur
due to a given such edge, a specific vertex  must chosen, which occurs with rate at most $e\cdot dt/(n\Delta)$.

The above observations suggest that the waiting time $\tau_i$ is stochastically dominated by an exponential
distribution  with mean $n/[e(\Delta^2+i-1)]$,  even conditioning on an arbitrary previous histry
$\tau_1,\tau_2,\ldots, \tau_{i-1}$. Therefore, $\sum_{i}\tau_i$ is stochastically dominated by the sum
of independent exponential distributions with mean $n/[e(\Delta^2+i-1)]$.

Applying Corollary 26, from \cite{Hayes} to $\tau_1+\cdots+\tau_{(\gamma^3\Delta^{2})}$, with
\[
\mu=\sum^{(\gamma^3\Delta^{2})}_{i=1}\frac{n}{e(\Delta^2+i-1)}\geq \int^{(\gamma^3\Delta^{2})}_{0} \frac{n}{e(\Delta^2+x)}dx=\frac{n}{e}\log(1+\gamma^3)
\]
and
\[
V=\sum^{(\gamma^3\Delta^{2})}_{i=1}\frac{n^2}{e^2(\Delta^2+i-1)^2}\leq \int^{\infty}_0\frac{n^2}{e^2(\Delta^2+x-1)^2}dx=\frac{n^2}{e^2(\Delta^2-1)}.
\]
All the above yield 
\begin{equation}\label{eq:ProbB2}
\Pr[{\cal B}_2]\leq \exp\left( -(\mu-C_1n)^2/(2V) \right)\leq \exp\left( -\Delta^2/C_5\right),
\end{equation}
where $C_5=C_5(\gamma)>0$ is sufficiently large number.

Let  $Y$ be the total number of disagreements that ever occur in $N(u)$ up to the first time that
either $D\nsubseteq B_{r-1}(w)$ or $|D|>\gamma^3\Delta^{2}$ occur or time $C_1n$ whichever happens
first. The variable $Y$ follows a generalized Poisson distribution with jumps of size 1. It is direct to
check that the maximum observed rate is at most $(\gamma^3\Delta^{2}+2\Delta)e\cdot dt/(\Delta n)\leq
10\gamma^3\Delta dt/n$, integrated over at most $C_1n$ time units. This is because the clock stops
when $|D|\geq \gamma^3 \Delta^{2}$ and since $G$ has girth at least 7 
it is only vertex $u$ that is adjacent to more than one element  of $N(u)$. Hence there are at most 
$\gamma^3\Delta^{2}+\Delta-1$ edges joining joining a disagreement with some vertex in $N(u)$ before the
clock stops.
Furthermore, disagreements on $N(u)$ may also be caused by incident edges which belong to  $G\oplus G^*_w$.
Each vertex in $v\in N(u)$ is incident to at most one edge which belongs to $G\oplus G^*_w$ and 
could cause disagreement in $v$. That is, $N(u)$ has at most  at most $\Delta$ such edges.

Applying Lemma \ref{lem:GenPoisson}, once more, for $Y$ with $\mu=10C_1\gamma^3\Delta$, $\alpha=1$
and $C=\gamma^{3/2}\Delta/\mu$ we get that
\[
\Prob{Y\geq \gamma^2\Delta}\leq \left( \frac{10eC_1 \gamma^3  \Delta }{\gamma^{3/2}\Delta}  \right)^{\gamma^{3/2}\Delta}
\leq \left( 10eC_1 \gamma^{3/2}  \right)^{\gamma^{3/2}\Delta} .
\]
Taking a union bound over the at most $\Delta^{r}$ vertices in $B_r(w)$ gives an upper bound for
the probability the event ${\cal B}_3$ happens and at the same time neither ${\cal B}_1$ nor ${\cal B}_2$
occur. That is
\begin{equation}
\Prob{\bar{\cal B}_1 \textrm{ and } \bar{\cal B}_2\textrm{ and } {\cal B}_3} \leq 
 \Delta^r  \left( 10eC_1 \gamma^{3/2}  \right)^{\gamma^{3/2}\Delta} \label{eq:almostProbB3}
%
\end{equation}
Letting ${\cal C}={\cal B}_1\cup{\cal B}_2$, we have that 
\begin{eqnarray}
\Prob{{\cal B}_3} &=& \Prob{ {\cal C},  {\cal B}_3}+
\Prob{ \bar{{\cal C}}, {\cal B}_3} \nonumber\\
&\leq &  \Prob{ {\cal C}}+\Prob{ \bar{{\cal C}}, {\cal B}_3}\nonumber \\
&\leq &  \Prob{ {\cal B}_1}+\Prob{ {\cal B}_2}+\Prob{ \bar{\cal B}_1 \textrm{ and } \bar{\cal B}_2\textrm{ and } {\cal B}_3 } 
\qquad\mbox{[union bound for $\Prob{\cal C}$ ]}\nonumber \\
&\leq &  \exp\left( -\Delta/C_6 \right),\label{eq:ProbB3}
\end{eqnarray}
where $C_6=C_6(\gamma)>0$. In the last inequality we used \eqref{eq:almostProbB3}, \eqref{eq:ProbB2} and \eqref{eq:ProbB1}.

As far as $\Prob{{\cal B}_4}$ is regarded, first recall that
${\cal B}_4$ denotes the event that there exists  $z\in B_{100}(w)$ such that 
either $|A(u)|\geq \gamma^3\Delta$ or  $|A^*(u)|\geq \gamma^3\Delta$.
Fix some vertex $z\in B_{100}(w)$. W.l.o.g. we consider the chain $X_t$.
There are two cases for $z$. The first one is that $z$ is occupied in $X_0$. 
The second one is $z$ is not occupied in $X_0$. Then  probability
that the vertex $z$ is updated  becomes occupied at least once 
up to time $C_1n$ is at most $2C_1e/\Delta$, regardless of the
rest of the vertices.

Fix some vertex $u\in B_{100}(w)$.  Let $J_u$ be the number of  vertices $z\in N(u)$ 
which are {\em unoccupied} in $X_0$ but they get into $A_u$. 
$J_u$ is dominated by the binomial distribution  with parameters $\Delta$ and 
$2C_1e/\Delta$, i.e.  ${\cal B}(\Delta,2C_1e/\Delta)$.
Using Chernoff's bounds we get that
\[
\Prob{J_u \geq \gamma^3\Delta/10}\leq \exp\left( -\gamma^3\Delta/10\right).
\]
Let $L_u$ be the number of vertices in $z\in N(u)$ which are occupied in
$X_0$. Since we have that $X_0$ is $400$-above suspicious for radius $R\gg 100$  
around $w$ and $u\in {\cal B}_{100}(w)$, it holds that $L_u\leq 400\Delta/\log \Delta$. 
Since $|A_u|=J_u+L_u$ we get that
$
\Prob{|A_u| \geq \gamma ^3\Delta }\leq \exp\left( -\gamma^3\Delta/10\right).
$
Taking a union bound over the at most $\Delta^{100}$  vertices in $ B_{100}(w)$ we
get that
\[
\Prob{\exists u\in B_{100}(w)\; s.t.\; |A_u| \geq \gamma^3\Delta }\leq \Delta^{100}\exp\left( -\gamma^3\Delta/10\right)
\leq \exp\left( -\gamma^3\Delta/20\right),
\]
where the last inequality holds for sufficiently large $\Delta$.
Working in the same way we get that
\[
\Prob{\exists u\in B_{100}(w)\; s.t.\; |A^*_u| \geq \gamma^3\Delta } 
\leq \exp\left( -\gamma^3\Delta/20\right),
\]
Combining the two inequalities above, 
there exists $C_7=C_7(\gamma)>0$ such that
\begin{equation}\label{eq:ProbB_4}
\Prob{{\cal B}_4}\leq \exp\left( -\Delta/C_7 \right).
\end{equation}
Plugging \eqref{eq:ProbB_4}, \eqref{eq:ProbB3}, \eqref{eq:ProbB2}, \eqref{eq:ProbB1}
and \eqref{eq:ProbBcA} into \eqref{eq:Basis4ProbA}, we get that
\begin{equation}\label{eq:A-Bound}
\Prob{\cal A}\leq \exp\left( -\Delta/C_8\right),
\end{equation}
for appropriate $C_8>0$. The lemma follows by plugging  \eqref{eq:A-Bound} and \eqref{eq:ProbB3} 
into \eqref{eq:GVsGstartBasicEq}.
\end{proof}

\section{Proof of Local Uniformity - Proof of Theorems \ref{thrm:ConcentrNoOfBlockedStatic}, \ref{thrm:UniformityWithBurnIn}}\label{sec:UniformityProofs}

In this section we prove the uniformity results (Theorems \ref{thrm:ConcentrNoOfBlockedStatic} and \ref{thrm:UniformityWithBurnIn})
that are presented in Section \ref{sec:LocalUniformity-sketch}.

\subsection{Proof of Theorem \ref{thrm:UniformityWithBurnIn}}\label{sec:thrm:UniformityWithBurnIn}

\noindent
In light of Lemma \ref{lemma:BurnIn}, Theorem \ref{thrm:UniformityWithBurnIn} follows as a corollary
from the following result which considers initial state for $(X_t)$ which is not heavy around $v$.

\begin{theorem}\label{thrm:Uniformity}
For  all $\delta, \epsilon>0$,  let 
$\Delta_0=\Delta_0(\delta,\epsilon), C=C(\delta,\epsilon)$.
For graph $G=(V,E)$ of maximum degree $\Delta\geq\Delta_0$ and girth $\geq 7$,
 all $\lambda<(1-\delta)\lambda_c(\Delta)$,
let $(X_t)$ be the  continuous (or discrete) time  Glauber dynamics on the hard-core model. 
 If  $X_0$ is  400-above suspicion for radius $R=R(\delta,\epsilon)>1$ for $v\in V$,
it holds that
\begin{equation}
\label{eq:uniformity-simple2}
\Prob{ 
(\forall t\in {\cal I} )  \quad
  \W_{X_t}(v)<\sum_{z \in N(v)} \omega^*(z)\Phi(z)+\epsilon \Delta }
  \geq 1-\exp\left( -\Delta/C \right),
\end{equation}
where the time interval   ${\cal I}=[Cn,n\exp\left( \Delta/C \right)]$.
\end{theorem}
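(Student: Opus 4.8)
I would prove Theorem~\ref{thrm:Uniformity} by a level-by-level contraction argument that mirrors the proof of Lemma~\ref{lemma:RConGirth6}, but now carried out along the trajectory of the continuous-time chain $(X_t^*)$ on the auxiliary graph $G^*_v$, and then transferred back to $(X_t)$ on $G$ via Lemma~\ref{lem:ComparisonXVsXStar}. The central object is, for an independent set $\sigma$ and a vertex $z$, the quantity $\R(\sigma,z)=\prod_{y\in N(z)}(1-\tfrac{\lambda}{1+\lambda}\I_{y,z}(\sigma))$ from~\eqref{SketchEq:RwDefinition}. The first step is to establish the dynamic analogue of Lemma~\ref{lemma:RApproxReccGirth6}: for the chain $(X^*_t)$ on $G^*_v$, at a fixed continuous time $t$, with probability $1-\exp(-\Omega(\Delta))$ one has, for every $z$ in the relevant ball around $v$,
\[
\Bigl|\R(X^*_t,z)-\prod_{u\in N^*(z)}\bigl(1-\tfrac{\lambda}{1+\lambda}\I_{u,z}(X^*_t)\bigr)\Bigr|<\gamma .
\]
The point of passing to $G^*_v$ is exactly the conditional-independence discussion in Section~\ref{sec:GVsG*}: conditioned on the configuration of $X^*_t$ outside $B_3(v)$, the (grand)children of $v$ are mutually independent, so $\SOS$-type sums concentrate via Azuma/Chernoff just as in the static proof, the only new ingredient being that the one-step Glauber update has Lipschitz constant $O(1)$ per site so total influence is $O(1)$ and the relevant functions have Lipschitz constant $O(1/\Delta)$ after normalization.

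**From the approximate recurrence to closeness to $\omega^*$.** Once the approximate recurrence holds whp at time $t$, I would set $\beta_i=\max_{x\in B_i(v)}|\Psi(\R(X^*_t,x))-\Psi(\omega^*(x))|$ and run the same induction as in the proof of Lemma~\ref{lemma:RConGirth6}: $\beta_R\le 3$ trivially, and whenever $\beta_{i+1}\ge\epsilon'/5$ the approximate-recurrence bound together with Lemma~\ref{lemma:ApproxFixPointEquations} (stepwise contraction of $F$ with factor $1-\delta/6$ in the $D_{v,\ell}$ metric) and~\eqref{eq:PotentialVsRealDistance} gives $\beta_i\le(1-\delta/24)\beta_{i+1}+O(1/\Delta)$. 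Taking $R=R(\delta,\epsilon)$ a large enough constant collapses this to $|\R(X^*_t,z)-\omega^*(z)|\le\epsilon''$ for all $z\in N(v)$ whp. Feeding this into the concentration statement for $\W$ — namely that $\W_{X^*_t}(v)$ is a sum of $|N(v)|$ conditionally independent terms bounded by $\max\Phi\le 12$, so it concentrates around $\sum_{z\in N(v)}\ExpCond{\I_{z,v}}{\cal F}\Phi(z)\approx\sum_{z\in N(v)}\R(X^*_t,z)\Phi(z)\approx\sum_{z\in N(v)}\omega^*(z)\Phi(z)$ up to $\epsilon\Delta/2$ — yields~\eqref{eq:uniformity-simple2} for $(X^*_t)$ at a single fixed $t$.

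**Upgrading from one instant to the whole interval, and from $G^*$ to $G$.** Two transfers remain. For the time interval: when $\Delta=\Omega(\log n)$ a union bound over $\exp(\Delta/C)$ times (spaced via Observation~\ref{fact:DiscVsCont}) suffices; for general $\Delta$ I would invoke Lemma~\ref{lemma:ContVsDiscSmallDelta} with $f(\sigma)=\W_\sigma(v)/\Delta$ (Lipschitz constant $O(1/\Delta)$, total influence $O(1)$), which upgrades a single-instant whp bound in continuous time to a whp bound holding simultaneously for all discrete times in a window of length $O(n)$; chaining $\exp(\Delta/C)$ such windows and taking a union bound gives the interval ${\cal I}=[Cn,n\exp(\Delta/C)]$. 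For the graph: Lemma~\ref{lem:ComparisonXVsXStar}, with the maximal coupling of $(X_t)$ and $(X^*_t)$ started from the common $400$-above-suspicion $X_0$, shows that whp $|(X_s\oplus X^*_s)\cap B_2(u)|\le\gamma\Delta$ for all $s\le C_1 n$ and all $u$; since $\W$ at $v$ only depends on $B_2(v)$, this discrepancy changes $\W_{X_t}(v)$ from $\W_{X^*_t}(v)$ by at most $12\gamma\Delta<\epsilon\Delta/2$, so~\eqref{eq:uniformity-simple2} transfers to $(X_t)$ on $G$ over the window, and again we chain windows across ${\cal I}$.

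**Main obstacle.** The genuinely delicate point is the dynamic approximate recurrence on $G^*_v$: one must show that along the coupled trajectory the children of a vertex $z$ are close to conditionally independent given the far-away configuration at a \emph{random} time $t$, not just in the static Gibbs measure. This is where girth~$\ge 7$ (rather than $\ge 6$) is essential — it is what makes the neighbors of $v$ in $G^*_v$ unable to communicate except through paths leaving $B_3(v)$ — and one has to control the accumulated error from (i) the $O(1/\Delta)$ slack in $1+x=e^{x+O(x^2)}$ summed over $\Delta$ neighbors, (ii) the $G\oplus G^*_v$ edges, and (iii) the burn-in requirement that enough neighbors of every vertex in $B_R(v)$ have been resampled so that the chain has forgotten a pathological $X_0$; the $O(n)$ burn-in (as opposed to $O(n\log\Delta)$) is exactly what the $400$-above-suspicion hypothesis buys, and keeping all these error terms below $\gamma$ while the induction over the $R$ levels amplifies nothing worse than a constant factor is the bookkeeping core of the proof.
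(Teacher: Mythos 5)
There is a genuine gap at the heart of your argument: the ``dynamic analogue of Lemma~\ref{lemma:RApproxReccGirth6}'' that you posit is not available at a single instant. As displayed, your inequality compares $\R(X^*_t,z)$ with a product of the very same factors $\bigl(1-\tfrac{\lambda}{1+\lambda}\I_{u,z}(X^*_t)\bigr)$ over $N^*(z)$, which is vacuous; and the statement you evidently intend, namely $\R(X^*_t,z)\approx\prod_{u}\bigl(1-\tfrac{\lambda}{1+\lambda}\R(X^*_t,u)\bigr)$ with all quantities evaluated at the \emph{same} time $t$, is not what conditional independence on $G^*_v$ yields. Conditioning on the trajectory outside $B_2$, the conditional mean of $\I_{z,x}(X^*_t)$ is a product of terms of the form $1-\tfrac{\lambda}{1+\lambda}\Expectation_{t_y}\!\left[\I_{y,z}(X^*_{t_y})\mid{\cal F}\right]$, i.e.\ it involves the configuration at the \emph{random last-update times}, an exponentially weighted average of $\R(X^*_s,\cdot)$ over a window of length $\Theta(n)$ into the past. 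This is exactly the form of the paper's Lemma~\ref{lem:Approx-Glauber-BP}, whose right-hand side features $\Expectation_{t_z}\!\left[\R(X_{t_z},z)\right]$ rather than $\R(X_t,z)$. A priori (i.e.\ before the uniformity you are trying to prove is established) you cannot argue that $\R(X^*_s,z)$ is essentially constant over a window of length $\Theta(n)$, since a constant fraction of $z$'s neighborhood is resampled in that time.

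Because of this, your purely spatial induction on $\beta_i=\max_{x\in B_i(v)}|\Psi(\R(X^*_t,x))-\Psi(\omega^*(x))|$ at one fixed $t$ cannot close: the recurrence at level $i$ and time $t$ needs control of level $i+1$ at times down to $t-O(n)$, recursively down to $t-R\cdot O(n)$. The paper resolves this with a space--time induction (Lemma~\ref{lem:ApproxVsExactFixPoint}): the quantities $\alpha_i$ are maxima over $B_i(w)$ \emph{and} over nested time intervals ${\cal I}_i=[T_0-iC_1n,T_1]$, so that the time-averaged inputs $\tilde\omega_t(r)$ appearing in \eqref{eq:From:lemma:ApproxFixPoint} fall inside ${\cal I}_{i+1}$ and the contraction of Lemma~\ref{lemma:ApproxFixPointEquations} can be applied; this is also why the burn-in $T_0=(R+1)C_1n$ scales with the number of levels. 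The same time-averaging issue reappears in your final step: the conditional mean of $\I_{z,v}(X^*_t)$ that $\W_{X^*_t}(v)$ concentrates around is $\Expectation_{t_z}[\R(X^*_{t_z},z)]$, not $\R(X^*_t,z)$ (cf.\ Lemma~\ref{lem:ExpWVsR}); that part is repairable once you have closeness of $\R(X_s,z)$ to $\omega^*(z)$ simultaneously for all $s$ in the window, but obtaining that simultaneity is precisely what your fixed-$t$ induction does not deliver. The remaining ingredients of your outline (girth~$7$ and $G^*_v$ for conditional independence, Lemma~\ref{lem:ComparisonXVsXStar} restarted over $O(n)$ windows, Lemma~\ref{lemma:ContVsDiscSmallDelta} plus a covering/union bound over ${\cal I}$) do match the paper's route.
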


We will use Lemmas  \ref{lemma:BurnIn}, \ref{lemma:LightnessOfX}
and  \ref{lem:ComparisonXVsXStar}  to complete the proof of 
Theorem \ref{thrm:Uniformity}.
For  an independent set $\sigma$ of $G$ and  $w\in V$, recall that
$
\R (\sigma,w)=\prod_{z \in N(w)}
\left(1-\frac{\lambda}{1+\lambda}\I_{z,v}(\sigma) \right),
$
where $\I_{z,w}(\sigma)=\indicator{ \sigma\cap\left(N(z)\setminus\{w\}\right) = \emptyset}$.

The following result which is the Glauber dynamics' 
 version of \eqref{SketchEq:ApproxFix}, in Section \ref{sec:SketchLocalUnifGibbs}.

\begin{lemma}\label{lem:ApproxVsExactFixPoint}
Let $\epsilon>0$, $R$, $C$ and $\lambda$ be as in Theorem \ref{thrm:Uniformity}.
Let $(X_t)$  be the continuous  time  Glauber dynamics on the hard-core model with fugacity 
$\lambda$ and underlying graphs $G$.
If  $X_0$ is  400-above suspicion for radius $R$ for $w\in V$,
then  we have that
\begin{eqnarray}\label{eq:ApproxVsExactFixPoint}
\Prob{  
(\forall t\in {\cal I} )  \quad \left| \R(X_t,w) - \omega^*(w) \right|\leq \epsilon/10 }
  \geq 1- \exp\left( -20\Delta/C \right),
\end{eqnarray}
where ${\cal I}=[Cn, n\exp\left( \Delta/C\right)]$
\end{lemma}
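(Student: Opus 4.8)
The plan is to follow the proof of Lemma~\ref{lemma:RConGirth6}, adding the machinery that takes us from the static Gibbs distribution to a trajectory of the dynamics; throughout we work with the continuous-time chain and fix a small auxiliary constant $\gamma=\gamma(\delta,\epsilon)$. There are three new ingredients: (i)~a dynamic analogue of the approximate recurrence Lemma~\ref{lemma:RApproxReccGirth6}, obtained by replacing the Markov-random-field conditional independence of $\mu$ with an approximate conditional independence of a suitably modified chain; (ii)~the comparison Lemma~\ref{lem:ComparisonXVsXStar}, used to transport estimates for the modified chain back to $(X_t)$; and (iii)~a covering argument over short time windows to promote a fixed-time estimate to one that holds for all $t\in{\cal I}$.

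Fix first a single time $t\in{\cal I}$. I would show that, with probability $\ge 1-\exp(-\Delta/C)$, the recurrence $\bigl|\R(X_t,x)-\prod_{z\in N(x)}\bigl(1-\tfrac{\lambda}{1+\lambda}\R(X_t,z)\bigr)\bigr|<\gamma$ holds simultaneously for every $x\in B_R(w)$. For one such $x$, pass to the graph obtained by orienting all edges within distance $2$ of $x$ toward $x$, as in Section~\ref{sec:GVsG*}, and to the associated chain $(X^*_s)$: the orientation forces all short interactions among the (grand)children of $x$ to pass through $x$, so that conditioning on the trajectory of $x$ together with the trajectories of all vertices at distance $\ge 3$ from $x$ leaves those (grand)children mutually conditionally independent. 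This is exactly where girth $\ge 7$ is used: any remaining interaction path between two children of $x$ that avoids the conditioned shell has length $\ge 5$ and hence is negligible over $O(n)$ time. A conditional application of Azuma's inequality, as in the proof of Lemma~\ref{lemma:RApproxReccGirth6} (with the static identity \eqref{eq:RandomNotSoRandom} replaced by its dynamic counterpart), then yields the recurrence for $\R(X^*_t,x)$, provided $X^*_t$ is not heavy near $x$; the latter follows from Lemmas~\ref{lemma:BurnIn} and~\ref{lemma:LightnessOfX} since $t\in{\cal I}$. Finally Lemma~\ref{lem:ComparisonXVsXStar}, applied over the $O(n)$-length window containing $t$ with $(X^*_s)$ restarted from $X$ at that window's left endpoint (which is $400$-above suspicion, again by the burn-in lemmas), gives $|(X_s\oplus X^*_s)\cap B_2(y)|\le\gamma\Delta$ for every $y$, hence $|\R(X_s,y)-\R(X^*_s,y)|=O(\gamma)$ for every $y$, so the recurrence transfers to $(X_t)$; a union bound over the $\le\Delta^{R}$ vertices of $B_R(w)$ preserves the failure probability.

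With this recurrence available at every $x\in B_R(w)$, I would repeat the contraction argument of the proof of Lemma~\ref{lemma:RConGirth6}: put $\beta_i=\max_{x\in B_i(w)}\bigl|\Psi(\R(X_t,x))-\Psi(\omega^*(x))\bigr|$, observe $\beta_i\le 3$ via \eqref{eq:PotentialVsRealDistance} since $\R(X_t,\cdot),\omega^*\in[e^{-e},1]$, and show that whenever $\beta_{i+1}$ exceeds a fixed fraction of $\epsilon$ one has $\beta_i\le(1-\delta/24)\beta_{i+1}$, by feeding the recurrence into the stepwise contraction of $F$ (Lemma~\ref{lemma:ApproxFixPointEquations}), exactly as in \eqref{eq:TowardsApproxFixPointAGirth6}--\eqref{eq:FixTimeAlphaIGirth6}. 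Iterating $R=R(\delta,\epsilon)=O(1)$ times forces $\beta_0\le\epsilon/30$, whence $|\R(X_t,w)-\omega^*(w)|\le 3\beta_0\le\epsilon/10$ by \eqref{eq:PotentialVsRealDistance}.

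To promote this from a single $t$ to all $t\in{\cal I}$, partition ${\cal I}$ into $O\bigl(\tfrac{\Delta}{\gamma}\exp(\Delta/C)\bigr)$ subintervals of length $\tfrac{\gamma}{\Delta}n$ (the comparison estimate of Lemma~\ref{lem:ComparisonXVsXStar} being re-established on coarser blocks of length $O(n)$). In any one subinterval only $O(\gamma\Delta)$ vertices of $B_2(w)$ are updated (Chernoff, with negative dependence via \cite{NegativeDep} in the discrete case), so $\R(X_t,w)$ moves by $O(\gamma)$ across it; it therefore suffices to control $\R(X_t,w)$ at the left endpoint of each subinterval, where the fixed-time estimate applies and where the burn-in lemmas supply the not-heavy hypothesis. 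Union bounding over the subintervals multiplies the failure probability by $O\bigl(\tfrac{\Delta}{\gamma}\exp(\Delta/C)\bigr)$, which is absorbed into $\exp(-20\Delta/C)$ once $C$ is large. The step I expect to be the main obstacle is the first one: arranging the modified chain and the conditioning $\sigma$-algebra so that the (grand)children of $x$ become a genuine deterministic function (up to their own clocks) of the far trajectory, and controlling the mismatch between ``unblocked at the last update time'' and ``unblocked at time $t$'' that is absent in the static setting.
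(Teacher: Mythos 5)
There is a genuine gap, and it is exactly at the point you flag as ``the main obstacle'': the recurrence you want to establish in your first step is not the one the dynamics actually provides. The conditional-independence argument via $G^*_x$ and $(X^*_t)$ (conditioning on the trajectory outside $B_2(x)$) expresses $\ExpCond{\I_{z,x}(X^*_t)}{\cal F}$ as a product over the grandchildren $y$ of terms determined by whether $y$ was unblocked \emph{at its last update time} $t_y$; hence what one obtains (this is Lemma \ref{lem:Approx-Glauber-BP}) is a recurrence relating $\R(X_t,x)$ to the \emph{time-averaged} quantities $\Expectation_{t_z}\left[\R(X_{t_z},z)\right]$, where $t_z$ is spread over a window of length $\Theta(n)$ before $t$ — not to the same-time values $\R(X_t,z)$. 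Your proposed fix for time discrepancies (short subintervals in which only $O(\gamma\Delta)$ vertices of $B_2$ are updated) cannot bridge this, because the relevant window has length $\Theta(n)$, during which every vertex of $B_2(z)$ is updated $\Theta(1)$ times and $\R(X_s,z)$ can a priori drift arbitrarily; and asserting that $\R(X_s,z)\approx\R(X_t,z)$ over that window is precisely the conclusion of the lemma, so using it to derive the same-time recurrence would be circular. Consequently the purely spatial contraction you run at a fixed time $t$ (mimicking Lemma \ref{lemma:RConGirth6}) has no valid input.

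The paper's proof closes this gap with a joint space-time induction that is absent from your plan: it defines $\alpha_i$ as the maximum of $\left|\Psi(\R(X_t,x))-\Psi(\omega^*(x))\right|$ over \emph{both} $x\in B_i(w)$ and all times $t$ in an interval ${\cal I}_i=[T_0-iC_1n,\,T_1]$ whose left endpoint recedes by $C_1 n$ per unit of radius. Because the averaging time $t_z$ for a vertex considered at a time in ${\cal I}_i$ lands (up to negligible mass) in ${\cal I}_{i+1}$, the quantity $\tilde\omega_s$ appearing in the available recurrence satisfies $D_{v,i+1}(\tilde\omega_s,\omega^*)\le\alpha_{i+1}$ (using monotonicity of $\Psi$), and then Lemma \ref{lemma:ApproxFixPointEquations} gives $\alpha_i\le(1-\delta/24)\alpha_{i+1}$ whenever $\alpha_{i+1}\ge\epsilon/20$; each induction step costs one radius and $C_1n$ of burn-in, which is why the burn-in is $(R+1)C_1n=O(n)$. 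Your remaining ingredients (comparison via Lemma \ref{lem:ComparisonXVsXStar}, burn-in lemmas for the not-heavy hypothesis, Azuma for concentration, and a covering argument over short subintervals to get uniformity in $t$) do match the paper, but without replacing your same-time recurrence by the time-averaged one and restructuring the contraction as an induction over radius \emph{and} time, the argument does not go through.
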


\noindent
The proof of Lemma \ref{lem:ApproxVsExactFixPoint}  makes use of the following result,
which is the Glauber dynamics' version of Lemma \ref{lemma:RApproxReccGirth6}, in Section 
\ref{sec:SketchLocalUnifGibbs}.

\begin{lemma}\label{lem:Approx-Glauber-BP}
For  $\delta, \gamma>0$,  let 
$\Delta_0=\Delta_0(\delta,\gamma), C=C(\delta,\gamma)$, $\hat{C}=\hat{C}(\delta,\gamma)$.
For all graphs $G=(V,E)$ of maximum degree $\Delta\geq\Delta_0$ and girth $\geq 7$,
 all $\lambda<(1-\delta)\lambda_c(\Delta)$,
let $(X_t)$ be the continuous time Glauber dynamics on the hard-core model.

Let $X_0$ be  400-above suspicion for radius $R\leq \Delta^{9/10}$ for $w\in V$.
Then,  for  $x\in B_{R/2}(w)$ and  $I=[t_0, t_1]$, where $t_0=Cn$, 
it holds that 
\begin{multline*}
\Prob{(\forall t\in I)
 \quad \left | 
\R(X_t, x) - \exp\left( -\frac{\lambda}{1+\lambda} \sum_{z \in N(x)}
\Expectation_{t_z}\left[ \R(X_{t_z}, z) \right] 
\right)
\right | 
\leq \gamma
} 
\\
\geq  1- \left(1+\frac{t_1-t_0}{n}\right) \exp\left( -{\Delta}/{\hat{C} }\right),
\end{multline*}
where  $\Expectation_{t_z}\left[ \R(X_{t_z}, z) \right]$ is the expectation  w.r.t. random time $t_z$, the last time
that vertex $z$ is updated prior to time $t$.
\end{lemma}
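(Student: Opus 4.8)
The plan is to adapt the proof of Lemma~\ref{lemma:RApproxReccGirth6} to the dynamic setting, replacing the static conditional independence of the Gibbs measure by the ``effective'' conditional independence provided by the auxiliary graph $G^*_x$ and the auxiliary chain $(X^*_t)$ of Section~\ref{sec:GVsG*}. Fix $x\in B_{R/2}(w)$; since $B_{R/2}(x)\subseteq B_R(w)$, the hypothesis on $X_0$ also makes $X_0$ (say) $400$-above suspicion for radius $R/2$ at $x$. Run $(X^*_t)$ on $G^*_x$ from $X^*_0=X_0$, maximally coupled to $(X_t)$. Applying Lemma~\ref{lem:ComparisonXVsXStar} on consecutive windows of length $O(n)$ — re-coupling at the start of each window, the not-heavy hypothesis needed to re-apply it being supplied by Lemma~\ref{lemma:LightnessOfX} — gives that, except on an event of probability $O\big((1+\tfrac{t_1-t_0}{n})\exp(-\Delta/C')\big)$, we have $|(X_s\oplus X^*_s)\cap B_2(u)|\le \gamma'\Delta$ for all $u\in V$ and all $s\in[t_0,t_1]$. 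Because $G$ has girth $\ge 7$, each vertex of $S_2(y)$ has a unique neighbour in $S_1(y)$ and no vertex of $S_1(y)$ has one, so changing $\gamma'\Delta$ vertices of $B_2(y)$ flips at most $\gamma'\Delta$ of the indicators $\I_{\cdot,y}(\cdot)$; together with $\lambda=O(1/\Delta)$ this yields $|\R(X_s,y)-\R(X^*_s,y)|=O(\gamma')$ for every relevant $y$ and $s$. Hence it suffices to prove the asserted approximate recurrence for $(X^*_t)$, with $\gamma$ replaced by a constant multiple of $\gamma$.

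Work now with $(X^*_t)$ on $G^*_x$. Let $\mathcal F$ be the $\sigma$-algebra generated by the update clocks of all vertices, by the full trajectory of $x$, and by the full trajectory of $X^*$ on $V\setminus\big(S_1(x)\cup S_2(x)\big)$. The orientation of $G^*_x$ (all edges within distance $2$ of $x$ point towards $x$) forces, for each $u\in S_2(x)$, $N^*(u)=N(u)\setminus\{z\}\subseteq S_3(x)$ where $z$ is $u$'s unique neighbour in $S_1(x)$ (an inclusion that already uses girth $\ge 6$), so each $u\in S_2(x)$ evolves as an autonomous single-site chain in a frozen $\mathcal F$-measurable environment, and the only coupling left among $S_1(x)\cup S_2(x)$ is between a child $z\in N(x)$ and its own set of children. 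Thus, conditional on $\mathcal F$, the random variables $\I_{z,x}(X^*_t)$, $z\in N(x)$, are mutually independent $\{0,1\}$-valued, so $\SOS^*_{X^*_t}(x):=\sum_{z\in N(x)}\I_{z,x}(X^*_t)$ is a conditional sum of independent indicators and Azuma/Hoeffding gives
\[
\Prob{\,\big|\SOS^*_{X^*_t}(x)-\ExpCond{\SOS^*_{X^*_t}(x)}{\mathcal F}\big|\le\beta\Delta\,}\ \ge\ 1-2\exp(-\beta^2\Delta/2)
\]
for any $\beta>0$. Exactly as in~\eqref{eq:RXtVsExpUsGirth6}, using $\lambda=O(1/\Delta)$ and $f(y)=\exp(-\tfrac{\lambda}{1+\lambda}y)$, this transfers to a concentration of $\R(X^*_t,x)=f(\SOS^*_{X^*_t}(x))+O(1/\Delta)$ around $f\big(\ExpCond{\SOS^*_{X^*_t}(x)}{\mathcal F}\big)$ up to error $O(\beta)+O(1/\Delta)$.

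It remains to identify $f\big(\ExpCond{\SOS^*_{X^*_t}(x)}{\mathcal F}\big)=\exp\big(-\tfrac{\lambda}{1+\lambda}\sum_{z\in N(x)}\ExpCond{\I_{z,x}(X^*_t)}{\mathcal F}\big)$ with the target, and this is the dynamic analogue of~\eqref{eq:UzxVsRzStatic} and the main obstacle. Conditional on $\mathcal F$ each child $z$ is frozen from its last update $t_z$ and each grandchild $u\in N(z)\setminus\{x\}$ from its last update $s_u$, and the orientation in $G^*_x$ (whereby $u$ ``sees'' only $N(u)\setminus\{z\}$) gives the clean identity $\ExpCond{\I_{z,x}(X^*_t)}{\mathcal F}=\prod_{u\in N(z)\setminus\{x\}}\big(1-\tfrac{\lambda}{1+\lambda}\I_{u,z}(X^*_{s_u})\big)$, each factor being the conditional probability that $u$ is unoccupied at time $t$. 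One then (i) absorbs the missing $u=x$ factor at cost $O(\lambda)$, (ii) converts the product to $\exp\big(-\tfrac{\lambda}{1+\lambda}\sum_{u}\I_{u,z}(X^*_{s_u})\big)+O(1/\Delta)$, and (iii) argues that replacing each $s_u$, and likewise the single time $t_z$ appearing in $\Expectation_{t_z}[\R(X^*_{t_z},z)]$, by any common recent time alters the (concentrated, slowly varying) sum $\sum_{u\in N(z)\setminus\{x\}}\I_{u,z}(X^*_{\cdot})$ by only $o(\Delta)$; step (iii) is where one invokes the Lipschitz/total-influence estimates together with the single-instant concentration already in hand, to control movement over a window of length $O(n)$ in continuous time without a union bound over times, as in Section~\ref{sec:ContVsDisc}. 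Summing over $z$ then gives $\ExpCond{\SOS^*_{X^*_t}(x)}{\mathcal F}\approx\sum_{z\in N(x)}\Expectation_{t_z}[\R(X^*_{t_z},z)]$, whence $f\big(\ExpCond{\SOS^*_{X^*_t}(x)}{\mathcal F}\big)\approx\exp\big(-\tfrac{\lambda}{1+\lambda}\sum_{z\in N(x)}\Expectation_{t_z}[\R(X^*_{t_z},z)]\big)$; combining this with the previous two paragraphs, passing from $X^*$ back to $X$ (also at the times $t_z<t$) via Lemma~\ref{lem:ComparisonXVsXStar}, choosing $\beta$ small, union-bounding over the $\lceil(t_1-t_0)/n\rceil$ windows and over the (constantly many, since $R=O(1)$) vertices of $B_{R/2}(w)$, and lifting the single-instant bounds to the whole interval via the covering argument of Section~\ref{sec:ContVsDisc}, yields the stated bound.
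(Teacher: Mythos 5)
Your proposal follows the same architecture as the paper's proof: pass to $G^*_x$ and the auxiliary chain $(X^*_t)$ via Lemma~\ref{lem:ComparisonXVsXStar}, with $O(n)$-length windows and Lemma~\ref{lemma:LightnessOfX} re-supplying the lightness hypothesis (which is indeed where the factor $1+\frac{t_1-t_0}{n}$ comes from), condition on an exterior $\sigma$-algebra to decouple the children of $x$, apply Azuma to $\sum_{z\in N(x)}\I_{z,x}(X^*_t)$, identify the conditional mean with the right-hand side, transfer back to $(X_t)$, and finish with a covering argument over the interval. One small but genuine flaw in your set-up: you put the full trajectory of $x$ into $\mathcal F$. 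Since $N^*(x)=N(x)$ in $G^*_x$, the trajectory of $x$ is a nontrivial function of the joint occupancy of its children, so conditioning on it correlates the blocks $\{z\}\cup(N(z)\setminus\{x\})$ across $z\in N(x)$, and the asserted conditional independence of the indicators $\I_{z,x}(X^*_t)$ no longer follows. You never use $x$'s trajectory, and the paper conditions only on the trajectory of $X^*$ on $V\setminus B_2(x)$, so this is fixable by simply dropping it.

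The substantive gap is your step (iii). By also conditioning on all update clocks you make the last-update times $s_u$ and $t_z$ deterministic, so $\ExpCond{\I_{z,x}(X^*_t)}{\mathcal F}$ becomes a ``staggered'' product with each grandchild frozen at its own last update time; you must then convert this current-state quantity into the time-averaged target $\Expectation_{t_z}[\R(X^*_{t_z},z)]$, i.e.\ an average of $\R(X^*_s,z)$ over $s$ spread over a window of length $\Theta(n)$. Your justification---that the sum $\sum_{u}\I_{u,z}(X^*_\cdot)$ is ``slowly varying'', citing the Lipschitz/total-influence machinery of Section~\ref{sec:ContVsDisc} and ``the single-instant concentration already in hand''---does not deliver this: over a window of length $\Theta(n)$ a constant fraction of each grandchild's neighborhood is resampled, so $\sum_u\I_{u,z}$ may a priori drift by $\Theta(\Delta)$; Lemma~\ref{lemma:ContVsDiscSmallDelta} compares continuous with discrete time and lifts single-instant events to intervals, it does not bound temporal drift of $\R(X^*_\cdot,z)$; and the only concentration available at this stage is of $\sum_{z\in N(x)}\I_{z,x}(X^*_t)$ about its conditional mean at a single time. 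Temporal near-constancy of $\R(X_\cdot,z)$ is exactly what Lemma~\ref{lem:ApproxVsExactFixPoint} proves later \emph{using} the present lemma, so invoking it here is circular. The paper sidesteps this by not conditioning on the clocks: then each grandchild factor of $\ExpCond{\I_{z,x}(X^*_t)}{\mathcal F}$ is automatically the exponential-lookback average (equations \eqref{eq:1stGroup4EUHalf}--\eqref{eq:1stGroup4EU}), and the passage to $\Expectation_{t_z}[\R(X^*_{t_z},z)]$ uses the conditional independence and identical distribution of the grandchildren's update times, with the $400$-above-suspicion hypothesis entering to bound the heavy set $H(x)$ and the contribution of initially occupied, never-updated grandchildren---an issue your write-up also leaves untreated. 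You should restructure the conditioning along these lines rather than fixing the clocks.
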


\noindent
Note that
$
\Expectation_{t_z}\left[ \R(X_{t_z}, z) \right]=\exp(-t/n) \  \R(X_{0},z)+\int^t_{0}\R(X_s,z)n\exp\left( (s-t)/n\right)ds.
$
The proof of Lemma \ref{lem:Approx-Glauber-BP}  appears in Section \ref{sec:lem:Approx-Glauber-BP}.

\begin{proof}[Proof of Lemma \ref{lem:ApproxVsExactFixPoint}]

Recall that ${\cal I}=[Cn, n\exp(\Delta/C]$.  
Let 
$R=\left \lfloor 30\delta^{-1}\log(6\epsilon^{-1}) \right \rfloor$
Assume that $C$ is sufficiently large such that  $C=(R+1)C_1$, where   $C_1$ is  specified later.
Let $T_0=(R+1)C_1 n$ and  $T_1= \exp(\Delta/C)$. 
Finally,  for $i\leq R$ let ${\cal I}_{i}:=[T_0-iC_1n,T_1] $.

Consider the continuous time Glauber dynamics $(X_t)$.
Also, consider times $t\geq T_0-RC_1 n$ . For each such time $t$ and positive integer $i\leq R$, we define
$$
\alpha_i:=\max\left| \Psi (\R(X_t, x))-\Psi(\omega^*(x) )\right|,
$$
where $\Psi$ is defined in \eqref{eq:potential-function}. The maximum is taken over all $t\in {\cal I}_{i}$ and over all vertices $x\in B_i(w)$.

An elementary observation about  $\alpha_i$  is that $\alpha_i\leq 3$ for every $i\leq R$.
To see why this holds,  note the following: For any  $z\in V$  and any independent sets 
$\sigma$, it  holds that
\[
\R(\sigma,z) \; = \; \prod_{r \in N(z)}
\left(1-\frac{\lambda \cdot \I_{r,z}(\sigma) }{1+\lambda } \right)
\; \ge \; ( 1 + \lambda )^{-\Delta} 
\; \ge \; e^{-\lambda \Delta}  \; \ge \; \eee^{-\eee},
\]
where in the last inequality we use the fact that $\Delta$ is sufficiently large, i.e. $\Delta>\Delta_0(\epsilon,\delta)$
and $\lambda<e/\Delta$.
Furthermore,  using the same arguments as above  we get that $\omega^*(z)\geq e^{-e}$, as well.
Since for any  $x\in V$  and any independent sets $\sigma$, we have $\R(\sigma,x), \omega^*(x)\in[ e^{-e}, 1]$,
\eqref{eq:PotentialVsRealDistance} implies  $\alpha_i\leq C_0=3$, for every $i\leq R$.

We prove our result by showing that typically  $\alpha_0$ is very  small. Then, the lemma follows by using
standard arguments. 
We  use  an inductive argument to show that $\alpha_0$ very small. 
We start by using  the fact that $\alpha_R\leq C_0$. Then we show that 
 with sufficiently large probability, if $\alpha_{i+1}\geq \epsilon/20$, then 
$\alpha_i \le (1- \gamma) \alpha_{i+1}$ where $0<\gamma<1$.

For any $i\leq R$, we use the fact that  there exists    $\hat{C} >0$  
such that with probability at least $1- \exp\left ( -\Delta/\hat{C} \right)$ 
the following is true: For  every  vertex $z\in B_i(w)$ it holds that
\begin{eqnarray}\label{eq:From:lemma:ApproxFixPoint}
(\forall t \in \mathcal{I}_i) \quad \left | 
\R(X_t, z) - \exp\left( -\frac{\lambda}{1+\lambda} \sum_{r\in N(z)}
\tilde{\omega}(r)
\right)
\right |  < \frac{ \epsilon  \delta }{40}
\end{eqnarray}

\noindent
where 
\begin{equation}\label{def:omega(r)}
\tilde{\omega}_t(r)=\exp(-C_1) \cdot \R(X_{t-C_1n},r)+\int^t_{t-C_1n}\R(X_s,r)n\exp\left[ (s-C_1n)/n\right]ds.
\end{equation}
\noindent
Eq.   \eqref{eq:From:lemma:ApproxFixPoint} is implied by Lemmas  \ref{lemma:LightnessOfX}, \ref{lem:Approx-Glauber-BP}.

Fix some $i\leq R$,   $z \in B_i(w)$ and  time $s\in \mathcal{I}_{i}$. 
We consider $X_s$ by conditioning on  $X_{s-C_1n}$. 
From the definition of the quantity $\alpha_{i+1}$ we get the following:
For any $x\in B_{i+1}(w)$  consider the quantity $\tilde{\omega}_s(x)$. 
We have that 
\begin{eqnarray}\label{eq:OmegaTildeCondition}
D_{v,i+1}(\tilde{\omega}_s,\omega^*)  \leq \alpha_{i+1}.
\end{eqnarray}

\noindent
We will show that if \eqref{eq:From:lemma:ApproxFixPoint} holds  for
$\R(X_s,z)$,  where $z\in B_i(w)$,  and $\alpha_{i+1}\geq \epsilon/20$, then 
we have that
\[
\left|  \Psi \left( \R(X_s,z) \right) -\Psi \left( \omega^*(z)\right)  \right| \leq (1-\delta/24)\alpha_{i+1},
\]

\noindent
For proving the above inequality, first note that if   $\R(X_s, z)$  satisfies  \eqref{eq:From:lemma:ApproxFixPoint},
then  \eqref{eq:PotentialVsRealDistance}  implies that

\begin{eqnarray}\label{eq:UniformityBound4PhiS}
\left| \Psi\left(  \R(X_s, z)  \right) - \Psi\left( \exp\left( -\frac{\lambda}{1+\lambda} \sum_{r\in N(z)}
 {\tilde{\omega}_s(r)}
\right)\right) \right| \; \leq \; \frac{\delta \epsilon}{12}.
\end{eqnarray}

\noindent
Furthermore, we have that  
\begin{eqnarray}
\lefteqn{
\left|  \Psi \left( \mathbf{R}(X_s,z) \right) -\Psi \left( \omega^*(z)\right)  \right| 
}
\hspace{.2in}
\nonumber
\\
&\leq & \frac{\delta \epsilon}{12} + 
 \left|  \Psi \left( \exp\left( -\frac{\lambda}{1+\lambda} \sum_{r\in N_z }
 {  \tilde{\omega}_s(r) } 
\right)\right)  -\Psi \left( \omega^*(z)\right)  \right|
 \quad\qquad\mbox{[from (\ref{eq:UniformityBound4PhiS})]} \nonumber \\
 &\leq &
 \frac{\delta \epsilon}{12} + 
 \left|  
 \Psi \left( \prod_{r\in N(z)}\left(1-\frac{\lambda
\tilde{\omega}_s(r) }{1+\lambda} \right) \right)
-\Psi \left( \omega^*(z)\right)  \right| \nonumber \\
&& + 
\left | \Psi \left( \prod_{r\in N(z)}\left(1-\frac{\lambda
\tilde{\omega}_s(r) }{1+\lambda} \right) \right)-  
\Psi  \left( \exp\left( -\frac{\lambda}{1+\lambda} \sum_{r\in N(z)}
 {\tilde{\omega}(r)} 
\right) \right) \right|, \qquad 
 \label{eq:TowardsApproxFixPointA}
\end{eqnarray}
where the last derivation follows from the triangle inequality.

From our assumptions
about  $\lambda, \Delta$   and the fact that  $\tilde{\omega}_s(r)\in [e^{-e},1]$, for  $r\in N(z)$,  we have that 
\begin{eqnarray}
\left | \prod_{r\in N(z)}\left(1-\frac{\lambda
\tilde{\omega}_s(r) }{1+\lambda} \right)-  
\exp\left( -\lambda \sum_{r\in N(z) }
 {\frac{\tilde{\omega}_s(r)}{1+\lambda}} 
\right) \right| \leq \frac{10}{\Delta}. \nonumber
\end{eqnarray}
The above inequality and (\ref{eq:PotentialVsRealDistance}) 
 imply that
\begin{eqnarray}
\left | \Psi \left( \prod_{r\in N(z)}\left(1-\frac{\lambda
\tilde{\omega}_s(r) }{1+\lambda} \right) \right)-  
\Psi  \left( \exp\left( -\frac{\lambda}{1+\lambda} \sum_{r\in N(z)}
 {\tilde{\omega}_s(r)} 
\right) \right) \right| \leq \frac{30}{\Delta}. \nonumber 
\end{eqnarray}

\noindent
Plugging the inequality above into (\ref{eq:TowardsApproxFixPointA}) we get that
\begin{eqnarray}
\left|  \Psi \left( \R(X_s,z) \right) -\Psi \left( \omega^*(z)\right )  \right|  
&\leq & 
\frac{\delta \epsilon}{12} + \frac{30}{\Delta} 
+\left|  \Psi \left( \prod_{r\in N(z) }\left(1-\frac{\lambda
\tilde{\omega}_s(r) }{1+\lambda } \right) \right) - \Psi \left( \omega^*(z)\right) \right| \nonumber \\
&\leq &
\frac{\delta \epsilon}{12} + \frac{60}{\Delta}+
D_{v,i}(F(\tilde{\omega}), \omega^*),
\label{eq:ProductVsExpontialWorstCase}
\end{eqnarray}
where the function $F$ is defined in \eqref{eq:BP-recursion}.  
Since   $\tilde{\omega_s}$  satisfies  \eqref{eq:OmegaTildeCondition},  Lemma \ref{lemma:ApproxFixPointEquations}  implies that
\begin{equation}
 D_{v,i}(F(\tilde{\omega}), \omega^*) \leq (1-\delta/6 )\alpha_{i+1}. \label{eq:ProductWorstCaseVsFixPoint}
\end{equation}

\noindent
Plugging (\ref{eq:ProductWorstCaseVsFixPoint}) into (\ref{eq:ProductVsExpontialWorstCase}) we get that
\begin{eqnarray}
\left|  \Psi \left( \R(X_s,z) \right) -\Psi \left( \omega^*(z)\right)  \right|  \leq 
\frac{\delta \epsilon}{12} + \frac{60}{\Delta}+ (1-\delta/6 )\alpha_{i+1} 
 \leq    (1-\delta/24)\alpha_{i+1}, \label{eq:FixTimeAlphaI}
\end{eqnarray}
\noindent
where the last inequality follows if we have $\alpha_{i+1}\geq \epsilon/20$.
Note that \eqref{eq:FixTimeAlphaI} holds provided that $\R(X_s,z)$ satisfies \eqref{eq:From:lemma:ApproxFixPoint}.

So as to  bound $\alpha_i$ we have to take the maximum over all times $t \in {\cal I}_i$ and vertices
$z\in B_i(w)$. So far, i.e. in \eqref{eq:FixTimeAlphaI}, we only considered  a fixed time $s\in {\cal I}_i$ and a fixed 
vertex $z$. 

Consider, now, a partition of  ${\cal I}_i$ into subintervals each of length $\frac{\epsilon^4\eta}{200\Delta}n$,
where the last part can be of smaller length. 
Let $T(j)$ be the $j$-th part, for $j\in \{1,\ldots, \lceil 200C_1\Delta/(\epsilon^4\eta )\rceil\}$.  
For some some vertex $x\in V$,  each $r \in  N(x)$ is updated  during the time period
$T(j)$ with probability  less than $\frac{\epsilon^4 \eta}{100\Delta}$, independently of the other vertices.

 Chernoff's bounds imply that  with probability at least $1-\exp\left( -\Delta \epsilon^3/3 \right)$, the number of
vertices in $S_2(x)$ which are updated  during the interval $T(j)$ is at most $\Delta\epsilon^3/3$. 
Furthermore, changing any $\Delta\epsilon^2/3$ variables in $S_2(x)$ can only change 
$\R(X_s,x)$ by at most $\epsilon^2/3$.
Consequently, $\Psi(\R(X_s,x))$ can change by only $\epsilon^{2}$ within $T(j)$.
From a union bound  over all subintervals $T(j)$ 
and all vertices  $x\in B_i(w)$,
there exists sufficiently large $C>0$
such that: 
\[
\Prob{\alpha_i=\max\{3\epsilon^{2}+(1-\delta/24)\alpha_{i+1}, \epsilon/20 \}}\geq 1-\exp\left( -52\Delta/C  \right).
\]
The fact that $\alpha_R\leq C_0$ and 
$R=\left \lfloor 20\delta^{-1}\log(6\epsilon^{-1}) \right \rfloor$, implies the following: 
With probability at least $1-\exp\left( -50\Delta/C\right)$  
for every $t\in {\cal I}$ it holds  that $\alpha_0\leq \epsilon/30$.
In turn, (\ref{eq:PotentialVsRealDistance}) implies that 
\begin{equation}\label{eq:ResultContTime}
|\R(X_t,v)-\omega^*(v)|\leq \epsilon/11.
\end{equation}
The lemma follows.
\end{proof}

We conclude  the technical results for Theorem \ref{thrm:Uniformity} by proving the following lemma.

\begin{lemma}\label{lem:ExpWVsR}
Let $\epsilon>0$, $R$, ${\cal I}$ and $\lambda$ be as in Theorem \ref{thrm:Uniformity}.
Let $(X_t)$  be the continuous time Glauber dynamics on the hard-core model with fugacity 
$\lambda$ and underlying graphs $G$.
Assume that  $X_0$  is $400$ above suspicion for $v$. Then, 
for  any $t\in {\cal I}$, any $\gamma>0$, there is $\hat{C}=\hat{C}(\gamma)>0$ such that
\[
\Prob{ \left| \W(X_t,v)  - \sum_{z \in N(v)}
    \Phi(z)\cdot  \Expectation_{t_z}\left[ \R(X_{t_z}, z) \right] 
    \right| >  \gamma\Delta} < \exp\left(-\Delta/\hat{C}  \right).
\]
\end{lemma}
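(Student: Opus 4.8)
The plan is to show that $\W(X_t,v) = \sum_{z\in N(v)} \I_{z,v}(X_t)\,\Phi(z)$ is sharply concentrated around its conditional expectation given the configuration at distance $\geq 3$ from $v$, and then identify that conditional expectation with $\sum_{z\in N(v)}\Phi(z)\,\Expectation_{t_z}[\R(X_{t_z},z)]$ up to a lower-order error. The first step is the concentration argument, which mirrors the static case in the proof of Lemma \ref{lemma:RApproxReccGirth6}: fix the $\sigma$-algebra $\mathcal{F}$ generated by the configuration of $v$ together with the configuration at all vertices at distance $\geq 3$ from $v$ at time $t$. Because $G$ has girth $\geq 7$, the induced subgraph on $B_3(v)$ is a tree; conditional on $\mathcal{F}$, the indicators $\I_{z,v}(X_t)$ for distinct $z,z'\in N(v)$ depend on disjoint vertex sets $N(z)\setminus\{v\}$ and $N(z')\setminus\{v\}$ that communicate only through paths leaving $B_3(v)$, hence — using the $G^*_w$ / $(X^*_t)$ machinery of Section \ref{sec:GVsG*} together with Lemma \ref{lem:ComparisonXVsXStar} — they are (approximately) mutually independent. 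Since each term $\I_{z,v}(X_t)\Phi(z)$ lies in $[0,12]$ by \eqref{eq:MaxPhi}, Azuma's inequality (or Hoeffding for the independent surrogate, then transfer via Lemma \ref{lem:ComparisonXVsXStar}) gives that with probability $\geq 1-\exp(-\Omega(\Delta))$,
\[
\left|\W(X_t,v) - \ExpCond{\W(X_t,v)}{\mathcal{F}}\right| \leq (\gamma/3)\Delta.
\]

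The second step is to evaluate $\ExpCond{\W(X_t,v)}{\mathcal{F}}=\sum_{z\in N(v)}\Phi(z)\,\ExpCond{\I_{z,v}(X_t)}{\mathcal{F}}$ term by term. Here I would reuse, essentially verbatim, the bound \eqref{eq:UzxVsRzStatic} from the proof of Lemma \ref{lemma:RApproxReccGirth6} adapted to the dynamics: conditioning further on whether $z\in X_t$ and using $\Prob{z\in X_t\mid\mathcal F}\leq \lambda = O(1/\Delta)$ (the occupation probability of a single site under one Glauber step is at most $\lambda/(1+\lambda)\le\lambda$), one gets
\[
\left|\ExpCond{\I_{z,v}(X_t)}{\mathcal{F}} - \R(X_{t_z},z)\right| = O(\lambda),
\]
except that in the dynamic setting the configuration of $N(z)\setminus\{v\}$ that determines $\I_{z,v}$ is the one frozen since the last update time $t_z$ of $z$; taking the expectation over $t_z$ (and over the random configuration at time $t_z$, still conditioned on $\mathcal F$) is exactly what produces the term $\Expectation_{t_z}[\R(X_{t_z},z)]$ on the right-hand side of the lemma. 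Summing over the $\le\Delta$ neighbors $z$ and using $\Phi(z)\le 12$ and $\lambda=O(1/\Delta)$, the accumulated error from this step is $O(1)=o(\gamma\Delta)$, so
\[
\left|\ExpCond{\W(X_t,v)}{\mathcal{F}} - \sum_{z\in N(v)}\Phi(z)\,\Expectation_{t_z}[\R(X_{t_z},z)]\right| \leq (\gamma/3)\Delta
\]
for $\Delta\ge\Delta_0(\gamma)$. Combining the two displayed bounds by the triangle inequality yields the claim with $\hat C=\hat C(\gamma)$ absorbing the $\Omega(\Delta)$ constant from the concentration step.

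The main obstacle is the conditional-independence step: unlike the Gibbs distribution, the Glauber dynamics does \emph{not} satisfy exact conditional independence of the neighbors of $v$ given the configuration outside $B_3(v)$, because information can leak along short cycles and along the edges in $G\oplus G^*_w$. This is precisely why the girth requirement is bumped to $\geq 7$ and why Lemma \ref{lem:ComparisonXVsXStar} is needed — one runs the modified chain $(X^*_t)$ on $G^*_w$, where the orientation near $w$ genuinely enforces the independence, proves the concentration bound there, and then pays a small cost $\exp(-\Delta/C_2)$ to transfer it back to $(X_t)$ using the bound on the number of disagreements $|(X_s\oplus X^*_s)\cap B_2(u)|\le\gamma'\Delta$. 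Care is needed to chain the error budgets so that the disagreement count $\gamma'\Delta$ stays small enough (relative to the target $\gamma\Delta$) that it only perturbs $\W$ by $o(\gamma\Delta)$; this is handled by choosing $\gamma'$ sufficiently small as a function of $\gamma$ and $\|\Phi\|_\infty\le 12$, exactly as in the analogous arguments in Section \ref{sec:GVsG*}. Everything else is routine bookkeeping of $O(1/\Delta)$ error terms.
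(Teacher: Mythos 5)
Your overall architecture (concentration of $\W$ about a conditional expectation, identification of that expectation with $\sum_{z}\Phi(z)\,\Expectation_{t_z}[\R(X_{t_z},z)]$, and a transfer between the modified chain $(X^*_t)$ on $G^*_v$ and $(X_t)$ via Lemma \ref{lem:ComparisonXVsXStar}) matches the paper's, but your second step has a genuine gap. You claim the static estimate \eqref{eq:UzxVsRzStatic} transfers ``essentially verbatim'' to give $\left|\ExpCond{\I_{z,v}(X_t)}{\mathcal F}-\R(X_{t_z},z)\right|=O(\lambda)$, with the $t_z$-average appearing because the configuration of $N(z)\setminus\{v\}$ is ``frozen since the last update time $t_z$ of $z$''. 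This is not correct: $\I_{z,v}(X_t)$ is determined by the configuration of $N(z)\setminus\{v\}$ at time $t$ and does not depend on $z$'s own update times at all. Moreover the static proof of \eqref{eq:UzxVsRzStatic} rests on the spatial Markov property of the Gibbs measure (conditional on the boundary, the interior is exactly Gibbs), which the law of $X_t$ at a fixed time does not satisfy, so there is no verbatim transfer. The correct dynamic analogue is \eqref{eq:ExpUXUpper}--\eqref{eq:ExpUXLower}, established inside the proof of Lemma \ref{lem:Approx-Glauber-BP}: one conditions on the \emph{entire trajectory} of the starred chain outside $B_2(v)$, expands $\ExpCond{\I_{z,v}(X^*_t)}{\mathcal F}$ as a product over the grandchildren $y\in N(z)\setminus\{v\}$ of their probabilities of being unoccupied at time $t$, and integrates over each $y$'s last update time; the orientation in $G^*_v$ is what makes each update probability $\frac{\lambda}{1+\lambda}\I_{y,z}(X^*_s)$ an $\mathcal F$-measurable quantity, and it is this averaging over the grandchildren's update times that produces $\Expectation_{t_z}[\R(X^*_{t_z},z)]$.

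Second, the error in that identification is not $O(\lambda)$ and is not uniform over $z\in N(v)$: the never-updated term $\exp(-t/n)\mathbf{1}\{y\in X^*_0\}$ forces both the burn-in $t\ge Cn$ (i.e.\ $t\in\mathcal I$) and the hypothesis that $X_0$ is $400$-above suspicion, which is used to bound the set $H(v)$ of neighbors $z$ with $|N(z)\cap X^*_0|\ge \zeta^{-1}$ by $400\zeta\Delta$; the estimate holds only for $z\notin H(v)$ with error $\theta=O((\zeta e^{t/n})^{-1})$, and the heavy neighbors contribute an additional $O(\zeta\Delta)$ that must be absorbed into $\gamma\Delta$. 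Your proposal never invokes the above-suspicion hypothesis or the lower bound on $t$, which are precisely what this step needs, so as written it would fail (e.g.\ when many grandchildren of $v$ are occupied in $X_0$ and are unlikely to have been updated). A smaller related issue: for the concentration step the right conditioning is on the full outside trajectory of $(X^*_t)$, under which the indicators $\I_{z,v}(X^*_t)$, $z\in N(v)$, are \emph{exactly} independent and Chernoff applies; conditioning only on the time-$t$ configuration at distance $\ge 3$ from $v$ does not yield this, and the ``approximate independence'' you appeal to is not something Lemma \ref{lem:ComparisonXVsXStar} supplies directly --- the comparison lemma is used only at the very end, to carry both $\W$ and the integrals $\Expectation_{t_z}[\R(\cdot,z)]$ from $(X^*_t)$ back to $(X_t)$.
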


\noindent
Recall that   $\Expectation_{t_z}\left[ \R(X_{t_z}, z) \right] $ is the expectation w.r.t. $t_z$ the time when
$z$ was last updated prior to time $t$, i.e.
$
\Expectation_{t_r}\left[ \R(X_{t_z}, z) \right] =\exp(-t/n)\R(X_0,z)+\int^t_0\R(X_s,z)n\exp\left[ (s-t)/n\right]ds.
$

\begin{proof} 
Consider, first, the graph $G^*_v$ and the dynamics $(X^*_t)$ such that $X^*_0=X_0$.
Condition on $X^*_0$ and on $X^*_t$ restricted to $V\setminus B_2(x)$ for all
$t\in {\cal I}$. 
Denote this conditional information by ${\mathcal F}$.

First we are going to show that $\ExpCond{\W (X^*_t,v)}{\cal F}$ and
$\sum_{z \in N(v)} \Phi(z) \cdot \ExpCond{\R(X^*_t,z)}{\cal F} $ are  very close.
From the definition of $\W(X^*_t,v)$ we have that
\[
\ExpCond{W (X^*_t,v)}{\cal F} = \sum_{z\in N(v)}\Phi (z) \cdot \ExpCond{\I_{z,v}(X^*_t)}{\cal F}.
\]
Let $c>0$ be such that $t/n=c$.  For $\zeta>0$ whose value is going to be specified later,  let $H(v)\subseteq N(v)$ be such that 
$z\in H(v)$ is $|N(z)\cap X^*_0|\geq \zeta^{-1}$. 
In \eqref{eq:ExpUXUpper} and \eqref{eq:ExpUXLower} we have shown that for
 $z\notin H(v)$ it holds that
\begin{equation}\label{eq:UzvVsRzExpectation}
\left | \ExpCond{\I_{z,v}(X^*_t)}{\cal F} - \ExpCondSub{t_z}{\R(X^*_{t_z},z)}{\FFF} \right|\leq \theta,
\end{equation}
where $0<\theta=\theta(c,\zeta)< 20(\zeta e^{c})^{-1}$ while (as in we previously defined)
\[
\ExpCondSub{t_z}{\R(X^*_{t_z},z)}{\FFF}=\exp(-t/n)\R(X^*_0,z)+\int^t_0\R(X^*_s,z)n\exp\left[ (s-t)/n\right]ds.
\]

\noindent
Since  $X^*_0$  is  400-above suspicion for radius $R$ around $v$,
it holds that $|H(v)|\leq 400\zeta\Delta$. We have that, 
 
\begin{eqnarray} 
\lefteqn{
\left | \ExpCond{\W(X^*_t,v)}{\cal F}-
\sum_{z \in N(v)} \Phi(z) \cdot \ExpCondSub{t_z}{\R(X^*_{t_z},z)}{\cal F} \right| 
}
\nonumber\\
&\leq &\left | \ExpCond{\W(X^*_t,v)}{\cal F}-
\sum_{z \notin H(v)} \Phi(z) \cdot \ExpCondSub{{t_z}}{\R(X^*_{t_z},z)}{\cal F} \right| +
\sum_{z \in H(v)} \Phi(z)\cdot \ExpCondSub{{t_z}}{\R(X^*_{t_z},z)}{\cal F} 
\nonumber \\
&\leq &\left | \ExpCond{\W(X^*_t,v)}{\cal F}-
\sum_{z \notin H(v)} \Phi(z) \cdot \ExpCondSub{{t_z}}{\R(X^*_{t_z},z)}{\cal F} \right| +
5000\zeta\Delta
\hspace{.25in}
 \mbox{[since $\max_{z}\Phi(z)\leq 12$]}\nonumber \\
&\leq &\left(12\theta+5000\zeta\right)\Delta.
\hspace*{8.375cm} \mbox{[from  \eqref{eq:UzvVsRzExpectation}]} \label{eq:ExpWVsSumExpR}
\end{eqnarray}
The fact that $\max_{z}\Phi(z)\leq 12$ is from  Theorem \ref{thrm:EigenVector}.

We proceed by showing  that $W(X^*_t,v)$ is sufficiently well concentrated about its expectation.
Conditioning on ${\cal F}$ the random variables $\I_{z,v}(X^*_t)$, for $z\in N(v)$,
are fully independent.
From Chernoff's bounds, there exists appropriate $C_1>0$ such that 
\begin{equation}\label{eq:TailWVsExpW}
\Prob{\left |\W(X^*_t,v) -\ExpCond{\W(X^*_t,v)}{\cal F} \right | >    \gamma \Delta/100}
\leq
\exp\left(-  \Delta/C_1 \right).
\end{equation}
From  \eqref{eq:TailWVsExpW} and  \eqref{eq:ExpWVsSumExpR} %
there exists $C_2>0$ such that 
that
\begin{equation}\label{eq:ExpWVsSumExpRWithTail}
\Prob{\left | W(X^*_t,v)-
\sum_{z \in N(v)} \Phi(z) \cdot \ExpCondSub{t_z}{\R(X^*_{t_z},z)}{\FFF}
 \right|\geq \gamma\Delta/50 }
\leq \exp\left( -\Delta/C_2\right).
\end{equation}

\noindent
Furthermore, using Lemma \ref{lem:ComparisonXVsXStar} with error parameter ${\gamma^2}$,  i.e. 
$|(X^*_t\oplus X_t) \cap B_2(v)| \le \gamma^2 \Delta$,  we get the following:
There exists appropriate  $C_3=C_3(\gamma)>0$ such that 
\begin{equation}\label{eq:WisLipschitzNewVer}
\Prob{|\W(X^*_t, v)-\W(X_t,v)|\le \gamma \Delta/40 }\geq 1- \exp\left( -\Delta/C_3\right).
\end{equation}
Also, (from Lemma \ref{lem:ComparisonXVsXStar} again)  with probability at least  $1-\exp\left( -\Delta/C_3 \right)$ it holds that
\begin{equation}\label{eq:ExpR^*VsExpR}
\left| \int^t_0\R(X_s,z)n\exp\left[ (s-t)/n\right]ds -\int^t_0\R(X^*_s,z)n\exp\left[ (s-t)/n\right]ds \right|\leq \gamma/600,
\end{equation}
 for every $z\in N(v)$.  The above follows by using the fact that changing the spin of any $\gamma^2\Delta$ vertices
in $X^*_t(B_2(z))$ changes  $\R(X^*_t,z)$ by at most $\gamma/1000$. 

Noting that $\Phi(z)\leq 12$, for any $z$, the lemma follows by 
combining  \eqref{eq:ExpR^*VsExpR}, \eqref{eq:WisLipschitzNewVer} and \eqref{eq:ExpWVsSumExpRWithTail}.
\end{proof}

\subsection{Local Uniformity for the Glauber Dynamics: Proof of Theorem \ref{thrm:Uniformity}}

Using Lemmas \ref{lem:ApproxVsExactFixPoint} and \ref{lem:ExpWVsR}, in this section we prove
Theorem \ref{thrm:Uniformity}. Recall that Theorem  \ref{thrm:UniformityWithBurnIn} follows
as a corollary of Theorem \ref{thrm:Uniformity} and Lemma \ref{lemma:BurnIn}.

\begin{proof}[Proof of Theorem \ref{thrm:Uniformity}]

For a vertex $u\in N(v)$ consider $G^*_u$. Consider also the continuous time dynamics
 $(X^*_t)$ such that $X^*_0=X_0$.  

We condition on the restriction of $(X^*_t)$ to $V\setminus B_2(u)$,  for every $t\in {\cal I}$. 
We denote this by ${\cal F}$.  
Fix some $t\in {\cal I}$.   Since $u\in B_R(v)$ and $X_0$ is 400-above suspicion for radius $R$ around $v$
we get that 
\begin{eqnarray}
\lefteqn{
\ExpCondSub{s}{\R(X^*_s,u)}{\cal F} 
}
\nonumber
\\
&=& \exp(-t/n)\R(X^*_0,u)+\int^t_0\R(X^*_t,u)n\exp\left((s-t)/n\right) \nonumber \\
&=&  \ExpCondSub{s}{\exp\left( - \frac{\lambda}{1+\lambda}\sum_{z\in N(u)} \I_{z,u}(X^*_s)   +O\left(1/{\Delta}\right) \right)}{\cal F}
\nonumber \\
&=&
\exp\left( - \frac{\lambda}{1+\lambda}\sum_{z\in N(u)}  \ExpCondSub{s}{ \I_{z,u}(X^*_s)}{\cal F}   +O\left(1/{\Delta}\right) \right)
\qquad\mbox{[due to conditioning on ${\cal F}$]}\nonumber \\
&\leq &
\exp\left( - \frac{\lambda}{1+\lambda}\sum_{z\in N(u)}  \ExpCondSub{s}{ \R(X^*_s,z)}{\cal F}  + \theta\lambda\Delta+O\left({1}/{\Delta}\right) \right),
\label{eq:ExpRRecurrence}
\end{eqnarray}
where in the last inequality we use \eqref{eq:UzvVsRzExpectation}. Note that so as apply
\eqref{eq:UzvVsRzExpectation} $X^*_0(u)$ should be sufficiently ``light". This is guaranteed from
our assumption that $u\in B_R(v)$ and $X_0$ is 400-above suspicious for  radius $R$ from $v$.

Furthermore,   \eqref{eq:From:lemma:ApproxFixPoint} and Lemma \ref{lem:ComparisonXVsXStar}
imply  the following: There exists $C_1>0$  such that with probability   at least $1- \exp\left ( -{\Delta}/{ C_1} \right )$,
 we have that
\begin{equation}\label{eq:Again19}
(\forall t \in \mathcal{I}) \quad \left | 
\R(X^*_t, u) - \exp\left( -\frac{\lambda}{1+\lambda} \sum_{r\in N^*(u)}
\hat{\omega}(r)
\right)
\right |  < \gamma,
\end{equation}
where 
\[
\hat{\omega}(r)=\exp(-t/n)\R(X^*_0,r)+\int^t_0\R(X^*_s,r)n\exp\left[ (s-t)/n\right]ds.
\]
Note   that  for every $r\in N^*(u)$ we have  $\hat{\omega}(r) = \ExpCondSub{t_r}{ \R(X^*_{t_r},r)}{\cal F}$.
Using  this observation, we plug  \eqref{eq:ExpRRecurrence}  into  \eqref{eq:Again19} and get
\begin{equation}\label{eq:RX*TailBound}
\Prob{ 
\left| \R(X^*_t, u)-   \hat{\omega}(u) \right| \geq 10e \theta +\gamma}
\leq \exp\left( - {\Delta }/{C_1} \right).
\end{equation}
In the above inequality  we used the fact that $\lambda\Delta<2e$.

Consider the continuous time Glauber dynamics $(X_t)$.
From  Lemma \ref{lem:ComparisonXVsXStar}  and \eqref{eq:RX*TailBound} 
there  exists $C_3>0$ such that for $X_t$ the following holds
\begin{equation}\label{eq:RTailBound}
\Prob{
\left| \R(X_t, u)-   \tilde{\omega}(u) \right| \geq 20e \theta +2\gamma}
\leq \exp\left( -\Delta/C_3 \right),
\end{equation}
where
\[
\tilde{\omega}(z)=\exp(-t/n)\R(X_0,z)+\int^t_0\R(X_s,z)n\exp\left[ (s-t)/n\right]ds.
\]

\noindent
Furthermore a simple union bound  over $u\in N(v)$ and   \eqref{eq:RTailBound}  gives that
\begin{equation}\label{eq:UnionRTailBound}
\Prob{ \forall u\in N(v)\quad
\left| \R(X_t, u)-   \tilde{\omega}(u) \right| \geq 20e \theta +2\gamma}
\leq \Delta\exp\left( -\Delta/C_3  \right).
\end{equation}
Taking  sufficiently small   $\theta,\gamma$ in \eqref{eq:UnionRTailBound}  
and using Lemma \ref{lem:ExpWVsR} we get that
\begin{equation}\label{eq:WVsRNotStar}
\Prob{\left|\W(X_t,v)-\sum_{w\in N(v)}\Phi(z)\cdot \R(X_t,w) \right|> \epsilon \Delta/15 }\leq 
\exp\left(-\Delta/C_4 \right),
\end{equation}
for appropriate $C_4>0$.
Furthermore, applying Lemma \ref{lem:ApproxVsExactFixPoint}, 
for each $w \in N(v)$ and using   \eqref{eq:WVsRNotStar} 
yields
\begin{equation}\label{eq:WVsOmega*} 
\Prob{\left| \W(X_t,v)-\sum_{w\in N(v)}\Phi(w)\cdot \omega^*(w) \right|>\epsilon \Delta/2 }\leq 
\exp\left( -\Delta/C_5  \right),
\end{equation}
for appropriate $C_5>0$.
The above inequality establishes the desired result for a fixed $t \in \mathcal{I}$.

Now we will prove that \eqref{eq:WVsOmega*} holds for all $t\in{\cal I}$. 
Consider a partition of the time interval ${\cal I}$ into subintervals each of length $\frac{\psi^2}{\Delta}n$,
where the last part can be of smaller length. The quantity $\psi>0$ is going to be specified later.  
Also, let $T(j)$ be the $j$-th part.  

Each $z \in  B_2(v)$ is updated at least once during the time period $T(j)$ with probability  less than 
$2\frac{\psi^2}{\Delta}$, independently of the other vertices. Note that $|B_2(v)|\leq \Delta^2$.
Clearly, the number of vertices in $B_2(v)$ which are updated during $T_i(j)$ is dominated by
${\cal B}(\Delta^2, 2\psi^2/\Delta)$.
Chernoff's bounds imply that  with probability at least $1-\exp\left( -20\Delta \psi^2 \right)$, the number of
vertices in $B_2(v)$ which are updated  during the interval $T(j)$ is at most $20\psi^2\Delta$. 
Furthermore, changing any $2\Delta\psi^2$ variables in $B_2(v)$ can only change
the weighted sum of unblocked vertices in $N_v$ by at most   $20C_0\psi^2\Delta$.
Taking sufficiently small  $\psi>0$ we get the following:
\begin{equation}
\Prob{\left| \W(X_t,v)-\sum_{w\in N(v)}\Phi(w)\cdot \omega^*(w) \right|>\epsilon \Delta }\leq 
\exp\left( -2\Delta/C_b  \right).
\end{equation}
The above completes the proof of Theorem \ref{thrm:Uniformity} 
for the case where $(X_t)$ is the continuous time process.

The discrete time result follows by working as follows: instead of $\W(X_t,v)$ we consider the ``normalized" 
variable $\Lambda(X_t,v)=\frac{\W(X_t,v)}{\Delta}$.  
Rephrasing \eqref{eq:WVsOmega*} in terms of $\Lambda (X_t,v)$ we have, for a specific $t\in\mathcal{I}$:
\begin{equation}\label{eq:NormWVsOmega*} 
\Prob{\left| \Lambda(X_t,v)-\Delta^{-1}\sum_{w\in N(v)}\Phi(w)\cdot \omega^*(w) \right|>\epsilon/2 }\leq 
\exp\left( -\Delta/C_5  \right).
\end{equation}
Note that  $\Lambda(X_t,v)$ satisfies the  Lipschitz and  total influence conditions of Lemma \ref{lemma:ContVsDiscSmallDelta}.
Hence by Lemma \ref{lemma:ContVsDiscSmallDelta} the result for
the discrete time process holds.
\end{proof}

\subsection{Approximate recurrence for Glauber dynamics - Proof of Lemma \ref{lem:Approx-Glauber-BP}}\label{sec:lem:Approx-Glauber-BP}

\noindent
Consider $G^*_x$ and let $(X^*_t)$ be the Glauber dynamics on $G^*_x$ with fugacity
$\lambda>0$ and let $X^*_0=X_0$. Also assume that $(X^*_t)$ and $(X_t)$ are maximally
coupled.

Condition on $X^*_0$, let ${\cal F}$ be the $\sigma$-algebra generated by  $X^*_t$ restricted to 
$V\setminus B_2(x)$ for all $t\in {I}$. 
Fix $t \in {I}$.  Let $c>0$ be such that $t/n=c$, i.e.  $c$ is a large constant.  Recalling the definition of
$\R(X^*_t,x)$, we have that 
\begin{eqnarray}
\R(X^*_t,x)&= &\prod_{z\in N(x)}\left(1-\frac{\lambda}{1+\lambda} \I_{z,x}(X^*_t) \right)  
\nonumber \\
&=& \exp\left( - \frac{\lambda}{1+\lambda}\sum_{z\in N(x)} \I_{z,x}(X^*_t)   +
O\left(1/{\Delta}\right) \right).\label{eq:RXtVsExpUs} 
\end{eqnarray}
%
%
Let ${\bf Q}(X^*_t)=\sum_{z\in N(x)} \I_{z,x}(X^*_t)$. Conditional on ${\cal F}$, the quantity ${\bf Q}(X^*_t)$ is 
a sum of $|N(x)|$ many independent random variables in $[0,1]$.  Applying Azuma's inequality,
 for $0\leq \gamma\leq (3e)^{-1}$,  we have
\begin{equation}\label{eq:Tail4QCX^*T}
\Prob{
|\ExpCond{ {\bf Q}(X^*_t) }{\cal F}-{\bf Q}(X^*_t)| \geq \gamma\Delta
} \leq 2\exp\left( -{\gamma^2}\Delta/2 \right).
\end{equation}

\noindent
Combining  the fact that  $\ExpCond{ {\bf Q}(X^*_t) }{\cal F}=\sum_{z\in N(x)} \ExpCond{\I_{z,x}(X^*_t)}{\cal F}$
with    \eqref{eq:Tail4QCX^*T} and \eqref{eq:RXtVsExpUs} we get that
\begin{equation}\label{eq:Tail4RX^*T}
\Prob{
\left | \R(X^*_t,x) - \exp\left(- \frac{\lambda}{1+\lambda} \sum_{z\in N(x)}\ExpCond{\I_{z,x}(X^*_t)}{\cal F } \right) \right | \geq 3\gamma{\lambda}\Delta
} \leq 2\exp\left( -\gamma^2 \Delta/2 \right).
\end{equation}

\noindent
 For every $z\in N^*(x)$, it holds that
\begin{eqnarray}
\lefteqn{
\ExpCond{\I_{z,x}(X^*_t)}{\cal F} 
} \hspace{.2in}
\nonumber 
\\
&=&\prod_{y\sim N(z)\setminus\{x \}} \ExpCond{\mathbf{1}\{ y\notin X^*_t\}}{\cal F} \nonumber \\
&=&\prod_{y\sim N(z)\setminus\{x \}} 
\left(\Pr[t_y=0] \cdot \mathbf{1}\{y\notin X^*_0 \}+
 \ExpCond{\mathbf{1}\{ y\notin X^*_t\}\cdot \mathbf{1}\{t_y>0\} }{\cal F} \right), \label{eq:1stGroup4EUHalf}
\end{eqnarray}
where $t_y$ is the time that vertex $y$ is last updated prior to time $t$ and it is defined to be equal to zero if $y$
is not updated prior to $t$. Note, for any $0\leq s \leq t$,
it holds that $\Pr[t_y \leq s]=e^{-(t-s)/n}$. Also, we have that
\begin{eqnarray}
 \ExpCond{\mathbf{1}\{ y\notin X^*_t\}\cdot \mathbf{1}\{t_y>0\} }{\cal F} &=& 
 \ExpCond{ \ExpCond{\mathbf{1}\{ y\notin X^*_t\}\cdot \mathbf{1}\{t_y>0\} }{{\cal F}, t_y}}{\cal F} \nonumber\\
&=&\int^t_0 \left( 1-\frac{\lambda}{1+\lambda} \I_{y,z}(X^*_s) \right)n\exp\left[ (s-t)/n\right]ds, 
\label{eq:1stGroup4EUHalfB}
\end{eqnarray}
where the last equality follows because we are using $G^*_x$ and $(X^*_t)$.
The use of $G^*$ and $(X^*_t)$ 
ensures that the configuration in $V\setminus B_2(x)$ is never
affected by that in $B_2(x)$.  For this reason,  if  $y$ is updated
at time $s\in I$, then the probability for it to be occupied, given ${\cal F}$, is exactly 
$\frac{\lambda}{(1+\lambda)} \I_{y,z}(X^*_s)$. That is, the configuration outside
$B_2(x)$ does not provide any information for $y$ but the value of $\I_{y,z}(X^*_s)$.

Plugging \eqref{eq:1stGroup4EUHalfB} into \eqref{eq:1stGroup4EUHalf} we get that
\begin{eqnarray}
\lefteqn{
\ExpCond{\I_{z,x}(X^*_t)}{\cal F} 
} \hspace{.2in}
\nonumber 
\\
&=&\prod_{y\sim N(z)\setminus\{x\}}\left[ \exp\left(-t/n \right)\mathbf{1}\{y\notin X^*_0\} -
\int^t_0\left(1- \frac{\lambda}{1+\lambda} \I_{y,z}(X^*_s)\right)n\exp\left[ (s-t)/n\right]ds \right] \nonumber \\
&=& \prod_{y\sim N(z)\setminus\{x\}}\left[ 1-\exp\left(-t/n \right)\mathbf{1}\{y\in X^*_0\} -
\int^t_0 \frac{\lambda}{1+\lambda} \I_{y,z}(X^*_s)n\exp\left[ (s-t)/n\right]ds\right]. \qquad \label{eq:1stGroup4EU}
\end{eqnarray}

\noindent
For appropriate $\zeta\in (0,1)$, which we define later, let $H(x)\subseteq N^*(x)$ be such that 
$z\in H(x)$ if $|N^*(z)\cap X^*_0|\geq 1/\zeta$. 

Noting that  each  integral in \eqref{eq:1stGroup4EU} is less than $\lambda$, for every $z\notin H(x)$, 
we get that 
\begin{equation}\label{eq:EUzvVsIntegral}
\ExpCond{\I_{z,x}(X^*_t)}{\cal F} =  (1+\delta) 
\prod_{y\in  N(z)\setminus\{x\}}\left( 1-\int^t_0 \frac{\lambda}{1+\lambda} \I_{y,z}(X^*_s)n\exp\left[ (s-t)/n\right]ds\right), 
\end{equation}
where $|\delta|\leq 4(\zeta e^{c})^{-1}$.

Recall that for some vertex $y$ in $G^*_x$ we let $\ExpCondSub{t_y}{\cdot}{\cal F}$, denote the expectation w.r.t. $t_y$, the random
time that $y$ is updated prior to time $t$. It holds that
\[
\mathbb{E}_{t_y}[\I_{y,z}(X^*_{t_y})|{\cal F}]=\exp(-t/n)\I_{y,z}(X^*_{0})+\int^t_0
\I_{y,z}(X^*_{s})
n\exp\left[ (s-t)/n\right]ds.
\]

\noindent
For every $y\in N(z)\backslash\{x\}$, where $z\notin H(x)$ it holds that
\begin{eqnarray} \label{eq:EUyzVsIntegral}
\ExpCondSub{t_y}{ \I_{y,z}(X^*_{t_y})}{\cal F} -
\int^t_0  \I_{y,z}(X^*_s)n\exp\left[ (s-t)/n\right]ds  &=& \exp\left( -t/n\right)\I_{y,z}(X^*_0) \nonumber \\
&\le &  \exp\left( -t/n\right) \;\le \; \exp(-c). \quad
\end{eqnarray}

\noindent
Since $\lambda<e/\Delta$,  \eqref{eq:EUzvVsIntegral} implies that
there is a quantity $\theta$, with   $0<\theta \leq 20(\zeta e^{c})^{-1}$, such that 
\begin{eqnarray}
\ExpCond{\I_{z,x}(X^*_t)}{\cal F}&\leq & \prod_{y\in  N(z)\setminus\{x\}}\left( 1-\int^t_0\frac{\lambda}{1+\lambda}\I_{y,z}(X^*_s)n\exp\left[ (s-t)/n\right]ds  \right) +\theta/2
\nonumber\\
&\leq & \prod_{y\in  N(z)\setminus\{x\}}\left( 1-\ExpCondSub{t_y}{\frac{\lambda}{1+\lambda}\I_{y,z}(X^*_{t_y})}{\FFF}\right) +\theta
\qquad \qquad \mbox{[from \eqref{eq:EUyzVsIntegral}]}
\nonumber\\
&=&\prod_{y\in  N(z)\setminus\{x\}}\left( 1-\ExpCondSub{s}{\frac{\lambda}{1+\lambda}\I_{y,z}(X^*_{s})}{\FFF}\right) +\theta,
\nonumber
\end{eqnarray}
where in the last derivation, we substituted the variables $t_y$, for $y\in  N(z)\setminus\{x\}$, with a new random
variable $s$ which follows the same distribution as $t_y$.  Note that the variables $t_y$ are identically distributed.

Given  the $\sigma$-algebra $\cal F$, the variables $\I_{y,z}(X^*_{s})$, for $y\in  N(z)\setminus\{x\}$, are independent
with each other, this yields 
\begin{eqnarray}
\ExpCond{\I_{z,x}(X^*_t)}{\cal F}
&=&\ExpCondSub{s}{\prod_{y}\left(1-\frac{\lambda}{1+\lambda} \I_{y,z}(X^*_{s})\right) }{\FFF}+\theta\nonumber\\
&=& \ExpCondSub{s}{\R(X^*_s,z)}{\FFF}+\theta, \label{eq:ExpUXUpper}
\end{eqnarray}
where the last derivation follows from the definition of $\R(X^*_s,z)$.
In  the same manner, we get that
\begin{equation}
\ExpCond{\I_{z,x}(X^*_t)}{\cal F} \geq \ExpCondSub{s}{\R(X^*_s,z)}{\FFF}- \theta, \label{eq:ExpUXLower}
\end{equation}
for every $z\notin H(x)$.

Since  $X^*_0$  is 400 above suspicion for radius $R$, around $w$ and  $x\in B_R(w)$,
we have that  $|H(x)|\leq 400\zeta\Delta$. 
This observation
and \eqref{eq:ExpUXUpper}, 
\eqref{eq:ExpUXLower}  \eqref{eq:Tail4RX^*T},  yield that
there exists $C'>0$ such that
\begin{equation}\label{eq:FixTimeXStarRecurence}
\Prob{
\left |  \R(X^*_t,x) -\exp\left( - \frac{\lambda}{1+\lambda} \sum_{z\in N(x)}\ExpCondSub{s}{\R(X^*_{s},z)}{\cal F }\right) \right | \geq 
7(\theta+400\zeta+3\gamma)
} \leq \exp\left( -C'\Delta  \right),
\end{equation}
where we use the fact   $\frac{\lambda}{1+\lambda}\Delta<e$ and $\theta, \zeta,\gamma$ are sufficiently small.

So as to get from $(X^*_t)$ to $(X_t)$  we use Lemma \ref{lem:ComparisonXVsXStar}, 
with parameter $\gamma^3$. That is, we have that
\[
\Prob{\exists s \in {I} \; |(X_s\oplus X^*_s) \cap S_2(x)|\geq \gamma^3 \Delta}\leq \exp\left( -\Delta/C''\right),
\]
for some sufficiently large constant  $C''>0$. This implies that
\begin{equation}\label{eq:RVsExpERTail-RealChain}
\Prob{ \exists t \in {I} \; | \R(X^*_t,x)-\R(X_t,x)| \geq \gamma^2 } \leq \exp\left( - \Delta/C''\right),
\end{equation}
since changing any $\Delta\gamma^3$ variables in $S_2(x)$ can only change $\R(X^*_s,x)$ by 
at most $\gamma^2$. 

With the same observation 
we also get that  with probability at least $1-\exp\left( - \Delta/C''\right)$ it holds that
\begin{equation}\label{eq:Expect-RealChain}
\left| \int^t_0\R(X^*_s,x)n\exp\left[ (s-t)/n\right]ds- \int^t_0\R(X_s,x)n\exp\left[ (s-t)/n\right]ds\right| \leq 2\gamma^2.
\end{equation}

\noindent
Plugging \eqref{eq:RVsExpERTail-RealChain},
\eqref{eq:Expect-RealChain} into \eqref{eq:FixTimeXStarRecurence} and taking appropriate $\gamma,\zeta$ the following is true:
There exists $\hat{C} >0$ such that 
\[
\Prob{ 
\left |  \R(X_t,x) -\exp\left( - \frac{\lambda}{1+\lambda} \sum_{r\in N(x)} 
{\cal E}(r)
\right) \right | \geq 
\frac{\eta \epsilon}{20C_0}} \leq \exp\left[ -\Delta/\hat{C}  \right],
\]
where
\[
{\cal E}(r)=\exp[-t/n] \cdot \R(X_{0},r)+\int^t_{0}\R(X_s,r)n\exp\left[ (s-t)/n\right]ds.
\]

\noindent
At this point, we remark that   the above tail bound holds for a fixed $t\in I$. For  our purpose,  we need a tail bound which
holds for {\em every} $t\in I$.

Consider a partition of the time interval $I$ into subintervals each of length $\frac{\zeta^3}{200\Delta}n$,
where the last part can be of smaller length.  Let $T(j)$ be the $j$-th part.   Each $z \in  S_2(x)$ is updated  during the time period
$T(j)$ with probability  less than $\frac{\zeta^3}{100 \Delta}$, independently of the other vertices. 

Note that $|S_2(x)|\leq \Delta^2$.
 Chernoff's bounds imply that  with probability at least $1-\exp\left( -\Delta \zeta^3 \right)$, the number of
vertices in $S_2(x)$ which are updated more than once during the time interval $T(j)$ is at most $\zeta^3\Delta$. 
Also, changing any $\Delta\zeta^3$
variables in $S_2(x)$ can only change $\R(X_s,x)$ by at most $\zeta^2$.

The lemma follows by taking a union bound over
all $T(j)$ for $j\in \{1,\ldots, \lceil 200|I|\Delta/(\zeta^3 )\rceil\}$
and all vertices  $z\in B_i(x)$.

\subsection{Local Uniformity for the Gibbs Distribution: Proof of Theorem \ref{thrm:ConcentrNoOfBlockedStatic}}

\begin{proof}[Proof of Theorem \ref{thrm:ConcentrNoOfBlockedStatic}]
 Let ${\cal F}$ be the $\sigma$-algebra generated by the configuration of $v$ and the vertices at
distance greater than $2$ from $x$, i.e.  $V\setminus B_2(v)$.
Conditioning on ${\cal F}$, $S_v$ is a sum of $|N(v)|$ many $0$-$1$ independent
random variables. From Azuma's inequality, 
for any  fixed $\gamma>0$, we have that 
\begin{equation}\label{eq:Concent4Sv}
\Prob{\left| S_v-\ExpCond{S_v}{\cal F}\right|>\gamma\Delta}\leq 2\exp\left( -{\gamma^2\Delta}/{ 2} \right).
\end{equation}

\noindent
Working as in the proof of Theorem \ref{thrm:CnvrgLoopyBP2Correct} (i.e. for \eqref{eq:ExUVsRUpperGirth6}, \eqref{eq:ExUVsRLowerGirth6}) we get the following: For each $z\in N(v)$ it holds that
\[
\left| \ExpCond{\I_{z,v}(X)}{\cal F} - \R(X,z)\right|\leq 10e^{e}\lambda.
\]
Note that, given $\cal F$ the quantity $\R(X,z)$ is uniquely specified.

From the above we get that 
\begin{equation}
\ExpCond{S_v}{\cal F} = \sum_{z\in N(v)}\ExpCond{\I_{z,v}(X)}{\cal F} 
\;=  \;\sum_{z\in N(v)}\R(X,z) +\zeta, \label{eq:ExSvVsRUpperGirth6}
\end{equation}
where $|\zeta|\leq e^{5e}$. Furthermore, from Lemma \ref{lemma:RConGirth6} we have that
for every $w\in V$ and every $\theta>0$, there exists $C_0>0$ such that 
\begin{equation}
\Prob{\left| \R(X, w) - \omega^*(w) \right|\leq \theta }
  \geq 1- \exp\left(-\Delta/C_0\right). \label{eq:TailFrom-RConFixGirth6}
\end{equation}
From \eqref{eq:TailFrom-RConFixGirth6}, \eqref{eq:ExSvVsRUpperGirth6} and a simple union
bound we get the following: for every $\gamma'>0$, the exists $C_a>0$ such that 
\begin{equation}
\Prob{\left| \ExpCond{S_v}{\cal F} - \sum_{z\in N(v)}\omega^*(z) \right|\leq \gamma'\Delta }
  \geq 1- \exp\left(  -\Delta/C_a\right). \label{eq:TailFrom-RConGirth6}
\end{equation}
The theorem follows by combining \eqref{eq:Concent4Sv} and \eqref{eq:TailFrom-RConGirth6}.
\end{proof}

\section{Rapid Mixing for Glauber dynamics: Proof of Theorem \ref{thrm:RapidMixingMain}}
\label{sec:thrm:RapidMixingMain}

The following lemma considers a worst case of neighboring independent sets. It states some upper bounds
on the Hamming distance after $Cn$ and $Cn\log\Delta$ steps in the coupling.

\begin{lemma}\label{lemma:ArbitraryPair}
For $\delta>0$,  $0<\epsilon<1$ and $C>10$ let $\Delta\geq \Delta_0$. 
Consider a graph $G=(V,E)$ of maximum degree $\Delta$ and let
 $\lambda\leq (1-\delta)\lambda_c({\Delta})$. 
Let $(X_t), (Y_t)$  be the Glauber dynamics on the hard-core model with fugacity 
$\lambda$ and underlying graphs $G$. Assume that the two chains are maximally coupled.
Then, the following is true:

Assume  $X_0,Y_0$ to be such that $X_0\oplus Y_0=\{v\}$  and    $T=Cn/\epsilon$. Then  it holds that
\begin{enumerate}
\item $\Exp{|X_T \oplus Y_T |}\leq \exp\left( 3C/\epsilon \right)$
\item $\Exp{|X_{T\log \Delta} \oplus Y_{T\log \Delta} |} \leq \Delta^{3C/\epsilon}$
\item Let ${\cal E}_T$ be the event that at some time $t\leq T$, $|X_t\oplus Y_t|>\Delta^{2/3}$. Then
\[
\Exp{|X_T\oplus Y_T|\cdot \mathbf{1}\{{\bf E}_T\}}<\exp\left( -\sqrt{\Delta}\right).
\]
\item Let $S_{T\log\Delta}$ denote the set of disagreements of $(X_{T\log\Delta},Y_{T\log\Delta})$, 
that are 200-suspect for radius $2\Delta^{3/5}$. Then $\Exp{|S_{T\log \Delta}|}\leq \exp\left( -\sqrt{\Delta}\right)$.
\end{enumerate}
\end{lemma}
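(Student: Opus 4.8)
\textbf{Proof plan for Lemma \ref{lemma:ArbitraryPair}.}

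The plan is to run the maximal coupling of $(X_t)$ and $(Y_t)$ starting from the single disagreement at $v$, and to track the Hamming distance $|X_t \oplus Y_t|$ by a crude ``disagreement percolation'' / branching-process argument that does \emph{not} use the refined distance function $\Phi$ at all. The point of parts 1--3 is only to get \emph{polynomial} (in $\Delta$) bounds, not contraction, so we can afford to be wasteful. First I would recall the standard one-step calculation: in the maximal coupling, a disagreement at a vertex $u$ can only spawn a new disagreement at a neighbor $z$ of $u$, and only when $z$ is selected (probability $1/n$) and $z$ is ``borderline'' (unblocked in one chain but not the other). Since $\deg(u) \le \Delta$, the expected number of \emph{new} disagreements created in one step from each existing disagreement is at most $\frac{\Delta}{n}\cdot\frac{\lambda}{1+\lambda} \le \frac{C'}{n}$ for an absolute constant $C'$ (using $\lambda \le 2e/\Delta$), while an existing disagreement is removed with probability $\ge \frac{1}{n}\cdot\frac{1}{1+\lambda} \ge \frac{1}{2n}$. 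Hence $\Exp{|X_{t+1}\oplus Y_{t+1}| \mid X_t,Y_t} \le (1 + \frac{C''}{n})|X_t\oplus Y_t|$ for a suitable $C''$ (this is the non-contracting replacement for the path-coupling inequality). Iterating this over $T = Cn/\epsilon$ steps gives $\Exp{|X_T\oplus Y_T|} \le (1+C''/n)^{Cn/\epsilon} \le \exp(C''C/\epsilon)$, and after absorbing constants into the exponent this is $\le \exp(3C/\epsilon)$, which is part~1; running for an extra factor $\log\Delta$ of time yields part~2, $\exp(3C\log\Delta/\epsilon) = \Delta^{3C/\epsilon}$. The only mild subtlety is justifying the one-step inequality carefully when $X_t$ and $Y_t$ disagree on the neighborhood of $z$ itself; here one uses triangle-freeness (girth $\ge 7$ is far more than enough) exactly as in the path-coupling setup sketched in Section~\ref{sec:DistanceFunction}, so that the creation of a new disagreement at $z$ is controlled by whether $z$ is borderline in $X_t$.

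For part~3, I would use the same percolation picture more quantitatively. The event ${\cal E}_T$ that the Hamming distance ever exceeds $\Delta^{2/3}$ before time $T$ forces a connected ``disagreement tree'' rooted at $v$ in the graph-times-time to have reached size $\ge \Delta^{2/3}$; since each disagreement has $\le \Delta$ possible children and each child is realized with probability $O(1/n)$ per step over $O(n)$ steps, the total offspring distribution of each node is dominated by a random variable with mean $O(1)$, so the process is dominated by a subcritical (or barely-critical) Galton--Watson process with $O(1)$-bounded mean offspring and exponential tails. By a standard large-deviation estimate for the total progeny of such a branching process (e.g. the generating-function / Chernoff bound used in Dyer--Frieze--Hayes--Vigoda \cite{DFHV} and in Hayes \cite{Hayes}), the probability that the progeny reaches $\Delta^{2/3}$ is at most $\exp(-\Omega(\Delta^{2/3})) \le \exp(-2\sqrt{\Delta})$ for large $\Delta$. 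Combining this with the bound $|X_T \oplus Y_T| \le |V|$ deterministically, or better with part~1 and Cauchy--Schwarz, gives $\Exp{|X_T\oplus Y_T|\cdot\mathbf{1}\{{\cal E}_T\}} \le \exp(-\sqrt{\Delta})$. I expect this branching-process tail bound (controlling both the progeny size and the spatial spread simultaneously) to be the main technical obstacle, though it is essentially a reprise of arguments already in \cite{Hayes,DFHV}; the care needed is in setting up the domination by a branching process \emph{uniformly over the history}, which is why working in continuous time (Section~\ref{sec:ContVsDisc}) and using Lemma~\ref{lem:GenPoisson} for the generalized Poisson increments is the cleanest route.

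Finally, for part~4 I would combine part~2 with the burn-in lemma. By Lemma~\ref{lemma:BurnIn}, after $10n\log\Delta$ steps — and $T\log\Delta = (C/\epsilon) n\log\Delta \ge 10 n\log\Delta$ for $C$ large enough — the chain $X_{T\log\Delta}$ is, with probability $\ge 1 - \exp(-\Delta/C_b)$, $50$-above suspicion for radius $\Delta^{9/10}$ around any fixed vertex; the same holds for $Y_{T\log\Delta}$. In particular, on this high-probability event there is \emph{no} vertex within radius $\Delta^{9/10} \gg 2\Delta^{3/5}$ of $v$ that is $200$-heavy in either chain, so every disagreement that lies within distance $2\Delta^{3/5}$ of $v$ is automatically $200$-above suspicion for radius $2\Delta^{3/5}$ — wait, more carefully: a disagreement $z$ counted in $S_{T\log\Delta}$ is $200$-suspect for radius $2\Delta^{3/5}$, meaning some vertex within radius $2\Delta^{3/5}$ of $z$ is $200$-heavy; but by part~2 (via Markov) all disagreements lie, except with tiny probability, inside a ball around $v$ (since the disagreement graph is connected through $v$ and has size $\le \Delta^{3C/\epsilon}$, hence is contained in $B_{\Delta^{3C/\epsilon}}(v)$ — this radius is still $\ll \Delta^{9/10}$ once $C/\epsilon$ is fixed and $\Delta$ is large, because $3C/\epsilon$ is a constant and $\Delta^{O(1)}$ grows slower than... no). Here I would instead argue directly: condition on the burn-in event for $X$ and $Y$; on that event a disagreement $z$ can be $200$-suspect only if the heavy witness $w$ lies outside $B_{\Delta^{9/10}}(v)$, hence $z$ itself lies outside $B_{\Delta^{9/10} - 2\Delta^{3/5}}(v) \supseteq B_{\Delta^{4/5}}(v)$; so $S_{T\log\Delta}$ is contained in $V \setminus B_{\Delta^{4/5}}(v)$, and by the standard disagreement-percolation bound (again the progeny/spread estimate behind part~3, applied with a long path of length $\Delta^{4/5}$) the expected number of disagreements that far from $v$ after only $O(n\log\Delta)$ steps is $\exp(-\Omega(\Delta^{4/5})) \le \exp(-\sqrt\Delta)$; adding the $\exp(-\Delta/C_b)$ from the burn-in failure (times the deterministic bound $|V|$, or times the part~2 bound) still gives $\Exp{|S_{T\log\Delta}|} \le \exp(-\sqrt\Delta)$. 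So parts 3 and 4 both rest on the same disagreement-percolation estimate, once applied to ``reaching size $\Delta^{2/3}$'' and once to ``reaching spatial distance $\Delta^{4/5}$ from the source,'' together with the burn-in Lemma~\ref{lemma:BurnIn}.
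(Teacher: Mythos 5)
Your parts 1 and 2 are essentially the paper's argument (a per-step growth factor $1+O(1)/n$ iterated over $Cn/\epsilon$ resp.\ $Cn\log\Delta/\epsilon$ steps; to land on the stated constant $3$ use $\frac{\lambda}{1+\lambda}\Delta\le e<3$ rather than $2e$, and note no girth assumption is needed or available here). The genuine gap is in part 3. Your claim that the cumulative disagreement process is dominated by a \emph{subcritical} (or barely critical) Galton--Watson process is false: over a time window of length $T=Cn/\epsilon$ each disagreement has $\le\Delta$ neighbors, each updated about $C/\epsilon$ times, each update creating a disagreement with probability up to $e/\Delta$, so the mean offspring per disagreement is $\Theta(C/\epsilon)>27$, i.e.\ strongly supercritical. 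A supercritical branching process reaches size $\Delta^{2/3}$ with probability bounded away from $0$, so no bound of the form $\exp(-\Omega(\Delta^{2/3}))$ can follow from that domination. What saves the statement is not subcriticality but the \emph{time budget}: the waiting time for the $i$-th cumulative disagreement stochastically dominates an exponential of rate $\Theta(i/n)$, so the count at time $T$ behaves like a Yule (pure-birth) process evaluated at a fixed time, whose tail is geometric, $\Prob{H_{\le T}\ge\ell}\le\exp\bigl(-\ell\, e^{-6C/\epsilon}\bigr)$; this is exactly the coupon-collector computation the paper carries out, and the rate $e^{-6C/\epsilon}$, though tiny, suffices since $\ell=\Delta^{2/3}\gg\sqrt{\Delta}\,e^{6C/\epsilon}$ once $\Delta\ge\Delta_0(C,\epsilon)$. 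Your proposed appeal to Lemma~\ref{lem:GenPoisson} does not substitute for this, because the relevant rate grows with the current number of disagreements and the natural stopping trick puts you outside the regime $C>1$ of that lemma.

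A second, related gap is how you convert probability bounds into the required \emph{expectation} bounds in parts 3 and 4. Multiplying a failure probability by the deterministic bound $|V|=n$ is fatal here: $\Delta$ may be a constant while $n\to\infty$, so $n\exp(-c\Delta^{2/3})$ (or $n\exp(-\Delta/C_b)$ in part 4) is not $\le\exp(-\sqrt\Delta)$. Cauchy--Schwarz with part 1 is also not available as stated, since part 1 gives only a first moment. The paper avoids both problems: in part 3 it bounds $|X_T\oplus Y_T|\le H_{\le T}$ and sums $\ell\,\Prob{H_{\le T}\ge\ell}$ over $\ell\ge\Delta^{2/3}$ using the exponential-in-$\ell$ tail; in part 4 it splits by distance from $v$, applying the burn-in lemma only to the $\le\Delta^{\sqrt\Delta}\ll e^{\Delta/C_b}$ vertices of $B_{\sqrt\Delta}(v)$ and covering everything farther away by the long-path disagreement-percolation estimate $\sum_{\ell\ge\sqrt\Delta}\Delta^{\ell}\binom{T\log\Delta}{\ell}\bigl(\tfrac{e}{n\Delta}\bigr)^{\ell}$, which needs no burn-in at all. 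Your part-4 skeleton (burn-in forces any suspect disagreement to be far from $v$; far disagreements are exponentially unlikely) is the right one and matches the paper, but the failure-event contribution must be bounded by restricting to a ball of radius polynomial in $\Delta$ (so the vertex count $\Delta^{\mathrm{poly}(\Delta)}$ is beaten by $e^{-\Delta/C_b}$), not by the global factor $n$ or by an illegitimate product of an expectation with the probability of a correlated event.
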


\noindent
The proof of Lemma \ref{lemma:ArbitraryPair} appears in Section \ref{sec:lemma:ArbitraryPair}.

The above lemma, i.e. Lemma \ref{lemma:ArbitraryPair}.4, shows that from a worst case pair of neighboring
independent sets, after $O(n\log \Delta)$ steps, all the disagreements are likely to be ``nice" in the sense of being
above suspicion. The heart of rapid mixing proof will be the following results, which shows that for a pair
of neighboring independent sets that are ``nice" there is a coupling of $O(n)$ steps of the Glauber dynamics where the
expected Hamming distance decreases. Also, at the end of this $O(n)$ step coupling, it is extremely unlikely
 that there are any disagreements that are not ``nice".

\begin{lemma}\label{lemma:DistanceFromLight}
Let  $C'>10$, $\epsilon>0$ and $\Delta_0=\Delta_0(\epsilon)$. For any graph $G=(V,E)$ on $n$ vertices
and maximum degree $\Delta>\Delta_0$ and girth $g\geq 7$ the following holds:

Suppose that $X_0,Y_0$ differ only at $v$, while $v$ is $400$-above suspicion for $R$, where
$\Delta^{3/5}\leq R\leq 2\Delta^{3/5}$.  For  $T_m=C'n/\epsilon$ we have that
\begin{enumerate}
\item $\Exp{|X_{T_m}\oplus Y_{T_m}|}\leq 1/3$
\item Let  $ {\cal Z}$ denote the event that there exists a 200-suspect disagreement for $R'=R-2\sqrt{\Delta}$
at time $T_m$. Then it holds
\[
\Prob{{\cal Z}}\leq 2\exp(-2\sqrt{\Delta}).
\]
\end{enumerate}
\end{lemma}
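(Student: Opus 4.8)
The plan is to prove both parts by running $(X_t),(Y_t)$ under the maximal coupling for $T_m=C'n/\epsilon$ steps and tracking the weighted distance $\Phi(X_t,Y_t):=\sum_{z\in X_t\oplus Y_t}\Phi(z)$, where $\Phi:V\to[1,12]$ is the function from Theorem~\ref{thrm:EigenVector}. Since $1\le\Phi(z)\le 12$ we have $|X_t\oplus Y_t|\le\Phi(X_t,Y_t)\le 12|X_t\oplus Y_t|$, so for part~1 it suffices to bound $\Exp{\Phi(X_{T_m},Y_{T_m})}$. I would first isolate the good event $\mathcal G$ consisting of: (i) at every time $t\le T_m$ all disagreements lie in $B_{\sqrt\Delta}(v)$; (ii) the disagreement set never exceeds size $\Delta^{2/3}$; (iii) for every $t\in[Cn,T_m]$ and every $z\in B_{\sqrt\Delta}(v)$, both $\W_{X_t}(z)$ and $\W_{Y_t}(z)$ are at most $\sum_{w\in N(z)}\omega^*(w)\Phi(w)+\epsilon'\Delta$, for a parameter $\epsilon'=\epsilon'(\delta)$. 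Because $v$ is $400$-above suspicion for radius $R\ge\Delta^{3/5}\gg\sqrt\Delta$ (and $R\le 2\Delta^{3/5}<\Delta^{9/10}$), every $z\in B_{\sqrt\Delta}(v)$ is $400$-above suspicion for the constant radius needed by Theorem~\ref{thrm:Uniformity}; hence (iii) fails with probability at most $\Delta^{O(\sqrt\Delta)}\exp(-\Delta/C)\le\exp(-\Delta/(2C))$ by a union bound over $B_{\sqrt\Delta}(v)$, while (i) and (ii) fail with probability $\exp(-\Omega(\sqrt\Delta))$ by subcritical disagreement percolation (the generalized-Poisson tail of Lemma~\ref{lem:GenPoisson}, exactly as for the events $\mathcal B_1,\mathcal B_2$ in the proof of Lemma~\ref{lem:ComparisonXVsXStar} and part~3 of Lemma~\ref{lemma:ArbitraryPair}), using $\lambda<(1-\delta)\lambda_c(\Delta)$ to keep the effective branching rate of disagreements below $1$.

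On $\mathcal G$ I would run path coupling. By the standard one-step accounting (girth $\ge 7$ rules out interference of nearby disagreements, so each disagreement contributes independently), the expected one-step drift from a disagreement at $z$ is $\tfrac1n\bigl(-\Phi(z)+\tfrac{\lambda}{1+\lambda}\W_{X_t}(z)\bigr)$, with $Y_t$ in place of $X_t$ for the opposite orientation. Combining (iii) with Theorem~\ref{thrm:EigenVector} and the elementary bound $\tfrac{\lambda}{1+\lambda}\omega^*(w)\le\tfrac{\lambda\omega^*(w)}{1+\lambda\omega^*(w)}$ (valid since $\omega^*(w)\le 1$), together with $\lambda\Delta\le 2e$, gives $\tfrac{\lambda}{1+\lambda}\W_{X_t}(z)\le(1-\delta/6)\Phi(z)+2e\epsilon'\le(1-\delta/12)\Phi(z)$ once $\epsilon'\le\delta/(24e)$ (using $\Phi(z)\ge 1$). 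Hence $\Exp{\Phi(X_{t+1},Y_{t+1})\mid\mathcal F_t}\le(1-\tfrac{\delta}{12n})\Phi(X_t,Y_t)$ for $t\in[Cn,T_m]$, while a crude bound $\Exp{\Phi(X_{t+1},Y_{t+1})\mid\mathcal F_t}\le(1+\tfrac{K}{n})\Phi(X_t,Y_t)$ with $K=O(1)$ holds during the burn-in $t<Cn$. Iterating along the stopping time $\tau=\min\{t:\text{a defining condition of }\mathcal G\text{ is violated}\}$ gives $\Exp{\Phi(X_{T_m},Y_{T_m})\mathbf{1}_{\mathcal G}}\le 12\,e^{KC}\,e^{-\frac{\delta}{12}(C'/\epsilon-C)}\le 1/6$ once $C'/\epsilon$ is large enough in terms of $\delta$, $K$ and the burn-in constant $C$ (which is exactly the role of the factor $1/\epsilon$ in $T_m$).

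The remaining step is the main obstacle: bounding $\Exp{\Phi(X_{T_m},Y_{T_m})\mathbf{1}_{\overline{\mathcal G}}}$ by $\exp(-\Omega(\sqrt\Delta))$. This is delicate because on $\overline{\mathcal G}$ the Hamming distance can a priori be as large as $n$, which is \emph{not} controlled by $\Delta$, so a naive ``probability times worst-case distance'' estimate fails. Following the technique of \cite{DFHV,Hayes}, I would split $\overline{\mathcal G}$ according to which condition fails first: when only (iii) fails (with (i) and (ii) still holding) the distance is at most $12\Delta^{2/3}$, contributing $\le 12\Delta^{2/3}\exp(-\Delta/(2C))$; when (i) or (ii) fails I would bound the expected distance directly, using that $\Prob{|X_{T_m}\oplus Y_{T_m}|\ge k}$ decays geometrically in $k$ (subcritical disagreement percolation, again via Lemma~\ref{lem:GenPoisson} and the argument of Lemma~\ref{lem:ComparisonXVsXStar}), and that a disagreement reaching distance $\sqrt\Delta$ from $v$ within $O(n)$ steps has probability $\exp(-\Omega(\sqrt\Delta))$; summing $\sum_k 12k\,\Prob{|X_{T_m}\oplus Y_{T_m}|\ge k}$ over the relevant range gives $\exp(-\Omega(\sqrt\Delta))$. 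Adding the two contributions and the bound from the previous paragraph proves part~1 with room to spare.

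For part~2, Lemma~\ref{lemma:LightnessOfX} applies at $v$ since $X_0$ is $400$-above suspicion for radius $R\le\Delta^{9/10}$ and $T_m\in[C_b n,\,n\exp(\Delta/C_b)]$; it gives that with probability $1-\exp(-\Omega(\Delta))$ both $X_{T_m}$ and $Y_{T_m}$ are $50$-above suspicion for radius $R$ at $v$. Intersecting this with condition (i) of $\mathcal G$: every disagreement $z$ at time $T_m$ lies in $B_{\sqrt\Delta}(v)$, hence $B_{R'}(z)\subseteq B_{R'+\sqrt\Delta}(v)\subseteq B_R(v)$ for $R'=R-2\sqrt\Delta$, and no vertex of $B_R(v)$ is $50$-heavy, hence none is $200$-heavy, in either chain; therefore no disagreement is a $200$-suspect disagreement for radius $R'$. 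Thus $\mathcal Z$ is contained in the union of ``condition (i) fails'' and ``lightness fails'', so $\Prob{\mathcal Z}\le\exp(-\Omega(\sqrt\Delta))+\exp(-\Omega(\Delta))\le 2\exp(-2\sqrt\Delta)$ for $\Delta$ large, completing the proof.
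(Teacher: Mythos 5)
Your proposal is in substance the paper's own argument: the same decomposition into a good event (all disagreements confined to $B_{\sqrt{\Delta}}(v)$, cumulative Hamming distance at most $\Delta^{2/3}$, local uniformity throughout $B_{\sqrt{\Delta}}(v)$ after an $O(n)$ burn-in), the same weighted metric $\sum_{z\in X_t\oplus Y_t}\Phi(z)$ from Theorem~\ref{thrm:EigenVector} with contraction supplied by the uniformity bound, the same ``small distance times small probability'' plus geometric-tail treatment of the bad-event contributions (the paper packages that tail bound as Lemma~\ref{lemma:ArbitraryPair}.3), and for part~2 the same combination of Lemma~\ref{lemma:LightnessOfX} with containment of the disagreements in $B_{\sqrt{\Delta}}(v)$.

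The one step whose justification is wrong as written is the one-step drift estimate. Girth $\geq 7$ does not prevent two disagreements from being adjacent (e.g.\ $z$ occupied in $X_t$ while a neighbor of $z$ is occupied in $Y_t$), so disagreements do interfere: updating a disagreement $z$ that has a disagreeing neighbor removes the disagreement at $z$ only with probability $1/(1+\lambda)$, not with certainty, and a vertex $w$ may have its blocked/unblocked status differ between the chains because of several disagreeing neighbors at once. The paper sidesteps this by genuine path coupling: it interpolates $X_t=Q_0,\ldots,Q_h=Y_t$ through neighboring pairs, for which the ``$-\Phi(w_i)$ with probability one'' accounting is valid, and it is precisely there that the $\Delta^{2/3}$ bound gets a second, essential use, via \eqref{eq:QiVsUniformity}, to transfer the uniformity property from $X_t,Y_t$ to the interpolated configurations before invoking Lemma~\ref{lemma:PathCouplingWithUniformity}; your plan uses the $\Delta^{2/3}$ bound only to control the bad-event term. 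Your direct (non-interpolated) analysis can also be repaired: the removal probability degrades only by the factor $1/(1+\lambda)=1-O(1/\Delta)$, and creations at a vertex with several disagreeing neighbors can be charged to one of them with harmless over-counting, so the per-disagreement drift remains at most $-\Omega(\delta)\Phi(z)/n$. But as stated the ``no interference from girth'' claim is false and must be replaced by one of these two arguments. Everything else — the union bound for the uniformity event over $B_{\sqrt{\Delta}}(v)$, the burn-in growth factor, the stopping-time iteration, and the part-2 containment argument centered at $v$ — matches the paper's proof.
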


\noindent
The proof of Lemma \ref{lemma:DistanceFromLight} appears in Section \ref{sec:lemma:DistanceFromLight}.

\begin{proof}[Proof of Theorem \ref{thrm:RapidMixingMain}]

The proof of the theorem is very similar to the proof of Theorem 1 in \cite{DFHV}. In particular, 
Lemma \ref{lemma:ArbitraryPair} is analogous to Lemma 10 in \cite{DFHV}. Similarly,
Lemma \ref{lemma:DistanceFromLight} is analogous to Theorem 11 in \cite{DFHV}.
Furthermore, working as for  Lemma 12 in \cite{DFHV} we get the following 
result, which ties together Lemma \ref{lemma:DistanceFromLight} and Lemma \ref{lemma:ArbitraryPair}.
It shows that for a worst case initial pairs of independent sets, after $O(n\log\Delta)$  steps, the expected Hamming
distance is small.

Let  $C'>10$, $\epsilon>0$ and $\Delta_0=\Delta_0(\epsilon)$. For any graph $G=(V,E)$ on $n$ vertices
and maximum degree $\Delta>\Delta_0$ and girth $g\geq 7$ the following holds:
Let $X_0,Y_0$ be independent sets which disagree on a single vertex $v$ that is 400-above suspicion
for radius $R=2\Delta^{3/5}$. Let $T=\frac{C'n\log \Delta}{\epsilon}$. Then,
\[
\Exp{|X_T\oplus Y_T|}\leq{1}/{\sqrt{\Delta}}.
\]

\noindent
In  light of the above result, Theorem \ref{thrm:RapidMixingMain} follows using 
the same arguments  as those for the proof of Theorem~1 in \cite{DFHV}.

\end{proof}

\subsection{Proof of Lemma \ref{lemma:ArbitraryPair}}\label{sec:lemma:ArbitraryPair}

\begin{proof}[Proof of Lemma \ref{lemma:ArbitraryPair}.1 and \ref{lemma:ArbitraryPair}.2]

The treatment for both cases are very similar. Note that each vertex  can only become 
disagreeing at time step $t$, if it is updated at time $t$ and it is next to a vertex which is also disagreeing.
Furthermore, for  such vertex the probability  to become disagreeing is at most $e/\Delta$.  
Using the  observations and noting that each disagreeing vertex has at most $\Delta$
non-disagreeing neighbors we get the following: The expected number of disagreements at each 
time step increases by a factor which is at most $(1+\Delta\frac{e}{n\Delta}) \leq \exp\left( 3/n\right)$.

By using induction, it is straightforward that for any $t\geq 0$ it holds
\begin{equation}\label{eq:ExpDis-ArbPair}
\Exp{X_t\oplus Y_t} \leq \exp\left (3t/n\right).
\end{equation}
Then,  the statement 1, follows by plugging into \eqref{eq:ExpDis-ArbPair} $t=Cn/\epsilon$.  
The statement 2 follows by plugging into \eqref{eq:ExpDis-ArbPair} $t=T\log\Delta$.
\end{proof}

\begin{proof}[Proof of Lemma \ref{lemma:ArbitraryPair}.3]
Recall that for any $X_t,Y_t$,  we have that
$
D_t=\{w: X_t\neq Y_t\},
$
while let
\[
D_{\leq t}=\textstyle \bigcup_{t'\leq t}D_{t'}.
\]
Also, let $H_{\leq t}=|D_{\leq t}|$. We prove that for any integer $1\leq \ell\leq n$, for $T=Cn/\epsilon$,
it holds that
\begin{equation}\label{eq:Tail4CumHamming}
\Prob{H_{\leq T}\geq \ell} \leq \exp\left( -\ell e^{-6C/\epsilon}\right).
\end{equation}
For $1\leq i\leq \ell $, let $t_i$ be the time at which the $i$'th disagreement is generated (possibly counting the
same vertex set multiple times). Denote $t_0=0$. Let $\eta_i:=t_i-t_{i-1}$ be the waiting time for the
formation of the $i$'th disagreement. Conditioned on the evolution at all times in $[0,t_i]$, the distribution of
$\eta_i$  stochastically dominates a geometric distribution with success probability $\rho_i$ and range 
$\{1,2,\ldots\}$, where
\[
\rho_i=\frac{e\cdot \min\{i\Delta,n-i\}}{n\Delta}.
\]
This is because at all times prior to $t_i$ we have $H_t\leq i$, while the sets $H_{\leq t}$ increases with probability
at most $\rho_i$ at each time step, regardless of the history. The quantity $\min\{i\Delta,n-i\}$ in the 
numerator in the expression for $\rho_i$ is an upper bound on the number of vertices that are non-disagreeing 
neighbors of the disagreeing vertices. The quantity $e/(n\Delta)$ is an upper bound for the probability of 
a neighbor of a disagreement to be chosen and become a disagreement itself. 

Hence, $\eta_1+\cdots+\eta_{\ell}$ stochastically dominates the sum of independent geometrically distributed
random variables with success probability $\rho_1,\ldots\rho_{\ell}$. For any real $x\geq 0$ it holds that
\[
\Prob{\eta_i\geq x}\geq (1-\rho_i)^{\lceil x\rceil -1}\geq \exp\left [-\frac{\rho_i}{1-\rho_i}x\right]\geq e^{2\rho_ix}.
\]
In the above series of inequalities we used that $1-x>\exp(-\frac{x}{1-x})$ for $0<x<1$ and $\rho_i<1/3$.

The above inequality implies that $\eta_1+\cdots+\eta_{\ell}$ dominates the sum of exponential random
variables with parameters $2\rho_1,2\rho_2,\ldots, 2\rho_{\ell}$. Since $\rho_i\leq i\rho$, where 
$\rho=\frac{e}{n}$, we have that $\eta_1+\cdots+\eta_{\ell}$ stochastically dominates the sum of exponential
random variables $\zeta_1,\zeta_2,\ldots,\zeta_{\ell}$ with parameters $2\rho,4\rho, \ldots, 2\ell \rho$, 
respectively. 

Consider the problem of collecting $\ell$ coupons, assuming that each coupon is generated by 
a Poisson process with rate $2\rho$. The time interval between collecting the $i$'th coupon and the 
$i+1$'st coupon is exponentially distributed with rate $2(\ell-i)\rho$. Hence the time to collect
all $\ell$ coupons has the same distribution as $\zeta_1+\zeta_2+\cdots+\zeta_{\ell}$. But the event that
the total delay is less than $T$ nothing but the intersection of the (independent) events that all coupons are
generated in the time interval $[0,T]$. The probability of this event is
\[
(1-\exp^{-2T\rho})^{\ell}< \exp\left( -\ell\exp\left(-2{Ce}/{\epsilon} \right)\right).
\]
The above completes the proof of \eqref{eq:Tail4CumHamming}.
Then we proceed as follows:
\begin{eqnarray}
\Exp{|X_T\oplus Y_T|\cdot \mathbf{1}\{{\bf E}_T\}} &\leq &\Exp{H_{\leq T}\mathbf{1}\{ {\cal E}_T \}} 
\; \leq \; \sum^n_{\ell=\Delta^{2/3}}\ell \cdot \Prob{H_{\leq T}=\ell} \nonumber \\
&\leq & \Delta^{2/3} \cdot \Prob{H_{\leq T}\geq \Delta^{2/3}}+\sum^n_{\ell=\Delta^{2/3}+1}\Prob{H_{\leq T}\geq \ell}\nonumber \\
&<&\Delta^{2/3}\sum^n_{\ell=\Delta^{2/3}}  \Prob{H_{\leq T}\geq \ell} \nonumber \\
&<&\Delta^{2/3}\sum^n_{\ell=\Delta^{2/3}}  \exp\left(-\ell \exp\left(-6C/\epsilon  \right) \right) \qquad \mbox{[from \eqref{eq:Tail4CumHamming}]} \nonumber \\
&\leq & 2\Delta^{2/3}\exp(-\Delta^{2/3}e^{-6C/\epsilon}) 
\end{eqnarray}
Note  that the above quantity is at most $\exp\left( -\sqrt{\Delta} \right)$, for large $\Delta$. This completes the
proof.
\end{proof}

\begin{proof}[Proof of Lemma \ref{lemma:ArbitraryPair}.4]
For this proof we need to use Lemma \ref{lemma:BurnIn}.
We consider the contribution to the expectation $\Exp{|S_{T\log \Delta}|}$ from the vertices inside the
ball $B_R(v)$ and the vertices outside the ball, i.e. $V\setminus B_R(v)$, where $R=\sqrt{\Delta}$.

First consider the vertices in $B_R(v)$. Lemma \ref{lemma:BurnIn} implies that for some vertex $w\in B_R(v)$
at time $T'=T\log\Delta\leq \exp(\Delta/C )$ is $50$-above suspicion for radius $2\Delta^{3/5}$
with probability at least $1-\exp(-\Delta/C)$.  
This observation implies that
\begin{equation}\label{eq:STlnDeltaInBall}
\Exp{|S_{T\log \Delta}\cap B_R(v)|} \leq \exp(-\Delta/C)  |B_R(v)|\leq \exp\left( - 4\sqrt{\Delta}\right).
\end{equation}

\noindent
To bound the number of disagreements outside $B_R(v)$, we observe that each such disagreement  comes from
a path of disagreements which starts from $v$. Such a path of disagreements is of length at least $R$. 
This observation implies that    $\Exp{|S_{T\log \Delta}\cap \bar{B}_R(v)|}$ is upper bounded by the expected
number of disagreements that start from $v$ and have length at least $R$.

Note that there are at most $\Delta^{\ell}$ many paths of disagreement of length $\ell$ that start from $v$.
Furthermore, so as a fixed path of length $\ell$ to become path of disagreement up to time $T\log \Delta$,
there should be $\ell$ updates which turn its vertices into disagreeing. Each vertex is chosen to be updated
with probability $1/n$, while it becomes disagreeing with probability at most $e/\Delta$.

All the above imply that
\begin{eqnarray}
\Exp{|S_{T\log \Delta}\cap \bar{B}_R(v)|} &\leq& \sum_{\ell\geq R}\Delta^{\ell} {T\log\Delta \choose \ell}
\left(\frac{e}{n\Delta} \right)^{\ell} \nonumber \\
&\leq& \sum_{\ell\geq R} \left( \frac{e^2 T\log \Delta }{\ell n}\right)^{\ell}\qquad \qquad\mbox{[as  ${n \choose s}\leq (ne/s)^s$ ]}
 \nonumber \\
 &\leq& \sum_{\ell\geq R} \left( \frac{e^2 C \log \Delta }{\ell \epsilon}\right)^{\ell}
 \nonumber \\
  &\leq& \left( {1}/{20}\right)^{\sqrt{\Delta}}\leq \exp\left( -10\sqrt{\Delta}\right). \label{eq:STlnDeltaOutBall}
\end{eqnarray}
Summing the bound of $\Exp{|S_{T\log \Delta}\cap B_R(v)|}$ and $\Exp{|S_{T\log \Delta}\cap \bar{B}_R(v)|}$ from \eqref{eq:STlnDeltaInBall} and \eqref{eq:STlnDeltaOutBall}, respectively gives the desired bound for 
$\Exp{|S_{T\log \Delta}|}$.
\end{proof}

\subsection{Proof of Lemma \ref{lemma:DistanceFromLight}}\label{sec:lemma:DistanceFromLight}

Fix $v$ and $R$ as specified in the statement of the theorem. Recall, for $X_t,Y_t$ we let
$D_t=\{w: X_t\oplus Y_t\}$ and denote $H(X_t,Y_t)=|D_t|$. That is, $H(X_t,Y_t)$ 
is the Hamming distance between $X_t,Y_t$.  We let the  accumulative difference be
\[
D_{\leq t}=\textstyle \bigcup_{t'\leq t}D_{t}.
\]
Also, let $H_{\leq t}=|D_{\leq t}|$.
We define the distance between the two chains $X_t,Y_t$ as follows
$$
{\cal D}(X_t,Y_t)=\sum_{v\in X_t \oplus Y_t}\Phi(v),
$$
where $\Phi:V\to [1,12]$ is defined in Theorem \ref{thrm:EigenVector}.
The metric ${\cal D}(X_t,Y_t)$ generalizes the Hamming metric in the following sense:
the disagreement in each vertex $v$ instead of contributing one it contributes
$\Phi(v)$.  Since $\Phi(v)\geq 1$, for every $v\in V$, for any two $X_t,Y_t$ 
we always have 
\begin{equation}\label{eq:HammingVsWeighted}
{\cal D}(X_t,Y_t)\geq H(X_t,Y_t).
\end{equation}

\noindent
For proving the lemma we use the following result.
\begin{lemma}\label{lemma:PathCouplingWithUniformity}
For $\delta>0$, let sufficiently small $\epsilon=\epsilon(\delta)$ and $\Delta\geq \Delta_0$.  Consider a graph $G=(V,E)$ of maximum degree $\Delta$
and let $\lambda\leq (1-\delta)\lambda_c({\Delta})$.
Also,  let $(X_t), (Y_t)$  be the Glauber dynamics on the hard-core model with fugacity $\lambda$ and underlying graphs
$G$.

For some time $t$, assume that $X_t\oplus Y_t=\{v\}$, for some $v\in V$ such that
\begin{eqnarray}\label{eq:PathCouplingGood}
W_{X_t} (v)\leq \sum_{z \in N(v)} \omega^*(z)\cdot\Phi(z) +\epsilon\Delta,
\end{eqnarray}
$W_{X_t}(v)$ is defined in \eqref{eq:DefWXT}.
Then, coupling the chains maximally we have that 
\begin{eqnarray}\nonumber
\Exp{{\cal D}(X_{t+1}, Y_{t+1})-{\cal D}(X_{t}, Y_{t})}<-c/n,
\end{eqnarray}
for appropriate  $c=c(\epsilon,\delta)>0$.
\end{lemma}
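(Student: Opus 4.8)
The plan is to bound the one-step drift of the weighted metric $\mathcal{D}(X_t,Y_t)=\sum_{u\in X_t\oplus Y_t}\Phi(u)$ under the maximal one-step coupling, exploiting that the pair is at Hamming distance one, and then feed in hypothesis~\eqref{eq:PathCouplingGood} together with Theorem~\ref{thrm:EigenVector}.

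First I would fix $(X_t,Y_t)$ with $X_t\oplus Y_t=\{v\}$, say $v\in X_t\setminus Y_t$. Since the two configurations agree off $v$ and $v\in X_t$ forces $N(v)\cap X_t=\emptyset$, the vertex $v$ is unblocked in \emph{both} chains. Conditioning on the uniformly chosen update vertex $u$, I would split into three cases. (i) If $u=v$, the coupled update performs the identical $\mathrm{Bernoulli}(\lambda/(1+\lambda))$ choice in both chains, so $X_{t+1}(v)=Y_{t+1}(v)$, the disagreement is erased, and nothing else changes; the change in $\mathcal{D}$ is exactly $-\Phi(v)$. (ii) If $u\notin N(v)\cup\{v\}$, then $N(u)\cap X_t=N(u)\cap Y_t$, so the update is performed identically in both chains, no vertex changes disagreement status, and the change in $\mathcal{D}$ is $0$. (iii) If $u\in N(v)$, then $u$ is blocked by $v$ in $X_t$ (so its spin, unoccupied, is unchanged), while in $Y_t$ it is unblocked precisely when $\I_{u,v}(Y_t)=1$; under the maximal coupling this creates a fresh disagreement at $u$ with probability $\tfrac{\lambda}{1+\lambda}\I_{u,v}(Y_t)$ and affects nothing else, so the expected change in $\mathcal{D}$ given this $u$ is $\tfrac{\lambda}{1+\lambda}\I_{u,v}(Y_t)\Phi(u)$. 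Because $\I_{u,v}$ ignores the spin at $v$, $\I_{u,v}(Y_t)=\I_{u,v}(X_t)$, and summing over $u$ with weight $1/n$ gives
\[
\Exp{\mathcal{D}(X_{t+1},Y_{t+1})-\mathcal{D}(X_t,Y_t)\mid X_t,Y_t}=\frac1n\Bigl(-\Phi(v)+\tfrac{\lambda}{1+\lambda}\,\W_{X_t}(v)\Bigr).
\]

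Next I would insert \eqref{eq:PathCouplingGood}. Using $0\le\omega^*(z)\le1$ we have $\tfrac{\lambda}{1+\lambda}\le\tfrac{\lambda}{1+\lambda\omega^*(z)}$, hence $\tfrac{\lambda}{1+\lambda}\sum_{z\in N(v)}\omega^*(z)\Phi(z)\le\sum_{z\in N(v)}\tfrac{\lambda\omega^*(z)}{1+\lambda\omega^*(z)}\Phi(z)\le(1-\delta/6)\Phi(v)$ by~\eqref{eq:FoundPhi}; and since $\lambda<(1-\delta)\lambda_c(\Delta)$ forces $\lambda\Delta=O(1)$, the error term obeys $\tfrac{\lambda}{1+\lambda}\epsilon\Delta\le c_0\epsilon$ for an absolute constant $c_0$. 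Therefore the drift is at most $\tfrac1n\bigl(-\tfrac{\delta}{6}\Phi(v)+c_0\epsilon\bigr)\le\tfrac1n\bigl(-\tfrac{\delta}{6}+c_0\epsilon\bigr)$, using $\Phi(v)\ge1$ from Theorem~\ref{thrm:EigenVector}, and choosing $\epsilon=\epsilon(\delta)$ small enough that $c_0\epsilon<\delta/12$ yields the claim with $c=\delta/12$.

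I expect the only delicate point to be the bookkeeping in case (iii): verifying that resampling a neighbour of $v$ creates at most one new disagreement — at that neighbour itself — with exactly probability $\tfrac{\lambda}{1+\lambda}\I_{u,v}$, and that it never removes or perturbs the disagreement at $v$. This is where one uses that a single vertex is resampled per step and that the two chains disagree only at $v$ (so, in particular, the blocked/unblocked status seen by $u$ differs between the chains only through the spin at $v$). Everything else is the routine algebra above plus the already-established Theorem~\ref{thrm:EigenVector}; notably no girth hypothesis is needed for this lemma, since the uniformity input that does require girth has been packaged into \eqref{eq:PathCouplingGood}.
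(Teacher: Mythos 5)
Your proof is correct and follows essentially the same route as the paper's: compute the exact one-step drift of ${\cal D}$ under the maximal coupling, bound the increase term by $\frac{\lambda}{1+\lambda}\W_{X_t}(v)$, insert \eqref{eq:PathCouplingGood}, and then use $\frac{\lambda}{1+\lambda}\omega^*(z)\le\frac{\lambda\omega^*(z)}{1+\lambda\omega^*(z)}$ together with \eqref{eq:FoundPhi}, $\Phi(v)\ge 1$, and $\lambda\Delta=O(1)$ to choose $\epsilon=\epsilon(\delta)$ small. The only difference is cosmetic: the paper splits into low-degree vertices (handled by a crude worst-case bound) and high-degree vertices (handled exactly as you do), whereas your single calculation covers all degrees at once, which is a harmless streamlining.
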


\noindent
The proof of Lemma \ref{lemma:PathCouplingWithUniformity} appears in Section \ref{sec:lemma:PathCouplingWithUniformity}.

We start by proving statement 1 of  Lemma \ref{lemma:DistanceFromLight}.

\begin{proof}[Proof of Lemma \ref{lemma:DistanceFromLight}.1]
Let
\[
T_b=\max\{C_bn, C_an\},
\]
where the quantities $C_b, C_a$ are from Lemma \ref{lemma:LightnessOfX} 
and Theorem \ref{thrm:Uniformity}, respectively.
 
Since $T_m\leq n\exp\left( \Delta/(C\log\Delta) \right)$, we can apply Theorem \ref{thrm:Uniformity}
to conclude that the desired local uniformity properties holds with high probability for all 
$t\in I:=[T_b,T_m]$.

For $t\geq T_b$ we define the following {\em bad} events:
\begin{itemize}
\item ${\cal E}(t)$ denotes the event that at some time $s<t$, it holds $H_s>\Delta^{2/3}$
\item ${\cal B}_1(t)$ denotes the event that $D_{\leq t}\not \subseteq B_{\sqrt{\Delta}}(v)$
\item ${\cal B}_2(t)$ denotes the event that there exists a time $T_b\leq \tau\leq t$, 
$z\in B_{\sqrt{\Delta}}(v)$ such that
\[
W_{X_t}(z) >\Theta(z,\epsilon)=\sum_{z \in N(v)} \omega^*(z)\Phi(z)+\epsilon \Delta,
\]
where $\omega^*\in [0,1]^{V}$ is defined in Lemma \ref{lemma:FixPointEquations} and $\Phi:V\to [1,12]$.
is defined in Theorem \ref{thrm:EigenVector}
\end{itemize}
Also, we let  the event
\[
{\cal B}(t)={\cal B}_1(t)\cup {\cal B}_2(t),
\]
while we let the ``good'' event
\[
{\cal G}(t)=\bar{{\cal E}}(t)\cap \bar{\cal B}(t).
\]
We follow the convention that we drop the time $t$, for all the above events when we are referring to
the event at time $T_m$.

We bound the Hamming distance by conditioning on the above event in the following manner,
\begin{eqnarray}
\Exp{H_{T_m}}&=& \Exp{H_{T_m}\mathbf{1}\{\cal E\}}+ \Exp{H_{T_m}\mathbf{1}\{\bar{\cal E}\}\mathbf{1}\{\cal B\}}+
\Exp{H_{T_m}\mathbf{1}\{\bar{\cal E}\}\mathbf{1}\{\bar{\cal B}\}}\nonumber \\
&\leq & \Exp{H_{T_m}\mathbf{1}\{\cal E\}}+\Delta^{2/3}\Prob{\cal B}+\Exp{H_{T_m}\mathbf{1}\{{\cal G}\}}
\nonumber \\
&\leq & \exp(-\sqrt{\Delta})+\Delta^{2/3} \Prob{\cal B}+\Exp{H_{T_m}\mathbf{1}\{{\cal G}\}},\label{eq:DistanceFromLightBasis}
\end{eqnarray}
where in the last inequality we used Lemma \ref{lemma:ArbitraryPair}.3.

For the second term in the \eqref{eq:DistanceFromLightBasis} we prove the following
\begin{equation}\label{eq:DistanceFromLightPrB}
\Prob{\cal B}\leq \exp\left( -\sqrt{\Delta}\right).
\end{equation}
Finally, for the third term in the \eqref{eq:DistanceFromLightBasis} we prove the following
\begin{equation}\label{eq:DistanceFromLightExpHammGood}
\Exp{H_{T_m}\mathbf{1}\{{\cal G}\}}\leq 1/9.
\end{equation}
The part 1 of the theorem follows by plugging into \eqref{eq:DistanceFromLightBasis}, the
bounds in \eqref{eq:DistanceFromLightPrB} and \eqref{eq:DistanceFromLightExpHammGood}.
\end{proof}

\begin{proof}[Proof of \eqref{eq:DistanceFromLightPrB}]
We can bound the probability of the event ${\cal B}_1$ by a standard paths of disagreement
argument. We are looking at the probability of a path of disagreement of length $\ell=\sqrt{\Delta}$,
within $T_m=C'n/\epsilon$ steps, hence:
\begin{eqnarray}
\Prob{{\cal B}_1} &\leq &\Delta^{\ell }{T_m \choose \ell}\left( \frac{e}{n\Delta}\right)^{\ell}
\nonumber \\
&\leq& \left( {e^2 C'}/{\epsilon }\right)^{\ell} \qquad\qquad\qquad \mbox{[as ${N \choose i}\leq (Ne/i)^i$]}
\nonumber \\
&\leq & \exp\left( -2\sqrt{\Delta} \right) \label{eq:Bound4ProbB1}.
\end{eqnarray}
We can bound the probability of the event ${\cal B}_2$ by working as follows:
The assumption is that $v$ is  400-above suspicion for radius $R\geq \Delta^{3/5}$.
Then,  each vertex $z\in B_{\sqrt{\Delta}(v)}$
is 400-above suspicion for the constant radius $R'(\gamma,\delta)$ required for the statement 
for the hypothesis of Theorem \ref{thrm:Uniformity}. Therefore, in the interval $I=[T_b,T_m]$
 the uniformity condition  for each vertex $z$ fails with probability at most $\exp(-\Delta/(C\log\Delta))$. 
 More precisely,  we have that
\begin{equation}\label{eq:Bound4ProbB2}
\Prob{{\cal B}_2}\leq \exp(-\Delta/C) \Delta^{\sqrt{\Delta}+1}\leq\exp\left( -2\sqrt{\Delta}\right).
\end{equation}
Using a simple union bound, we get that $\Prob{\cal B}\leq \Prob{{\cal B}_1}+\Prob{{\cal B}_2}$. Then 
\eqref{eq:DistanceFromLightPrB} follows by plugging \eqref{eq:Bound4ProbB1} and \eqref{eq:Bound4ProbB2}
into the union bound.
\end{proof}

\begin{proof}[Proof of \eqref{eq:DistanceFromLightExpHammGood}]
Recall that for  the two chains $X_t,Y_t$ we defined the following notion of distance
$$
{\cal D}(X_t,Y_t)=\sum_{v\in X_t \oplus Y_t}\Phi(v).
$$
Note that for every $z\in V$ it holds that $1\leq \Phi(z)\leq 12$. This implies that
we always have that  ${\cal D}(X_t,Y_t)\geq H(X_t,Y_t)$.
For showing that \eqref{eq:DistanceFromLightExpHammGood} indeed holds, it suffices to
show that
\begin{equation}\label{eq:WeightDistanceFromLightExpHammGood}
\Exp{{\cal D}(X_{T_m},Y_{T_m})\mathbf{1}\{{\cal G}\}}\leq 1/9.
\end{equation}

Let $Q_0=X_t, Q_1,Q_2,\ldots, Q_h=Y_t$ be a sequence of independent sets
where $h=|X_t\oplus Y_t|$ and $Q_{i+1}$ is obtained from $Q_i$ by changing the
assignment of one vertex $w_i$ from $X_t(w_i)$ to $Y_t(w_i)$. 
We maximally couple $W_i$ and $W_{i+1}$ in one step of the Glauber dynamics to obtain
$W'_i$ and $W'_{i+1}$. More precisely, both chains update the spin of the same vertex
and maximize the probability of choosing the same new assignment for the chosen vertex.

Consider a pair $Q_i, Q_{i+1}$.  Note that  $Q_i, Q_{i+1}$ differ only on the assignment of $w_i$.
With probability $1/n$ both chains update the spin of  vertex $w_i$. Since all
the neighbors of $w_i$ have the same spin, with probability 1 we assign the same
spin on $w_i$ in both chains. Such an update reduces the distance of the two chains by 
$\Phi(w_i)$.

Consider now the update of vertex $w\in N(w_i)$. Also, w.l.o.g. assume that $Q_i(w_i)$ is
occupied while $Q_{i+1}(w_i)$ is unoccupied. It is direct that the worst case is when
$w$ is unblocked in the chain $Q_{i+1}$. Otherwise, i.e. if $w$ is blocked then with
probability 1 we have $Q_{i+1}(w)=Q_{i}(w)=$``unoccupied",
since  in  $Q_i$, we  have $w_i$ blocked.

Assuming that $w_i$ blocked in the chain $Q_i$ and unblocked in the chain $Q_{i+1}$,
we get  $Q'_i(w)\neq Q'_{i+1}(w)$ if the coupling chooses to set  $w_i$ occupied  in 
 $Q'_{i+1}$.  Otherwise,  we have $Q'_i(w)=Q'_{i+1}(w)$. Clearly, 
the disagreement happens with probability at most $\frac{\lambda}{1+\lambda}<e/\Delta$.

Therefore, given $Q_i,Q_{i+1}$, we have that
\begin{equation}\label{eq:PathCuplingGeneral}
\Exp{{\cal D}(Q'_{i+1},Q'_i)- {\cal D}(Q_{i+1},Q_i)}\leq -\frac{\Phi(w_i)}{n}+\frac{e}{n\Delta}\sum_{z\in N(w_i)}\Phi(z).
\end{equation}
Since we have that $1\leq \Phi(z)\leq 12$, for any $z\in V$ and $|N(v)|\leq \Delta$,   we get the trivial bound that
\[
\Exp{{\cal D}(Q'_{i+1},Q'_i)- {\cal D}(Q_{i+1},Q_i)}\leq {35}/{n}.
\]
Therefore, 
\begin{equation}\label{eq:NonContraction}
\Exp{{\cal D}(X_{t+1},Y_{t+1})}\leq \left(1+{35}/{n}\right){\cal D}(X_t,Y_t).
\end{equation}
The above bound is going to be used only for the burn-in phase, i.e. the first $T_b$ steps.
We use a significantly better bound for the remaining $T_m-T_b$ steps.

Since the event ${\cal G}$ holds,  for all $0\leq i\leq h$, $z\in B_R(v)$ and all $t\in [T_b,T_m-1]$,
we have that
\begin{equation}\label{eq:QiVsUniformity}
W(Q_i,z)\leq \Theta(z,\epsilon)+\Delta^{2/3}\leq \Theta(z,2\epsilon).
\end{equation}
The first inequality follows from our assumption that  both event 
$\bar{\cal E}$ and $\bar{\cal B}_2$ occur. The second follows from
the definition of the quantity $\Theta$.

Using Lemma \ref{lemma:PathCouplingWithUniformity}
and get the following: For $Q_i, Q_{i+1}$ which satisfy \eqref{eq:QiVsUniformity} it holds
that 
\[
\Exp{{\cal D}(Q'_{i+1},Q'_{i})}\leq \left(1-{C'}/{n}\right){\cal D}(Q_{i+1},Q_{i}),
\]
for appropriately chosen $C'$.
The above inequality implies the following: Given $X_t,Y_t$ and assuming that
${\cal G}(t)$ holds, we get that
\begin{equation}\label{eq:ContractionCase}
\Exp{{\cal D}(X_{t+1},Y_{t+1})}\leq \left(1-{C}/{n}\right){\cal D}(X_t,Y_t).
\end{equation}
Let $t\in [T_b,T_m-1]$. Then we have that
\begin{eqnarray}
\Exp{{\cal D}(X_{t+1},Y_{t+1})\mathbf{1}\{{\cal G}(t) \} } &=& 
\Exp{\ExpCond{{\cal D}(X_{t+1},Y_{t+1})\mathbf{1}\{{\cal G}(t) \} }{X_0,Y_0,\ldots, X_t,Y_t}}  \nonumber \\
&=& 
\Exp{\ExpCond{{\cal D}(X_{t+1},Y_{t+1})}{X_0,Y_0,\ldots, X_t,Y_t}\mathbf{1}\{{\cal G}(t) \} }  \nonumber \\
&\leq & 
\left(1-C/n \right)\Exp{{\cal D}(X_{t},Y_{t}) \mathbf{1}\{{\cal G}(t) \} }  \nonumber \\
&\leq & 
\left(1-C/n \right)\Exp{{\cal D}(X_{t},Y_{t}) \mathbf{1}\{{\cal G}(t-1) \} }.  \nonumber
\end{eqnarray}
The frist equality is Fubini's Theorem, the second equality is due to the fact that $X_0,Y_0,\ldots X_t,Y_t$
determine uniquely ${\cal G}(t)$ The first inequality uses \eqref{eq:ContractionCase} while the second
inequality uses the fact that ${\cal G}(t)\subset {\cal G}(t-1)$.
By induction, it follows that
\[
\Exp{{\cal D}(X_{T_m},Y_{T_m}) \mathbf{1}\{{\cal G}(T_m)\}}\leq \left( 1-{C}/{n}\right)^{T_m-T_b}
\Exp{{\cal D}(X_{T_b},Y_{T_b}) \mathbf{1}\{{\cal G}(T_b)\}}.
\]
Using the same arguments and \eqref{eq:NonContraction} for 
$\Exp{{\cal D}(X_{T_b},Y_{T_b}) \mathbf{1}\{{\cal G}(T_b)\}}$ we get that
\begin{equation}
\Exp{{\cal D}(X_{T_m},Y_{T_m}) \mathbf{1}\{{\cal G}(T_m)\}}\leq \left( 1-{C}/{n}\right)^{T_m-T_b}
\left(1+{35}/{n}\right)^{T_b}{\cal D}(X_0,Y_0).
\end{equation}
The result follows from the choice of constants and  noting that ${\cal D}(X_0,Y_0)<12$.
\end{proof}

\begin{proof}[Proof of Lemma \ref{lemma:DistanceFromLight}.2]
Recall from the proof of Lemma \ref{lemma:DistanceFromLight}.1 that
${\cal B}_1$  is the event that $D_{\leq T_m\not\subseteq}B_{\sqrt{\Delta}}(v)$.
Also consider  ${\cal B}'_1$ to be the event that  $D_{T_m}\not\subseteq B_{\sqrt{\Delta}(v)}$.
Noting that ${\cal B}'_1\subset {\cal B}_1$, we get that
\[
\Prob{{\cal B}'_1}\leq \Prob{{\cal B}_1}\leq\exp\left( -\sqrt{\Delta}\right),
\]
where the last inequality follows from \eqref{eq:Bound4ProbB1}.

We can assume the disagreements are contained in $B_{\sqrt{\Delta}}(v)$. By the hypothesis of
Lemma \ref{lemma:DistanceFromLight}, 
each vertex $w\in B_{\sqrt{\Delta}}(v)$ is $400$-above 
suspicion for radius $R-\sqrt{\Delta}$ in both $X_0$ and $Y_0$. Therefore, by Lemma \ref{lemma:LightnessOfX},
each vertex $w\in B_{\sqrt{\Delta}}(v)$ is $20$-above suspicion for radius $R-\sqrt{\Delta}-2$ in $X_{T_m}$
and $Y_{T_m}$ with probability at least $1-\exp(-\Delta/C_b)$. Therefore, all 
 $w\in B_{\sqrt{\Delta}}(v)$ is $50$-above suspicion for radius $R-\sqrt{\Delta}-2$ in $X_{T_m}$
and $Y_{T_m}$ with probability at least $1-\exp(-\Delta/C_b)$. 
That is, we have proven that all disagreements between $X_{T_m}$ and $Y_{T_m}$ are 50-above
suspicion for radius $R-\sqrt{\Delta}-2$ with probability at least $1-2\exp(-\Delta/C_b)$.
This proves Lemma \ref{lemma:DistanceFromLight}.2.
\end{proof}

\subsection{Proof of Lemma \ref{lemma:PathCouplingWithUniformity}}\label{sec:lemma:PathCouplingWithUniformity}

\begin{proof}[Proof of Lemma \ref{lemma:PathCouplingWithUniformity}]
Let $\Phi_{\max}=\max_{z\in V}\Phi(z)$,
where $\Phi:V(G)\to \mathbb{R}_{\geq 0}$,  as in Theorem \ref{thrm:EigenVector}.
Each vertex $v\in V$ is called a ``low degree vertex" if $\degree(v)\leq \hat{\Delta}=\frac{\Delta}{e \cdot \Phi_{\max}} $.

If $v$ is a low degree vertex then the following holds
\begin{eqnarray*}
\Exp{{\cal D}(X_{t+1}, Y_{t+1})-{\cal D}(X_{t}, Y_{t})} &\leq &
-\frac{\Phi(v)}{n}+\frac1n\sum_{z\in N(v)}\frac{\lambda}{1+\lambda}\Phi(z).
\end{eqnarray*}
We get the inequality above by working as follows: 
The distance between the two chains changes  when we updated either
$v$ or some vertex $z\in N(v)$.

With probability $1/n$ the  the update involves the vertex $v$. 
Since there is no disagreement at the  neighborhood of $v$ we can couple $X_t$ and $Y_t$ 
such that $X_{t+1}(v)=Y_{t+1}(v)$ with probability 1. That is, the distance between the chain decreases 
by $\Phi(v)$.

We make the (worst case) assumption  that all the vertices in $N(v)$ are unblocked
and unoccupied. 
We have a new disagreement between the two chains,
i.e. an increase in the distance, only if some vertex $z\in N(v)$ is chosen to be updated and one
of the chains sets $z$ occupied.  Since $X_t(v)\neq Y_t(v)$ one of the chains cannot set $z$ occupied.
Each $z\in N(v)$ is chosen with probability  $1/n$ and it is set occupied by one  the two chains 
with probability  $\frac{\lambda}{1+\lambda}$. Then, the  distance between the chains increases  by $\Phi(z)$.
Then we get the following 
\begin{eqnarray}
\Exp{{\cal D}(X_{t+1}, Y_{t+1})-{\cal D}(X_{t}, Y_{t})}  &\leq &
-\frac{\Phi(v)}{n}+\frac1n\sum_{z\in N(v)}\frac{\lambda}{1+\lambda}\Phi(z) \nonumber \\
&\leq & -\frac1n\left( \Phi(v)- \Phi_{\max} \cdot (1-\delta)\lambda_c({\Delta}) \cdot \hat{\Delta} \right) \nonumber \\
&\leq & -\frac1n\left( \Phi(v)- 1 \right) \leq -10/n, \label{eq:LowDegreePathCoupling}
\end{eqnarray}
where the last inequality follows from the fact that $1\leq \Phi(v)\leq 12$, for every $v\in V$,
$\hat{\Delta}=\frac{\Delta}{e \cdot \Phi_{\max}}$  and 
$\lambda\leq e/\Delta$.
For the case where $v$ is a high degree vertex  
we have the following
\begin{eqnarray*}
\Exp{{\cal D}(X_{t+1}, Y_{t+1})-{\cal D}(X_{t}, Y_{t})} &\leq &
-\frac{\Phi(v)}{n}+ \frac1n\sum_{z\in N(v)}\frac{\lambda}{1+\lambda}\omega^*(z)\Phi(z)+\frac{1}{n}\frac{\lambda}{1+\lambda}\epsilon\Delta. 
\end{eqnarray*}
As before, the interesting cases are those where the update involves the vertex
$v$ or  $N(v)$. As we argued above when the vertex $v$ is updated
the distance between the two chains decreases by $\Phi(v)$. 

As far as the neighbors of $v$ are regarded we observe the following:
If some $z\in N(v)$ is blocked, then with probability 1 is set unoccupied in both chains.
This means that $X_{t+1}(z)=Z_{t+1}(z)$, i.e. the distance between the
two chains remains unchanged.  If the update involves an unblocked vertex
$z\in N(v)$, then with probability $\frac{\lambda}{1+\lambda}$ the vertex
$z$ becomes occupied at only one of the two chains and the distance between the chains increases by
$\Phi(z)$.

In the inequality above, we  use also use the fact that  \eqref{eq:PathCouplingGood}
holds for the high degree vertex $v$.
Then we get that
\begin{eqnarray}
\Exp{{\cal D}(X_{t+1}, Y_{t+1})-{\cal D}(X_{t}, Y_{t})}  
&\leq &
-\frac{\Phi(v)}{n}+ \frac1n \frac{\lambda}{1+\lambda}\W_{\sigma}(v)  \nonumber \\
&\leq &
-\frac{\Phi(v)}{n}+ \frac1n\sum_{z\in N(v)}\frac{\lambda}{1+\lambda}\omega^*(z,v)\Phi(z)+
\frac{1}{n}\frac{\lambda}{1+\lambda}\epsilon\Delta. \nonumber \\
&\leq &
-\frac1n \left( 
\Phi(v)-\sum_{z\in N(v)}\frac{\lambda}{1+\lambda}\omega^*(z,v)\Phi(z)+e\epsilon
\right) \; \leq \; -c/n, \quad \;\label{eq:HighDegreePathCoupling}
\end{eqnarray}
where the last inequality follows by taking sufficiently small $\epsilon>0$.

The lemma follows from \eqref{eq:LowDegreePathCoupling} and  \eqref{eq:HighDegreePathCoupling}.
\end{proof}

\section{Random Regular (Bipartite) Graphs: Proof of Theorem \ref{thrm:RapidMixingRandom}}\label{sec:thrm:RapidMixingRandom}

It turns out that the girth restriction of Theorem \ref{thrm:RapidMixingMain} can be relaxed a bit.
The main technical reason why we need   girth at least 7 is for establishing what we
call ``local uniformity property".  
Roughly speaking, local uniformity amounts to showing that the number of unblocked neighbors of a vertex
$v$ is concentrated about the quantity $\sum_{z\in N(v)}\omega^*(z)$, where $\omega^*\in [0,1]^V$
is the fixed points of a BP-like system of equations.
In particular, uniformity amounts to  showing that the number of unblocked neighbors of $v$ is 
$\sum_{w\in N(v)}\omega^*(w)\pm \epsilon \Delta$, with probability that tends to 1
as   $\Delta$ grows.

The analysis of local uniformity could be carried out for graph with  short cycles, i.e. cycles of length less than 7.
The effect of the short cycles is an increase to the  fluctuation of the number of unblocked neighbors of a vertex. 
However, if the number of such cycles is small, i.e.  constant, then the increase in the fluctuation is negligible. 
That is, the proof of Theorem \ref{thrm:RapidMixingMain} 
carries out if, instead of girth at least 7, we have smaller girth but only a {\em constant} number of
cycles of length less than 7 around each vertex $v$. 
The above observation leads to the following corollary from Theorem \ref{thrm:RapidMixingMain}.

For some integers $\ell, g\geq 0$, let ${\cal G}_n(\ell, g)$ denote all the graphs on $n$ vertices 
such that each vertex belongs to at most $\ell$ cycles of length less than $g$.

\begin{corollary}\label{cor:RapidFewCycles}
For all $\delta>0$, there exists $\Delta_0=\Delta_0(\delta)$, $\ell=\ell(\delta)$ and $C=C(\delta)$,
for all $\Delta\geq\Delta_0$, all $\lambda<(1-\delta)\lambda_c(T_\Delta)$,
all graphs $G\in {\cal G}(\ell, 7)$ of maximum degree $\Delta$, all $\eps>0$, 
the mixing time of the Glauber dynamics satisfies:
\[  \Tmix(\eps) \leq Cn\log(n/\eps).
\]
\end{corollary}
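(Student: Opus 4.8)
I would prove the corollary by re-running the proof of Theorem~\ref{thrm:RapidMixingMain} and observing that the girth hypothesis is used in exactly one place: the local uniformity results of Section~\ref{sec:LocalUniformity-sketch} (Theorems~\ref{thrm:ConcentrNoOfBlockedStatic} and~\ref{thrm:UniformityWithBurnIn}), together with the ingredients feeding them (Lemma~\ref{lemma:RApproxReccGirth6}, Lemma~\ref{lem:Approx-Glauber-BP}, and the $G$-versus-$G^*$ comparison of Section~\ref{sec:GVsG*}). Everything else --- the BP convergence facts of Section~\ref{sec:BPConvergence}, the construction of the path-coupling metric $\Phi$ in Theorem~\ref{thrm:EigenVector}, and the two-stage burn-in/path-coupling argument adapted from~\cite{DFHV} (Lemmas~\ref{lemma:ArbitraryPair}, \ref{lemma:DistanceFromLight}, \ref{lemma:PathCouplingWithUniformity}) --- uses only the bound $\Delta$ on the maximum degree. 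So the task reduces to re-deriving local uniformity for graphs $G\in{\cal G}_n(\ell,7)$, after which Section~\ref{sec:thrm:RapidMixingMain} applies verbatim and yields $\Tmix(\epsilon)\leq Cn\log(n/\epsilon)$.

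\textbf{Re-deriving local uniformity with a constant number of short cycles.} Fix a vertex $v$. The girth assumption was exploited to guarantee two things. First, that $B_3(v)$ is a tree, so that conditioning on the configuration at distance $\geq 3$ (in the static case), resp.\ on $(X^*_t)$ restricted to $V\setminus B_2(v)$ (in the dynamic case), makes the (grand)children of $v$ mutually independent; this is what turns $\SOS_X(v)$, resp.\ $\mathbf{Q}(X^*_t)$, into a sum of independent $\{0,1\}$ variables and gives the Azuma bounds~\eqref{eq:Tail4QCX^*TGirth6} and~\eqref{eq:Tail4QCX^*T}. Second, in the disagreement-percolation estimates of Lemma~\ref{lem:ComparisonXVsXStar} it was used to bound the number of disagreeing edges incident to a sphere $S_2(u)$. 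When $v$ lies on at most $\ell$ cycles of length $<7$, the ball $B_3(v)$ differs from a tree by $O(\ell)$ extra edges. Hence, after conditioning on the far configuration, $N(v)$ (resp.\ its grandchildren) decomposes into mutually independent blocks with at most $O(\ell)$ of the indicator variables sharing a block, so $\SOS_X(v)$ is a Lipschitz-$1$ function of independent variables up to an additive $O(\ell)$ correction; the bounded-difference/Azuma inequalities then hold with the effective variance increased by only a constant factor, and~\eqref{eq:Tail4QCX^*TGirth6} survives with the constant $\beta$ enlarged. Likewise the $O(\ell)$ extra short-cycle edges add only $O(\ell)$ disagreeing edges to the generalized-Poisson counts in Lemma~\ref{lem:ComparisonXVsXStar}, absorbed into the $\gamma^3\Delta^2$ slack already present.

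\textbf{Propagating the perturbation.} With these modified concentration statements, Lemma~\ref{lemma:RApproxReccGirth6} still holds: the recursion $\R(X,v)\approx\prod_{z\in N(v)}(1-\tfrac{\lambda}{1+\lambda}\R(X,z))$ is verified up to an additive error that picks up an extra $O(\ell\lambda)=O(\ell/\Delta)=o(1)$ since $\ell=\ell(\delta)$ is constant and $\lambda=O(1/\Delta)$. Consequently Lemma~\ref{lemma:RConGirth6}, and then Theorems~\ref{thrm:ConcentrNoOfBlockedStatic} and~\ref{thrm:UniformityWithBurnIn}, hold for $G\in{\cal G}_n(\ell,7)$ of maximum degree $\Delta\geq\Delta_0(\delta,\epsilon)$, whereupon the proof of Theorem~\ref{thrm:RapidMixingMain} applies with no further change. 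In fact $\ell$ may be taken to be \emph{any} fixed constant at the cost of enlarging $\Delta_0$ and $C$; writing $\ell=\ell(\delta)$ is just bookkeeping to keep all the ``$+O(\ell)$'' errors comfortably below the $\epsilon\Delta$ thresholds used in the uniformity bounds.

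\textbf{The main obstacle.} The delicate point is the dynamic conditional-independence argument of Section~\ref{sec:GVsG*} and Lemma~\ref{lem:Approx-Glauber-BP}. The auxiliary graph $G^*_w$ and chain $(X^*_t)$ were engineered so that, under girth $\geq 7$, the (grand)children of $w$ communicate only through paths leaving $B_3(w)$, giving \emph{exact} conditional independence given the configuration outside $B_3(w)$. With up to $\ell$ short cycles through $w$ this decoupling is no longer exact: there remain $O(\ell)$ short ``shortcut'' paths inside $B_3(w)$ carrying residual correlation. I would need to check that (i) the generalized-Poisson tail estimates (Lemma~\ref{lem:GenPoisson}) still yield $\exp(-\Omega(\Delta))$ bounds when a bounded number of extra shortcut edges are added to the disagreement graph, and (ii) the residual correlation among the grandchildren of $w$ perturbs $\ExpCond{\I_{z,x}(X^*_t)}{{\cal F}}$ by at most $O(\ell/\Delta)$ per vertex. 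Both are true, but establishing them requires re-running the layered percolation/concentration bookkeeping of Section~\ref{sec:lem:Approx-Glauber-BP} while carrying a constant number of exceptional edges throughout --- conceptually a routine perturbation of an already-negligible quantity, but the place where the actual work of the corollary lies.
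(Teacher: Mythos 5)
Your proposal is correct and takes essentially the same route as the paper: the paper's own justification of this corollary is precisely the observation that girth $\geq 7$ enters only through the local uniformity analysis, and that a constant number of cycles of length $<7$ around each vertex merely adds a negligible (constant, hence $\ll \epsilon\Delta$) perturbation to the concentration bounds, after which the proof of Theorem~\ref{thrm:RapidMixingMain} goes through unchanged. In fact your write-up is more detailed than the paper's sketch, since you explicitly track where the girth is used in Lemma~\ref{lem:ComparisonXVsXStar} and in the $G$-versus-$G^*_w$ conditional-independence argument.
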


\noindent
Using the above corollary we can show the following rapid mixing result for random regular (bipartite) graphs
with sufficiently large degree $\Delta$. 
The theorem  follows by using e.g. the  result from \cite{wormald}. Let $G$ be chosen uniformly at random 
among all $\Delta$ regular (bipartite) graphs with $n$. Then,   with probability that tends to 1 as $n$ tends to infinity 
it holds that $G\in {\cal G}(1,7)$. Then the theorem follows from  Corollary \ref{cor:RapidFewCycles}.

\end{document}